\begin{document}

\newcommand{\ourTitle}{On The Reasonable Effectiveness of \diagrams}
\title[\ourTitle]{\ourTitle}
\titlenote{The title is a reference to Wigner's 1960 article~\cite{wigner:1960}
in which he states that ``the enormous usefulness of mathematics in the natural sciences is something bordering on the mysterious and that there is no rational explanation for it.''
While there have been similar observations of surprising effectiveness for both
data~\cite{DBLP:journals/expert/HalevyNP09}
and
logic~\cite{DBLP:journals/bsl/HalpernHIKVV01} in our community,
our strong experimental evidence of \diagrams\ helping users understand
relational patterns better is actually quite expected.
}

\subtitle{Explaining Relational Query Patterns and the Pattern Expressiveness of Relational Languages}

\author{Wolfgang Gatterbauer}
\orcid{0000-0002-9614-0504}
\affiliation{%
    \orcidicon{0000-0002-9614-0504}
	Northeastern University\country{USA}}
\email{w.gatterbauer@northeastern.edu}

\author{Cody Dunne}
\orcid{0000-0002-1609-9776}%
\affiliation{%
    \orcidicon{0000-0002-1609-9776}
	Northeastern University\country{USA}}
\email{c.dunne@northeastern.edu}

\begin{abstract}
Comparing relational languages by their logical expressiveness is well understood.
Less well understood is how to compare relational languages by their ability to represent \emph{relational query patterns}.
Indeed, what are query patterns other than ``a certain way of writing a query''?
And how can query patterns be defined across procedural and declarative languages, irrespective of their syntax?
To the best of our knowledge, we provide the first semantic definition of relational query patterns
by using a variant of structure-preserving mappings between the relational tables of queries.
This formalism allows us to analyze the \emph{relative pattern expressiveness} of relational language fragments and create a hierarchy of languages with equal logical expressiveness yet different pattern expressiveness.
Notably, for the non-disjunctive language fragment,
we show that relational calculus can express a larger class of patterns
than the basic operators of relational algebra.

Our language-independent definition of query patterns opens novel paths for assisting database users. 
For example, these patterns could be leveraged to create visual query representations that faithfully represent query patterns, speed up interpretation, and provide visual feedback during query editing. 
As a concrete example, we propose \diagrams, a complete and sound diagrammatic representation of safe relational calculus that is provably
($i$) unambiguous,
($ii$) relationally complete, 
and ($iii$) able to represent all query patterns for 
unions of non-disjunctive queries.
Among all diagrammatic representations for relational queries that we are aware of, 
ours is the only one with these three properties.
Furthermore, our anonymously preregistered user study shows that \diagrams\ allow users to recognize patterns meaningfully faster and more accurately than \SQL.

\end{abstract}

\maketitle

\setcounter{page}{1}

\section{Introduction}
\label{SEC:INTRODUCTION}

When designing and comparing query languages, we are usually concerned with \emph{logical expressiveness}: 
can a language express a particular query we want?
For relational languages, questions of expressiveness 
have been studied for decades, and 
formalisms for comparing expressiveness are well-developed and understood.

We do not yet have a similarly developed machinery to reason about 
\emph{relational query patterns} across languages.
Intuitively, a query pattern should capture ``a certain way of writing a query.''
To be universally applicable,
a formalization 
would have to be applicable across 
the four major languages---Datalog, Relational Algebra ($\RA$), Relational Calculus ($\RC$), and SQL---and 
thus be orthogonal to questions of syntax
and procedural or declarative language design.

\begin{wrapfigure}{r}{0.45\textwidth}
	\vspace*{-4mm}
	\centering
	\includegraphics[scale=0.37]{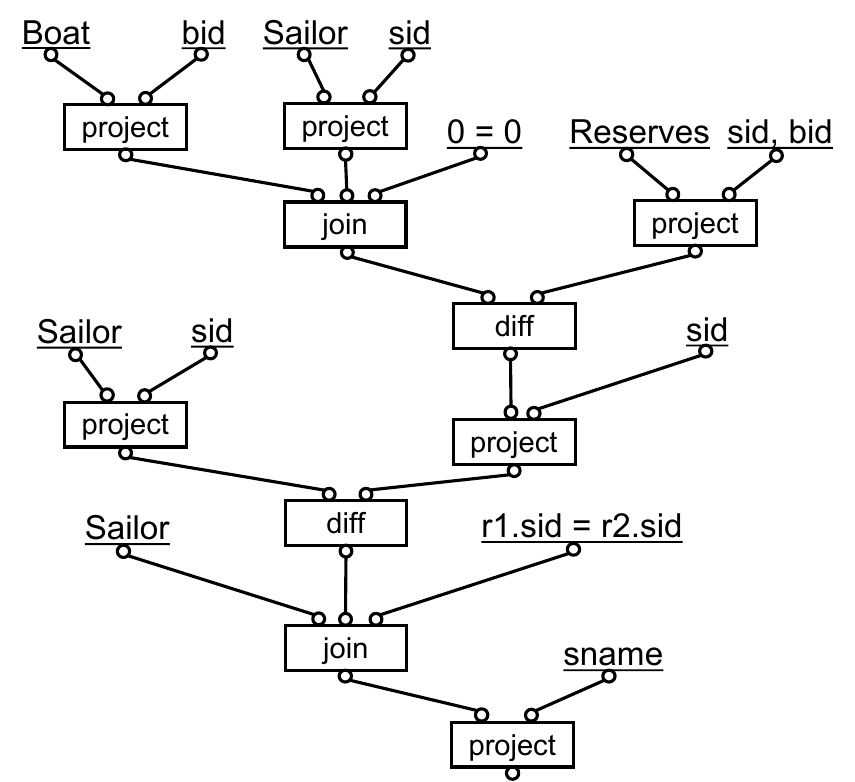}
	\caption{DFQL~\cite{DBLP:journals/vlc/CatarciCLB97} visualization of the $\TRC$
	query from \cref{ex:intro_allboats}. Notice the 3 instances of the Sailor relation
	and thus a different ``structure'' of the visualization from the original query.}
	\label{Fig_DFQL_all_boats}
\end{wrapfigure}

We posit that identifying patterns in queries could open novel paths for assisting users \cite{Gatterbauer2022PrinciplesQueryVisualization}, especially learners
who try to understand the structure behind relational queries written in different languages.
It could help learners spot similarities in queries across different schemas 
and thus more easily separate intent (the logic) from the particular syntactic expression.
On an even more fundamental level, establishing a separation on ``pattern expressiveness''
between relational languages could lead to new insights into the intrinsic properties of relational languages
and algebraic limits of visualizations.
An important insight that we establish in this paper is that visual languages which build upon 
\emph{the operators of $\RA$ cannot faithfully express all query patterns}, 
and instead necessitate reformulating queries and thus changing their patterns.

\begin{example}[Understanding the structure of a $\TRC$ query]
\label{ex:intro_allboats}
Imagine Kiyana, a theory-leaning  undergraduate student, trying to understand relational query languages better.
Kiyana has been reading the chapters on relational calculus across several books. 
In the textbook by Ramakrishnan and Gehrke~\cite{cowbook:2002} (page~121 of Sect.~4.3.1)
she finds the query
\emph{``(Q9) Find the names of sailors who have reserved all boats''}
written as follows:
\begin{align}
\begin{aligned}
	&\{P \mid \exists S \in \mathit{Sailor}\, 
		\forall B \in \mathit{Boat} (\exists R \in \mathit{Reserves}  \\
	&(S.\mathit{sid}=R.\mathit{sid} \wedge
		  R.\mathit{bid} = B.\mathit{bid}  \wedge
		  P.\mathit{sname} = S.\mathit{sname} ) ) \}
\end{aligned}
\label{trc:textbookquery1}
\end{align}
She tries to understand ``the structure'' of the query and 
translates it first into $\RA$, and then from there into 
DFQL (Dataflow Query Language)~\cite{DBLP:journals/iam/ClarkW94,DBLP:journals/vlc/CatarciCLB97,Girsang:DFQL}.\footnotemark
DFQL is a visual representation that is relationally complete 
by mapping its visual symbols to the operators of $\RA$.
Kiyana quickly notices that she cannot translate the query into $\RA$
without using \emph{additional} Sailor relations.
	\begin{align*}
	\begin{aligned}	
		\mathit{Q} =
		\pi_{\sql{sname}} \big(
			\sql{Sailor}\Join \big(
				\pi_{\sql{sid}} \sql{Sailor} - \pi_{\sql{sid}}\big(
				&(\pi_{\sql{sid}} \sql{Sailor} \times \pi_{bid}\mathit{Boat}) \\
			& - \pi_{\sql{sid,bid}} \sql{Reserves}\big) \big) \big)
	\end{aligned}
	\end{align*}
As a result, she does not find the resulting DFQL visualization (\cref{Fig_DFQL_all_boats}) very helpful 
because there is an obvious mismatch in ``its structure'' with 3 instances of Sailor relations.
She wonders whether she is missing an obvious simpler translation into $\RA$
or whether there is none.
As is, she does not find the query visualization helpful.
\end{example}
\footnotetext{
DFQL is one of several visual query languages mentioned as relationally complete in an influential survey~\cite{DBLP:journals/iam/ClarkW94}. 
Kiyana found a detailed online documentation~\cite{Girsang:DFQL}.
}

As a consequence, \emph{no query visualization that relies on the operators of \RA}
could help Kiyana with what she would like to see:
a simple visual representation that captures the structure of the query as written
in its original logical form.

\begin{example}[Comparing $\RC$ queries from textbooks]
\label{ex:intro_comparing_textbooks}
Kiyana continues looking through different textbooks and finds in Date's textbook~\cite{date2004introduction} (page~224 of Sect.~8.3) the query
\emph{``8.3.6 Get supplier names for suppliers who supply all parts''} written as follows:
\begin{align}
\begin{aligned}
	\texttt{SX.NAME } 	& \texttt{WHERE NOT EXISTS PX (NOT EXISTS SPX} \\[-1mm]
						& \texttt{SPX.SNO = SX.SNO AND SPX.PNO = PX.PNO))} 		
\end{aligned}
\label{trc:textbookquery2B}
\end{align}
From the natural language description, the query seems to follow 
a similar pattern as the earlier one
(``Return X which have a relationship with all Y'').
But that apparent similarity is difficult to see from the two expressions.
She wonders whether there is a simple way to see that those two queries somehow follow a ``similar structure.''
\end{example}

In this paper, we show that there is indeed a  simple and arguably-natural visualization that allows Kiyana
to 
($i$) represent her queries in a way that preserves their logical structure (or pattern), 
($ii$) decide whether two logically-equivalent queries have the same pattern,
and 
($iii$) see whether any two queries, even across different schemas, use a ``similar pattern.''
We call this visualization \diagrams~\cite{relationaldiagrams}.
See \cref{Fig_intro_textbook} and notice how 
every relation from the two queries maps to exactly one relation in the \diagrams.
Also, notice how the similar structure of both queries becomes natural to see.

\introparagraph{Our 1$^{\mathrm{st}}$ contribution: \emph{query patterns}} 
We develop a precise language-independent notion of relational query patterns
that allows us to compare 
the patterns of two queries.
Our definition is semantic (in 
the sense
that the definition involves relations over sets of attributes)
instead of syntactic (which would involve structural properties which are inherently language-dependent).
The intuition behind our formalism 
is to reason about mappings between the 
(existentially or universally) quantified
relations referenced in two queries.
Yet it is not trivial to turn this intuition into a working definition that 
can be applied to any relational
query and language 
(we include examples to show that seemingly easier mapping definitions would fail on queries). 
We 
believe
that our notion is the ``right'' definition
and show how to use it
to compare
relational query languages by their abilities to express query patterns 
present in other languages and thus compare their relative pattern expressiveness.
In particular, we contribute a novel hierarchy of pattern-expressiveness among the 
non-disjunctive fragments of four relational query languages.

\begin{figure}[t]
    \centering
    \begin{subfigure}[b]{.47\linewidth}
		\centering
        \includegraphics[scale=0.4]{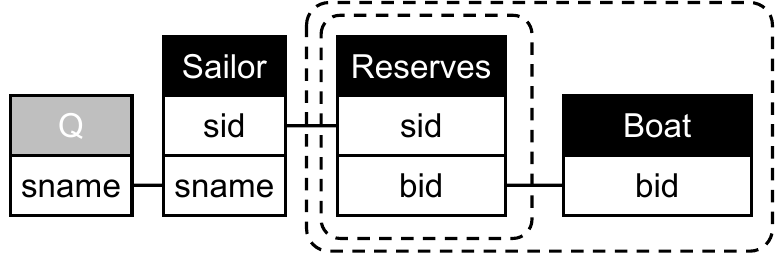}
    	\vspace{-1mm}
        \caption{$q_1$: ``Find names of sailors who reserved all boats.''}
    	\label{Fig_intro_textbook1}
    \end{subfigure}
	\hspace{3mm}
    \begin{subfigure}[b]{.49\linewidth}
		\centering		
    	\vspace{1mm}		
        \includegraphics[scale=0.4]{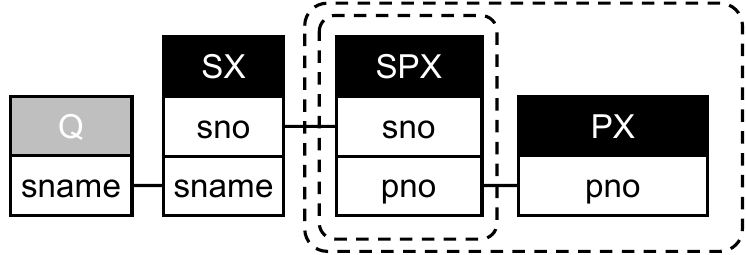}
    	\vspace{-1mm}
        \caption{$q_2$: ``Find names of suppliers who supply all parts.''}
    	\label{Fig_intro_textbook2}
    \end{subfigure}
    \caption{\diagrams\ representations of the two queries 
	from
	\cref{ex:intro_allboats} (\cite{cowbook:2002})
	and 
	\cref{ex:intro_comparing_textbooks} (\cite{date2004introduction}).
	Notice the similar ``relational query patterns.''}
    \label{Fig_intro_textbook}
\end{figure}

\introparagraph{Our 2$^{\mathrm{nd}}$ contribution: \diagrams} 
We formalize an arguably simple and intuitive diagrammatic representation of relational queries called \diagrams~\cite{relationaldiagrams}
and prove that ($i$) it is unambiguous (every diagram has a unique logical interpretation),
($ii$) it is relationally complete (every relational query can be expressed in a logically-equivalent \diagram),
and ($iii$) that it can express all query patterns in the non-disjunctive fragment of relational query languages
and those with union at the root.
In particular, we prove that no prior or future diagrammatic representation based on $\RA$ could represent 
all relational query patterns from $\RC$.
Our user study (\cref{SEC:USERSTUDY}) shows that our formalisms helps users recognize patterns faster than with $\SQL$.

\textbf{Outline of the paper.}
\Cref{sec:nondisjunctivefragment} defines the \emph{non-disjunctive fragment} of relational query languages for 
Datalog, Relational Algebra ($\RA$), Tuple Relational Calculus (\TRC), and $\SQL$,
and proves that they have equivalent logical expressiveness.
\Cref{sec:QV} 
shows that the non-disjunctive fragment allows for an arguably natural 
diagrammatic representation system that we term \diagramsNDnomath.
We give the formal translation from the non-disjunctive fragment of $\TRC$ to \diagramsNDnomath\ and back, 
and define their formal validity.
We use \diagramsNDnomath\ for the remainder of the paper to illustrate ``query patterns.''
\Cref{sec:structureisomorphism}
formalizes the notion of a relational query pattern
and contributes a novel hierarchy of pattern expressiveness among the above four languages
for the non-disjunctive fragment.
We prove that \diagrams\ 
have strong structure-preserving properties in that they 
can express all query patterns in this fragment.
\Cref{sec:pattern_schemas}
formalizes ``similar patterns'' across different schemas.
This extended notion allows us to see similarities
across queries that use different relations and are thus not logically equivalent.
\Cref{sec:completeness}
adds a single visual element (called a \emph{union cell}) to \diagramsNDnomath\ 
to make the resulting \diagrams\ relationally complete.\footnote{Although disjunctions can be composed of conjunction and negation using De Morgan's law ($A \vee B = \neg(\neg A \wedge \neg B)$),
this additional visual symbol is necessary: for safe relational queries, DeMorgan is not enough, as
there is no way to write a safe Tuple Relational Calculus ($\TRC$) expression ``\emph{Return all entries that appear in either R or S}'' 
that avoids a union operator.
This is part of the textbook argument for the union operator  being an essential, non-redundant operator for relational algebra.
}
We also show that even without that element, \diagramsNDnomath
can express all logical statements of first-order logic.
This allows us to compare our diagrammatic formalism against a long history of diagrams for 
representing logical sentences.
\Cref{sec:studies} includes two studies.
The first shows that more logical queries across five popular textbooks have
pattern-isomorphic representations in \diagrams\ than either \RA, \Datalog, \QBE, or \queryviz.
The second controlled user experiment demonstrates that using \diagrams\ instead of $\SQL$ helps users recognize patterns across different schemas faster and more often correctly.
\Cref{sec:relatedWork}
contrasts our formalism with selected related work. 
In particular, we discuss the connection to Peirce's existential graphs \cite{peirce:1933,Roberts:1992,Shin:2002} and 
show that our formalism is more general and solves interpretational problems of Peirce's graphs, which have been the focus of intense research for over a century.

Due to space constraints, we had to move proofs, several intuitive illustrating examples, all study details, and more detailed comparison against related work
\iflabelexists{appendix:beginning}
{to \hyperref[appendix:beginning]{the appendix}.}
{to the online appendix~\cite{relationaldiagrams:links}.}

\section{The non-disjunctive fragment of relational query languages}
\label{sec:nondisjunctivefragment}
\label{SEC:NONDISJUNCTIVEFRAGMENT}%

This section defines 
the \emph{non-disjunctive fragment} of 
relational query languages.
Throughout, we assume a linear order over the active domain 
and thus explicitly allow built-in predicates using ordered operators such as $<$,
in addition to equality $=$ and disequality $\neq$.

We assume the reader to be familiar with 
$\DatalogN$ (non-recursive Datalog with negation),
$\RA$ (Relational Algebra),
$\TRC$ (safe Tuple Relational Calculus),
$\SQL$ (Structured Query Language),
and the necessary safety conditions for $\TRC$ and $\DatalogN$ to be equivalent in logical expressiveness to $\RA$.
We also assume familiarity with concepts such as relations, predicates, atoms, and the named and unnamed perspective of relational algebra.
The most comprehensive exposition of these topics we know of is Ullman's 1988 textbook \cite{Ullman1988PrinceplesOfDatabase},
together with resources for translating between $\SQL$ and relational calculus 
\cite{DBLP:journals/tse/CeriG85, VanDenBusscheVansummeren:2009:SQL}.
These connections are also discussed in most database textbooks 
\cite{Elmasri:dq,cowbook:2002, Garcia-MolinaUW2009:DBSystems, DBLP:books/mg/SKS20},
though in less detail.
We only cover TRC and not Domain Relational Calculus (DRC) as 
the 1-to-1 correspondence between DRC and TRC is straight-forward \cite{Elmasri:dq, DBLP:books/mg/SKS20},
and---as we will discuss in \cref{sec:Peirce}---$\TRC$
has a more natural translation into diagrams than $\DRC$.

\subsection{Non-recursive Datalog with negation}
\label{sec:datalog}

We start with $\Datalog$ since the definition is most straightforward.
$\Datalog$ expresses disjunction (or union) by repeating an Intensional Database (IDB) predicate in the head of multiple rules.
For example, %
consider the following query in $\Datalog$:
\begin{align}
\begin{aligned}
	Q(x) &\datarule R(x,y), S(x), T(\_), y>5.	\\
	Q(x) &\datarule R(x,y), S(\_), T(x), y>5.
\end{aligned}
\label{datalog:disjunctive_example}	
\end{align}	
The underscore stands for a variable that appears only once~\cite{Garcia-MolinaUW2009:DBSystems}.
This query cannot be expressed without defining at least one IDB at least twice,
in our case the result table $Q(x)$.
This leads to a natural definition of the non-disjunctive fragment of $\DatalogN$:

\begin{definition}[$\DatalogND$]
\label{def:DatalogND}
	\emph{Non-disjunctive non-recursive Datalog with negation} ($\DatalogND$)
	is the non-recursive fragment of $\DatalogN$ with built-in predicates 
	where every IDB appears in the head of exactly one rule
	and can be used maximally once in any body.
\end{definition}

Notice that $\DatalogND$ inherits all restrictions from non-recursive $\DatalogN$ with built-in predicates~\cite{Ullman1988PrinceplesOfDatabase}, 
and thus
rules out the existence of an IDB in both the head and the body of the same rule.
The restriction of IDB's being used maximally once 
rules out views to be used multiple times (including simple copies of input tables).

\subsection{Relational Algebra (RA)}
\label{sec:relationalAlgebra}

We focus on the subfragment of basic $\RA$ 
($\times, \sigma, \Join_c, \pi, -$)
that contains no union operator $\cup$ 
and in which all selection conditions are simple (i.e.\ they do not use the disjunction operator $\vee$).
A simple condition is $C=(X \theta Y)$ where $X$ is an attribute, 
$Y$ is either an attribute or a constant, 
and $\theta$ is a comparison operator from $\{=, \neq, <, \leq, >, \geq, \}$.
Notice that conjunctions of selections can be modeled as concatenation of selections,
e.g., $\sigma_{C_1 \wedge C_2}(R)$ is the same as $\sigma_{C_1}(\sigma_{C_2}(R))$.
The $\DatalogN$ query from
\cref{datalog:disjunctive_example}	
cannot be expressed in that fragment and requires 
either 
the union operator $\cup$ as in:
\begin{align*}
	\pi_{A} \big(\sigma_{B>5}(R) \Join S \times \rho_{A \rightarrow C}(T)\big) 
	\cup
	\pi_{A} \big(\sigma_{B>5}(R) \Join T \times \rho_{A \rightarrow D}(S)\big) 
\end{align*}
or the disjunction operator $\vee$ as in:
\begin{align*}
	\pi_{A} \big(\sigma_{A=D \vee A=C} \big(\sigma_{B>5}(R) \times \rho_{A\rightarrow D} (S) \times \rho_{A \rightarrow C}(T)\big) \big)
\end{align*}

\begin{definition}
	[$\NDRA$]
	The non-disjunctive fragment of basic
	Relational Algebra ($\NDRA$) results from disallowing the union operators $\cup$ 
	and by restricting selections to conjunctions of simple predicates.
\end{definition}

\subsection{Tuple Relational Calculus (TRC)}
\label{sec:TRC}
Recall that safe $\TRC$ only allows existential quantification (and not universal quantification)
\cite{Ullman1988PrinceplesOfDatabase}.
Predicates are either join predicates 
``$r.A \,\theta\, s.B$'' or selection predicates 
``$r.A \,\theta\, v$'', with $r, s$ being table variables and $v$ a domain value.
WLOG, every existential quantifier can be pulled out as early as to either be at the start of the query, or directly following a negation operator. 
For example, instead of
$\h{\neg(\exists r \in R} [r.A = 0 \wedge \h{\exists s \in S}[s.B = r.B]])$
we rather write this sentence canonically as
$\h{\neg(\exists r \in R, s \in S} [r.A = 0 \wedge s.B = r.B])$.
This \emph{canonical representation} implies that a set of existential quantifiers is always preceded 
by the negation operator, 
except for the table variables outside any scope of negation operators.
Also, WLOG, we only allow equality conditions with the result table.
For example, instead of
$\{ q(A) \mid \exists r \in R, s \in S [q.A = r.A \wedge s.A \,\h{> q.A}])] \}$
we rather write
$\{ q(A) \mid \exists r \in R, s \in S [q.A = r.A \wedge s.A \,\h{> r.A}])] \}$.
Recall that at least one equality predicate for each output attribute is required due to standard safety conditions 
\cite{Ullman1988PrinceplesOfDatabase}.

We will define an additional requirement that each predicate contains 
a local (or what we refer to as \emph{guarded}) 
attribute
whose table is quantified within the scope of the last negation.
For example, we do not allow 
$\neg(\exists r \in R[\h{\neg(r.A=0)}])$ 
because the table variable $r$ is defined outside the scope of the most inner negation around the predicate $r.A=0$.
However, we allow the logically-equivalent 
$\neg(\exists r \in R[\h{r.A \neq 0}])$ where the table variable $r$ is existentially quantified within the same scope as the attribute $r.A \neq 0$.

\begin{definition}[Guarded predicate]\label{def:anchor}
	A predicate is \emph{guarded} if it contains at least one attribute of a table that is existentially quantified 
	inside the same negation scope as that predicate.
\end{definition}

Intuitively, guarding a predicate guarantees that the predicates can be applied in the same logical scope
where a table is defined.
This requirement also avoids a hidden disjunction.
To illustrate, consider the following $\TRC$ query:
\begin{align*}
	\{ q(A) \mid \h{\exists r \in R} [q.A = r.A \wedge 
	\h{\neg (}\exists s \in S[\h{r.A = 0} \wedge s.B = r.B])] \}	
\end{align*}
This query contains no apparent disjunction, however the predicate ``$r.A = 0$''
could be pulled outside the negation,
and after applying De Morgan's law on the expression we get a disjunction:
\begin{align*}
	\{ q(A) \mid \h{\exists r \in R} [q.A = r.A \wedge 
	\big(
	\h{r.A \neq 0 
	\,\vee}\,
	\neg (\exists s \in S[s.B = r.B])
	\big) ]\}		
\end{align*}

To avoid both disjunctions and ``hidden disjunctions'', the non-disjunctive fragment 
\emph{only allows conjunctions of guarded predicates}:

\begin{definition}
	[$\NDTRC$]	
	\label{def:NDTRC}
	The non-disjunctive fragment of safe $\TRC$ ($\NDTRC$)
	restricts predicates to conjunctions of \emph{guarded predicates}.
\end{definition}

In order to express 
the $\DatalogN$ query from \cref{datalog:disjunctive_example}	
we need the disjunction operator.
A possible translation is:
\begin{align*}
	\{ q(A) \mid \exists r \in R,  s \in S,  t \in T 
		[q.A \!=\! r.A \wedge r.B \!>\! 5 \wedge 
		 (r.A \!=\! s.A \vee r.A \!=\! t.A)]\}		
\end{align*}

\subsection{SQL under set semantics}

Structured Query Language (SQL) uses bag instead of set semantics and uses a ternary logic with NULL values.
In order to treat $\SQL$ as a logical query language,
we assume binary logic and no NULL values in the input database. 
It has been pointed out that 
``SQL's logic of nulls confuses people'' and even programmers tend to think in terms of the familiar two-valued logic~\cite{DBLP:journals/pvldb/ToussaintGLS22}.
Our focus here is devising a general formalism to capture logical query patterns across relational languages,
not on devising a visual representation of SQL's idiosyncrasies.
To emphasize the set semantic interpretation, we write the DISTINCT operator in all our SQL statements.

We define the non-disjunctive fragment of $\SQL$ as the Extended Backus–Naur form (EBNF)~\cite{pattis:EBNF} grammar
shown in \cref{table:supported_grammar},
interpreted under set semantics (no duplicates by using \sql{DISTINCT}) 
and under binary logic (no null values allowed in the input tables).
We also require the same syntactic restriction as for $\NDTRC$: 
\emph{every predicate needs to be guarded} (\cref{def:anchor}), 
i.e., every predicate must reference at least one table within the scope of the last NOT.
This restriction excludes hidden disjunctions,
such as
``NOT(NOT(P1) and NOT(P2))'' which is equivalent to 
``P1 or P2''.

\begin{figure}[t]
\centering
\renewcommand{\arraystretch}{0.85}
	\textup{
	\textsf{
	\footnotesize
	\setlength{\tabcolsep}{0.6mm}
	\begin{tabular}{@{} r @{\hspace{2mm}} l @{\hspace{4mm}} l @{\hspace{6mm}}l l}
		Q::=    	& SELECT [DISTINCT] (C \{, C\} $\mid \ ^{*}$)   & Non-Boolean query\\
		  	  		& FROM 		R \{, R\} 								\\
		  	  		& [WHERE  	P]  									\\
		$\mid\,$  	& SELECT NOT (P)      					& Boolean query\\
		$\mid\,$  	& SELECT [NOT] EXISTS (Q)				& Boolean query\\[0.5mm]
		C::=    	& [T.]A     							& column or attribute	\\[0.5mm]
		R::=    	& T [[AS] T]  							& table (table alias)	\\[0.5mm]
		P::=    	& P \{AND P\}     						& conjunction of predicates\\
		$\mid\,$  	& C O C      							& \hspace{3mm}join predicate\\
		$\mid\,$  	& C O V     							& \hspace{3mm}selection predicate\\
		$\mid\,$  	& NOT `('P`)'			 				& \hspace{3mm}negation\\
		$\mid\,$  	& [NOT] EXISTS `('Q`)' 					& \hspace{3mm}existential subquery\\
		$\mid\,$  	& C [NOT] IN `('Q`)'  					& \hspace{3mm}membership subquery\\
		$\mid\,$  	& C O (ALL `('Q`)' $\mid$ ANY `('Q`)')  & \hspace{3mm}quantified subquery\\[0.5mm]		
		O::=    	& $= \ \mid \ <> \ \mid \ < \ \mid \ \leq \ \mid \ \geq \ \mid \ > $ 
															& comparison operator\\[0.5mm]
		T::=    	&           					& table identifier\\[0.5mm]
		A::=    	&           					& attribute identifier\\[0.5mm]
		V::=    	&           					& string or number
	\end{tabular}
	}
	}
\vspace{1mm}
\caption{EBNF Grammar of $\NDSQL$:
Statements enclosed in [\hspace{1mm}] are optional; 
statements separated by \textbf{$\mid$} indicate 
a choice between alternatives;
parentheses without quotation marks (\hspace{1mm}) group alternative choices;
parentheses with quotation marks `('\hspace{1mm}`)' form part of the test.
Additionally, the main query requires the DISTINCT keyword (if non-Boolean), 
and all join and selection predicates need to be \emph{guarded} (\cref{def:anchor}), 
i.e., reference at least one table within the scope of the last NOT.
}
\label{table:supported_grammar}  
\end{figure}

\begin{definition}
	\emph{$\NDSQL$}:
	Non-disjunctive SQL under set semantics ($\NDSQL$) is the syntactic restriction of $\SQL$ 
	under binary logic (no NULL values in the input tables)
	to the grammar defined in \cref{table:supported_grammar}, and additionally
	requiring every predicate to be guarded.
\end{definition}

Every $\NDSQL$ query can be brought into a canonical form that maintains a straightforward one-to-one correspondence with $\NDTRC$.
The idea is to replace membership and quantified subqueries with existential subqueries 
(see grammar in \cref{table:supported_grammar})
and then unnest any existential quantifiers, i.e., to only use ``\sql{not exists}''.
This pulling up quantification as early as possible is identical to the way we defined the canonical form of $\NDTRC$.

The $\DatalogN$ query from
\cref{datalog:disjunctive_example}	
cannot be expressed in $\NDSQL$  and requires either a UNION operator or disjunction as in
\cref{fig:SQL_example_disjunction}.

\begin{figure}[t]
\centering
\begin{minipage}{0.255\linewidth}
\begin{lstlisting}
SELECT DISTINCT R.A
FROM R, S, T
WHERE R.B > 5 
AND (R.A = S.A OR R.A = T.A)
\end{lstlisting}
\end{minipage}
\vspace{-5mm}
\caption{Example $\SQL$ with disjunction.}
\label{fig:SQL_example_disjunction}
\vspace{-1mm}
\end{figure}

\subsection{Logical expressiveness of the fragment}

We show that the four languages restricted to the non-disjunctive fragment are equivalent in their logical expressiveness.
The proof is available 
\iflabelexists{appendix:proofoflogicalexpressivness}
{in \cref{appendix:proofoflogicalexpressivness}}
{in the optional online appendix~\cite{relationaldiagrams:links}}
and
is an adaptation of the standard proofs of equal expressiveness as found, for example, in
\cite{Ullman1988PrinceplesOfDatabase}.
However, the translations also 
need to pay attention to the restricted fragment (e.g.\ we cannot use union to define an active domain)
and attempt to keep the numbers of extensional database atoms the same, if possible.
This detail will be important later in \cref{sec:structureisomorphism}, where we show that those four fragments differ in the types of query patterns they can express.

\begin{theorem}\label{th:equivalence}\label{TH:EQUIVALENCE}[Logical expressiveness]
	$\DatalogND$, 
	$\NDRA$, 
	$\NDTRC$, and $\NDSQL$ 
	have the same logical expressiveness.
\end{theorem}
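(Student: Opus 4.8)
The plan is to establish equal expressiveness by a cycle of semantics-preserving translations
\[
\NDRA \to \NDTRC \to \NDSQL \to \DatalogND \to \NDRA,
\]
where each arrow denotes ``translate every query of the source fragment into a logically equivalent query of the target fragment.'' Closing the cycle yields mutual containment and hence equality of all four. I would follow the classical equivalence arguments (as in~\cite{Ullman1988PrinceplesOfDatabase}), but the real work is not the translations themselves---it is verifying at every arrow that the image stays \emph{inside} the non-disjunctive fragment, i.e.\ that we never introduce a union or a disjunctive selection and that every predicate remains \emph{guarded} in the sense of \cref{def:anchor}. The standard proofs freely use union, so they cannot be cited verbatim.

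For $\NDRA \to \NDTRC$ I would induct on the structure of the $\RA$ expression, translating each basic operator compositionally: $\times$ and $\sigma$ become conjunctions of predicates, $\pi$ fixes the output attributes, and the only delicate case is difference. Here $R-S$ becomes $\{q \mid \exists r \in R\,[q = r \wedge \neg(\exists s \in S\,[\bigwedge_A r.A = s.A])]\}$. Because $S$ shares the schema of $R$, each equality $r.A = s.A$ mentions $s$, which is quantified inside the negation scope, so every predicate is guarded and no hidden disjunction arises; since $\NDRA$ forbids $\cup$ and disjunctive selections, the induction never manufactures a disjunction. The step $\NDTRC \to \NDSQL$ is essentially the syntactic correspondence noted after the grammar of \cref{table:supported_grammar}: the canonical form of $\NDTRC$ (quantifiers pulled out, only ``\texttt{not exists}'') maps line-by-line into the grammar, and the guardedness condition is literally the same on both sides. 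For $\NDSQL \to \DatalogND$, a single conjunctive \texttt{FROM}/\texttt{WHERE} body becomes one rule, and each ``\texttt{not exists}'' subquery becomes a negated IDB defined by its own rule built recursively; guardedness guarantees that every negated subgoal shares a variable with a positive atom in its scope, which is exactly the Datalog safety condition, and since each subquery is named exactly once we respect the ``every IDB defined and used once'' restriction of \cref{def:DatalogND}.

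The hard part, and the step where the textbook proof must genuinely be modified, is $\DatalogND \to \NDRA$. The standard translation of safe $\DatalogN$ materializes the active domain as a \emph{union} of projections of the base relations in order to make negated subgoals safe---but union is precisely what $\NDRA$ forbids. The observation that rescues us is that guardedness makes the active domain unnecessary: a negated subgoal $\neg G$ always shares an attribute with a positive, already-bound atom in the same body, so once the positive atoms have been compiled into a partial result $L$, the subgoal can be realized directly by the antijoin $L \antijoin G$ (equivalently $L - (L \Join G)$), without ever enumerating a domain. I would therefore prove by structural induction on the rule-dependency DAG---well-founded because IDBs are non-recursive and used at most once---that each rule compiles to a product--select--project--difference expression containing no union and only conjunctive selections. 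A secondary obstacle worth flagging is the bookkeeping of extensional atoms: to support the pattern-expressiveness results of \cref{sec:structureisomorphism}, I would ensure each translation introduces no spurious copies of base relations when the source does not, which is exactly why the antijoin formulation above is preferable to any construction that re-derives a relation merely to enumerate its domain.
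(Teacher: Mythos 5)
Your cycle architecture ($\NDRA \to \NDTRC \to \NDSQL \to \DatalogND \to \NDRA$) is a legitimate alternative to the paper's hub-shaped proof (which routes $\NDRA \leftrightarrow \DatalogND \leftrightarrow \NDTRC \leftrightarrow \NDSQL$), and two of your arrows are essentially sound: the direct compositional induction for $\NDRA \to \NDTRC$ works, and your fix for $\DatalogND \to \NDRA$ --- replacing the forbidden active-domain-via-union construction by iterated antijoins $L \antijoin G = L - \pi_{\mathit{sch}(L)}(L \Join G)$, justified by Datalog rule safety --- is the same idea the paper uses (the paper packages all negated atoms into a single difference against a cross product with the completing attributes; your one-antijoin-per-negated-subgoal variant is equally valid, and handles multiple negated subgoals cleanly).

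The genuine gap is in $\NDSQL \to \DatalogND$. You claim guardedness (\cref{def:anchor}) ``is exactly the Datalog safety condition.'' It is not: guardedness guarantees that each predicate (hence each negated subquery) mentions \emph{at least one} locally quantified table, whereas Datalog safety requires \emph{every} variable of a negated subgoal, of a built-in predicate, and of the head to occur in a positive atom of the same rule. Relational division is a counterexample: in
\begin{align*}
\{ q(A) \mid \exists r \in R [\, q.A = r.A \wedge \neg(\exists s \in S [\, \neg(\exists r_2 \in R [\, r_2.B = s.B \wedge r_2.A = r.A \,])\,])\,]\}
\end{align*}
every predicate is guarded ($r_2$ is local to the innermost scope), yet $r.A$ merely \emph{passes through} the middle scope, whose only table is $S$. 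Your recursive rule construction emits
\begin{align*}
Q_1(x) \datarule S(y), \neg Q_2(x,y).
\end{align*}
which is unsafe: $x$ occurs in the head and in the negated subgoal but in no positive atom (it shares $y$ with $S$, but that is not enough). The same failure arises when a nested scope relates to an outer attribute only via a built-in predicate, e.g.\ $\neg(\exists s \in S[s.A < r.A])$, which yields the unsafe rule $Q_1(x) \datarule S(y), y < x$. The missing idea --- and the actual work in the paper's proof of this direction --- is to first repair each such nested query by inserting an additional existentially quantified ``guard'' table (re-using the outer table, e.g.\ $r_3 \in R$ with $r_3.A = r.A$) so that the pass-through attribute is bound locally, and only then emit rules. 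Note that this insertion changes the table signature, which is exactly why the paper can later prove $\DatalogND \not\supseteq^\rep \NDTRC$; for the same reason, your closing remark that the translations introduce ``no spurious copies of base relations'' cannot hold either: antijoin is not a primitive of $\NDRA$, expanding it duplicates the positive part, and the paper's separation lemma shows this duplication is unavoidable. That overclaim is harmless for logical expressiveness, but the unsafe-rule gap is not --- as written, your translation fails on guarded queries as simple as division.
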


\section{\diagramsNDnomath}
\label{SEC:QV}
\label{sec:QV}

This section introduces our diagrammatic representation of relational queries.
It details the basic visual elements of \diagramsNDnomath
(\cref{sec:visualelements}),
gives the formal translation 
from $\NDTRC$ 
(\cref{sec:fromTRCtoRD})
and back (\cref{sec:fromRDtoTRC}),
and shows that there is a one-to-one correspondence between $\NDTRC$ expressions and \diagramsNDnomath,
thereby proving their validity (\cref{sec:RD_validity}).

\subsection[Visual elements]{Visual elements}
\label{sec:visualelements}

In designing our diagrammatic representation, 
we started from existing widely-used visual metaphors and then added the minimum necessary visual elements to 
obtain expressiveness for full $\NDTRC$.
In the following five points, we discuss both 
($i$) necessary specifications for \diagramsNDnomath\ and
($ii$) concrete design choices that are not formally required but justified based on 
best practices from HCI and visualization guidelines.
We use the term \emph{canvas} to refer to the plane in which a \diagramNDnomath\ is displayed.
We illustrate with \cref{Fig_TRC_vs_RD}.

\emph{(1) Tables and attributes}:
We use the set-of-mappings definition of relations~\cite{Ullman1988PrinceplesOfDatabase}
in which a tuple is a mapping from attributes' names to values, in contrast to the set-of-lists representation in which order of presentation matters and which more closely matches the typical vector representation.
Thus a table is represented by any visual grouping of its attributes.
We use the typical UML convention of representing tables as rectangular boxes
with table names on top and attribute names below in separate rows. 
Table names are shown with white text on a black background and, to differentiate them, attributes use black text on a white background.
For example, table \tableBox{R} with attribute \attributeBox{A}.
Similar to $\Datalog$ and $\RA$ (and different from $\SQL$ and $\TRC$),
we do not use table aliases. Such table aliases create extra cognitive burden
and are only needed in languages where references to repeated table instances cannot be otherwise disambiguated.
We also reduce visual complexity by only showing attributes that are used in the query, similar to $\SQL$ and $\TRC$ (and different from $\Datalog$).
Database users are commonly familiar with relational schema diagrams.
Thus, we believe that a simple conjunctive query should be visualized similarly to
a typical database schema representation,
as used prominently in standard introductory database textbooks \cite{Elmasri:dq,DBLP:books/mg/SKS20}.

\emph{(2) Selection predicates}:
Selection predicates are filters and are shown ``in place.'' 
For example, an attribute ``$r_2.C > 1$'' is shown as \selPredicateBox{C$>$1}
in the corresponding instance of table $R$.
An attribute participating in multiple selection predicates is repeated at least as many times as there are selections
(e.g.\ to display ``$r_2.C > 1 \wedge r_2.C < 3$'', we would repeat $\sql{R.C}$ twice as \selPredicateBox{C$>$1} and \selPredicateBox{C$<$3}).
An attribute participating in $k$ selection predicates is repeated $k$ times.

\emph{(3) Join predicates}:
Equi-join predicates (e.g.\ ``$s_2.A = t_2.A$''), which arguably are the most common type of join in practice, are represented
by lines connecting joined attributes.
For other less-frequent theta join operators 
$\{\neq, <, \leq, \geq, >\}$, 
we add the operator as a label on the line
and use an arrowhead to indicate the reading order and correct application of the operator \emph{in the direction of the arrow}.
For example, for a predicate
``$r_1.A > r_2.B$'', the label is $>$ and the arrow points from 
attribute $\sql{A}$ of the first $\sql{R}$ occurrence to $\sql{B}$ of the second:
$\sql{A} {\scriptscriptstyle \xrightarrow{>}} \sql{B}$.
Notice that the direction of arrows can be flipped, along with flipping the operator, while maintaining the identical meaning:
$\sql{A} {\scriptscriptstyle \xleftarrow{<}} \sql{B}$. 
To avoid ambiguity with the standard left-to-right reading convention for operators, we normalize arrows to never point from right to left.
An attribute participating in multiple join predicates needs to be shown only once and has several lines connecting it to other attributes.
An attribute participating in one or more join predicates and $k$ selection predicates is shown $k+1$ times.\footnote{In practice, one can reduce the size of a \diagramNDnomath\ by reusing an existing selection predicate  for joins. This comes at the conceptual complication that the exact graph topology of the \diagramNDnomath\ (which attributes are connected) is not uniquely determined (though it still allows only one correct interpretation).
In our example \cref{Fig_TRC_vs_RD_b}, one could remove the attribute \sql{R.C} of $r_2$ and connect \sql{Q.D} to either \selPredicateBox{C$>$1} or \selPredicateBox{C$<$3} instead.}

\emph{(4) Negation boxes}:
In $\NDTRC$, negations are either avoided (e.g.\ $\neg (R.A = S.B)$ is identical to $R.A \neq S.B$) or 
placed before the existential quantifiers.
We represent a negation with a closed line that partitions the canvas into a subcanvas that is negated (inside the bounding box)
and everything else that is not (outside of the bounding box).
As a convention, we use dashed rounded rectangles.\footnote{Rectangles allow better use of space than ellipses, and rounded corners together with dashed lines distinguish those negation boxes clearly from the rectangles with solid edges and right angles used for tables and attributes.}
Recursive partitioning of the canvas allows us to represent a tree-based nesting order that corresponds to
the nested scopes of quantified tuple variables in $\TRC$
(and also the nesting order of subqueries in $\SQL$).
We call \emph{the main canvas} the root of that nesting hierarchy and each node a \emph{partition of the canvas}.

\emph{(5) Output table}:
We display an output table to emphasize the compositional nature of relational queries: a relational query uses several tables as input, and returns one new table as output. 
We use the same symbol for that output table as the $\TRC$ expression, for which we most commonly use $Q$.
We use a gray background \selectBox{} to make this table visually distinct from input tables.

\subsection{\texorpdfstring{From $\NDTRC$ to \diagramsNDnomath}{From \NDTRC to \diagramsNDnomath}}
\label{sec:fromTRCtoRD}

We next describe the 5-step translation from any valid $\NDTRC$ expression to a \diagramsNDnomath.
We illustrate by translating a $\NDTRC$ expression (\cref{Fig_TRC_vs_RD_a}) into a \diagramsNDnomath\ (\cref{Fig_TRC_vs_RD_b}).
Notice that the translation critically leverages three conditions fulfilled by the input:
(1) Safe $\TRC$ (and thus also $\NDTRC$) only allows existential and not universal quantification \cite{Ullman1988PrinceplesOfDatabase},
(2) $\NDTRC$ only allows conjunction between predicates, and
(3) all predicates in $\NDTRC$ are guarded (recall \cref{def:anchor}).

\emph{(1) Creating canvas partitions}:
The scopes of the negations in a $\TRC$ are nested by definition.
We translate this hierarchy of the scopes for each negation (the \textit{negation hierarchy}) into a nested partition of the canvas.
\cref{Fig_TRC_vs_RD_d} illustrates the nested partitions as derived from the 
negation hierarchy \cref{Fig_TRC_vs_RD_c} of the original $\NDTRC$ expression.
Notice that the double negation ``$\neg(\neg(\ldots))$'' results in the scope $q_1$ of the negation hierarchy to be empty.

\emph{(2) Placing tables}:
Each table variable defines a table that gets placed into the canvas partition 
that corresponds to the respective negation scope.
For example, the tables corresponding to the table variables $r_1$, $r_2$, and $s_1$ are outside any negation scope and thus placed in the root partition $q_0$.
Similar to \Datalog\ and \RA\ (and in contrast to $\TRC$ and $\SQL$) \diagramsNDnomath\ do not need table aliases.

\begin{figure}[t]
\centering
\begin{subfigure}[b]{0.52\linewidth}
	\centering	
    \includegraphics[scale=0.42]{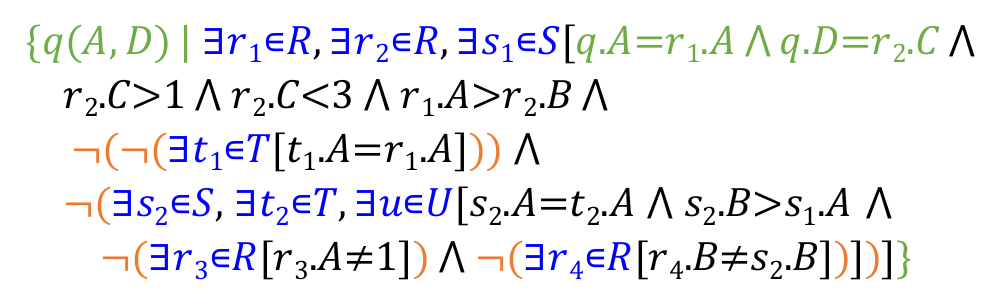}
    \caption{$\NDTRC$}
    \label{Fig_TRC_vs_RD_a}
\end{subfigure}
\hspace{-1mm}
\begin{subfigure}[b]{0.22\linewidth}
	\centering
    \includegraphics[scale=0.42]{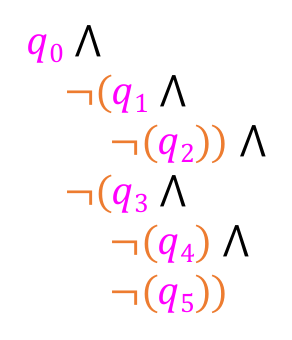}
    \caption{Negation hierarchy}
    \label{Fig_TRC_vs_RD_c}
\end{subfigure}
\hspace{2mm}
\begin{subfigure}[b]{0.23\linewidth}
	\centering
    \includegraphics[scale=0.42]{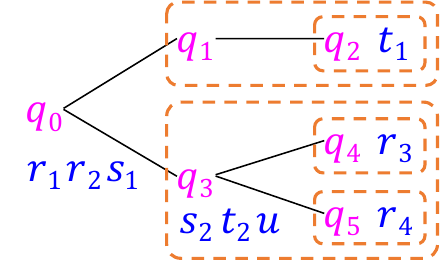}
    \caption{Canvas partitions}
    \label{Fig_TRC_vs_RD_d}
\end{subfigure}
\begin{subfigure}[b]{0.52\linewidth}
	\centering	
	\vspace{1mm}	
    \includegraphics[scale=0.4]{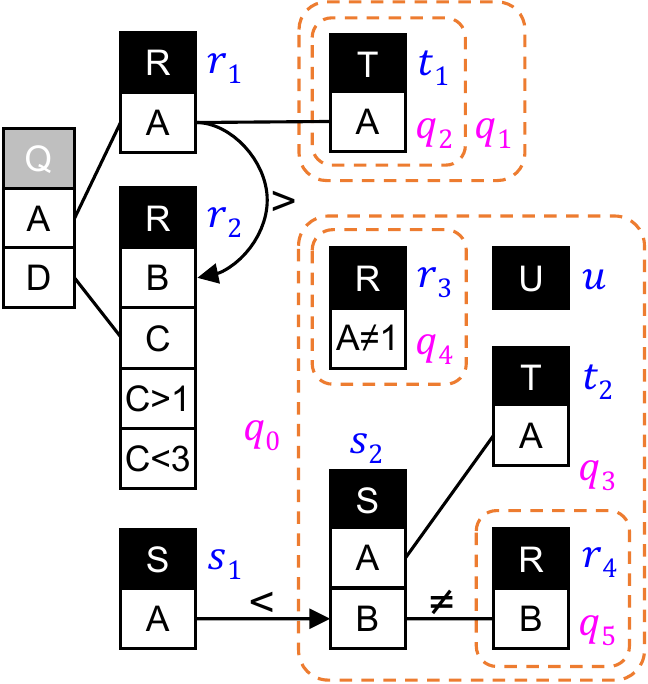}
    \caption{\diagramNDnomath}
    \label{Fig_TRC_vs_RD_b}
\end{subfigure}
\begin{subfigure}[b]{0.43\linewidth}
	\centering	
	\includegraphics[scale=0.42]{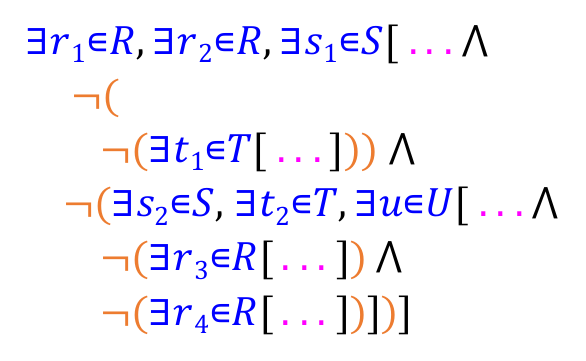}
	\vspace{2mm}	
	\caption{$\NDTRC$ stub}
	\label{Fig_TRC_vs_RD_intermediate}
\end{subfigure}
\caption{
\Cref{sec:fromTRCtoRD}: Example $\NDTRC$ expression (a), 
derivation of the negation hierarchy (b, c),
and corresponding \diagramNDnomath\ (d).
Colored partitions $q_i$ (purple) and table variables $r_i$ (blue) are not part of the diagrams and shown only to discuss the correspondence.
\Cref{sec:fromRDtoTRC}: $\NDTRC$ stub after step 2 of the translation (e).
}
\label{Fig_TRC_vs_RD}
\end{figure}

\emph{(3) Placing selection predicates}:
The predicates within each scope are combined via conjunction and are thus added one after the other.
Since all selection predicates are guarded, the selection predicates can be placed in the same partition as their respective table, which allows correct interpretation (see \cref{sec:fromRDtoTRC}).
For example, for 
$\neg(\exists r_3 \in R[r_3.A \neq 1])$, 
the predicate ``$A \neq 1$'' is placed directly below $R$ in $q_4$.
An example of a predicate that is not guarded would be $\exists r_3 \in R[\neg(r_3.A=1)]$: the scope of the negation contains a predicate of a table that is not existentially quantified in that scope.

\emph{(4) Placing join predicates}:
For each join predicate, we add the two attributes (if not already present) and connect them via an edge with any comparison operator drawn in the middle.
An attribute participating in multiple join predicates needs to be shown only once. 
Equi-joins are the standard and no operator is shown. 
Asymmetric joins include an arrowhead at one end of the edge (see \cref{sec:visualelements}).
As for guarded join predicates one or both attributes are in the partition of a local table, 
the negation can be correctly interpreted.
An example of an unguarded predicate would be
$\neg(r_4.B=s_2.B)$. 
What is possible is the logically-equivalent 
$r_4.B \neq s_2.B$ (as long as one of the two attributes is in the local scope of the last negation.
In our example, this is the case in 
$\neg(\exists r_4 \in R[r_4.B \neq s_2.B])$.

\emph{(5) Place and connect output table}:
The safety conditions 
for $\TRC$ \cite{Ullman1988PrinceplesOfDatabase} 
imply that output predicates can only be chosen from tables outside of all negations,
thus in the root scope or partition $q_0$.
If the query is non-Boolean, we add a new table named $Q$ for query
with a unique gray background \selectBox{} to imply the difference from table references.
If the query is Boolean, there is no output table and the query represents a logical sentence that is true or false.

\introparagraph{Completeness}
This five-step translation guarantees uniqueness of the following aspects:
(1) nesting hierarchy (corresponding to the negation hierarchy),
(2) where tables are placed (canvas partitions corresponding to the negation scope),
(3) which attributes have selection predicates, and
(4) which attributes participate in joins and how.
The following aspects are not uniquely defined
(without impact on the later interpretation):
(1) the order of attributes below each table;
(2) the direction of arrows can be flipped with a simultaneous label flip
e.g., 
$s_1.A {\scriptscriptstyle \xleftarrow{>}} s_2.B$
and
$s_1.A {\scriptscriptstyle \xrightarrow{<}} s_2.B$
are identical (by convention we avoid arrows from right-to-left, but allow them up-to-down and down-to-up);
(3) the size of visual elements and their relative arrangement; and
(4) any optional changes in style (e.g.\ other than dashed negation boxes, distinct visual appearance between tables and attributes).

\subsection{From \diagramNDnomath to TRC}
\label{sec:fromRDtoTRC}

We next describe the reverse five-step translation from any valid \diagramNDnomath\ to a valid and unique $\NDTRC$ expression. 
At the end, we summarize the conditions of a \diagramNDnomath\ to be valid,
which are the set of requirements listed for each of the five steps. 
We again illustrate with the examples from \cref{Fig_TRC_vs_RD}.

\emph{(1) Determine the nested scopes of negation}:
From the nested canvas partitions (\cref{Fig_TRC_vs_RD_d}), 
create the nested scopes of the negation operators of the later $\NDTRC$ expression (\cref{Fig_TRC_vs_RD_c}).

\emph{(2) Quantification of table variables}:
Each table in a partition corresponds to an existentially-quantified table variable. 
WLOG, we use a small letter indexed by the number of occurrence for repeated tables.
We add those quantified table variables in the respective scope of the negation hierarchy (\cref{Fig_TRC_vs_RD_d}).
For example, table $T$ in $q_2$ becomes
$\exists t_1 \in T[\ldots]$ and replaces $q_2$ in \cref{Fig_TRC_vs_RD_intermediate}.
Notice that partition $q_1$ is empty and the resulting negation scope does not contain any expression other than another negation scope.
We require that the leaves of the partition are not empty and contain at least one table.
Otherwise, expressions $\wedge \neg()$ and $\wedge \neg(\neg())$
would both have to be true, leaving the meaning of an empty leaf partition ambiguous.
This also implies that an empty canvas (there is only one partition, in which root and leaf are empty) is not a valid \diagramNDnomath.

\emph{(3) Selection predicates}:
Selection attributes are placed into the scope in which its table is defined.
For example, the predicate $R.A \neq 1$ in partition $q_4$ leads to
$\neg(\exists r_3 \in R[r_3.A \neq 1])$.

\emph{(4) Join predicates}:
For join predicates (lines connecting attributes in \diagramsNDnomath\ with optional direction and operator), we have a validity condition 
that they can only connect attributes of tables that are in the same partition or different partitions that are in a direct-descendant relationship.
In our example, $T.A$ in $q_2$ connects to $R.A$ in $q_0$ (here $q_0$ is the root and grandparent of $q_2$.)
However, we could not connect any attribute in $q_5$ with any attribute in $q_4$ (which are siblings in the nesting hierarchy).
This requirement is the topological equivalent of scopes for quantified variables in $\TRC$ and
guarantees that only already-defined table variables are referenced.
Each such predicate is placed in the scope of the lower of the two partitions in the hierarchy, which guarantees the predicate to be guarded. 
For example, the inequality join connecting $S.B$ in $q_3$ and $R.B$ in $q_5$ is placed in the scope of $q_5$.

\emph{(5) Output table}:
The validity condition for the output table is that each of its one or more attributes is connected to exactly one attribute from a table in the root partition $q_0$. This corresponds to the standard safety condition of safe $\TRC$.
This step adds the set parentheses, the output tables, and its attribute and output predicates shown in green in \cref{Fig_TRC_vs_RD_a}
for non-Boolean queries.

\introparagraph{Soundness}
Notice that this five-step translation guarantees that the resulting $\NDTRC$ is uniquely determined up to 
(1) renaming of the tuple variables;
(2) reordering the predicates in conjunctions, and
(3) flipping the left/right positions of attributes in each predicate.
It follows that \diagramsNDnomath\ are sound, 
and their logical interpretation is \emph{unambiguous}.

\subsection{Valid \diagramsNDnomath}
\label{sec:RD_validity}

In order for a \diagramsNDnomath
to be valid we require that
each of the conditions for the five-step translation process is fulfilled.

\begin{definition}[Validity]
\label{def:validRelationalDiagram}
A \diagramNDnomath\ is valid iff:
\begin{enumerate}
\item
The nested hierarchy of optional negation boxes partitions the canvas (any two dashed boxes are either disjoint or one is completely contained within the other).

\item
Each table, its attributes, and its selection predicates are discernible and reside in exactly one canvas partition.

\item
Each leaf in the canvas partition contains at least one table.

\item
Joins only happen between attributes of tables in partitions that are descendants (not siblings or their descendants).
Join predicates with asymmetric operators such as $<$ and $>$ require a line with directionality (e.g.\ an arrowhead).

\item
If there is an output table, then it has at least one attribute, 
and each attribute connects to exactly one attribute in the root partition $q_0$ 
(safety condition of $\TRC$).
\end{enumerate}
\end{definition}

\begin{theorem}[Unambiguous \diagramsNDnomath]
\label{unambiguous_RDs}
Every valid \diagramNDnomath\ has an unambiguous interpretation in $\NDTRC$.
\end{theorem}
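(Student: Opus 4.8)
The plan is to show that the reverse five-step translation of \cref{sec:fromRDtoTRC}, applied to any diagram satisfying the five conditions of \cref{def:validRelationalDiagram}, produces an $\NDTRC$ expression that is determined uniquely up to the three harmless syntactic variations already flagged in the Soundness paragraph---renaming of tuple variables, reordering of conjuncts, and flipping the left/right order of attributes within a predicate. Since each of these variations preserves logical equivalence ($\alpha$-equivalence, commutativity of $\wedge$, and the symmetry of each comparison operator under reversal), establishing this determinism is exactly what is needed to conclude that the logical interpretation is unambiguous. So the proof reduces to verifying that each of the five translation steps has a forced output on a valid diagram.

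First I would use validity condition~(1): the dashed boxes form a laminar family and therefore induce a well-defined rooted tree of canvas partitions, which is the skeleton the rest of the argument runs on. Step~1 reads the tree of negation scopes directly off this partition tree, so it is deterministic by condition~(1). Step~2 turns each table sitting in a partition into an existentially quantified variable in the matching scope; condition~(2) places every table in exactly one partition, so the assignment of a variable to a scope is forced and the only freedom is the \emph{name} of the variable, one of the admitted harmless variations. Condition~(3) is needed precisely here to forbid empty leaves: as the excerpt notes, an empty negation scope would force both $\wedge\,\neg()$ and $\wedge\,\neg(\neg())$ and leave the meaning of such a leaf ambiguous, so non-empty leaves remove this source of ambiguity.

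For the predicate-placing steps I would lean on guardedness. In Step~3, condition~(2) puts every selection predicate in the same partition as its table, so it lands in the scope where that table's variable is quantified, is automatically guarded, and its placement is forced. In Step~4, condition~(4) restricts each join to connect attributes in partitions that stand in an ancestor--descendant relationship, so the lower of the two partitions is uniquely determined; placing the predicate in that scope both references only already-quantified variables and makes the predicate guarded. Because asymmetric operators carry an explicit arrowhead (also condition~(4)), the operator and its reading direction are recovered without ambiguity, the only remaining freedom being the left/right flip. Step~5 uses condition~(5): each output attribute connects to exactly one attribute in the root partition $q_0$, which forces the output equality predicates and realizes the $\TRC$ safety condition. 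Assembling the scopes, quantifiers, and conjunctions over the partition tree then yields a single $\NDTRC$ expression modulo the three harmless variations.

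The main obstacle I expect is Step~4: arguing that placing each join predicate in the \emph{lower} of its two partitions is both forced and correct. One must check that this choice always yields a guarded predicate referencing only variables already in scope, and that no alternative placement could change the logical meaning---in particular, that the ancestor--descendant restriction of condition~(4) is exactly strong enough to guarantee this, whereas the excluded sibling case (illustrated by $q_4$ and $q_5$ in \cref{Fig_TRC_vs_RD}) would genuinely be ill-defined since neither partition's variables are in scope when the other is processed. Once determinism of every step is established and the three residual variations are checked to preserve logical equivalence, the diagram's interpretation is a single well-defined $\NDTRC$ query, which is the claim.
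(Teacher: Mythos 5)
Your proposal is correct and takes essentially the same route as the paper: the paper's proof consists precisely of the constructive translations of \cref{sec:fromTRCtoRD,sec:fromRDtoTRC}, whose reverse direction is deterministic on valid diagrams up to the three harmless variations noted in the Soundness paragraph. Your step-by-step matching of each validity condition to the translation step it forces is simply a more explicit write-up of that argument.
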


The constructive translations from 
\cref{sec:fromTRCtoRD,sec:fromRDtoTRC} form the proof.
Also notice that there is an additional validity condition we will add later in \cref{def:validRelationalDiagramCont}
that will extend the logical expressiveness to include disjunction and go beyond $\NDTRC$.

\subsection{Logical statements}
\label{sec:sentences}
\label{SEC:SENTENCES}

Boolean queries (or logical sentences) are formulas without free variables.
Being able to express 
\emph{relational sentences} (or constraints)
allows us to compare our formalism 
against a long history of formalisms for logical statements~\cite{ICDE:2024:diagrammatic:tutorial}.
An additional freedom with sentences is that the otherwise important safety conditions of
relational calculus vanish. 
Thus, we need to be able to express statements that do not have any existentially-quantified relations in the main canvas.
We next give an intuitive example, with more example given in
\iflabelexists{sec:BooleanExamples}
{\cref{sec:BooleanExamples}.}
{the online appendix~\cite{relationaldiagrams:links}.}

\begin{example}
\label{ex:negatedBooleanQuery}
Consider 
the statement: \emph{``All sailors reserve a red boat.''}\label{ex:allsailors}
\begin{align}
\begin{aligned}
\neg(\exists s \in \mathit{Sailor} 
	&[\neg (\exists b \in \mathit{Boat}, r \in \mathit{Reserves} [b.\mathit{color} = \mathit{'red'} \wedge 	\\
	&r.\mathit{bid} = b.\mathit{bid} \wedge r.\mathit{sid} = s.\mathit{sid}])])
\end{aligned}
\label{trc:allsailoredboats}
\end{align}

The first 4 steps of the translation in \cref{sec:fromTRCtoRD} still work:
the root canvas $q_0$
does not contain any relation
(\cref{Fig_Sailor_all_Sailors_some_red_boat}).
Similarly, the equivalent canonical $\NDSQL$ statement contains no \sql{FROM} clause before the first \sql{NOT}.
Notice that 
\cref{def:isomorphism} of query pattern isomorphism still works as
it is defined based on the relational tables.
\end{example}

\subsection{A note on implementation}

Creating valid \diagramsNDnomath\ programmatically requires 
a spatial layout algorithm that ensures that tables, predicates, and nested multi-layer canvas partitions are drawn unambiguously. To improve readability, it should also reduce edge crossings and edge bendiness.
For initial work in that direction, please see our optimization model approach called \textsc{STRATISFIMAL LAYOUT}~\cite{DiBartolomeo2022StratisfimalLayoutModular}.

\begin{figure}[t]
\centering	
\begin{subfigure}[b]{.23\linewidth}		
\begin{lstlisting}
SELECT not exists
 (SELECT *
 FROM Sailor s
 WHERE not exists
  (SELECT b.bid
  FROM Boat b, Reserves r
  WHERE b.color = 'red' 
  AND r.bid = b.bid
  AND r.sid = s.sid))
\end{lstlisting}
\vspace{-6mm}
\caption{}
\label{sql:Sailor_sentences_universal}
\end{subfigure}	
\hspace{10mm}
\begin{subfigure}[b]{.34\linewidth}
    \includegraphics[scale=0.5]{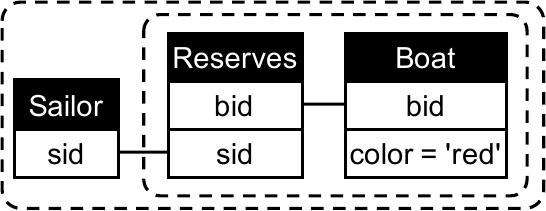}
\vspace{4mm}
\caption{}
\label{Fig_Sailor_all_Sailors_some_red_boat}
\end{subfigure}	
\caption{\Cref{ex:allsailors}: 
All sailors reserve some red boat.}
\label{fig:Sailor_sentences_universal}
\end{figure}

\section{Relational Query Patterns}
\label{sec:structureisomorphism}
\label{SEC:STRUCTUREISOMORPHISM}

\Cref{ex:intro_allboats}
illustrated that---while two languages may well have the same logical expressiveness---one of them 
may have more ways to represent ``logical patterns'' than the other.
We are interested in making this intuition more precise and \emph{establishing a language-independent formalism} 
that captures the so-far vague notion of
a relational query pattern.
The formalism should allow us to study 
the
``relative pattern expressiveness'' of languages, i.e.:
\emph{Can languages $\mathcal{L}_2$ express all patterns that language $\mathcal{L}_1$ 
can?}
We will then apply this formalism to the 
non-disjunctive fragment
and compare the four previously-defined relational query languages 
by their relative abilities 
to represent ``the same set of patterns'' 
as other languages.

In the following, we often need to distinguish between a query as the \emph{query expression} 
(the actual syntax in a particular query language) and a query as a \emph{logical function} that maps 
a set of input tables to an output table. If we need to be precise, we 
refer to the function implied by a query as the \emph{query semantics}
and the actual syntax as the \emph{query expression}. 
We use the word \emph{signature} to refer to an ordered argument list as the input to a function, 
and use bracket notation for indexing. 
For example, the signature of $f(x,y)$ is $\mathcal{S} = (x,y)$, and the first element is $\mathcal{S}[1] = x$.

\subsection{Defining relational query patterns}
\label{sec:defining_patterns}

\introparagraph{Intuition}
Our goal is to define relational patterns in a way that allows us to analyze and compare 
\emph{any relational query languages} irrespective of their syntax.
Our idea is to formalize patterns based on the only common symbols in queries across languages: 
references to the \emph{input relations} from the database.
Since every
relational query language needs to use input tables, the resulting formalism generalizes.
Intuitively, we will define two queries to be \emph{pattern-isomorphic}\footnote{Recall that an \emph{isomorphism} is a reversible \emph{structure-preserving} mapping between two structures.
For it to be reversible, it needs to be
surjective
(each element in the target is mapped to)
 and injective
(different elements in the source map to different elements in the target)
\cite{Gallier:2011ul}.
We use the term \emph{pattern-isomorphic} (instead of structure-isomorphic)
since our focus is on particular relational structures we refer to as patterns.
} if there is 
a one-to-one correspondence that pairs each table in one
query with a table in the other query that ``plays the same semantic role.''
This means that when applying identical changes to these paired
input tables (e.g.\ inserting a tuple), both queries will make identical changes to their outputs.
However, for queries with multiple occurrences of the same
input table (also called self-joins),
we need to 
treat such repeated occurrences of the same input table as if they were \emph{independent tables}.
We will refer to such repeated table occurrences as   
``\emph{table references}.''

For example, consider $q = R - \big( \pi_A R \times S \big)$.
The semantics of this query expression is a function $q(R, S)$ that maps input relations $R$ and $S$ to an output table,
and its signature would be
just its relational input $(R, S)$.
However, we will not be interested in the signature of a query semantics, 
but rather the \emph{signature of a query expression}, 
since we need to capture that two occurrences of $R$ play different semantic roles
in the query. 
In order to capture these different roles, we assign unique names
to each table reference, resulting in what we call the
\emph{dissociated query} $q' = R_1 - \big( \pi_A R_2 \times S \big)$.
We then formally define the \emph{relational query pattern} of $q$ 
as the logical function ${q}'(R_1, R_2, S)$ expressed by the dissociated query $q'$.
Notice that our definition of ``dissociation'' is inspired by, yet slightly different from 
its original use in the context of probabilistic inference~\cite{DBLP:journals/tods/GatterbauerS14}
and the complexity of resilience and causal responsibility~\cite{DBLP:journals/pvldb/FreireGIM15}.

\introparagraph{Formalization}
To make our definitions precise across relational query languages
with different syntax,
we need to unambiguously refer to the individual occurrences of
relational input tables in a given query expression, irrespective of the language used.

\begin{definition}[Query signature]
	A \emph{table reference} in a query expression $q$ 
	is any existentially or universally quantified reference to an input table.
	The \emph{signature} $\S$ of $q$ is the ordered list of its table references.
\end{definition}

For example,
the symbol ``$R$'' is a table reference
in the SQL fragment ``\sql{FROM R as R1}'', 
the $\TRC$ fragment ``$\exists r_1 \in R$'',
the $\RA$ fragment ``$\pi_A R$'', and the 
$\Datalog$ fragment ``$R(x,\_)$''.\footnote{Existential quantification either happens explicitly as in $\TRC$, or implicitly as in $\RA$.} 
In contrast, 
the symbol ``$R$'' is not a table reference 
in the SQL fragment ``\sql{WHERE R=1}''
as it is part of a reference to an attribute of a previously-defined table variable
and not part of an existentially-quantified statement.
The signature of a conjunctive $\SQL$ query with FROM clause
``\sql{FROM R as R1, R as R2, S}''
is $\S = (R, R, S)$.

\begin{definition}[Dissociated query]
A dissociation of a query expression $q$ with signature $\S$
is a modified query $q'$ with 
$\S$ being replaced with a table signature $\S'$
of same size (i.e.\ $|\S'| = |\S|$),
where
every table in $\S'$ has a different name,
and
every table $\S'[i]$
has the same schema as table $\S[i]$
for all $i \in [|\S|]$.
\end{definition}

We call $\S'$ the dissociated signature of $q$.
It is easy to dissociate a query by simply replacing duplicate names
in $\S$ with fresh names. For simplicity, we will use subscripts
when dissociating tables.

\begin{example}[Dissociation]
\label{ex:formalism}
The RA query 
$q = R - \big( \pi_A R \times S \big)$
has signature $\S = (R, R, S)$ with two of the three table references referring to the same input table $R$.
Replacing the signature $\S$ with a dissociated signature
$\S'=(R_1, R_2, S)$ leads to the dissociated query
$q' = R_1 - \big( \pi_A R_2 \times S \big)$.
Since the dissociated tables $R_1, R_2$ inherit the schema information from table $R$, 
the dissociated query is still a valid query
and represents a new relational function $q'(R_1, R_2, S)$ 
that maps three \emph{different} input tables to an output.
\end{example}

The intuition behind this formalism is that the dissociated query
defines a function that maps a set of \emph{table references} (not just a set of input tables) to an output table.
Thus, \emph{the dissociated query is a semantic definition}
of a relational query pattern across different relational query languages.
Two queries use the same query pattern if their dissociated queries are logically equivalent,
up to renaming and reordering of the input tables.

\begin{definition}[Relational pattern]
	Given a query expression $q$ with 
 	signature $\S$.
	The \emph{relational pattern} of $q$ is the logical function defined by its
	dissociated query $q'(\S')$.
\end{definition}

\begin{definition}[Pattern isomorphism]
	\label{def:isomorphism}
	Given two logically-equivalent queries $q_1$ and $q_2$
	with signatures $\S_1$ and $\S_2$,
	and dissociated queries $q_1'(\S'_1)$
	and
	$q_2'(\S'_2)$, respectively.
	The queries are \emph{pattern-isomorphic} iff 
	$q_1'(\S'_1) = 	q_2'(\pi(\S'_1))$
	for some permutation $\pi$.
	In that case, 
	we call the 
	bijection $\S_1[i] \mapsto \S_2[\pi(i)]$
	between the query signatures
	a \emph{pattern-preserving mapping}.
\end{definition}

\begin{example}[Patterns]
Next, consider the $\Datalog$ query
\begin{align*}
	I(x,y) &\datarule R(x,\_), S(y).\\
	Q_1(x,y) &\datarule R(x,z), \neg I(x,y).
\end{align*}
with signature $\S_1 \!=\! (R, S, R)$.
Then its dissociated query is 
\begin{align*}
	I(x,y)  &\datarule R_{\h{1}}(x,\_), S_{\h{2}}(y).\\
	Q'(x,y) &\datarule R_{\h{3}}(x,z), \neg I(x,y).
\end{align*}
with signature $\S_1' = (R_1, S_2, R_3)$.
Notice that $Q'$ defines a logical function $Q'(R_1, S_2, R_3)$ mapping two input tables with the same schema as $R$
and an input table $S$ to a binary output table.

Next, consider the RA query $q$ 
from \cref{ex:formalism}
with signature $\S_2=(R, R, S)$.
Notice that above $\Datalog$ query $Q$ 
and this $\RA$ query $q$
are pattern-isomorphic
since their dissociated queries define the same logical function up to permutation in the signatures:
$Q_1'(R_1, S_2, R_3) = q'(R_3, R_1, S_2) = q'(\pi(R_1, S_2, R_3))$ for permutation $\pi = (2, 3, 1)$.
Thus the mapping 
$(\S_1[1], \S_1[2], \S_1[3])
\mapsto
(\S_2[2], \S_2[3], \S_2[1])$
is a \emph{pattern-preserving mapping}.
\end{example}

\introparagraph{Complexity of deciding pattern isomorphism}
Deciding whether two relational queries are pattern-isomorphic 
is undecidable, in general
(we need to determine whether two queries are equivalent, both before and after dissociation).
This follows from Trakhtenbrot's theorem stating that the problem of validity in first-order logic on finite models is undecidable,
and thus also the logical equivalence of relational queries
(see, e.g., the reduction in \cite{ArenasBLMP:DatabaseTheory}).
However, we get a one-sided guarantee: if we can determine whether two queries are logically equivalent, 
then we can also determine whether they are pattern-isomorphic.
In practice, equivalence of relational queries can often be determined,
even for sophisticated SQL queries with grouping and aggregation evaluated over bags or sets~\cite{DBLP:journals/pvldb/ChuMRCS18}.

\subsection{Discussion with illustrating example}
\label{sec:intuitionforpatterndefinition}

We next give an
example of queries that have different query patterns
\emph{although they are logically equivalent and have the same table signature}.
This detailed example motivates to a large extent why we define query patterns 
based on the \emph{dissociated} signature.

\begin{example}[Different patterns]
\label{ex:querystructureiso1}
Consider
table $R(A,B)$ and the two $\Datalog$ queries
$Q_1(R)$ and $Q_2(R)$ with
\begin{align*}
	Q_1(x) 	&\datarule \h{R}(x,\_), \h{R}(x,\_). \\
	Q_2(x)	&\datarule \h{R}(x,y),\h{R}(\_,y).
\end{align*}
Both queries are logically equivalent to $Q(x) \datarule \h{R}(x, \_)$, 
and thus also logically equivalent to each other.
However, $Q_1$ and $Q_2$ represent different patterns:
$Q_1$ never uses the second attribute of $R$ whereas $Q_2$ uses that attribute to join both occurrences of $R$.
This difference becomes even more apparent in SQL:
\cref{Fig_isomorphism_sql_1}
would even work if $R$ was unary, whereas \cref{Fig_isomorphism_sql_2} requires $R$ to be at least binary.

We next show that table dissociation allows us to formally distinguish the two
patterns. 
First, notice that both queries have two occurrences
of $R$ as table references, and hence we need to associate each individual appearance to the ``role'' it
plays semantically in the query.
We achieve this by first dissociating the two occurrences of $R$ into two fresh input tables (with the same schema).
The two dissociated queries are 
$Q_1'(R_1, R_2)$
and
$Q_2'(R_3, R_4)$
with
\begin{align*}
	Q_1'(x) &\datarule \h{R_1}(x,\_),\h{R_2}(x,\_). \\
	Q_2'(x)	&\datarule \h{R_3}(x,y), \h{R_4}(\_,y).
\end{align*}
It is easy to verify that neither of the two possible mappings
$h_1 = \{(R_1, R_3), (R_2, R_4)\}$ and
$h_2 = \{(R_1, R_4),$ $(R_2, R_3)\}$,
preserves logical equivalence for the dissociated queries.
For example, $R_1(1,2), R_2(1,3)$ returns an answer for $Q_1'$ but not for $Q_2'$, 
under neither $h_1$ nor $h_2$.

However, $Q_1$ is pattern-isomorphic to the $\TRC$ query $q_3(R)$ with
\begin{align*}
	&\{ q_3(A) \mid \exists r_1 \in \h{R}, r_2 \in \h{R} 
	[q.A = r_1.A \wedge r_1.A = r_2.A] \}	\\
\intertext{
To see that, notice that its dissociated query $q_3'(R_5, R_6)$ with
}
	&\{ q_3'(A) \mid \exists r_1 \in \h{R_5}, r_2 \in \h{R_6}
	[q.A = r_1.A \wedge r_1.A = r_2.A] \}	
\end{align*}
allows the isomorphism
$h_3 = \{(R_1, R_5), (R_2, R_6)\}$
from $Q_1'$ to $q_3'$
that preserves logical equivalence.
By the same arguments, $Q_1$ is pattern-isomorphic to the SQL query in 
\cref{Fig_isomorphism_sql_1},
and $Q_2$ is pattern-isomorphic to the SQL query in
\cref{Fig_isomorphism_sql_2}.

\end{example}

\begin{figure}[t]
\centering
\begin{subfigure}[b]{.185\linewidth}
\begin{lstlisting}
SELECT DISTINCT R1.A
FROM R R1, R R2
WHERE R1.A = R2.A
\end{lstlisting}
\vspace{-7mm}
\caption{}
\label{Fig_isomorphism_sql_1}
\end{subfigure}
\begin{subfigure}[b]{.32\linewidth}
	\centering
	\includegraphics[scale=0.4]{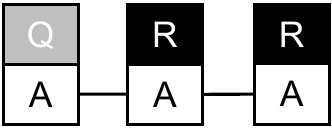}
	\vspace{-1mm}
	\caption{}
	\label{Fig_isomorphism_def_1}
\end{subfigure}
\hspace{45mm}
\begin{subfigure}[b]{.185\linewidth}
	\vspace{1mm}	
\begin{lstlisting}
SELECT DISTINCT R1.A
FROM R R1, R R2
WHERE R1.B = R2.B
\end{lstlisting}
\vspace{-7mm}
\caption{}
\label{Fig_isomorphism_sql_2}
\end{subfigure}
\begin{subfigure}[b]{.32\linewidth}
	\centering
	\vspace{1mm}
	\includegraphics[scale=0.4]{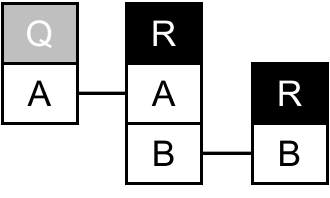}
	\vspace{-1mm}
	\caption{}
	\label{Fig_isomorphism_def_2}
\end{subfigure}
\caption{
\Cref{ex:querystructureiso1}: 
Two queries (a) and (c) with identical signatures that are logically equivalent but \emph{not pattern-isomorphic}.
Their associated \diagrams\ are shown in (b) and (d), respectively.}
\label{Fig_isomorphism_def}
\end{figure}

By design, our definition excludes views and intermediate tables such as Intensional Database Predicates in $\Datalog$ 
from the definition of table references.
To see why, consider a query  
returning nodes that form the starting point of a length-3 directed path:
\begin{align*}
	Q_1(x) 	&\datarule \h{E}(x,y), \h{E}(y,z), \h{E}(z,w).
\end{align*}
The following $\Datalog$ query uses
the same logical pattern (find three edges that join and keep the starting node),
even though it defines the intermediate intensional database predicate $I$:
\begin{align*}
	I(y)  	&\datarule \h{E}(y,z), \h{E}(z,w). \\
	Q_1'(x) &\datarule \h{E}(x,y), I(y).
\end{align*}

\subsection{Comparing the ``pattern-expressiveness''
of relational languages}
\label{sec:patternexpressiveness}

We next add the final definition needed to formally compare relational query languages based on their relative abilities to represent relational query patterns.

\begin{definition}[Representation equivalence]
	We say that a query language $\L_2$ can \emph{pattern-represent} a query language $\L_1$
	(written as $\L_1 \subseteq^\rep \L_2$) 
	iff 
	for every legal query expression $q_1 \in \L_1$ 
	there is a pattern-isomorphic query $q_2 \in \L_2$.
	We call a query languages $\L_2$ \emph{pattern-dominating} another language $\L_1$ 
	(written as $\L_1 \subsetneq^\rep \L_2$)
	iff
	$\L_1 \subseteq^\rep \L_2$
	but
	$\L_1 \not \supseteq^\rep \L_2$.
	We call $\L_1, \L_2$ \emph{representation equivalent}
	(written as $\L_1 \equiv^\rep \L_2$)
	iff 
	$\L_1 \subseteq^\rep \L_2$
	and
	$\L_1 \supseteq^\rep \L_2$,
	i.e., both language can represent the same set of relational patterns.	
\end{definition}

We are now ready to state our result on the hierarchy of pattern expressiveness 
of the
non-disjunctive fragment of the four languages defined earlier (\cref{sec:nondisjunctivefragment})
and our proposed relational diagrammatic representation \diagramsNDnomath\  (\cref{sec:QV}):

\begin{theorem}[Representation hierarchy]
	\label{th:representations}
	\label{TH:REPRESENTATIONS}%
	$\NDRA \subsetneq^\rep \DatalogND \subsetneq^\rep \NDTRC 
	\equiv^\rep \NDSQL
	\equiv^\rep \diagramsND$ (see \cref{Fig_Representation_proof_main}).
\end{theorem}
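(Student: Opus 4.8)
The plan is to split the chain into three qualitatively different kinds of statements---the two representation equivalences at the top, the two containments, and the two strictness (non-containment) claims---and attack them with increasingly heavy tools. Throughout I would exploit one structural observation about \cref{def:isomorphism}: since the witnessing $\pi$ is a permutation, pattern isomorphism forces a bijection that pairs table references of \emph{equal schema}, and plugging the same dissociated relation into both functions requires matched references to be union-compatible. In particular, when the relations referenced by a query have pairwise-distinct schemas (as we may arrange for each separating example), any pattern-isomorphic query must reference each input relation exactly as many times as the original. This reduces every strictness claim to a \emph{reference-counting lower bound} in the weaker language.

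For the equivalences $\NDTRC \equiv^\rep \NDSQL \equiv^\rep \diagramsND$ I would reuse the machinery already built. The canonical-form correspondence between $\NDSQL$ and $\NDTRC$ (\cref{sec:nondisjunctivefragment}) and the two constructive translations between $\NDTRC$ and \diagramsNDnomath\ (\cref{sec:fromTRCtoRD,sec:fromRDtoTRC}, underlying \cref{unambiguous_RDs}) are all \emph{bijective on table references}: each \textsf{FROM}-clause table, existential quantifier, and table box maps to exactly one of the others, and every transformation preserves logical meaning. Hence each translation preserves the signature exactly, so it sends a query to a logically-equivalent one whose dissociation computes the identical function; the two are therefore pattern-isomorphic, and applying this in both directions yields $\equiv^\rep$.

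For the containments $\NDRA \subseteq^\rep \DatalogND$ and $\DatalogND \subseteq^\rep \NDTRC$ I would give signature-preserving translations rather than invoking \cref{th:equivalence} as a black box (which only guarantees logical, not pattern, equivalence). For RA-to-Datalog I mirror the operator tree with one IDB per operator, translating $E_1 - E_2$ into $Q \datarule I_1(\bar x), \neg I_2(\bar x)$; since no $\NDRA$ operator duplicates a leaf and $\DatalogND$ admits each IDB once, every extensional reference becomes exactly one body atom. For Datalog-to-TRC I existentially quantify one tuple variable per extensional atom and inline the single-use, non-recursive IDBs, turning negated IDBs into negation scopes; guardedness holds because safe Datalog binds every variable positively. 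Both translations are reference-bijective, so by the argument above they yield pattern-isomorphic queries.

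The hard part is the two strictness claims, i.e.\ the lower bounds witnessing $\not\supseteq^\rep$. For $\DatalogND \subsetneq^\rep \NDTRC$ I would take the three-table division pattern ``names of sailors who reserved all boats,'' whose $\NDTRC$ rendering references $\mathit{Sailor}$, $\mathit{Boat}$, $\mathit{Reserves}$ exactly once each, and argue that \emph{no} equivalent $\DatalogND$ query references $\mathit{Sailor}$ only once. The intuition---which the proof must make rigorous over \emph{all} admissible rule structures---is that the universal quantification forces a negated intensional predicate whose body must positively bind the sailor variable (to assert ``this sailor misses some boat''), while the output rule must bind that same variable again for safety; Datalog rule boundaries forbid sharing a single binding across the negation, so $\mathit{Sailor}$ appears at least twice and the reference count cannot match. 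For $\NDRA \subsetneq^\rep \DatalogND$ I would use the antijoin $\{(x,y)\in R \mid \neg\exists s \in S\,(s.A = x)\}$, expressible in $\DatalogND$ with one reference to each of $R,S$, and show that basic $\NDRA$ needs $R$ twice: non-monotonicity in $S$ forces $S$ beneath a difference operator, and with $R$ occurring only once the union-compatible arm of that difference cannot reconstruct the relevant second-attribute values of $R$---they are visible only through $R$ itself---so it cannot delete exactly the right tuples. I expect these two impossibility arguments to be the crux, since each quantifies over all expressions of the weaker language, and the clean monotonicity-and-schema reasoning that settles $\NDRA$ does not transfer directly to $\DatalogND$, whose negated IDBs demand the more delicate binding-across-negation analysis above. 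With both lower bounds in hand, transitivity of $\subseteq^\rep$ together with the equivalences assembles the full chain.
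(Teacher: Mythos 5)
Your proposal follows essentially the same route as the paper: the equivalences and containments come from observing that the translations behind \cref{th:equivalence} are bijective on table references, and the two strictness claims are reduced---via the observation that pattern isomorphism forces matching reference counts---to exactly the paper's two separating examples, namely the binary antijoin $Q(x,y) \datarule R(x,y), \neg S(y)$ for $\NDRA \not\supseteq^\rep \DatalogND$ (\cref{lem:RA_separation}) and the three-table division query for $\DatalogND \not\supseteq^\rep \NDTRC$ (\cref{lem:Datalog_separation}). The only caveat is that your two impossibility arguments are sketches of intuitions the paper must (and does) make rigorous: the $\NDRA$ case by an arity-based case analysis on the difference operator, and the $\DatalogND$ case---which you correctly flag as the crux---by canonicalizing programs, assigning monotonicity signs to table occurrences, and exhaustively enumerating all program shapes that use each of the three tables once.
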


\iflabelexists{appendix:beginning}
{The proofs is provided in \cref{appendix:proofofhierarchy}.
In addition, \cref{appendix:antijoins}
shows how the separation between $\NDRA$ and $\DatalogND$ disappears after adding
the antijoin operator to the basic operators of $\NDRA$,
while the separation from $\NDTRC$ remains.
}
{In addition to containing all proofs, our optional online appendix~\cite{relationaldiagrams:links}
shows how the separation between $\NDRA$ and $\DatalogND$ disappears after adding
the antijoin operator to the basic operators of $\NDRA$,
while the separation from $\NDTRC$ remains.}
The proof demonstrates that \emph{relational calculus has relational patterns that cannot be expressed in relational algebra}.
The important consequence is that \textit{$\NDRA$, $\DatalogND$ or any diagrammatic language modeled after them 
would not be a suitable target language for helping users understand all existing relational query patterns}
(including those used by $\NDSQL$).
Our related work (\cref{sec:relatedWork})
shows that most existing visual query representations are modeled after relational algebra in that they model data flowing between relational operators, which implies they cannot faithfully 
represent 
all relational query patterns from $\NDTRC$
or $\NDSQL$.

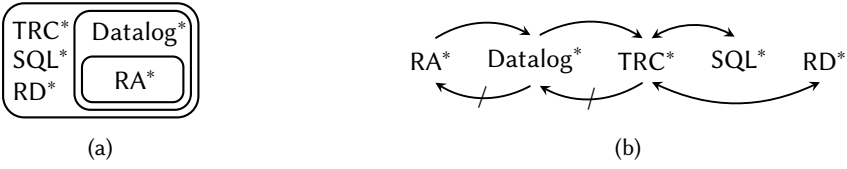
\begin{figure}[t]
\centering
\begin{subfigure}[b]{0.31\linewidth}
	\centering
	\begin{tikzpicture}[scale=1.0]
	\draw
	  (1.05,1.1) node[anchor=west] {$\DatalogND$}
	  (1.35,0.5) node[anchor=west] {$\NDRA$}  
	  (0.0,1.15) node[anchor=west] {$\NDTRC$}
	  (0.0,0.75) node[anchor=west] {$\NDSQL$}  
	  (0.0,0.35) node[anchor=west] {$\NRD$};        
	\draw[black,rounded corners=8,thick]
		(0,0) rectangle (2.6, 1.5);
	\draw[black,rounded corners=6,thick]
	    (0.95,0.1) rectangle (2.5,1.4);
	\draw[black,rounded corners=4,thick]
	    (1.05,0.2) rectangle (2.4,0.8);	
	\end{tikzpicture}
    \caption{}
\end{subfigure}	
\hspace{1mm}
\begin{subfigure}[b]{0.66\linewidth}
	\centering
	\begin{tikzpicture}[>=stealth, shorten >= 2pt, shorten <= 2pt, line width=0.25mm]
	  \node (RA) {$\NDRA$};
	  \node[right=2mm of RA] (Datalog) {$\DatalogND$};
	  \node[right=2mm of Datalog] (TRC) {$\NDTRC$};
	  \node[right=2mm of TRC] (SQL) {$\NDSQL$};  
	  \node[right=2mm of SQL] (RD) {$\NRD$};    
    
	  \draw[->] (RA.north) to[bend left] 
		(Datalog.north);
	  \draw[->] (Datalog.south) to[bend left] 
	  	node[pos=0.5, sloped] {\small$/$}
		(RA.south);
	  \draw[->] (Datalog.north) to[bend left] 
		(TRC.north);
	  \draw[->] (TRC.south) to[bend left] 
	   	node[pos=0.5, sloped] {\small$/$}
		(Datalog.south);
	  \draw[<->] (TRC.north) to[bend left] 
		(SQL.north);
	  \draw[<->] (TRC.south) to[bend right=25] 
		(RD.south);	
	\end{tikzpicture}
    \caption{}
\end{subfigure}	
\caption{
\cref{th:representations}: (a) A diagram summarizing the representation hierarchy between the non-disjunctive fragments of 4 query languages
and $\diagramsND$ (shown as $\NRD$).
(b) Directions of pattern-preservation (and non-preservation) used in the proofs.
}
\label{Fig_Representation_proof_main}
\end{figure}

\subsection{Similar patterns across schemas}
\label{sec:pattern_schemas}

We next extend the notion of pattern equivalence to allow comparing queries across \emph{different schemas}.
We call this concept ``\emph{pattern similarity}'' 
and define it as a Boolean condition: two queries either have a similar pattern or not.
The intuition is simple and best illustrated with the two queries from \cref{Fig_intro_textbook}:
As written those queries are not logically equivalent and thus they can't be pattern-isomorphic.
However, if we first replace the table and attribute names from $q_1$ with table names from $q_2$ in a reversible (thus bijective) way, 
then the thus modified query $q_1'$ would be pattern-isomorphic to $q_2$.

More formally, call a \emph{schema mapping} $\lambda$ from query $q_1$ to $q_2$,
a bijective mapping that
replaces table names, attribute names, constants, and attribute order appearing in $q_1$ with those from $q_2$.

\begin{definition}[Similar Patterns]
	\label{def:similarity}
	Given two queries $q_1$ and $q_2$.
	The queries use a \emph{similar pattern} iff
	there is a schema mapping $\lambda$ from $q_1$ to $q_2$ s.t.\
	$\lambda(q_1)$ and $q_2$ are pattern-isomorphic.
\end{definition}

\begin{example}
	For our example from \cref{Fig_intro_textbook}, consider
	a mapping $\lambda$ that replaces
	`Sailor' with `SX',
	`Reserves' with `SPX',
	`Boat' with `PX',
	`sname' with `sname',
	`sid' with `sno', and
	`bid' with `pno'.	
	Then the thus modified query $\lambda(q_1)$ is pattern-isomorphic with $q_2$. 
\end{example}

\section{Relational completeness}
\label{sec:completeness}
\label{SEC:COMPLETENESS}
\label{sec:unionForDiagrams}

\begin{figure}[t]
\centering
\begin{subfigure}[b]{0.175\linewidth}		
\begin{lstlisting}
SELECT DISTINCT R.A
FROM R
WHERE not exists
 (SELECT *
 FROM S
 WHERE not exists
  (SELECT *
  FROM R AS R2
  WHERE (R2.B = S.B
  OR R2.C = S.C)
  AND R2.A = R.A))
\end{lstlisting}
\vspace{5.5mm}
    \caption{}
    \label{fig:disjunctionSQL}
\end{subfigure}
\hspace{3mm}
\begin{subfigure}[b]{0.175\linewidth}		
\begin{lstlisting}
SELECT DISTINCT R.A
FROM R
WHERE not exists
 (SELECT *
 FROM S
 WHERE not exists
  (SELECT *
  FROM R AS R2
  WHERE R2.B = S.B)
  AND R2.A = R.A)
 AND not exists
  (SELECT *
  FROM R AS R2
  WHERE R2.C = S.C
  AND R2.A = R.A))
\end{lstlisting}
\vspace{-4.5mm}
    \caption{}
    \label{fig:disjunctionSQL2}
\end{subfigure}	
\hspace{1.5mm}	
\begin{subfigure}[b]{0.27\linewidth}
    \includegraphics[scale=0.42]{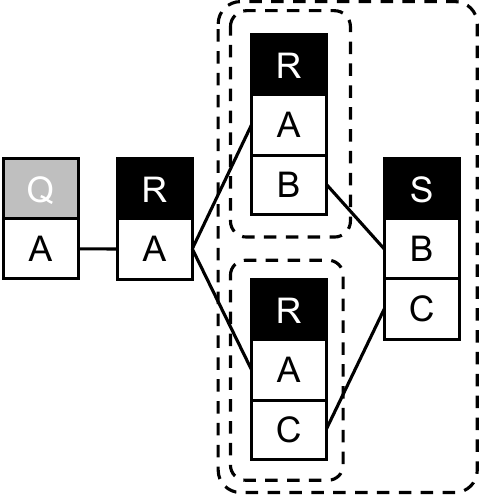}
	\vspace{3mm}
    \caption{}
    \label{Fig_disjunction_QV}
\end{subfigure}	
\hspace{1mm}
\begin{subfigure}[b]{0.12\linewidth}
\begin{lstlisting}
(SELECT 
 DISTINCT R.A
FROM R)
UNION
(SELECT 
 DISTINCT S.A
FROM S)
\end{lstlisting}
\vspace{4.5mm}
	\caption{}
	\label{Fig_simplestunion_SQL}
\end{subfigure}	
\hspace{1.5mm}
\begin{subfigure}[b]{0.125\linewidth}
    \includegraphics[scale=0.39]{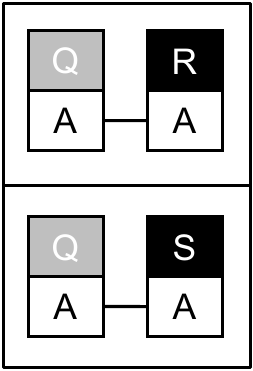}
	\vspace{8.5mm}
    \caption{}
    \label{Fig_simplestunion}
\end{subfigure}	
\caption{Illustrations for \Cref{ex:disjunction1} on replacing disjunctions:
(a) $\SQL$ with disjunctions,
(b) logically-equivalent (yet not representation-equivalent) $\NDSQL$ statement, and
(c) \diagrams.
Illustrations for \Cref{ex:disjunction2} on creating the union of queries:
(d) union of $\NDSQL$ statements, and
(e) \diagrams\ with union cells.}
\label{fig:disjunction}
\end{figure}

To make \diagrams\ relationally complete, we now 
remove the non-disjunction restrictions and
allow disjunctions and unions in
all four relational query languages (\cref{sec:nondisjunctivefragment}).
This means we must also add a corresponding syntactic device to \diagrams\
that achieves logical equivalence to the other
relational query languages.
Unfortunately, this means that \diagrams\ are no longer
representation-equivalent to $\TRC$.
Can this be addressed in the future by a better diagram design? 
Based on the current understanding of the inherent limits of diagrams to express disjunctive
information~\cite{shin_1995,Shin:2002} 
(see also the colored car example in 
\iflabelexists{subsec:disjunchard}{\cref{subsec:disjunchard}}{the online appendix \cite{relationaldiagrams:links}}),
such an extension would require adding \emph{non-diagrammatic abstractions} (i.e.\ ``syntactic devices'').

The syntactic device that makes \diagrams\ relationally complete is 
inspired by the representation of disjunction in $\Datalog$.
It was also proposed by Peirce in his discussion of Euler diagrams~\cite[4.366]{peirce:1933}
(see also \cite[sect.~2.3.1]{shin_1995}):
we allow placing several $\diagramsND$ on the same canvas, each in a separate \emph{union cell}.
Each cell of the canvas then represents only conjunctive information, 
yet the relation among the different cells is disjunctive (a union of the outputs).

We next illustrate with two examples logical transformations that are not pattern-preserving 
but that guarantee relational completeness. 
These transformations, together with union cells,
make \diagrams\
\emph{relationally complete}: 
every query expressible in full $\RA$, safe $\TRC$, $\DatalogN$, or 
our prior $\NDSQL$ fragment extended by union and
disjunctions of predicates\footnote{Extend the grammar from
\cref{table:supported_grammar} 
with one additional rule: 
\textsf{P::= `('P OR P`)'}
and making adjustments for allowing the \textsf{UNION} clause between non-Boolean queries.
}
can then be represented 
as a logically-equivalent \diagram.
The first example shows how to avoid disjunctions if they are not at the root level.
The second shows how to replace disjunctions in the root by unions of queries.

\begin{example}[Replacing disjunctions]\label{ex:disjunction1}
Consider the $\SQL$ query from \cref{fig:disjunctionSQL}
which contains a disjunction and is not in $\NDSQL$. 
Using De Morgan's Law with quantifiers 
$\neg \exists r \in R[A \vee B] = 
\neg (\exists r \in R[A] \vee \exists r \in R[B]) =
\neg \exists r \in R[A] \wedge \neg \exists r \in R[B]$, 
we can first reformulate the conditions including disjunction as DNF, 
and then distribute the quantifier over the conjuncts. 
This leads to a disjunction-free query, yet leads to an increased number of table references:
\begin{align*}
\phantom{=} &\{ q(A) \mid 
	\exists r \in R [q.A \equal r.A \wedge \neg (\exists s \in S\\
	&\hspace{10.5mm}[\h{\neg (\exists r_2 \in R [ }(r_2.B \equal s.B \;\h{\vee}\; r_2.C \equal s.C) \wedge r_2.A \equal r.A \h{] )} ])] \}
\\
= &\{ q(A) \mid 
  \exists r \in R [q.A \equal r.A \wedge \neg (\exists s \in S\\
	&\hspace{10.5mm}[\h{\neg (\exists r_2 \in R [}r_2.B \equal s.B \wedge r_2.A \equal r.A \h{])} \;\h{\wedge}\;	\\
	&\hspace{11.7mm}\h{\neg (\exists r_3 \in R [}r_3.C \equal s.C \wedge r_3.A \equal r.A \h{])} ]] \}	
\end{align*}
\cref{fig:disjunctionSQL2} shows this query as representation-equivalent $\NDSQL$ query,
and \cref{Fig_disjunction_QV} as 
\diagram.
\end{example}

\begin{example}[Union of queries]
	\label{ex:disjunction2}
Consider two unary tables $R(A)$ and $S(A)$ and the $\TRC$ query
\begin{align*}
	\{q(A) \mid \exists r\in R[q.A = r.A] \;\h{\vee}\; \exists s\in S[q.A = s.A]\}
\end{align*}
We can write this query as a union of disjunction-free $\NDTRC$ queries:
\begin{align*}
	\h{\{}q(A) \mid \exists r\in R[q.A = r.A]\h{\}} 
	\;\h{\cup}\; 
	\h{\{}q(A) \mid \exists s\in S[q.A = s.A]\h{\}} 
\end{align*}
\Cref{Fig_simplestunion_SQL} shows a pattern-isomorphic $\SQL$ query, 
and
\cref{Fig_simplestunion} shows it as \diagrams\ with two separate $\diagramsND$ queries,
each in a separate union cell, and each with the same attribute signature in the output table.
This query cannot be rewritten without the union operator in $\RA$,
nor $\diagramsND$ without union cells.
\end{example}

The additional validity criterion for multiple union cells follows the conditions of union or disjunction in the \emph{named perspective}~\cite{DBLP:books/aw/AbiteboulHV95}
of query languages:
for disjunction in $\TRC$, 
each operand needs to have the same arity, and the mapping between them is achieved by reusing the same variables.%

\begin{definition}[Validity---extending \cref{def:validRelationalDiagram})]
\label{def:validRelationalDiagramCont}\hspace{.1px} %
\begin{enumerate}
    \setcounter{enumi}{5}%
    \item The output tables in multiple cells for the same query need to have the same name and same set of attributes.
\end{enumerate}
\end{definition}

\begin{theorem}[Completeness]
\label{th:completeness}
\diagrams\ ($\diagramsND$ extended with union cells) are relationally complete.
\end{theorem}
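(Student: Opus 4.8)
The plan is to reduce relational completeness of \diagramsNDnomath\ with union cells to the one-to-one correspondence between $\NDTRC$ and union-free \diagramsNDnomath\ already established by the constructive translations of \cref{sec:fromTRCtoRD,sec:fromRDtoTRC} and \cref{unambiguous_RDs}. Since safe $\TRC$ is relationally complete~\cite{Ullman1988PrinceplesOfDatabase}, it suffices to show that every safe $\TRC$ query $q$ admits a logically-equivalent diagram. My central step is a normal-form lemma: \emph{every safe $\TRC$ query is logically equivalent to a finite union $q_1 \cup \cdots \cup q_k$ of $\NDTRC$ queries that all share the same output name and attribute set.} Granting this, \cref{unambiguous_RDs} turns each disjunction-free $q_i$ into a valid union-free \diagramNDnomath\ $D_i$; I then place the $D_i$ into separate union cells on a single canvas. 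The extended validity criterion of \cref{def:validRelationalDiagramCont} is satisfied because all $q_i$ inherit the same output schema from $q$, and the union-cell semantics---a union over the per-cell outputs---realizes $q_1 \cup \cdots \cup q_k \equiv q$.

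I would prove the normal-form lemma by induction on the disjunction structure of $q$, after pulling quantifiers out as early as possible into the canonical form of \cref{sec:TRC}, so that every block of existential quantifiers is either at the root or immediately follows a negation. Disjunctions (including those arising from disjunctive selection predicates) then occur in two kinds of context. A disjunction inside a negation scope is removed exactly as in \cref{ex:disjunction1}: rewrite the local matrix into DNF, distribute the existential using $\exists r[A \vee B] \equiv \exists r[A] \vee \exists r[B]$, and apply De Morgan, $\neg(\exists r[A] \vee \exists r[B]) \equiv \neg\exists r[A] \wedge \neg\exists r[B]$, converting the negated disjunction into a conjunction of negations at the cost of duplicating the quantified table reference. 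A disjunction in a positive context is instead pushed outward toward the root, where a root-level disjunction is precisely a union of queries as in \cref{ex:disjunction2}. Iterating these rewrites drives $q$ into a root union of disjunction-free blocks, i.e.\ a union of $\NDTRC$ queries.

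The main obstacle, and the step needing the most care, is verifying that every disjunct produced stays inside $\NDTRC$---that it is \emph{safe} and uses only \emph{guarded} predicates (\cref{def:anchor,def:NDTRC})---and that the procedure terminates. Guardedness is the delicate point: when De Morgan duplicates a quantified variable across its two conjuncts, every predicate that guarded the original occurrence must be copied into each conjunct---in particular the join predicates tying it to variables of enclosing scopes, such as the replicated ``$r_2.A = r.A$'' in \cref{ex:disjunction1}---so that each new negation scope still contains a locally quantified attribute. Termination I would argue by a well-founded measure on the multiset of nesting depths of the remaining disjunction connectives, which strictly decreases at each step; safety follows because every output attribute stays equated to a root-scope table reference throughout, so each final $q_i$ meets the standard safety condition. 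Note that the assembly into union cells only needs each $q_i$ to be \emph{logically} equivalent to its diagram, not pattern-isomorphic, which is why the $\NDRA \subsetneq^\rep \NDTRC$ separation of \cref{th:representations} poses no obstruction to completeness.
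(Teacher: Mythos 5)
Your proposal is correct and follows essentially the same route as the paper's proof: bring the query into the canonical form with quantifiers pulled out, eliminate disjunctions under negation scopes via DNF plus De Morgan (at the cost of duplicated table references), and push the remaining root-level disjunctions into a union of $\NDTRC$ queries, each rendered in its own union cell via the already-established $\NDTRC$-to-diagram translation. The only difference is one of rigor, not of approach: you make explicit the guardedness, safety, and termination bookkeeping that the paper's proof leaves implicit.
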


The proof is
\iflabelexists{appendix:proofoflogicalexpressivness}
{in \cref{app:proofrelationalcompleteness}.}
{in the optional online appendix~\cite{relationaldiagrams:links}.}
It uses the earlier proven logical expressiveness of 
\diagramsNDnomath\
and the fact that disjunctions can either be rewritten with DeMorgan or be pushed to the root.
It also immediately follows that \diagramsNDnomath\ (without union cells) can already express any logical statement.

\begin{corollary}[Completeness]
\label{co:completeness}
Any logical statement in first-order logic can be expressed by a logically-equivalent \diagramsNDnomath.
\end{corollary}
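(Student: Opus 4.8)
The plan is to reduce an arbitrary first-order sentence to a \emph{Boolean} $\NDTRC$ expression and then invoke the constructive translation of \cref{sec:fromTRCtoRD}. The key observation is that a sentence has no free variables, so it corresponds to a Boolean query: there is no output table, and therefore none of the safety conditions of safe $\TRC$ apply. This is exactly the difference from the general (non-Boolean) case. The only reason union cells were needed for \cref{th:completeness} is the safety obstruction illustrated by ``return all tuples appearing in either $R$ or $S$,'' where a root-level disjunction must bind an output attribute and hence cannot be absorbed into negations. With no output attribute to bind, this obstruction disappears and every disjunction can be pushed underneath a negation.

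First I would normalize the sentence over the relational vocabulary (relations, equality and disequality, and the built-in order predicates) using only the standard equivalences
\[
\forall x\,\psi \equiv \neg\,\exists x\,\neg\psi,
\qquad
\psi_1 \vee \psi_2 \equiv \neg(\neg\psi_1 \wedge \neg\psi_2),
\]
so that the resulting formula uses only existential quantification, conjunction, and negation. Translating relational atoms in the usual way---replacing each occurrence of an atom $R(\ldots)$ by an existentially quantified tuple variable $\exists r \in R$ together with the equalities binding $r$'s attributes to the corresponding terms---yields a $\TRC$ expression with exactly these three connectives. Pulling each block of existential quantifiers out to the front of its enclosing scope (directly after the preceding negation) then puts the expression into the canonical form assumed by \cref{def:NDTRC}.

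Next I would verify that every predicate is \emph{guarded} in the sense of \cref{def:anchor}, so that the result is a legal $\NDTRC$ expression. This holds because each relational atom introduces its own tuple variable \emph{inside} the scope in which that atom is evaluated: the equalities arising from $R(\ldots)$ reference the freshly quantified $r$ and are thus guarded by $r$, while a pure comparison between two previously bound attributes can always be placed in the innermost scope in which both are defined, where at least one of them is local. The guardedness requirement was introduced precisely to forbid ``hidden disjunctions,'' and since we have already eliminated all explicit disjunctions via De~Morgan, no such hidden disjunction can survive. The outcome is therefore a Boolean $\NDTRC$ sentence.

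Finally, I would apply the five-step translation of \cref{sec:fromTRCtoRD} to obtain a valid $\diagramNDnomath$ whose root partition may legitimately contain no relation, exactly as in \cref{ex:negatedBooleanQuery}; step~(5) is simply skipped since there is no output table. By \cref{unambiguous_RDs} this diagram has an unambiguous interpretation equal to the original sentence, which establishes the corollary. The main obstacle is the guardedness step: one must argue cleanly that pushing negations and existential quantifiers around never forces a predicate to become unguarded, i.e.\ that the guarded canonical form is always reachable for sentences. Once that is in hand, the result follows immediately from the already-established translation together with the vanishing of the safety conditions for Boolean queries.
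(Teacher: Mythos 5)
Your overall route is the same as the paper's: reduce the sentence to a Boolean $\NDTRC$ expression, observe that safety---and hence the union-cell obstruction---vanishes for sentences, and invoke the translation of \cref{sec:fromTRCtoRD} together with \cref{unambiguous_RDs}. The gap is in your guardedness step, and it is a real one. Your two key claims---that a comparison between previously bound attributes ``can always be placed in the innermost scope in which both are defined,'' and that eliminating explicit disjunctions ensures ``no hidden disjunction can survive''---are both false. Relocating a predicate across a negation boundary is not an equivalence-preserving move (this is precisely the ``hidden disjunction'' phenomenon of \cref{sec:TRC}), and unguarded predicates arise even in disjunction-free sentences. Concretely, take the sentence $\exists x \exists y\,(S(x) \wedge T(y) \wedge \neg(\exists z\, U(z) \wedge x = y))$. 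Your procedure yields the canonical form $\exists s \in S, t \in T[\neg(\exists u \in U[s.A = t.A])]$, in which $s.A = t.A$ is unguarded; relocating it to the scope where $s$ and $t$ live gives $\exists s \in S, t \in T[s.A = t.A \wedge \neg(\exists u \in U)]$, which is not equivalent (for $S=\{1,2\}$, $T=\{1\}$, $U=\{1\}$ the original sentence is true, the relocated one false). Moreover, your propositional De~Morgan rule $\psi_1 \vee \psi_2 \equiv \neg(\neg\psi_1 \wedge \neg\psi_2)$ itself manufactures unguarded predicates whenever a disjunct is a comparison of outer variables, since $\neg\psi_i$ then sits inside a negation scope containing no locally quantified table.

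The repair is the mechanism the paper actually uses (\cref{ex:disjunction1} and the proof of \cref{th:completeness}): treat an unguarded predicate inside a negation as a disjunction one level up, e.g.\ $\neg(\exists u \in U[s.A = t.A]) \equiv (s.A \neq t.A) \vee \neg(\exists u \in U)$, rewrite bodies in DNF, and apply De~Morgan \emph{with quantifiers}, $\neg\exists \vec r \in \vec R[c_1 \vee c_2] \equiv \neg(\exists \vec r_1 \in \vec R[c_1]) \wedge \neg(\exists \vec r_2 \in \vec R[c_2])$, so every disjunct carries its own copy of the guarding tables---at the cost of duplicated table references, which is fine because the corollary claims only logical equivalence, not pattern isomorphism. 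Iterating this pushes all surviving disjunctions to the root, and only there does your double-negation trick become sound: the root disjuncts of a \emph{sentence} are closed formulas, so wrapping them as $\neg(\neg D_1 \wedge \cdots \wedge \neg D_k)$ introduces no reference to outer variables and hence no unguarded predicate. This is exactly where the absence of an output table is used (\cref{ex:disjunction}), and it is the step that distinguishes the corollary from \cref{th:completeness}, where root disjunctions of non-Boolean queries genuinely require union cells. With that replacement your argument closes and coincides with the paper's proof.
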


\section{Two Applicability Studies}
\label{sec:studies}

\subsection{\diagrams\ for Textbook Queries}
\label{SEC:TEXTBOOKANALYSIS}
\label{sec:textbookanalysis}

\begin{wrapfigure}{r}{0.6\textwidth}
\vspace*{-2mm}
\includegraphics[scale=0.23]{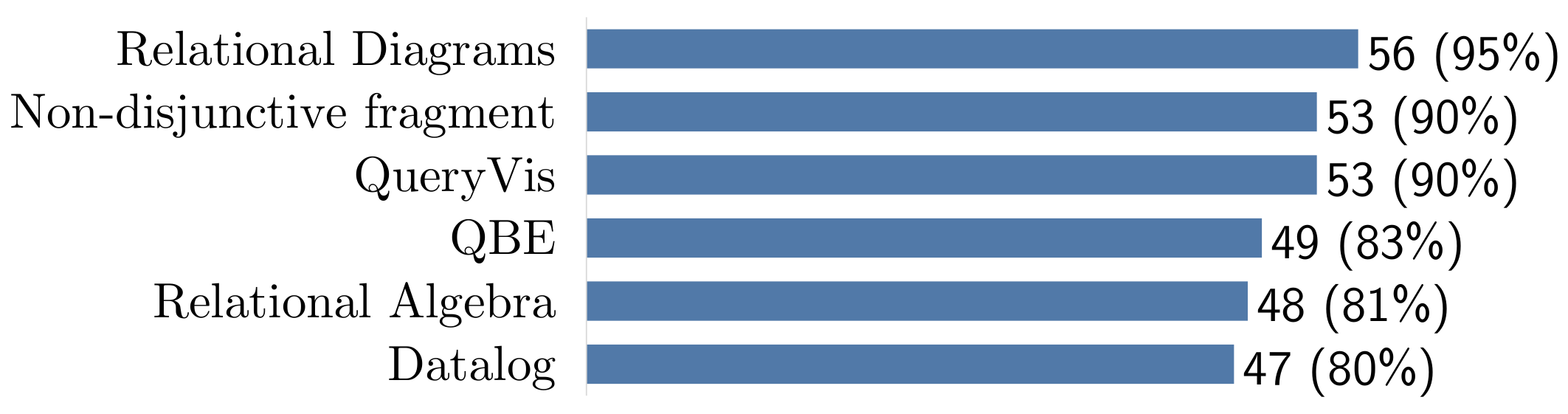}
\caption{\cref{sec:textbookanalysis}: Fraction among 59 queries from 5 textbooks with pattern-isomorphic representations in listed languages.}
\label{Fig_Textbook_Analysis}
\vspace*{-2mm}
\end{wrapfigure}

We analyzed the proportion of relational calculus queries encountered in learning scenarios 
that have pattern-isomorphic representations in either
\diagrams, $\RA$, $\Datalog$, 
$\queryviz$~ \cite{DanaparamitaG2011:QueryViz}, or 
$\QBE$~\cite{DBLP:journals/ibmsj/Zloof77}.
For that purpose, we identify 59 queries across 5 popular textbooks with sections on relational calculus
\cite{cowbook:2002, DBLP:books/mg/SKS20, Elmasri:dq, date2004introduction, ConnollyBegg:2015}.

Among those 59 queries, the number of queries that have pattern-isomorphic representations are
56 (94.9\%) for \diagrams,
53 (89.8\%) for \queryviz,
49 (87.5\%) for \QBE,
48 (85.7\%) for \RA, and
47 (79.7\%) for \Datalog.
The fraction for $\queryviz$ happens to be identical to the \emph{non-disjunctive fragment}.
Standard $\Datalog$ cannot express disjunctions in the body of a query and thus performs worse than $\RA$.\footnote{Modern variants of Datalog exist that can express disjunctions in the body, such as Souffle~\cite{souffle}, but those are not standard.}
For $\QBE$, notice that $\QBE$ 1) can express disjunctions within the same relations, yet
2) also requires the same safety conditions as $\Datalog$. Furthermore, 
theta joins require the use of a non-diagrammatic conditions box~\cite[Appendix C]{Elmasri:dq}.
$\RA$ extended with a primitive antijoin operator covers the same fraction as $\QBE$.
More details and all queries are given in 
\iflabelexists{appendix:textbook}
{\cref{appendix:textbook}}
{the online appendix \cite{relationaldiagrams:links}}.

\subsection{Controlled user study}
\label{sec:userstudy}
\label{SEC:USERSTUDY}

We conducted a controlled experiment on Amazon Mechanical Turk (MTurk)~\cite{AMT}
to evaluate the utility of \diagrams\ for recognizing patterns.
Our study investigates 3 main questions:
(1) Can $\SQL$ users {\emph{identify common relational query patterns faster}} using \diagrams\ than $\SQL$? 
(2) \emph{Can participants identify patterns faster over time}, thus can users learn the patterns under repeated exposure to the same patterns?
(3) Do participants have a similar \emph{accuracy} (i.e.\ a comparable numbers of correct responses) using \diagrams\ or $\SQL$? 
We chose $\SQL$ as a baseline for comparison because we expect that fewer workers on MTurk understand $\TRC$.

\introparagraph{Open practices}
Following best practices in user design, we preregistered the study design on OSF 
before collecting the data \cite{relationaldiagrams:links}.
All code for generating the stimuli, the stimuli, the tutorial provided to participants, the resulting data (pilot $n=13$, study $n=50$), the analysis code, and changes from the preregistration are available on OSF \cite{relationaldiagrams:links}.
More details on the study design and procedure are provided \iflabelexists{appendix:controlled-experiment}
{in \cref{appendix:controlled-experiment}}
{on arXiv \cite{relationaldiagrams:links}}.

\begin{figure}[t]
    \centering	
	\hspace{-2mm}
    \includegraphics[scale=0.36]{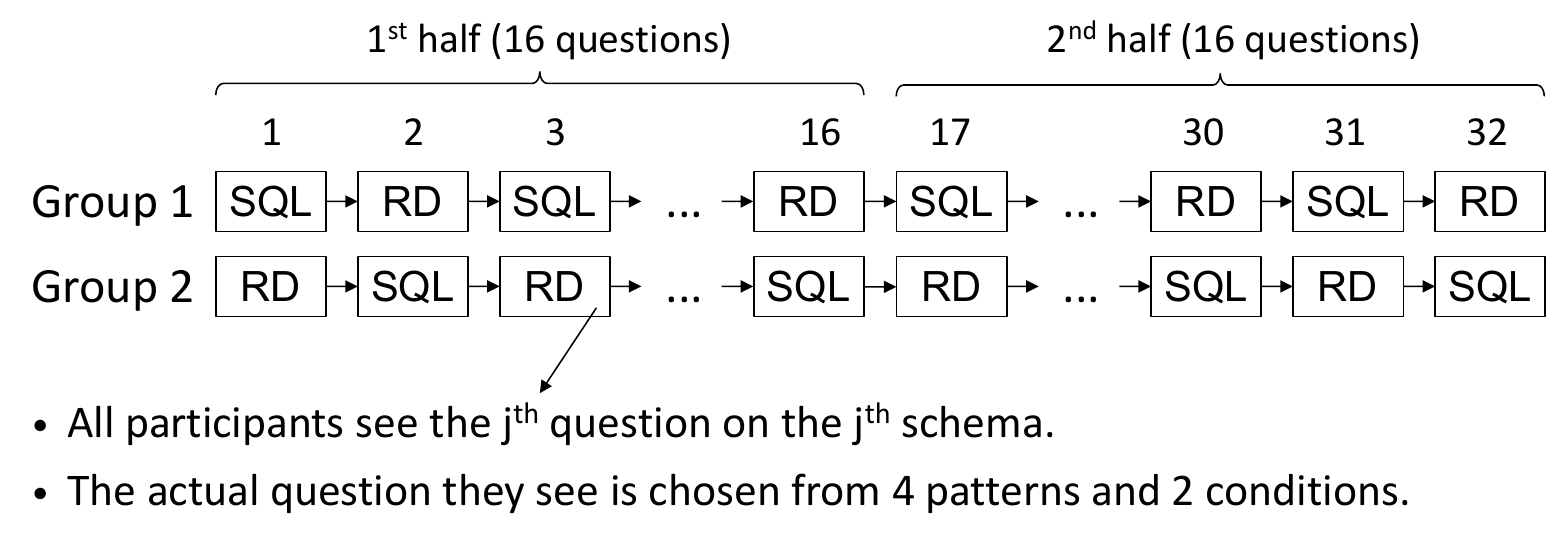}
    \caption{
	Illustration of the randomization and counterbalancing in our within-subjects study design.}
    \label{Fig_study_design}
\end{figure}

\introparagraph{Counterbalanced within-subjects study with randomization}
We asked participants 32 questions, each having them identify which of 4 relational query patterns was presented.
The exposure for each participant alternates between two conditions (\diagrams\ and formatted $\SQL$ text).
Each question uses a different relational schema found or adapted from common textbooks.
We used counterbalancing and randomization to reduce ordering effects.
Concretely, we start half the participants using SQL (group 1)
and the rest using \diagrams\ (group 2), 
after which participants alternate conditions with each question (see \Cref{Fig_study_design}).
Moreover, we randomize the order in which patterns are presented 
such that each pattern-condition combination appeared twice in the first 16 questions 
(1$^\textrm{st}$ half) and twice in the second 16 questions (2$^\textrm{nd}$ half).
The result is that each participant 
sees each of 4 patterns,
in each of 2 conditions,
exactly 2 times
($16 = 4\times 2 \times 2$),
in each of the 2 halves (first and second 16 questions).
This randomization helps reduce cheating as well as order effects.
This setup necessitated creating 256 different stimuli 
(32 schemas $\times$ 4 patterns $\times$ 2 conditions),
creation of which we semi-automated.
The 4 patterns, illustrated with the sailors-reserve-boats schema, 
were:
\begin{enumerate}[noitemsep, nolistsep]
    \item Find \textit{sailors} who have \textit{reserved} \textbf{some} \textit{boat}.
    \item Find \textit{sailors} who have \textbf{not} \textit{reserved} \textbf{any} \textit{boat}.
    \item Find \textit{sailors} who have \textbf{not} \textit{reserved} \textbf{all} \textit{boats}.
    \item Find \textit{sailors} who have \textit{reserved} \textbf{all} \textit{boats}.
\end{enumerate}
We chose these patterns because 
we are interested in how \diagrams\ can be used in educational settings,
and they represent 4 different query structures.
In particular, double-negation from pattern (4) is challenging for novice users to understand and is easy to misinterpret \cite{Luk1986ELFS}.
Moreover, pattern (4) does not have a pattern-isomorphic representation in \RA.

Several prior user studies 
\cite{Reisner1975:HumanFactors,DBLP:journals/csur/Reisner81,Leventidis2020QueryVis}
have shown that diagrammatic representations of queries can help users understand queries faster.
Key innovations in our design are:
(1) We repeatedly expose participants to 4 identical patterns across 32 different schemas, allowing us to semi-automatically design 256 questions 
instead of a few hand-curated ones (each participant saw only 32, one for each schema).
(2) Our questions are balanced across the first 16 and second 16 questions, which allows us to track learning over time.
We are not aware of prior study design that allowed studying learning in an online study.
(3) Our setup is randomized and parameterized, which creates a space of $2 \cdot 2540^4$ possible treatments, i.e., participants are unlikely to see the same question sequence, reducing the chance of cheating.
(4) We used monetary incentives for both time and accuracy, inspired by our recent work on $\queryviz$~\cite{Leventidis2020QueryVis},

\begin{figure}[t]
\centering	
\begin{subfigure}[b]{0.49\linewidth}	
    \includegraphics[scale=0.3]{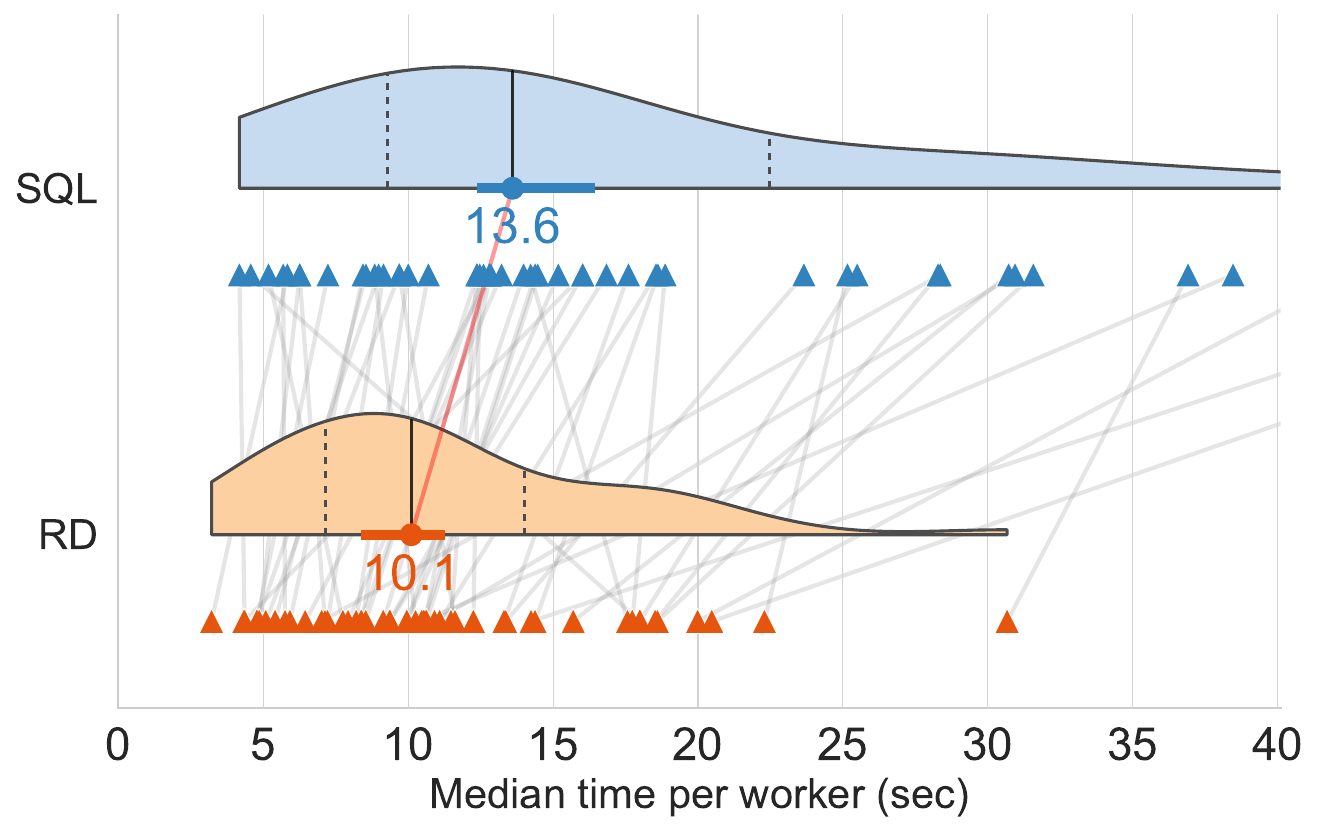}
    \includegraphics[scale=0.3]{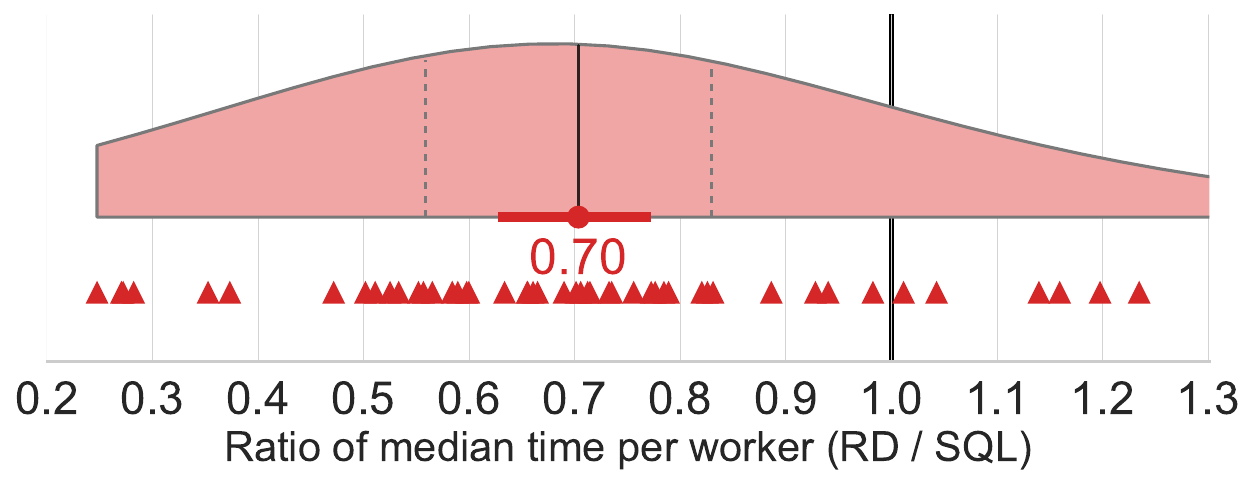}	
	\vspace{-2mm}	
    \caption{Result 1. (Speed)}
    \label{q1_figure1_variant1}
\end{subfigure}
\begin{subfigure}[b]{0.49\linewidth}	
    \centering	
    \includegraphics[scale=0.3]{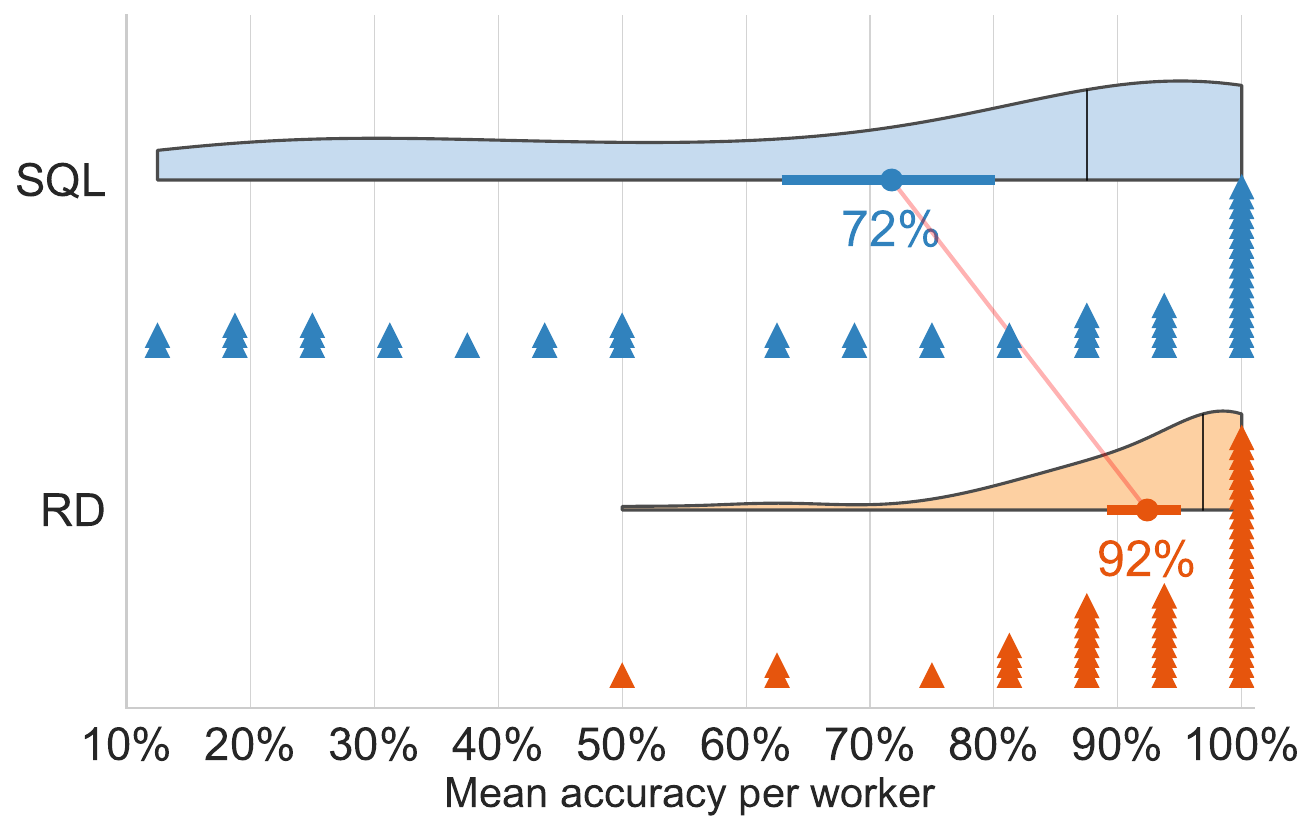}
    \includegraphics[scale=0.3]{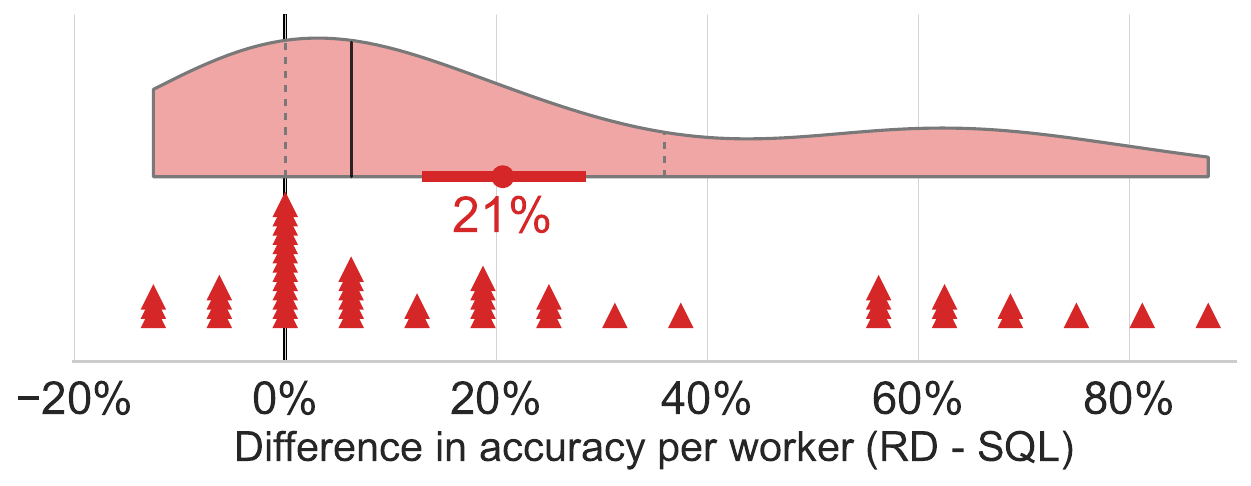}	
	\vspace{-2mm}	
    \caption{Result 3. (Accuracy)}
    \label{q4_figure_variant1}
\end{subfigure}
\begin{subfigure}[b]{0.7\linewidth}	
    \centering	
	\vspace{1mm}	
    \includegraphics[scale=0.31]{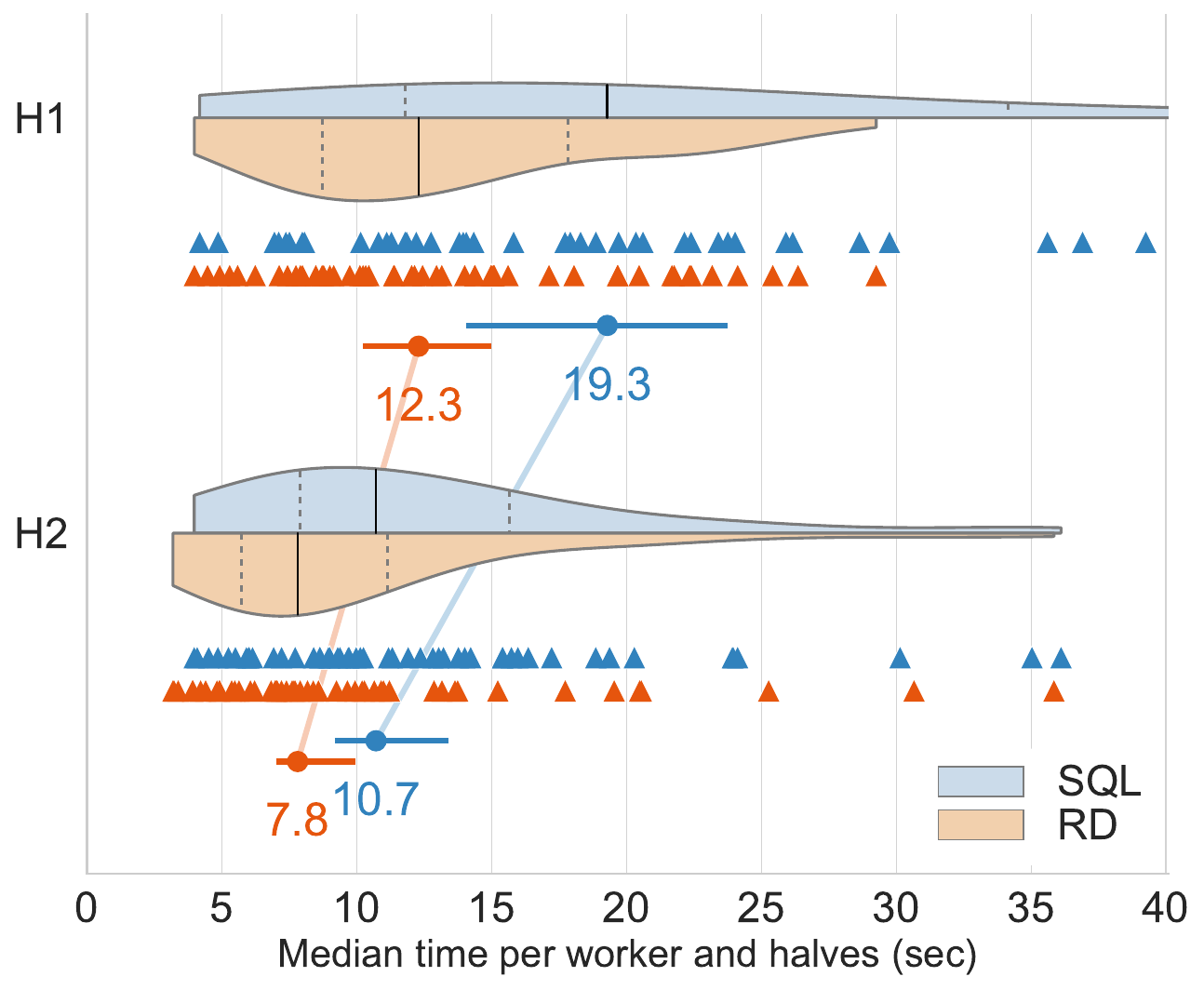}
	\vspace{-2mm}
    \caption{Result 2. (Learning)}
    \label{q2_figure_variant1}
\end{subfigure}
\caption{User study:
Triangles show median times per condition or mean accuracy for each of the $n=50$ successful participants.
Violin plots show the data distribution, the median with a solid line, and the 25\% and 75\% quantiles with dashed lines.
Error bars show the 95\% BCa bootstrapped confidence intervals (CI) around the mean or median.
Lines connect related marks.
\diagram\ is abbreviated here by RD.}
\label{fig:studyResults}
\end{figure}

\introparagraph{Participants}
We conducted an $n=13$ pilot study in the lab.
After registering our study on OSF~\cite{relationaldiagrams:links}
and receiving approval from our Institutional Review Board (IRB), we began recruiting participants on MTurk~\cite{AMT}. 
Participants needed to have at least 500 completed tasks approved by requesters and at least 97\% of their completed tasks approved.
For us to approve their task, they needed to have at least 50\% accuracy (i.e.\ answer at least 16 of our 32 questions correctly).
Thus a participant who answered every question in \SQL correctly and every question in \diagrams\ incorrectly (or v.v.)
would be included.
Among the 120 task submissions, only 58  were approved.
Our preregistration specified 50 participants, so we dropped 8 records and kept the counterbalancing intact by using the data from the first 25 participants who started in each condition.

\introparagraph{Tutorial}
Participants were given a self-paced 8-page tutorial on \diagrams. The tutorial introduced our basic visual notations by showing SQL examples and their diagrams.
The mean (resp.\ median) time spent on the tutorial and consent form was approximately 6.33 (respectively 3.5) minutes. 
The tutorial is available in our supplemental material \cite{relationaldiagrams:links}.

\introparagraph{Analysis}
(1) As a within-subjects study, we first determined the \emph{per-participant median time} for each condition.
We used the median (instead of mean) for time because it is robust to outliers, although the median often requires more participants for the same statistical power.
We computed the relative time $\diagrams / \SQL$ needed per participant and again calculated the \emph{median across} all participants.
Here we used the median of the ratios as the median creates an unbiased estimator (in contrast to the mean of ratios, 
see \iflabelexists{appendix:controlled-experiment}
{\cref{appendix:controlled-experiment}}
{the online appendix \cite{relationaldiagrams:links}} for details).
(2) Likewise, we computed the median per-participant time for each condition 
spent on the 1$^\textrm{st}$ half (16 questions), on the 2$^\textrm{nd}$ half (16 questions),
and the median ratio of the $2^\textrm{nd}/1^\textrm{st}$ times.
(3) We also computed the per-participant 
accuracy
for each condition and their difference.
Then, across all participants, we calculated the mean of the differences in accuracy.
Here we used the mean (instead of median) since the values are bounded within $[0,1]$ (i.e.\ there are no outliers)
and mean is more appropriate for discrete values like accuracies (i.e.\ $16/16$, $15/16$, ...).
We analyzed these mean/median effect sizes \cite{cumming2013understanding, dragicevik2018CanWeCallMean} and used bias-corrected and accelerated (BCa) 95\% confidence intervals (CIs) to show their range of plausible values \cite{efron1987better, Dragicevic2016FairStatistical}.

\introparagraph{Results}
We summarize 3 key takeaways.
The executed Python notebook has more details \cite{relationaldiagrams:links}.

\resultbox{\begin{resultW}(\textbf{Speed}) 
We have strong evidence that
participants were meaningfully faster at identifying patterns using \diagrams\ than $\SQL$:
median ratio $\diagrams /\SQL = $ 0.70, 95\% CI [0.63, 0.77].
\end{resultW}}

\noindent
Our choice of visualization is a variant of Raincloud plots~\cite{Allen:2019:Raincloudplots}
and is inspired from recent work in the visualization literature~\cite{Correll:2023:RaincloudPlots} 
discussing various ways to juxtapose multiple visualizations (``clouds + rain + lightning'') in the same chart for increasing information content.
In that framework, each of our charts consists of
($i$) \emph{density plots} that show an overview of the shape of the distribution (the ``cloud''),
($ii$) unjittered \emph{dot plots} that show the raw data (the ``rain'':  here we deviate from \cite{Correll:2023:RaincloudPlots} in using triangles instead of circles which, in our opinion, are more easily countable due to their visible vertices), and
($iii$) \emph{95\% confidence intervals} that provide summary statistics (the ``lightning'').
Furthermore, whenever we compare alternative modalities (``repeated measures''), 
we also use (4) \emph{paired plots} with lines connecting summary statistics and/or raw data.

\hyperref[{q1_figure1_variant1}]{Figure~\ref*{q1_figure1_variant1} (top)}
uses a paired plot to show the individual median per-participant times (and overall median across participants together with confidence interval)
for both \SQL (13.61, 95\% CI [12.37, 16.43] in blue on the top) 
and \diagrams\ (10.11 95\% CI [8.38, 11.26] in orange on the bottom). 
\hyperref[{q1_figure1_variant1}]{Fig.~\ref*{q1_figure1_variant1} (bottom)}
shows the per-participant ratios between median times.
Notice that the 95\% confidence interval of the overall median [0.63, 0.77] does not overlap 1.00,
which gives strong evidence for our conclusions.

\resultbox{\begin{resultW}(\textbf{Learning}) 
Participants got meaningfully faster during the study in both conditions.
\end{resultW}}

\noindent
\Cref{q2_figure_variant1} shows the individual times for H1 ($1^\textrm{st}$ half, i.e.\ the first 16 questions) 
and H2 ($2^\textrm{nd}$ half, i.e.\ the last 16 questions),
for both \SQL (in blue on the top) and \diagrams\ (in orange on the bottom),
together with medians and CIs.
We see that the overall trend (\diagrams\ being faster than \SQL) is repeated across both halfs,
and additionally that learning is taking place (participants need less time in H2 than in H1 in both conditions).
The median ratios H1$/$H2 we used for inference (not shown but in our supplemental material) are
0.71, 95\% CI [0.63, 0.79]
for \diagrams, and
0.70, 95\% CI [0.51, 0.79]
for $\SQL$.

\resultbox{\begin{resultW}(\textbf{Accuracy}) 
Participants were considerably more often correct with \diagrams\ than with $\SQL$: 
mean difference in accuracy $\diagrams-\SQL = $ 21\%, 95\% CI [13\%, 29\%].
\end{resultW}}

\noindent
\hyperref[{q4_figure_variant1}]{Figure~\ref*{q4_figure_variant1} (top)}
shows that the per-participant accuracies and the overall mean accuracies were meaningfully higher 
with \diagrams\ than with \SQL.
Notice that each participant answered 16 questions in each condition, thus possible scores are 
$16/16=1$, $15/16\approx 0.94$, etc.
Thus accuracy per user and modality is discretized in multiples of $1/16$ (in contrast to completion time, which is a continuous value and differs, even if slightly, between any two users).
We thus use stacked triangles akin to a Wilkinson dot plot~\cite{Wilkinson:1999:Dotplot}
to avoid overplotting and show individual data points.
\hyperref[{q4_figure_variant1}]{Figure~\ref*{q4_figure_variant1} (bottom)}
shows the per-participant difference in mean accuracy.
As the 95\% CI of the overall mean [13\%, 29\%] does not overlap 0, we have strong 
evidence for our conclusion.

\introparagraph{Participant comments} 
Participants could optionally write feedback at the end of the study. 
Few participants did, but those who did were encouraging, such as,
``I found your diagrams very helpful in understanding the queries. At first I didn't get it, but after staring at the diagrams for a few minutes it clicked and everything became super simple. I saw the patterns and it became just looking for the correct pattern to know which query was being used.''

\section{Related Work}
\label{sec:relatedWork}
\label{SEC:RELATEDWORK}

\subsection{Peirce's beta Existential Graphs}
\label{sec:Peirce}

\diagrams\ represent nested quantifiers
similarly as the influential and widely-studied \emph{Existential Graphs} 
by Charles Sanders Peirce~\cite{peirce:1933,Roberts:1992,Shin:2002} for expressing logical statements
(i.e.\ Boolean queries).
Peirce's graphs come in two variants called alpha and beta.
Alpha graphs represent propositional logic, and beta graphs represent first-order logic (FOL). 
Both variants use so-called \emph{cuts} to express negation (similar to our negation boxes),
and beta graphs use a syntactical element called the \emph{Line of Identity} (LI) to denote 
both \emph{the existence of objects} and \emph{the identity between objects}.

\introparagraph{Differences}
The four key differences of beta graphs vs.\ \diagrams\ are:
(1) beta graphs can only represent sentences and not queries;
(2) beta graphs cannot represent constants, thus selections cannot be modeled and instead require dedicated predicates;
(3) beta graphs can only represent identity predicates (and no comparisons); and
(4) Lines of Identity (LIs) in beta graphs have multiple meanings (existential quantification and identity between objects)
and are a primary symbol.\footnote{Every beta graph has lines, and graphs with lines but no predicates have meanings.
See, e.g., the 
definition in \cite[p.\ 41]{Shin:2002}.}
This \emph{function overload of LIs} can make reading the graphs ambiguous.
We, in contrast, have predicates inspired by $\TRC$. Lines only connect two attributes and have no loose ends. 
Interpreting a graph as a $\TRC$ formula is straightforward and can be summarized in a simple set of rules (recall \cref{sec:QV}).
We discuss this important conceptual difference in detail in \iflabelexists{appendix:beginning}
{\cref{app:PeirceDetails}.}
{the online appendix~\cite{relationaldiagrams:links}.}

\subsection{\texorpdfstring{$\queryviz$}{\queryviz}}%
\label{sec:queryvis1_vs_2}

Some of our design decisions are similar
to an earlier query representation called 
$\queryviz$ \cite{DanaparamitaG2011:QueryViz,Leventidis2020QueryVis,gatterbauer2011databases}.
In $\queryviz$ diagrams, grouping boxes are used to group all tables within a local scope, 
i.e., \emph{for each individual query block}. 
Those boxes thus cannot show their respective nesting, and an additional symbol of directed arrows is needed to ``encode'' the nesting.
The high-level consequence of those design decisions is that 
(1) $\queryviz$ does not guarantee to unambiguously visualize nested queries with nesting depth $\geq 4$ 
(please see
\iflabelexists{appendix:queryvis}
{\cref{appendix:queryvis}.}
{our online appendix \cite{relationaldiagrams:links}}
for a minimum example),
(2) each grouping box needs to contain at least one relation 
(thus $\queryviz$ cannot represent the query in \cref{Fig_TRC_vs_RD}),
and (3) $\queryviz$ cannot represent general Boolean sentences
(e.g.\ the sentence ``All sailors have reserved some red boat'').
Thus $\queryviz$ is not sound and not relationally complete, 
even for the disjunctive fragment.

\subsection{Other relationally-complete formalisms}
\label{sec:DFQL}

\iflabelexists{appendix:DFQL}
{\cref{appendix:DFQL}}
{The online appendix \cite{relationaldiagrams:links}}
compares \diagrams\ to other related visualizations like
DFQL (Dataflow Query Language) \cite{DBLP:journals/iam/ClarkW94,DBLP:journals/vlc/CatarciCLB97}.
On a high level, all visual formalisms
that we are aware of and that were proven to be relationally complete
(including those listed in
\cite{DBLP:journals/vlc/CatarciCLB97})
are at their core visualizations of relational algebra operators.
This applies even to the more abstract \emph{graph data structures (GDS)} from \cite{DBLP:conf/vdb/Catarci91} and the later \emph{graph model (GM)} from \cite{DBLP:journals/is/CatarciSA93}, which are related to our concept of \emph{query representation}.
The key difference is that GDS and GM are formulated inductively based on mappings onto operators of relational algebra. 
They thus mirror dataflow-type languages where visual symbols (directed hyperedges) represent operators like \emph{set difference} connecting two relational symbols, leading to a new third symbol as output.
Even QBE~\cite{DBLP:journals/ibmsj/Zloof77}
uses the query pattern from $\RA$ and $\DatalogN$ of implementing relational division (or universal quantification)
in a dataflow-type, sequential manner.
Similarly, SIEUFERD~\cite{DBLP:conf/sigmod/BakkeK16}, a direct manipulation spreadsheet-like interface, 
uses direct translation of relational algebra operators to prove SQL-92 completeness. 
This translation involves expressing set difference with outer joins and ``IS NULL'' conditions. 
We have proved that there are simple queries in relational calculus 
that cannot be represented in relational algebra with the same number of relational symbols. 
Thus any visual formalism based on relational algebra 
cannot represent the full range of relational query patterns.

\subsection{Other diagrammatic and non-diagrammatic query representations}
\label{sec:otherrelatedwork}

Visual SQL~\cite{DBLP:conf/er/JaakkolaT03} is 
a visual query language that also supports query visualization. 
With its focus on query specification, it maintains the one-to-one correspondence to SQL,
and syntactic variants of the same query lead to different representations.
SQLVis~\cite{DBLP:conf/vl/MiedemaF21} shares motivation with \queryviz. 
Similar to Visual SQL, it places a stronger focus on the actual syntax of a SQL query 
and syntactic variants like nested EXISTS queries change the visualization,
and join conditions are expressed as text.
StreamTrace \cite{DBLP:conf/chi/BattleFDBCG16}
focuses on visualizing temporal queries
with workflow diagrams and a timeline.
It is an example of visualizations for spatiotemporal domains and not the logic behind general relational queries.
DataPlay~\cite{DBLP:journals/pvldb/AbouziedHS12,DBLP:conf/uist/AbouziedHS12} 
allows a user to specify their query by interactively modifying a query tree with quantifiers and observing changes in the matching/non-matching data.
It does not have a union operator and is thus not relationally complete.
For a more detailed discussion we refer to two recent tutorials on visual representations of relational queries~\cite{DBLP:journals/pvldb/Gatterbauer23,ICDE:2024:diagrammatic:tutorial}.

\section{Conclusions and Future Work}

We motivated a criterion called \emph{pattern-isomorphism} that captures the patterns across relational languages 
and gave evidence for its importance in designing diagrammatic representations.
We formulated the non-disjunctive fragments of $\DatalogN$, $\RA$, safe $\TRC$, and corresponding $\SQL$ (interpreted under set semantics)
that naturally generalize conjunctive queries to nested queries with negation.
We prove that this important fragment 
allows a rather intuitive and, in hindsight, natural diagrammatic representation that can preserve the query pattern used across all four languages.
We further prove that 
this formalism, extended with a representation of union, is complete for full safe relational calculus (though not pattern-preserving)
and showed via user studies strong evidence that this diagrammatic representation allows users to understand query patterns faster and more accurately than SQL, even with minimal training.

Finding a \emph{pattern-preserving} diagrammatic representation for disjunction 
and even more general features of SQL 
(such as grouping and aggregates) 
is an open problem. 
For example, it is not clear how to achieve an intuitive and principled \emph{diagrammatic} representation for arbitrary nestings of disjunctions, such as
``$R.A \!<\! S.E \wedge (R.B \!<\! S.F \vee R.C \!<\! S.G)$''
or 
``$(R.A \!>\! 0 \wedge R.A \!<\! 10) \vee (R.A \!>\! 20 \wedge R.A \!<\! 30)$'' with minimal additional notations.
Grounded in a long history of diagrammatic representations of logic,
we gave intuitive arguments for why visualizing disjunctions is inherently more difficult than conjunctions,
with some experts believing it is not possible~\cite{shin_1995,Shin:2002} 
unless one adds \emph{non-diagrammatic abstractions}.

\begin{acks}
This work was supported in part by
a Khoury seed grant program,
and the National Science Foundation (NSF) under award numbers IIS-1762268, IIS-2145382,
and IIS-1956096.
It was conducted in part while WG was visiting the Simons Institute for the Theory of Computing.
We like to thank
Mirek Riedewald for helpful comments on an early version of this paper,
and Jan Van den Bussche for insightful comments about an earlier version of the proof of separation lemma 20.
\end{acks}

\clearpage

\bibliographystyle{ACM-Reference-Format}

\bibliography{queryvis.bib}

\clearpage
\appendix
\section{Nomenclature}
\label{appendix:beginning}

\begin{table}[h]
\centering
\small
\begin{tabularx}{\linewidth}{@{\hspace{0pt}} >{$}l<{$} @{\hspace{2mm}}X@{}} %
\hline
\textrm{Symbol}		& Definition 	\\
\hline
    \hline
	\S				& \emph{Table signature} or ordered list of its table references of a query \\
	q\				& Dissociated query starting from query $q$ \\	
	\L_1 \subseteq^\rep \L_2					
					& Language $\L_2$ can \emph{pattern-represent} language $\L_1$, i.e., 
					$\L_2$ can represent all relational query patterns of language $\L_1$. \\
	\L_1 \not \subseteq^\rep \L_2					
					& Language $\L_2$ can not \emph{pattern-represent} language $\L_1$, i.e., 
					there are query patterns in language $\L_1$ that $\L_2$ cannot represent \\
	\L_1 \subsetneq^\rep \L_2
					& Language $\L_2$ \emph{pattern-dominates} language $\L_1$, i.e., 
					$\L_2$ can represent all relational query patterns of language $\L_1$
					and $\L_2$ can represent relational query patterns that $\L_1$ cannot.\\
	\L_1 \equiv^\rep \L_2
					& Languages $\L_1$ and $\L_2$ are \emph{representation equivalent}, i.e.,
					they can express the identical set of relational query patterns.\\
\hline
\end{tabularx}
\end{table}

\section{Additional example for \texorpdfstring{\cref{SEC:INTRODUCTION}}
{Section \ref{SEC:INTRODUCTION}}: Limits of Relational Algebra}

We give an alternative minimal example that illustrates the limits of relational algebra in expressing even simple relational query patterns.

\begin{example}[RA vs.\ Datalog]
\label{ex:intro}
Consider the $\DatalogN$ (non-recursive Datalog with negation) query in \cref{Fig_RA_vs_Datalog_g}, which returns all tuples in $R(A,B)$ 
whose attribute $B$ does not appear in the unary table $S(B)$.
The query references each of the input tables $R$ and $S$ exactly once. 
As the proof of 
our later \cref{th:representations}
shows, basic Relational Algebra ($\RA$) cannot
express this query by referencing each of the tables $R$ and $S$ only once. 
\Cref{Fig_RA_vs_Datalog_b,Fig_RA_vs_Datalog_e} show
two logically-equivalent queries in $\RA$,
each of which references the table $R(A,B)$ twice.
We also added equivalent $\DatalogN$ queries, which for those two $\RA$ expressions use the same ``query pattern'' (a concept we will formalize later). 
Intuitively (and we prove this later more formally), $\DatalogN$ can express strictly more query patterns than RA;
it has a higher ``pattern-expressiveness'' despite having the same logical expressiveness.
We believe that any diagrammatic language for illustrating and reasoning about query patterns used in queries should be able to express the full range of possible patterns across existing
relational query languages (such as the one in \cref{Fig_RA_vs_Datalog_g}).
It follows that any diagrammatic representation of relational queries that relies on a one-to-one mapping with the operators of RA 
\emph{cannot} represent the full spectrum of query patterns of relational queries.
\end{example}

\cref{ex:intro} illustrated that 
query languages with equal expressiveness 
may not necessarily be equally able to express the same range of logical patterns---they are not \emph{representation-equivalent}.
In other words, there may be queries have have no patterns-preserving translations into the other language.

\begin{figure}[t]
    \centering
    \begin{subfigure}[b]{.32\linewidth}
        \centering
        \hspace{-6mm}\textbf{$\DatalogN$}
    \end{subfigure}
    \hspace{-0.6mm}
    \begin{subfigure}[b]{.24\linewidth}
        \centering
        \hspace{2mm}\textbf{Relational}\\
        \hspace{2mm}\textbf{Algebra (RA)}
    \end{subfigure}
    \hspace{0mm}
    \begin{subfigure}[b]{.33\linewidth}
        \centering
        \textbf{\diagrams}
    \end{subfigure}
    \begin{subfigure}[b]{.32\linewidth}
    	\begin{align*}
    		I(x,y)	& \datarule R(x,\_),  S(y). 		\\[-1mm]
    		Q(x,y) 	& \datarule R(x,y), \neg I(x,y).
    	\end{align*}
    	\vspace{-4mm}
        \caption{}
    	\label{Fig_RA_vs_Datalog_a}
    \end{subfigure}
    \hspace{-0.6mm}
    \begin{subfigure}[b]{.24\linewidth}
    	\begin{align*}
    		R - \big( \pi_A R \times S \big)
    	\end{align*}
    	\vspace{-4mm}
        \caption{}
    	\label{Fig_RA_vs_Datalog_b}
    \end{subfigure}
    \hspace{0mm}
    \begin{subfigure}[b]{.33\linewidth}
    	\vspace{1mm}
        \includegraphics[scale=0.36]{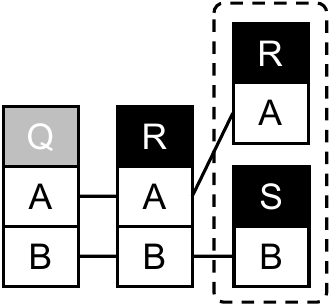}
    	\vspace{-2mm}
        \caption{}
    	\label{Fig_RA_vs_Datalog_c}
    \end{subfigure}
    \begin{subfigure}[b]{.32\linewidth}
    	\begin{align*}
    		I(y)	& \datarule R(\_,y), \neg S(y). 		\\[-1mm]
    		Q(x,y) 	& \datarule R(x,y), I(y).
    	\end{align*}
    	\vspace{-4mm}
        \caption{}
    	\label{Fig_RA_vs_Datalog_d}
    \end{subfigure}
    \hspace{2mm}
    \begin{subfigure}[b]{.24\linewidth}
    	\begin{align*}
    		R \Join \big( \pi_B R - S \big)
    	\end{align*}
    	\vspace{-4mm}
        \caption{}
    	\label{Fig_RA_vs_Datalog_e}
    \end{subfigure}
    \hspace{0mm}
    \begin{subfigure}[b]{.33\linewidth}
        \includegraphics[scale=0.36]{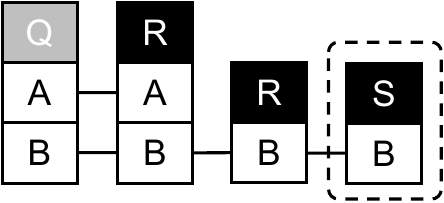}
    	\vspace{-2mm}
        \caption{}
    	\label{Fig_RA_vs_Datalog_f}
    \end{subfigure}
    \begin{subfigure}[b]{.32\linewidth}
    	\begin{align*}
    		Q(x,y) 	& \datarule R(x,y), \neg S(y).
    	\end{align*}
    	\vspace{-4mm}
        \caption{}
    	\label{Fig_RA_vs_Datalog_g}
    \end{subfigure}
    \hspace{1mm}
    \begin{subfigure}[b]{.26\linewidth}
    \centering
    	\h{\Huge\frownie{}}
    	\vspace{1mm}
        \caption{}
    	\label{Fig_RA_vs_Datalog_h}
    \end{subfigure}
    \hspace{-0.6mm}
    \begin{subfigure}[b]{.33\linewidth}
        \includegraphics[scale=0.36]{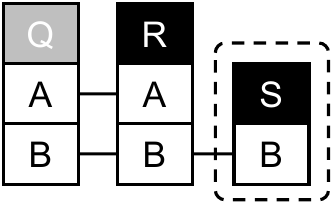}
    	\vspace{-2mm}
        \caption{}
    	\label{Fig_RA_vs_Datalog_i}
    \end{subfigure}
    \caption{
    Given tables R(A,B) and S(B). 
	The first column (\subref{Fig_RA_vs_Datalog_a}, \subref{Fig_RA_vs_Datalog_d}, \subref{Fig_RA_vs_Datalog_g}) shows three logically-equivalent $\DatalogN$ queries that use three different ``query patterns.'' 
    The first two (\subref{Fig_RA_vs_Datalog_a}, \subref{Fig_RA_vs_Datalog_d}) can also be expressed in Relational Algebra (RA) (\subref{Fig_RA_vs_Datalog_b}, \subref{Fig_RA_vs_Datalog_e}), whereas the third pattern (\subref{Fig_RA_vs_Datalog_g})
    cannot be expressed in RA,
    i.e.\ it is not possible
    to write a logically-equivalent query in basic RA that references only one occurrence of $R$ and $S$ each.
	Our paper makes these notions of query patterns precise.
    The third column (\subref{Fig_RA_vs_Datalog_c}, \subref{Fig_RA_vs_Datalog_f}, \subref{Fig_RA_vs_Datalog_i}) shows \diagrams\ that 
    use the same ``query patterns''
    as the $\DatalogN$ queries in the first column.
    To the best of our knowledge, our proposed \diagrams\ are the first diagrammatic representation of relational queries that are 
	($i$) unambiguous, 
	($ii$) relationally-complete and 
	($iii$) able to represent the full range of query patterns for union of non-disjunctive relational calculus.
    }
    \label{Fig_RA_vs_Datalog}
\end{figure}

\section{Proof For \texorpdfstring{\cref{sec:nondisjunctivefragment}, \cref{th:equivalence}}
{Section \ref{sec:nondisjunctivefragment}, Theorem \ref{th:equivalence}}}%
\label{appendix:proofoflogicalexpressivness}

\begin{proof}[Proof of \cref{th:equivalence}]
We prove each of the directions in turn: 

\centerline{
\begin{tikzpicture}[>=stealth, shorten >= 2pt, shorten <= 2pt, line width=0.25mm]
  \node (RA) {$\NDRA$};
  \node[right=5mm of RA] (Datalog) {$\DatalogND$};
  \node[right=5mm of Datalog] (TRC) {$\NDTRC$};
  \node[right=5mm of TRC] (SQL) {$\NDSQL$};  
    
  \draw[->] (RA.north) to[bend left] 
  	node[midway,above] {\hyperref[th:equivalence:ra-datalog]{(1)}} 
	(Datalog.north);
  \draw[->] (Datalog.south) to[bend left] 
  	node[midway,below] {\hyperref[th:equivalence:datalog-ra]{(2)}} 
	(RA.south);
  \draw[->] (Datalog.north) to[bend left] 
  	node[midway,above] {\hyperref[th:equivalence:datalog-trc]{(3)}} 
	(TRC.north);
  \draw[->] (TRC.south) to[bend left] 
  	node[midway,below] {\hyperref[th:equivalence:trc-datalog]{(4)}} 
	(Datalog.south);
  \draw[<->] (TRC.north) to[bend left] node[midway,above]
  	{\hyperref[th:equivalence:trc-sql]{(5)}}
	(SQL.north);
\end{tikzpicture}
}

\begin{pfparts}
\item[\namedlabel{th:equivalence:ra-datalog}{\underline{(1) $\NDRA \rightarrow \DatalogND$}}:]
    The proof for this direction is an easy induction on the size of the algebraic expression.
    It is a minor adaptation of the translation from $\RA$ to $\DatalogN$ proposed by Ullman~\cite{Ullman1988PrinceplesOfDatabase},
    yet it also pays attention to the restricted fragment 
    and keeps the numbers of atoms constant during the translation.
    Formally, we show that if an $\NDRA$ expression has $i$ occurrences of operators, 
    then there is a $\DatalogND$ program that produces the value of the expression 
    as the relation for one of its Intensional Database (IDB) predicates. 
    
    The basis is $i = 0$, that is, a single operand. 
    If this operand is a given relation $R$, then $R$ is an Extensional Database (EDB) relation and thus ``available'' without the need for any rules. 
    For the induction, consider an expression whose outermost operator is one of 5 operators: 
    Cartesian product $\times$, selection $\sigma$, theta join $\Join_{c}$, 
    projection $\pi$, or difference $-$.    
	Notice that union $\cup$ is missing. 
	Also, the rename operator is trivial since $\DatalogN$ uses position information instead of names.
    
    Case 1: $Q = E_1 \times E_2$:
    Let $\NDRA$ expressions $E_1$ and $E_2$ have $\DatalogND$ predicates $e_1$ and $e_2$ whose rules define their relations, and assume their relations are of arities $d$ and $m$, respectively. 
    Then define $q$, the predicate for $Q$, by:
    \begin{align*}
    	q(x_1, \ldots , x_{d + m}) \datarule e_1(x_1, \ldots , x_d),  e_2(x_{d + 1}, \ldots , x_{d + m}).
    \end{align*}

    Case 2: $Q = \sigma_c E$:
    By restricting our language from $\RA$ to $\NDRA$, we only allow selections $\sigma_c(\varphi)$ where the condition $c$ is
    a conjunction of simple selections $c = c_1 \wedge c_2 \wedge \cdots$, i.e.,
    each selection $c_i$ is of the form 
    $\sigma_{A_{i1} \theta A_{i2}}$ (join predicate) or
    $\sigma_{A_{i1} \theta v}$ (selection predicate).
    Let $e$ be a $\DatalogND$ predicate whose relation is the same as the relation for $E$, 
    and suppose $e$ has arity $d$. Then the rule for $Q$ is:
    \begin{align*}
    	q(x_1, \cdots , x_d) \datarule e_1(x_1, \cdots , x_d), c_{\theta}.
    \end{align*}
    where $c_{\theta}$ is a conjunction of join predicates $x_i \theta x_j$
    and selection predicates $x_i \theta v$
    where $x_i$ and $x_j$ index the attributes
	$A_i$ and $A_j$.

    Case 3: $Q = E_1 \Join_{c} E_2$:
    While the join operator is not a basic operator of relational algebra, 
    built-in predicates are, in practice, commonly expressed directly through join conditions.
    This case follows immediately from cases 1 and 2, and the definition of joins as
    $Q = E_1 \Join_{c} E_2 = \sigma_c (E_1 \times E_2)$.

    Case 4: $Q = \pi_{A_{i1}, \ldots, A_{id}}(E)$:
    Let $E$ 's relation have arity $m$, and let $e$ be a predicate of arity $m$ whose rules produce the relation for $E$. 
    Then the rule for the predicate $q$ corresponding to expression $Q$ is:
    \begin{align*}
    	q(x_{i_1}, \ldots, x_{i_d}) \datarule e(x_1, \ldots , x_m).
    \end{align*}

    Case 5: $Q = E_1 - E_2$: 
    We know by definition of the set difference that $E_1$ and $E_2$ must have the same arities. Assume those to be $d$,
    and that there are predicates $e_1$ and $e_2$ whose rules define their relations to be the same as the relations for $E_1$ and $E_2$, respectively.
    Then we use rule:
    \begin{align*}
    	q(x_1, \cdots , x_d) \datarule e_1(x_1,  \ldots , x_d),  \neg e_2(x_1, \ldots, x_d).
    \end{align*}
    to define a predicate $q$ whose relation is the same as the relation for $Q$. 
    We can easily check that safety for this rule is fulfilled as all variables appearing in the negated $e_2$ also appear in the positive $e_1$.

\item[\namedlabel{th:equivalence:datalog-ra}{\underline{(2) $\DatalogND \rightarrow \NDRA$}}:]
Typical textbook translations from $\DatalogN$ to $\RA$,
such as the one by Ullman \cite[Th 3.8, Alg 3.2, Alg 3.6]{Ullman1988PrinceplesOfDatabase}
need to compute the active domain by projecting all EDB relations onto each of their components 
and then taking the union of these projections and the set of constants appearing in the rules. 
Since $\NDRA$ lacks the union operator, 
we cannot create the active domain from a union of all constants used in the database.    

Let $\DatalogND$ program $\mathcal{P}$ be a collection of safe, nonrecursive Datalog rules, possibly with negated subgoals. 
By the safety condition, every variable that appears anywhere in the rule must appear in some nonnegated, relational subgoal of the body,
or must be bound by an equality (or a sequence of equalities) to a variable of such an ordinary predicate or to a constant~\cite{DBLP:journals/tkde/CeriGT89}.
From the definition of $\DatalogND$, each IDB appears in exactly one rule as the head.
Then, for each IDB predicate $q$ of $\mathcal{P}$, there is an expression $Q$ of relational algebra 
that computes the relation for $q$.
Since $\mathcal{P}$ is nonrecursive, we can order the predicates according to a topological sort of the dependency graph; 
that is, if $q_1$ appears as a subgoal in a rule for $q_2$, then $q_1$ precedes $q_2$ in the order. 

If a rule contains built-in predicates in the body (join predicates $x_i \theta x_j$ or selection predicates $x_i \theta v$), 
the translation first focuses on the body without predicates and then applies a selection $\sigma_c$ where the selection condition $c$
is a conjunction of the built-in predicates.

To express negated subgoals in the body, we need to use the set difference, and this requires us to complement negated subgoals with additional attributes.
Concretely, take a general Datalog rule with built-in predicates:
\begin{align*}
	q(\vec x) \datarule p_1(\vec x_1), \ldots, p_k(\vec x_k),  \neg n_1(\vec y_1), \ldots,  \neg n_m(\vec y_m), c_{\theta}.
\end{align*}
From the safety conditions of this rule, we know that
all variables in the built-in predicates $c_{\theta}$ need to appear in positive atoms.
Similarly, all variables in negated atoms also need to appear in positive atoms:
$\bigcup_1^k \vec x_i \supseteq \bigcup_1^m \vec y_i$. 
Let $\vec z$ be the set of complementing attributes, i.e., the attributes that only appear in positive atoms:
$\vec z = \bigcup_1^k \vec x_i - \bigcup_1^m \vec y_i$.

Let $P_i$ and $N_i$ be the $\NDRA$ expressions corresponding to $\DatalogND$ predicates $p_i$ and $n_i$.
If $\vec z = \emptyset$, then define $Q'$ as
\begin{align}
	&(P_1 \Join \ldots \Join P_k) - (N_1 \Join \ldots \Join N_m) 	\notag \\
\intertext{Otherwise define $Q'$ as} 
	&(P_1 \Join \ldots \Join P_k) - \big((N_1 \Join \ldots \Join N_m) \times \pi_{\vec z}(P_1 \Join \ldots \Join P_k) \big)
	\label{eq:translation_Datalog_RA_problem}
\end{align}
Finally define $Q = \pi_{\vec A}(\sigma_{\theta} (Q'))$ with 
$\vec A = ({A_{i1}, \ldots, A_{id}})$ representing the set of attributes indexed by $\vec x$.
Then this expression $Q$ translates one single rule with built-in predicates into a valid relational algebra expression $Q$ without union or disjunction.
Since every IDB predicate $q$ appears in only one rule, we do not need union nor disjunctions 
even if multiple rules are translated.
Since all variables in the built-in predicate $c_{\theta}$ need to appear in $Q'$, 
the selection $\sigma_{\theta}$ can be correctly applied on $Q'$.
It then follows by induction, on the order in which the IDB predicates are considered, that each has a relation defined by some expression 
in $\NDRA$.

\item[\namedlabel{th:equivalence:datalog-trc}{\underline{(3) $\DatalogND \rightarrow \NDTRC$}}:]
    
We consider a general Datalog rule:
\begin{align*}
	q(\vec x) \datarule p_1(\vec x_1), \ldots, p_k(\vec x_k), \neg n_1(\vec y_1), \ldots, \neg n_m(\vec y_m), c_{\theta}.
\end{align*}
Here $c_{\theta}$ is a conjunction of built-in predicates that adhere to the standard safety conditions \cite{DBLP:journals/tkde/CeriGT89}.
From those safety conditions, 
we know that
all variables in the built-in predicates $c_{\theta}$ need to appear in positive atoms,
and all variables in negated atoms also need to appear in positive atoms.

The rule then translates into a $\TRC$ fragment
\begin{align*}
	\{q(\vec A) \mid 
		&p_1 \in P_1, \ldots, p_k \in P_k[c_\textrm{out} \wedge c_{p} \\
		&\wedge \neg(\exists n_1\in N_1, \ldots, n_m \in N_m[c_\textrm{in}])]\}
\end{align*}
Here $\vec A$ is a set of attributes that correspond to the variables returned by the Datalog rule 
(from safety conditions, only attributes from the positive relations can be returned),
$c_\textrm{out}$ is a conjunction of equality predicates linking attributes from the output table $q$ to attributes from the input tables $P_i$,
$c_p$ is a conjunction of equality predicates and comparison predicates specified by $c_{\theta}$ 
between the positive relations or constants, and
$c_\textrm{in}$ is a conjunction of equality predicates between exactly one negative relation
and either a positive relation or a constant.
Notice that this translation guarantees that all predicates (including those in $c_\textrm{in}$) are \emph{guarded}.

\item[\namedlabel{th:equivalence:trc-datalog}{\underline{(4) $\NDTRC \rightarrow \DatalogND$}}:]
    In this translation, 
    we start from the canonical representation of $\NDTRC$
    (\cref{sec:TRC})
    where a set of existential quantifiers is always preceded by the negation operator 
    (except for the table variables at the root of the query).
    This implies that we can decompose any query in $\NDTRC$ 
    and write it as nested query components, each delimited by the scope of one negation operator.
    Each query component is then of the form:
    \begin{align*}
    	& \{ q(\vec A) \mid p_1 \in P_1, \ldots, p_k \in P_k[c_\textrm{out} \wedge c_{p} \\
    	&\hspace{10mm}	\wedge \neg q_1(\vec A_1)
    					\wedge \ldots \wedge \neg q_m(\vec A_m)] \}
    \end{align*}
    Here $\vec A$ is a set of attributes that correspond to the variables returned by the query
    (or, equivalently, variables that are passed to a nested query that determine whether that nested query is true or false),
    $c_\textrm{out}$ is a conjunction of equality predicates linking attributes from the output table $q$ 
    to attributes from the local tables $P_i$,
    $c_p$ is a conjunction of equality and comparison predicates between the positive relations or constants, and
    $\vec A_j$ are attributes from the input tables $P_1, \ldots, P_k$ used in the nested query $q_j$.

    Notice that for safe queries, only attributes from the positive relations can be returned,
    i.e., the output attributes need to be connected via equality predicates specified in $c_\textrm{out}$ 
	to attributes from $P_1, \ldots, P_k$.
    However, \emph{nested queries do not need to be safe}, and output attributes can be 
	($i$) connected directly to further nested queries, 
	or ($ii$) connected to input tables $P_1, \ldots, P_k$ via built-in instead of equality predicates.
    This last point is the main complication we need to take care of during the translation.
    We proceed in two steps:

    (1) First assume that each query is safe. 
    Then each subquery can be immediately translated into a separate rule by induction on the nesting hierarchy from the inside out.
    The basis of the induction is the leaf queries which can't contain nested queries and thus are of the form:
    \begin{align*}
    	& \{ q(\vec A) \mid p_1 \in P_1, \ldots, p_k \in P_k[c_\textrm{out} \wedge c_{p}] \}
    \end{align*}
    A leaf query is translated into
    \begin{align*}
    	q(\vec x) \datarule P_1(\vec x_1), \ldots, P_k(\vec x_k), c_{\theta}.
    \end{align*}
    where $\vec x$ are attributes chosen from the relations $P_i$ as specified in $c_\textrm{out}$,
    and $c_{\theta}$ is 
    the conjunction of comparison predicates between the positive relations $P_i$.
    
    For the induction step, assume that each nested $q_i(\vec A_i)$ is safe and translated into a rule $q_i$
    Then the safe query $q$
    \begin{align*}
    	& \{ q(\vec A) \mid p_1 \in P_1, \ldots, p_k \in P_k[c_\textrm{out} \wedge c_{p} \\
    	&\hspace{10mm}	\wedge \neg q_1(\vec A_1)
    					\wedge \ldots \wedge \neg q_m(\vec A_m)] \}
    \end{align*}
    is translated into a rule
    \begin{align*}
    	q(\vec x) \datarule P_1(\vec x_1), \ldots, P_k(\vec x_k), \neg N_1(\vec y_1), \ldots, \neg N_m(\vec y_m), c_{\theta}.
    \end{align*}
    where $\vec x$ are attributes chosen from the positive relations $P_i$ as specified in $c_\textrm{out}$,
    $c_{\theta}$ is a conjunction of comparison predicates between the positive relations $P_i$,
    and $\vec y_j$ are chosen from the variables 
	$\vec x = \bigcup_1^k \vec x_i$
	used in the positive relations $P_1, \ldots, P_k$.

    (2) Next assume that a nested query is valid yet not safe
    for either of the two previously-stated reasons:
    ($i$) either some $\neg q_j(\vec A_j)$ uses an attribute $q.A_{it}$ from the output $q(\vec A)$ directly;
    or ($ii$) some predicate in $c_\textrm{out}$ connects 
	an attribute $p_j.A_{is}$ 
	from $P_1, \ldots, P_k$ 
	to an output attribute $q.A_{it}$ 
	with a built-in predicate $p_j.A_{is} \theta q.A_{it}$ instead of an equality predicate.
    In both cases, we can make this query safe by first adding 
	\emph{an additional existentially-quantified table} $p_{k+1} \in P_{k+1}$ to the query
	that has an attribute $A_{i}'$ containing the domain of $q.A_{it}$.
	In our translation, we use the same table that is used in the outer query to connect to 
	and therefore to ``bound'' $q.A_{it}$ to the active domain.
	We then have to add appropriate predicates:
    for case ($i$), we add the equality predicate 
	$p_{k+1}.A_{i}' = q.A_{it}$ 
	to $c_\textrm{out}$ 
	and replace the attribute $q.A_{it}$ previously used in $q_j(\vec A_j)$ 
	with instead 
	$p_{k+1}.A_{i}'$;
    for case ($ii$), we also 
	add the equality predicate 
	$p_{k+1}.A_{i}' = q.A_{it}$
	to $c_\textrm{out}$ 
	and replace the previously used predicate 
	$p_j.A_{is} \theta q.A_{it}$
	with instead
	$p_j.A_{is} \theta p_{k+1}.A_{i}'$.
	After thus making the subquery safe, we can use the translation described earlier.
	
	It follows that every query in $\NDTRC$
	can be translated into a logically equivalent query in $\DatalogND$.

    We next illustrate both cases for when we need to add existentially-quantified tables in turn.

\begin{example}[All quantification in $\DatalogND$]
    We illustrate the translation for case ($i$) above with the following query:
    \begin{align*}
    	&\{ q(A) \mid \exists r \in R [q.A \equal r.A  \wedge \neg (\exists s \in S[	\\
    	&\hspace{10mm}	
    		\neg (\exists r_2 \in R 	
    		[r_2.B \equal s.B \wedge r_2.A \equal r.A])])] \}
    \end{align*}
	It represents relational division and is 
    shown as 
   	\diagrams, $\NDSQL$, $\NDRA$, and $\DatalogND$ in \cref{fig:SQL_equivalence,Fig_equivalence},
	and will also be re-used in \cref{ex:relationalDivision}.
    Based on our extended safety condition for $\NDTRC$
    \cref{def:anchor}, 
    all predicates are guarded, i.e.,
    they contain at least one attribute of a table that is existentially quantified 
    inside the same negation scope as that predicate.
    Those guarded attributes are: 
    {$r.A$}   in $r.A \equal q.A$,
    {$r_2.B$} in $r_2.B \equal s.B$, and
    {$r_2.A$} in $r_2.A \equal r.A$.
    
    Rewriting the query based on its recursive nested negation hierarchy gives us 3 query components:
    \begin{align*}
    	\{ q(A)   &\mid  \exists r \in R [q.A \equal r.A \wedge \neg (q_1(r.A))] \}	\\
    	\{ q_1(A) &\mid  \exists s \in S
    		[\h{\neg} (q_2(\h{q_1.A}, s.B) ) ] \} \\
    	\{q_2(A, B)  &\mid  \exists r_2 \in R 	
    		[q_2.B \equal r_2.B \wedge r_2.A \equal q_2.A] \}
    	\hspace{13mm}
    \end{align*}
	Now notice that $q_1$ is not safe because $q_1.A$ is used within the negated scope 
	$\neg (q_2(\h{q_1.A}, s.B))$
	without being existentially quantified (or 
	somehow ``bound'' to an element in the active domain)
	within $q_1$.
	In other words, the recursive call $q_2(q_1.A, s.B )$ 
	passes a predicate through the call hierarchy from $q_2$ directly to $q$ without being ``bound'' 
	while passing through $q_1$.

    We can make $q_1$ safe (or equivalently ``bound'' the predicate using $q_1.A$) 
    by adding another table $r_3 \in R$ in $q_1$ that accepts and hands over that attribute in the call hierarchy:
    \begin{align*}
    	\{ q(A)   &\mid  \exists r \in R [q.A \equal r.A \wedge \neg (q_1(r.A))] \}	\\
    	\{ q_1(A) &\mid \exists s \in S, \h{\exists r_3 \in R}
    		[\h{q_1.A = r_3.A} \wedge \h{\neg} q_2(\h{r_3.A}, s.B)] \} \\
    	\{q_2(A, B)  &\mid \exists r_2 \in R 	
    		[q_2.B \equal r_2.B \wedge r_2.A \equal q_2.A]] \}	
    \end{align*}
    This rewritten query now allows a direct translation into $\DatalogND$ from the inside out:
    \begin{align*}
    	Q_2(x, y) 	& \datarule R(x,y). 		\\
    	Q_1(x) 	& \datarule R(x,\_), S(y), \neg Q_2(x,y). 		\\
    	Q(x) 	& \datarule R(x,\_), \neg Q_1(x).
    \end{align*}
\end{example}

\begin{example}[Built-in predicates in $\DatalogND$]
    We next illustrate the translation for case $(ii)$ of a built-in predicate with
	the following query asking for values from $R$ for which no smaller value appears in $S$:
    \begin{align*}
    	& \{ q(A) \mid \exists r \in R[q.A \equal r.A
    		\wedge \neg (\exists s \in S [	s.A < r.A])]\}	
    \end{align*}
    This query is additionally used in \cref{ex:DatalogLimits} as $Q_3$ and illustrated in
	\cref{Fig_DatalogLimits_k,Fig_DatalogLimits_l,Fig_DatalogLimits_m,Fig_DatalogLimits_n,Fig_DatalogLimits_o}.
    Rewriting the query based on its recursive nested negation hierarchy gives us 2 query components:
    \begin{align*}
    	\{ q(A)   &\mid  \exists r \in R [q.A \equal r.A \wedge \neg (q_1(r.A))] \}	\\
    	\{ q_1(A) &\mid  \exists s \in S [\h{s.A < q_1.A}] \} 
    	\hspace{13mm}
    \end{align*}
	Now notice that $q_1$ is not safe 
	because $q_1.A$ is connected to an existentially-quantified table $s \in S$ only via a built-in predicate
	$s.A < q_1.A$ instead of an equality predicate.

    We can make $q_1$ safe
    by adding another table $r_2 \in R$ in $q_1$:
    \begin{align*}
    	\{ q(A)   &\mid  \exists r \in R [q.A \equal r.A \wedge \neg (q_1(r.A))] \}	\\
    	\{ q_1(A) &\mid  \exists s \in S, \h{r_2 \in R} [\h{s.A < r_2.A} \wedge \h{r_2.A = q_1.A}] \} 
    	\hspace{13mm}
    \end{align*}
    This rewritten query (also illustrated in
	\cref{Fig_DatalogLimits_p,Fig_DatalogLimits_q,Fig_DatalogLimits_r,Fig_DatalogLimits_s,Fig_DatalogLimits_t}) 
	now allows a direct translation into $\DatalogND$ 
	from the inside out:
    \begin{align*}
    	Q_1(x)	& \datarule R(x),  S(y), x\!>\!y. 		\\
    	Q(x) 	& \datarule R(x), \neg Q_1(x).
    \end{align*}
\end{example}

\begin{figure}[t]
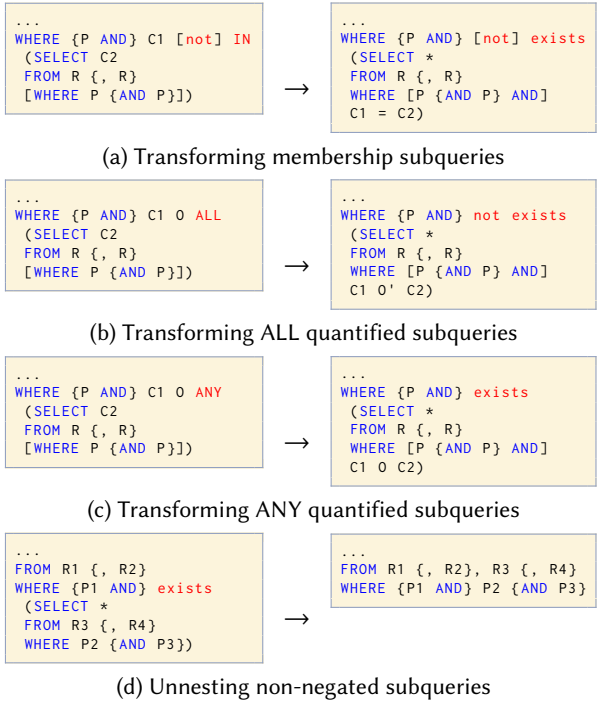

\centering
\begin{subfigure}[b]{1.0\linewidth}	
\centering
\begin{minipage}{32mm}		
\begin{lstlisting}
...
WHERE {P AND} C1 [not] IN
 (SELECT C2
 FROM R {, R}
 [WHERE P {AND P}])
\end{lstlisting}
\vspace{2.4mm}
\end{minipage}
$\hspace{3mm}\rightarrow\hspace{3mm}$
\begin{minipage}{33mm}		
\begin{lstlisting}
...
WHERE {P AND} [not] exists
 (SELECT *
 FROM R {, R}
 WHERE [P {AND P} AND]
 C1 = C2)
\end{lstlisting}
\end{minipage}
\vspace{-7mm}
\caption{Transforming membership subqueries}
\label{fig:canonicalquery_a}
\end{subfigure}	
\begin{subfigure}[b]{1.0\linewidth}	
\centering
\begin{minipage}{32mm}			
\begin{lstlisting}
...
WHERE {P AND} C1 O ALL
 (SELECT C2
 FROM R {, R}
 [WHERE P {AND P}])
\end{lstlisting}
\vspace{2.4mm}
\end{minipage}
$\hspace{3mm}\rightarrow\hspace{3mm}$
\begin{minipage}{33mm}	
\begin{lstlisting}
...
WHERE {P AND} not exists
 (SELECT *
 FROM R {, R}
 WHERE [P {AND P} AND]
 C1 O' C2)
\end{lstlisting}
\end{minipage}
\vspace{-7mm}
\caption{Transforming ALL quantified subqueries}
\label{fig:canonicalquery_c}
\end{subfigure}	
\begin{subfigure}[b]{1.0\linewidth}	
\centering		
\begin{minipage}{32mm}		
\begin{lstlisting}
...
WHERE {P AND} C1 O ANY
 (SELECT C2
 FROM R {, R}
 [WHERE P {AND P}])
\end{lstlisting}
\vspace{2.4mm}
\end{minipage}
$\hspace{3mm}\rightarrow\hspace{3mm}$
\begin{minipage}{33mm}		
\begin{lstlisting}
...
WHERE {P AND} exists
 (SELECT *
 FROM R {, R}
 WHERE [P {AND P} AND]
 C1 O C2)
\end{lstlisting}
\end{minipage}
\vspace{-7mm}
\caption{Transforming ANY quantified subqueries}
\label{fig:canonicalquery_e}
\end{subfigure}	
\hspace{6mm}
\begin{subfigure}[b]{1.\linewidth}		
\centering		
\begin{minipage}{32mm}			
\begin{lstlisting}
...
FROM R1 {, R2}
WHERE {P1 AND} exists
 (SELECT *
 FROM R3 {, R4}
 WHERE P2 {AND P3})
\end{lstlisting}
\end{minipage}
$\hspace{3mm}\rightarrow\hspace{3mm}$
\begin{minipage}{33mm}	
\begin{lstlisting}
...
FROM R1 {, R2}, R3 {, R4}
WHERE {P1 AND} P2 {AND P3}
\end{lstlisting}
\vspace{7.2mm}
\end{minipage}
\vspace{-7mm}
\caption{Unnesting non-negated subqueries}
\label{fig:canonicalquery_g}
\end{subfigure}	
\caption{\hyperref[th:equivalence:trc-sql]{Part (5)}
in the proof of \cref{th:equivalence}: 
$\NDSQL$ queries can be brought into a canonical form by 
replacing 
membership subqueries (a) and
quantified subqueries (b, c)
with existential subqueries that push join predicates into the local scope of the nested query,
and then unnesting non-negated subqueries (d).}
\label{fig:canonicalquery}
\end{figure}

\begin{figure*}[t]
\centering
\begin{subfigure}[b]{.22\linewidth}		
\begin{align*}
	&	\{ q(A) \mid \exists r \in R, \exists s \in S [\\
	&	q.A = r.A \wedge r.B = s.B]\}	
\end{align*}
\vspace{-4mm}
    \caption{}
\end{subfigure}
\hspace{4.5mm}
\begin{subfigure}[b]{.19\linewidth}		
\begin{lstlisting}
SELECT DISTINCT R.A
FROM R, S
WHERE R.B = S.B
\end{lstlisting}
\vspace{0.5mm}
    \caption{}
    \label{fig:SQLvariety_b}
\end{subfigure}
\hspace{12mm}
\begin{subfigure}[b]{.19\linewidth}
\begin{lstlisting}
SELECT DISTINCT R.A
FROM R 
WHERE exists
 (SELECT *
 FROM S
 WHERE R.B = S.B)
\end{lstlisting}
\vspace{-7mm}
\caption{}
\end{subfigure}
\\
\begin{subfigure}[b]{.19\linewidth}
\vspace{1mm}
\begin{lstlisting}
SELECT DISTINCT R.A
FROM R 
WHERE R.B in
  (SELECT S.B
  FROM S)
\end{lstlisting}
\vspace{-7mm}
\caption{}
\end{subfigure}
\hspace{12mm}
\begin{subfigure}[b]{.19\linewidth}		
\begin{lstlisting}
SELECT DISTINCT R.A
FROM R 
WHERE R.B = any
  (SELECT S.B
  FROM S )
\end{lstlisting}
\vspace{-7mm}
\caption{}
\end{subfigure}
\hspace{12mm}
\begin{subfigure}[b]{.19\linewidth}
    \includegraphics[scale=0.41]{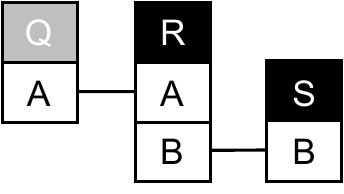}
	\vspace{-1mm}
    \caption{}
    \label{Fig_R_S_join}
\end{subfigure}	
\hspace{20mm}
\begin{subfigure}[b]{.22\linewidth}		
\begin{align*}
	&	\{ q(A) \mid \exists r \in R[q.A = r.A \,\wedge \\
	&	 \neg (\exists s \in S [r.B = s.B])]\}	
\end{align*}
\vspace{-4mm}
    \caption{}
\end{subfigure}
\hspace{-1mm}
\begin{subfigure}[b]{.19\linewidth}
	\phantom{test}
\end{subfigure}		
\hspace{12mm}	
\begin{subfigure}[b]{.19\linewidth}
\begin{lstlisting}
SELECT DISTINCT R.A
FROM R 
WHERE not exists
 (SELECT *
 FROM S
 WHERE R.B = S.B)
\end{lstlisting}
\vspace{-7mm}
\caption{}
\label{fig:SQLvariety_h}
\end{subfigure}
\\
\begin{subfigure}[b]{.19\linewidth}
\begin{lstlisting}
SELECT DISTINCT R.A
FROM R 
WHERE R.B not in
 (SELECT S.B
 FROM S)
\end{lstlisting}
\vspace{-7mm}
\caption{}
\end{subfigure}
\hspace{12mm}
\begin{subfigure}[b]{.19\linewidth}		
\begin{lstlisting}
SELECT DISTINCT R.A
FROM R 
WHERE R.B <> all
 (SELECT S.B
 FROM S)
\end{lstlisting}
\vspace{-7mm}
\caption{}
\end{subfigure}	
\hspace{12mm}
\begin{subfigure}[b]{.19\linewidth}
    \includegraphics[scale=0.41]{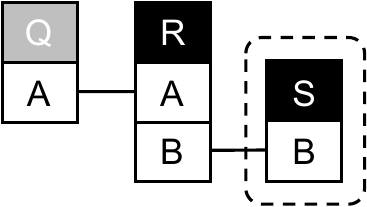}
	\vspace{-1mm}
    \caption{}
    \label{Fig_R_S_neg_join}
\end{subfigure}
\hspace{30mm}
\begin{subfigure}[b]{.22\linewidth}		
\begin{align*}
	&	\{ q(A) \mid \exists r \in R[q.A = r.A \,\wedge \\
	&	 \neg (\exists s \in S [r.B < s.B])]\}	
\end{align*}
\vspace{-4mm}
    \caption{}
\end{subfigure}
\hspace{0mm}
\begin{subfigure}[b]{.19\linewidth}
	\phantom{test}
\end{subfigure}		
\hspace{12mm}	
\begin{subfigure}[b]{.19\linewidth}
\begin{lstlisting}
SELECT DISTINCT R.A
FROM R 
WHERE not exists
 (SELECT *
 FROM S
 WHERE R.B < S.B)
\end{lstlisting}
\vspace{-7mm}
\caption{}
\label{fig:SQLvariety_m}
\end{subfigure}
\\
\begin{subfigure}[b]{.19\linewidth}
\vspace{1mm}
\begin{lstlisting}
SELECT DISTINCT R.A
FROM R 
WHERE not (R.B < any
 (SELECT S.B
 FROM S))
\end{lstlisting}
\vspace{-7mm}
\caption{}	
\end{subfigure}		
\hspace{12mm}
\begin{subfigure}[b]{.19\linewidth}		
\begin{lstlisting}
SELECT DISTINCT R.A
FROM R 
WHERE R.B >= all
 (SELECT S.B
 FROM S)
\end{lstlisting}
\vspace{-7mm}
\caption{}
\end{subfigure}	
\hspace{12mm}
\begin{subfigure}[b]{.19\linewidth}
    \includegraphics[scale=0.41]{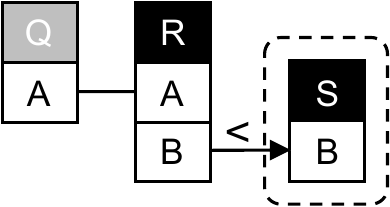}
	\vspace{-6mm}
    \caption{}
    \label{Fig_R_S_inequality_join}
\end{subfigure}
\caption{\Cref{ex:SQLvsTRC}: SQL has a redundant syntax if interpreted under set semantics (``\sql{SELECT DISTINCT}''), 
binary logic (tables contain no null values) 
and compared with TRC.
Here, queries (a)--(e), queries (g)--(j), and queries (l)--(n) are equivalent.
On the right, (f), (k), and (o) show the three corresponding \diagrams\ (\cref{sec:QV}) that abstract away the syntactic variants 
and focus on the logical patterns of the queries.
$\SQL$ queries (b), (h), (m) are canonical and isomorphic to the $\TRC$ queries.
}
\label{fig:SQLvariety}
\end{figure*}

\item[\namedlabel{th:equivalence:trc-sql}{\underline{(5) $\NDTRC \leftrightarrow \NDSQL$}}:]

We prove equivalence in three steps:
We first reduce the syntactic variety of $\NDSQL$,
then define a canonical form,
and finally prove a one-to-one mapping between that canonical $\NDSQL$ and canonical $\NDTRC$.

1. Starting from the grammar in \cref{table:supported_grammar}, 
we first transform ``membership subqueries'' 
(\cref{fig:canonicalquery_a})
and ``quantified subqueries'' 
(\cref{fig:canonicalquery_c,fig:canonicalquery_e})
into equivalent ``existential subqueries.''
Here \sql{O'} is the complement operator of \sql{O} (for example ``<'' for ``>='')
and \sql{C1} and \sql{C2} represent different columns or attributes.

2. Analogous to the canonical form of $\NDTRC$,
we pull existential quantifiers of tables (table variables defined in \sql{FROM} clauses)
as early as possible such that they either appear in the root query, or
directly following a \sql{not exists} (\cref{fig:canonicalquery_g}).
This reduction is deterministic, and
every valid $\NDTRC$ query is equivalent to exactly one canonical $\NDTRC$ query.

3. The resulting canonical $\NDSQL$ is now in a direct 1-to-1 correspondence to $\NDTRC$,
and the translation between $\NDSQL$ and $\NDTRC$ 
is then the matter of translating the different syntactic expressions between the two languages:
From our grammar (\cref{table:supported_grammar}), \sql{SELECT DISTINCT C \{, C\} } is equivalent to the output definition in $\NDTRC$
$\{q(\vec A) \mid \ldots \}$,
each \sql{FROM R \{, R\} } defines the existentially-quantified tuple variables $\exists \vec r \in \vec R[...]$,
each \sql{not exists(SELECT * FROM R \{, R\} ...)} corresponds to negated existentially-quantified tuple variables,
$\neg(\exists \vec r \in \vec R [...])$,
and the syntax of predicates is identical.
The Boolean variants are similar yet the resulting $\NDTRC$ is a logical statement and
thus without curly braces for set identifiers.
\qedhere
\end{pfparts}
\end{proof}

\begin{example}[$\NDSQL$ vs.\ $\NDTRC$]
\label{ex:SQLvsTRC}
\Cref{fig:SQLvariety} shows three different non-disjunctive queries in $\NDTRC$, various syntactic variants of $\NDSQL$, and \diagrams.
$\NDSQL$ queries (\subref{fig:SQLvariety_b}, \subref{fig:SQLvariety_h}, \subref{fig:SQLvariety_m}) are canonical and in a direct 1-to-1 relationship with $\NDTRC$.
\end{example}

\section{More explanations for \texorpdfstring{\cref{sec:QV}}{Section \ref{sec:QV}}}
\label{sec:UMLchoices}

A table can be represented by any visual grouping of its attributes (see \cref{Fig_Table_Attributes_Set_Visualization} for examples).
Our choice in \diagrams\ is to use the typical UML convention of representing tables as rectangular boxes
with the table name on top and attribute names below in separate rows
(\cref{Fig_Table_Attributes_Set_Visualization_a}).
Any alternative choice may affect the readability and usability of \diagrams,
yet does not affect their semantics and pattern expressiveness.
Our focus in this paper is pattern expressiveness, not usability.

\begin{figure}
\centering
\begin{subfigure}[b]{0.15\linewidth}
	\centering
    \includegraphics[scale=0.42]{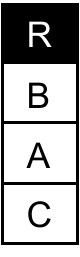}
	\vspace{-1mm}
    \caption{}
    \label{Fig_Table_Attributes_Set_Visualization_a}
\end{subfigure}
\begin{subfigure}[b]{0.18\linewidth}
	\centering	
    \includegraphics[scale=0.42]{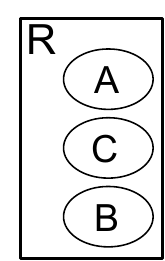}
    \caption{}
    \label{Fig_Table_Attributes_Set_Visualization_b}
\end{subfigure}
\begin{subfigure}[b]{0.23\linewidth}
	\centering	
    \includegraphics[scale=0.42]{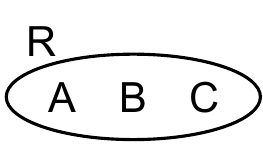}
    \caption{}
    \label{Fig_Table_Attributes_Set_Visualization_c}
\end{subfigure}
\begin{subfigure}[b]{0.17\linewidth}
	\centering	
    \includegraphics[scale=0.42]{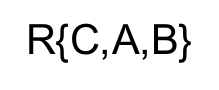}
    \caption{}
    \label{Fig_Table_Attributes_Set_Visualization_d}
\end{subfigure}
\begin{subfigure}[b]{0.21\linewidth}
	\centering	
    \includegraphics[scale=0.42]{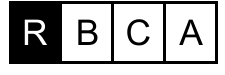}
    \caption{}
    \label{Fig_Table_Attributes_Set_Visualization_e}
\end{subfigure}
\caption{\Cref{sec:UMLchoices}:
A few alternative ways to visualize a table and its set of attributes as a group of nodes.
Inspired by a familiar UML convention for ER diagrams, we chose (a).
}
\label{Fig_Table_Attributes_Set_Visualization}
\end{figure}

\section{More examples for \texorpdfstring{\cref{sec:sentences}}
{Section \ref{sec:sentences}}: logical sentences (or Boolean queries)}
\label{sec:BooleanExamples}

\begin{figure}[t]
\centering	
\begin{subfigure}[b]{.21\linewidth}		
\begin{lstlisting}
SELECT DISTINCT S.sname
FROM Sailor S
WHERE not exists
 (SELECT *
 FROM Boat B
 WHERE B.color = 'red'
 AND not exists
  (SELECT *
  FROM RESERVES R
  WHERE R.bid = B.bid
  AND R.sid = S.sid))
\end{lstlisting}
\vspace{-3.3mm}
\caption{}
\label{SQL_Sailor_exists_Sailor_all_red_boats_text}
\end{subfigure}	
\hspace{13mm}
\begin{subfigure}[b]{.21\linewidth}		
\begin{lstlisting}
SELECT exists
 (SELECT *
 FROM Sailor S
 WHERE not exists
  (SELECT *
  FROM Boat B
  WHERE B.color = 'red'
  AND not exists
   (SELECT *
   FROM RESERVES R
   WHERE R.bid = B.bid
   AND R.sid = S.sid)))
\end{lstlisting}
\vspace{-6mm}
\caption{}
\label{SQL_Sailor_exists_Sailor_all_red_boats}
\end{subfigure}	
\hspace{20mm}
\begin{subfigure}[b]{.42\linewidth}
	\vspace{1mm}	
    \includegraphics[scale=0.45]{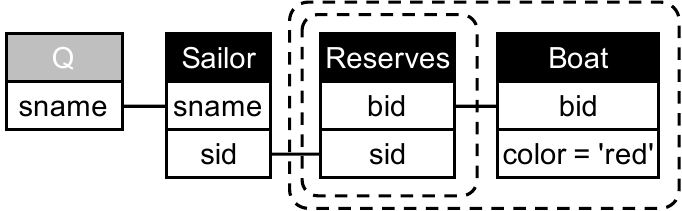}
    \caption{}
    \label{Fig_Sailor_Sailors_all_red_boats}
\end{subfigure}	
\hspace{2mm}
\begin{subfigure}[b]{.35\linewidth}
	\vspace{1mm}	
    \includegraphics[scale=0.45]{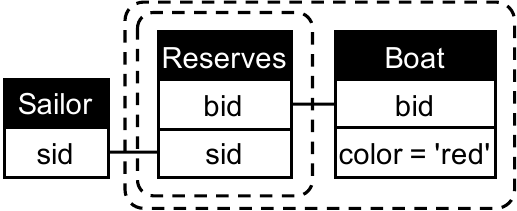}
	\vspace{0.5mm}
    \caption{}
    \label{Fig_Sailor_exists_Sailor_all_red_boats}
\end{subfigure}	
\caption{\Cref{ex:sailorsallredboats}: Sailors reserving all red boats.}
\label{fig:Sailor_sentences}
\end{figure}

\begin{example}[Sailors reserving all red boats]
	\label{ex:sailorsallredboats}
Consider the sailor database \cite{cowbook:2002} that models sailors reserving boats:
Sailor(sid, sname, rating, age),
Reserves(sid, bid, day),
Boat(bid,bname,color),
and the query
``Find sailors who reserved all red boats:''
\begin{align}
\begin{aligned}
\!\!\!\!\!\!\{q(\sql{sname}) \mid 
	& \exists s \in \sql{Sailor} [q.\sql{sname}=s.\sql{sname} \ \wedge \\
	&\neg (\exists b \in \sql{Boat} [b.\sql{color} = \sql{'red'} \wedge \\
	&\neg (\exists r \in \sql{Reserves} [r.\sql{bid} = b.\sql{bid} \wedge r.\sql{sid} = s.\sql{sid}])]) \}
\end{aligned}
\end{align}

Contrast it with the logical statement
``There is a sailor who reserved all red boats.''
In $\NDTRC$, the difference is achieved 
by leaving away curly brackets and any mentions of the output table
(highlighted for a different example in green color in \cref{Fig_TRC_vs_RD_a}):
\begin{align}
\begin{aligned}
	&\exists s \in \sql{Sailor}[ \\
	&\neg (\exists b \in \sql{Boat} [b.\sql{color} = \sql{'red'} \wedge 
	\\
	& \neg (\exists r \in \sql{Reserves} [r.\sql{bid} = b.\sql{bid} \wedge r.\sql{sid} = s.\sql{sid}])])
\end{aligned}
\label{trc:existssailorsallredboats}
\end{align}
Similarly, \diagrams\
loses the output table
(contrast \cref{Fig_Sailor_Sailors_all_red_boats}
with \cref{Fig_Sailor_exists_Sailor_all_red_boats} and their respective SQL statements).
\end{example}

\begin{figure}[t]
\centering	
\begin{subfigure}[b]{.18\linewidth}		
\begin{lstlisting}
SELECT not
 (not exists
  (SELECT *
  FROM R
  WHERE R.A = 1)
 AND not exists
  (SELECT *
  FROM R R2
  WHERE R2.A = 2)))
\end{lstlisting}
\vspace{-6mm}
\caption{}
\label{SQL_Disjunction}
\end{subfigure}	
\hspace{8mm}
\begin{subfigure}[b]{.19\linewidth}
    \includegraphics[scale=0.4]{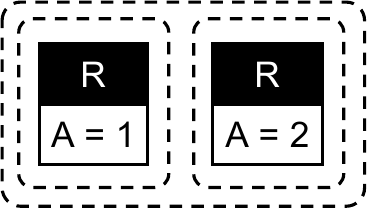}
\vspace{4mm}
\caption{}
\label{Fig_Disjunction}
\end{subfigure}	
\hspace{3mm}
\caption{\Cref{ex:disjunction}: $\exists r \in R [R.A = 1 \vee R.A = 2]$.}
\label{fig:disjunction_oneex}
\end{figure}

We give an example that shows that in order to express sentences (instead of queries), 
and to be relationally complete (in that we would like to be able to express all logical sentences), 
we actually would not have to introduce the visual union.
This is in stark contrast to the union at the root being \emph{necessary} for queries.

\begin{example}[Disjunctions]
	\label{ex:disjunction}
Consider the simplest disjunction
\begin{align*}
	& \exists r \in R [R.A = 1 \;\h{\vee}\; R.A = 2]
\end{align*}
We can remove the disjunction with a double negation:
\begin{align*}
	& \exists r \in R [R.A = 1] \;\h{\vee}\; \exists r \in R[R.A = 2] \\
	& \h{\neg}(\h{\neg}(\exists r \in R [R.A = 1]  \;\h{\vee}\; \exists r \in R[R.A = 2])) \\	
	& \h{\neg}(\h{\neg}(\exists r \in R [R.A = 1]) \;\h{\wedge}\; \h{\neg}(\exists r \in R[R.A = 2])) 
\end{align*}

The first 4 steps of the translation in \cref{sec:fromTRCtoRD} still work
and leads to \cref{Fig_Disjunction}.
For $\SQL$, the query uses the second new rule to express double negation before the first \sql{FROM} clause.
\end{example}

\section{Proof For \texorpdfstring{\cref{sec:structureisomorphism}, \cref{th:representations}}{Section \ref{sec:structureisomorphism}, Theorem \ref{th:representations}}}
\label{appendix:proofs4}
\label{appendix:proofofhierarchy}

\begin{figure}[t]
\centering
\begin{subfigure}[b]{0.66\linewidth}
	\centering
	\begin{tikzpicture}[>=stealth, shorten >= 2pt, shorten <= 2pt, line width=0.25mm]
	  \node (RA) {$\NDRA$};
	  \node[right=2mm of RA] (Datalog) {$\DatalogND$};
	  \node[right=2mm of Datalog] (TRC) {$\NDTRC$};
	  \node[right=2mm of TRC] (SQL) {$\NDSQL$};  
	  \node[right=2mm of SQL] (RD) {$\NRD$};    
    
	  \draw[->] (RA.north) to[bend left] 
	  	node[midway,above] {\hyperref[th:pattern:ra-datalog]{(1)}} 
		(Datalog.north);
	  \draw[->] (Datalog.south) to[bend left] 
	  	node[pos=0.5, sloped] {\small$/$}
		node[midway,below] {\hyperref[th:pattern:datalog-ra]{(2)}} 
		(RA.south);
	  \draw[->] (Datalog.north) to[bend left] 
	  	node[midway,above] {\hyperref[th:pattern:datalog-trc]{(3)}} 
		(TRC.north);
	  \draw[->] (TRC.south) to[bend left] 
	   	node[pos=0.5, sloped] {\small$/$}
	  	node[midway,below] {\hyperref[th:pattern:trc-datalog]{(4)}} 
		(Datalog.south);
	  \draw[<->] (TRC.north) to[bend left] 
	  	node[midway,above] {\hyperref[th:pattern:trc-sql]{(5)}}
		(SQL.north);
	  \draw[<->] (TRC.south) to[bend right=20] 
		node[midway,below] {\hyperref[th:pattern:trc-rd]{(6)}} 
		(RD.south);	
	\end{tikzpicture}
    \caption{}
\end{subfigure}	
\hspace{1mm}
\begin{subfigure}[b]{0.31\linewidth}
	\centering
	\begin{tikzpicture}[scale=1.0]
	\draw
	  (1.05,1.1) node[anchor=west] {$\DatalogND$}
	  (1.35,0.5) node[anchor=west] {$\NDRA$}  
	  (0.0,1.15) node[anchor=west] {$\NDTRC$}
	  (0.0,0.75) node[anchor=west] {$\NDSQL$}  
	  (0.0,0.35) node[anchor=west] {$\NRD$};        
	\draw[black,rounded corners=8,thick]
		(0,0) rectangle (2.6, 1.5);
	\draw[black,rounded corners=6,thick]
	    (0.95,0.1) rectangle (2.5,1.4);
	\draw[black,rounded corners=4,thick]
	    (1.05,0.2) rectangle (2.4,0.8);	
	\end{tikzpicture}
    \caption{}
\end{subfigure}	
\caption{\Cref{appendix:proofs4}: 
Directions used in proof for \cref{th:representations} (a) and resulting representation hierarchy (b).
We use two separations lemmas for 
{\hyperref[th:pattern:datalog-ra]{(2)}} 
and
{\hyperref[th:pattern:trc-datalog]{(4)}} 
in \cref{sec:separationlemmas}.
}
\label{Fig_Representation_proof}
\end{figure}

The most interesting parts of this proof 
are two separation results
(\cref{Fig_Representation_proof}).
We bring those in two separate lemmas first.

\subsection{Two Separation Lemmas}
\label{sec:separationlemmas}

\begin{lemma}[$\NDRA \not \supseteq^\rep \DatalogND$]
	\label{lem:RA_separation}
	The following $\DatalogND$ query 
	over schema $R(A,B), S(B)$
	has no pattern-isomorphic query in $\NDRA$:
	\begin{align}
		Q(x,y)	 & \datarule R(x,y), \neg S(y)
		\label{eq:intro_appendix_example_Datalog}
	\end{align}	
\end{lemma}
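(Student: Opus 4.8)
The plan is to show directly that no $\NDRA$ expression referencing $R$ exactly once and $S$ exactly once can be logically equivalent to $Q$; by \cref{def:isomorphism} this suffices. Indeed, $Q$ has signature $\S=(R,S)$ with no repeated tables, so its dissociated query equals $Q$ itself, and any pattern-isomorphic $\NDRA$ query must have a signature whose multiset of schemas matches $\{R(A,B),S(B)\}$. Over the schema $\{R,S\}$ the only binary relation is $R$ and the only unary one is $S$, so the candidate must reference $R$ once (in the binary slot) and $S$ once (in the unary slot). I would first record the semantic invariants of the target $Q(R,S)=\{(x,y)\mid R(x,y)\wedge y\notin S\}$: it has output arity $2$, it is \emph{monotone} in $R$, and it is \emph{antimonotone} in $S$ (e.g. $R=\{(1,2)\}$, $S=\emptyset$ gives output $\{(1,2)\}$, while enlarging $S$ to $\{2\}$ empties the output).

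Next I would analyze the syntax tree $T$ of a hypothetical equivalent $q$. Since $T$ has exactly two leaves (one $R$, one $S$), let $N$ be their lowest common ancestor. Because the two leaves lie in different children of $N$, the node $N$ is binary, hence one of $\times$, $\Join_c$, or $-$. Two structural observations drive the argument: (i) every node strictly \emph{above} $N$ is unary, since any binary node above $N$ would need a second operand containing a leaf, but both leaves are already consumed under $N$, and basic $\NDRA$ admits no constant relations as operands; and (ii) each branch \emph{below} $N$ contains a single leaf, so along it only the unary operators $\sigma,\pi$ (and renaming) occur, for the same reason. Thus each branch below $N$ computes $f(R)$ or $g(S)$ for compositions $f,g$ of selections and projections, and the part above $N$ is a composition $h$ of selections and projections, which is arity non-increasing.

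I would then split on $N$. If $N$ is $\times$ or $\Join_c$, then by (i)--(ii) the whole expression contains \emph{no} set difference, so $q$ is monotone in both arguments; this contradicts antimonotonicity of $Q$ in $S$. If $N$ is $-$, say $q = h\big(E_1 - E_2\big)$, then one of the two operands is a projection/selection of the unary table $S$ and hence has arity $\le 1$; matching arities for $-$ forces the other operand to have arity $\le 1$ too, and since $h$ cannot increase arity, $q$ has output arity $\le 1$, contradicting the output arity $2$ of $Q$. (The sub-case with $R$ on the negative side additionally contradicts monotonicity of $Q$ in $R$.) Every case yields a contradiction, so no such $q$ exists.

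I expect the main obstacle to be making observations (i)--(ii) airtight: precisely pinning down that basic $\NDRA$ has only base relations as leaves (no constant or nullary relations that could pad arities or reintroduce branching), and phrasing the set-difference typing constraint in the named perspective so that the arity bound in the difference case is unassailable. A clean way to package (i)--(ii) is a short lemma stating that any $\NDRA$ expression with a single leaf is merely a projection--selection transform of that base relation, proved by induction on expression size; the monotonicity invariants can likewise be isolated as an easy structural induction (only $-$ introduces non-monotonicity, and only through its negative operand), after which the case analysis above closes the argument.
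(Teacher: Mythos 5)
Your proof is correct, and it is in fact tighter than the paper's own argument, which runs along recognizably the same core idea (the arity mismatch between $R$ and $S$ blocks a direct set difference) but organizes it differently and leaves two steps implicit. The paper takes for granted that any equivalent $\NDRA$ expression must contain a difference operator, then case-splits on whether that difference is applied at arity one or arity two, exhibits in each case a translation ($R \Join_B (\pi_B R - S)$, resp.\ $R - (\pi_A R \times S)$) that consumes a third table reference, and asserts that this blow-up is ``unavoidable.'' Your proof supplies exactly the missing justifications. First, your monotonicity invariant explains why a difference must occur at all: with exactly two leaves, the lowest common ancestor $N$ is the only binary node, so if $N$ is $\times$ or $\Join_c$ the whole expression is monotone in $S$, contradicting the antimonotonicity witnessed by $R=\{(1,2)\}$, $S=\emptyset$ versus $S=\{2\}$. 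Second, your observation that each branch below $N$ is a $\sigma$/$\pi$/$\rho$-image of a single base table makes the arity obstruction airtight: the $S$-branch has arity at most one, the typing rule of $-$ forces the $R$-branch down to arity at most one as well, and since $\sigma$ and $\pi$ never increase arity the output can never reach the required arity two. What the paper's version buys is brevity and a constructive flavor (it shows what the cheapest logically-equivalent expressions actually look like); what yours buys is completeness, since the paper's enumeration silently excludes difference-free expressions and never proves that padding the arity forces a third leaf. The one point to nail down in a final write-up is the caveat you already flag yourself: that projection in basic $\NDRA$ maps onto a \emph{subset} of attributes (no column duplication, which would let the $S$-branch reach arity two) and that the language admits no constant relations as operands — both facts are needed for your single-leaf lemma and the arity bound, and both are consistent with the paper's definition of the fragment.
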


\begin{proof}[Proof \cref{lem:RA_separation}]
	We show that query \cref{eq:intro_appendix_example_Datalog}
	(also used in our earlier \cref{ex:intro}) 
	has no pattern-isomorphic query in $\NDRA$.
	The simple intuition is that 	
	the binary \emph{minus operator from $\RA$ requires the same arity} of the two input relations.
	Thus one cannot apply the minus operator directly to combine $R$ and $S$ as in $\DatalogND$. 
	Any possible sequence that includes a minus thus either uses the minus on 1 attribute, or 2 attributes 
	(or more attributes).
	We show that any such option requires at least 3 table instances.

	Case 1: Minus on 2 (or more) attributes: 
	Having 2 (or more) attributes for the minus requires us to increase the arity of the right side and thus $S$. 
	This in turn requires a cross product with the domain from $R.A$ before the minus as in the following translation:
	\begin{align*}
		R - (\pi_A R \times S) 		
	\end{align*}
	This in turn increases the number of input table instances used from 2 to at least 3, 
	which prevents a pattern-preserving representation.

	Case 2: Minus on 1 attribute: 
	Having 1 attribute on the minus requires us to increase the arity \emph{after applying the minus} (because our output has arity 2). 
	This in turn unavoidably increases the number of input table instances to at least 3,
	which again prevents a pattern-preserving representation.
	An example translation first uses a projection on $R$ before the minus (to reduce the left input to arity 1) and then a subsequent join again with $R$ after the minus:
	\begin{align*}
		R \Join_B (\pi_B R - S)		
	\end{align*}

	It follows that any $\NDRA$ expression representing
	the $\DatalogND$ expression 
	\cref{eq:intro_appendix_example_Datalog} needs at least 3 references to input tables
	and thus cannot preserve the representation from \cref{eq:intro_appendix_example_Datalog}.
\end{proof}

\begin{lemma}[$\DatalogND \not \supseteq^\rep \NDTRC$]
	\label{lem:Datalog_separation}
	The following $\NDTRC$ query 
	over a schema $T(A), R(A,B), S(B)$
	has no pattern-isomorphic query in $\DatalogND$:
	\begin{align}
	\begin{aligned}
	\{ q(A) \mid 
		&\exists t \in T 	[q.A \equal t.A \wedge \neg (\exists s \in S[	\\
		&\neg (\exists r \in R [r.B \equal s.B \wedge \h{r.A \equal t.A}])])] \}
	\end{aligned}
	\label{TRC:division_appendix_simpler}
	\end{align}
\end{lemma}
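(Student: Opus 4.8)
The plan is to assume, for contradiction, that some $\DatalogND$ program $P$ computes the query \cref{TRC:division_appendix_simpler} while referencing each of $T$, $S$, and $R$ exactly once. Any pattern-isomorphic program has a signature that is a permutation of $(T,S,R)$, so this is the only case to rule out, and I would derive a contradiction from the combination of monotonicity, safety, and the fact that $S$ carries no $A$-attribute. First I would record the semantics explicitly: writing $x$ for a candidate output value, the query returns
\[
Q = \{\, x \in \pi_A T \mid \pi_B S \subseteq \pi_B(\sigma_{A=x} R)\,\},
\]
so that $Q$ is a relational division \emph{correlated} through the shared $A$-value $x = t.A = r.A$, and in the degenerate instance $S=\emptyset$ we have $Q = \pi_A T$ independent of $R$.

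The first ingredient is a polarity analysis. Directly from the semantics, $Q$ is monotone increasing in $T$, monotone increasing in $R$, and monotone decreasing in $S$. In a stratified $\DatalogND$ program in which an EDB occurs exactly once, the query is monotone increasing (resp.\ decreasing) in that EDB precisely when the occurrence sits under an even (resp.\ odd) number of negations along its IDB-dependency path to the output. Hence the single occurrences must have polarities $T^{+}$, $R^{+}$, and $S^{-}$: in particular, $S$ is read inside an odd number of negations, while $T$ and $R$ are read positively.

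The heart of the argument is then structural. Because $Q$ restricted to the instance $S=R=\emptyset$ equals $\pi_A T$, the lone positive occurrence of $T$ must be the source of the output $A$-values. Separately, computing the universal quantifier forces a rule (an IDB in the negated cone through which $S$ flows) that iterates over $b\in\pi_B S$ and tests non-coverage $\neg R(x,b)$, equivalently $\neg \mathit{Cov}(x,b)$ for an IDB $\mathit{Cov}$ built from $R$; here $x$ is the correlated $A$-value demanded by $r.A = t.A$. Safety of that rule requires $x$ to be bound by a \emph{positive} atom, and since $S$ has no $A$-attribute and built-in predicates cannot range over the active domain on their own, the binder must be a positive occurrence of $T$ or of $R$. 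If it is $T$, then $T$ occurs twice, contradicting single use (and moreover this occurrence lies under a negation, clashing with $T^{+}$); if it is $R$, then the coverage test needs a \emph{second}, negated occurrence of $R$, again contradicting single use. Either branch closes the case.

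The main obstacle I anticipate is making the ``a positive $A$-binder is unavoidable'' step fully rigorous against \emph{all} program shapes rather than the canonical translation. A clever program might try to smuggle $A$-values through intermediate IDBs or to reuse the single $R$-atom for both binding and coverage; ruling these out cleanly calls for a robust invariant that tracks the unique ``$A$-origin'' of every value through the IDB-dependency DAG (every $A$-value ultimately descends from the one positive $T$ or the one positive $R$), or, alternatively, an indistinguishability argument: construct two instances related by a domain automorphism that permutes the $A$-labels of $R$ relative to $T$ and $S$ so that $P$---limited by its single, positive read of $R$---returns the same answer on both while $Q$ separates them. Turning this informal ``$x$ must be bound twice in incompatible ways'' into a watertight statement is exactly the delicate part of the proof.
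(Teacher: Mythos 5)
Your skeleton coincides with the paper's own proof: establish the polarities $T^{+}$, $R^{+}$, $S^{-}$ from monotonicity plus the single-use assumption, then argue that the rule performing the coverage test cannot be made safe without a second occurrence of $T$ or of $R$. The conclusions of your two branches are exactly the paper's. But, as you yourself flag, the pivotal step --- ``computing the universal quantifier \emph{forces} a rule that ranges over $b \in \pi_B S$ and tests $\neg \mathit{Cov}(x,b)$ with the correlated value $x$ bound positively in that same rule'' --- is asserted only for programs of the expected shape, not proved for all programs. That is where the entire content of the lemma lies: without a reason why an adversarial program cannot compute the query through some other arrangement of rules and intermediate IDBs, the polarity constraints alone do not produce a contradiction. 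So the proposal, as written, has a genuine gap rather than a complete proof.

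The device the paper uses to close this gap --- and which your sketch is missing --- is a canonicalization that makes the space of candidate programs finite and checkable. Because in $\DatalogND$ every IDB is defined by exactly one rule and may be used at most once, any IDB occurring \emph{positively} in a body can be inlined (replaced by the body of its defining rule) without changing the query or the number of EDB occurrences; hence WLOG IDBs occur only under negation. For such canonical programs one defines the nesting depth of each rule and the induced sign of each EDB occurrence, and an induction on depth shows the signs must be $T^{+}$, $R^{+}$, $S^{-}$. Now single use of each of $T$, $R$, $S$, the sign constraints, and the safety requirement that every variable be bound by a positive atom of its own rule leave only a handful of program shapes (at most a few rules, nesting depth at most two), and the paper simply enumerates them: each candidate is either visibly inequivalent to the query or unsafe, the decisive final case being $q_1(x) \datarule S(y), \neg R(x,y)$ together with $q_0(x) \datarule T(x), \neg q_1(x)$, which is unsafe precisely because $x$ has no positive binder --- your ``binder'' dilemma, but now exhausted over all shapes rather than presupposed. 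This inlining-plus-enumeration is the ``robust invariant'' you were looking for; your alternative suggestions (an $A$-origin invariant, or an indistinguishability argument via a domain automorphism) might also be made to work, but neither is carried out, so the proposal does not yet constitute a proof.
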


\begin{figure}[t]
\centering
\hspace{1mm}
\begin{subfigure}[b]{.3\linewidth}
    \includegraphics[scale=0.4]{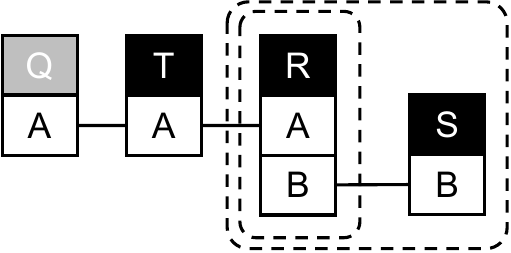}
	\vspace{2mm}
    \caption{}
    \label{Fig_relational_division_1_proof}
\end{subfigure}
\hspace{5mm}
\begin{subfigure}[b]{.32\linewidth}
    \includegraphics[scale=0.4]{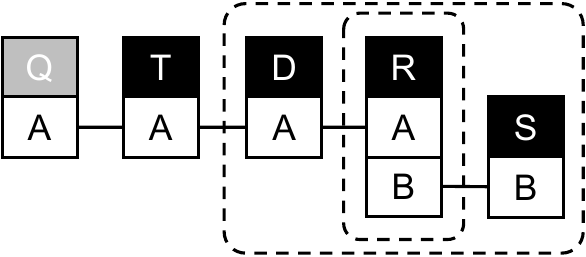}
	\vspace{2mm}
    \caption{}
    \label{Fig_relational_division_2_proof}
\end{subfigure}
\caption{Illustration for \cref{lem:Datalog_separation} 
to show that $\DatalogND \not \supseteq^\rep \NDTRC$, 
which forms
\hyperref[th:pattern:trc-datalog]{part (4)}
of the proof of \cref{th:representations}.}
\label{fig:division_proof}
\end{figure}

\begin{proof}[Proof \cref{lem:Datalog_separation}]
The $\NDTRC$ query \cref{TRC:division_appendix_simpler} 
returns attribute values from $T.A$ that co-occur in $R$ with all attribute values from $S.B$.
We will prove that $\DatalogND$ cannot isomorphically represent this query 
(which corresponds to the dissociated query \cref{TRC:division1} for relational division from \cref{ex:relationalDivision}).

The intuitive reason is that
the predicate ``$t.A \equal r.A$'' represents a join
across two different negation scopes
(\Cref{Fig_relational_division_1_proof} shows that query as \diagram\ where that join 
predicate ``$t.A \equal r.A$''
crosses two negation boxes).
The safety condition of $\DatalogND$  
requires that each variable occurring in a negated atom also occurs in at least one non-negated atom 
\emph{of the same rule}~\cite{DBLP:journals/tkde/CeriGT89}.
As such, it can express negation only \emph{one rule at a time}
(each rule only allows the application of one negation),
and it cannot represent the join predicate $r.A \equal t.A$ across \emph{two} negations without an additional domain table to fulfill the safety condition. 
The standard translation thus uses two separate rules,
each of which can express maximally one negation scope,
and
\emph{each such rule needs to fulfill the safety conditions}.
The first rule finds all the $A$ values that do not co-occur with all $S.B$ values. 
The second rule then finds the complement against the domain from $T.A$:
\begin{align*}
\begin{aligned}
	I(x) 	& \datarule D(x), S(y), \neg R(x,y) 		\\
	Q(x) 	& \datarule T(x), \neg I(x) 		
\end{aligned}
\end{align*}	
Notice that the first ``extra'' atom $D(x)$ 
is needed for the safety condition of $\DatalogND$ 
that requires that each variable occurring in a negated atom also occur in at least one non-negated atom \emph{of the same rule}~\cite{DBLP:journals/tkde/CeriGT89}.
Here $D.A$ represents any domain that includes all values that appear in $T.A$
(it can be a subset or even empty only if $S.B$ is also empty).
Thus in general, we require $D.A \supseteq T.A$ for this Datalog query to be logically equivalent with our $\NDTRC$ query \cref{TRC:division_appendix_simpler}.\footnote{Assume for a moment that we did not require $D.A \supseteq T.A$ and instead used $D.A = R.A$. Then the query would incorrectly return $Q(0)$ for the example database $T(0), D(1), S(5), R(1,5)$.}
\Cref{Fig_relational_division_2_proof} shows this last query as \diagram.
This domain table cannot be $R$ or $S$ (which may contain different domains) and 
one occurrence of $T$ needs to be already used outside any negation).
We thus need to use an additional reference to $T$ 
as ``guard'' for that join predicate. 
Thus $\DatalogND$ requires adding an additional table reference.
Thus $\DatalogND$ cannot preserve the pattern from \cref{TRC:division_appendix_simpler}.

This was an intuition, and 
we make this more precise: 
We next show that the $\NDTRC$ query \cref{TRC:division_appendix_simpler} cannot be expressed
in $\DatalogND$ \emph{without some additional domain table} $D(x)$, 
which would have to be $T(x)$ if $T, R, S$ were the only tables available in a database.
In other words, it cannot be expressed with each of the 3 tables $T, R, S$ appearing only once.

First, observe that $q$ is positive monotone in $T$ and $R$ and negative monotone in $S$, 
i.e.\ adding tuples to $T$ or $R$ can never remove a tuple from the output, 
and adding tuples to $S$ can never add a tuple to the output.

Next consider a general Datalog program without built-in predicates. Here $p_i$ are positive atoms, and $n_j$ negative atoms:
\begin{align*}
	q_d(\vec x_d) &\datarule p_{d1}(\vec x_{d1}), \ldots, p_{d k_d}(\vec x_{d k_d}), 
		\neg n_{d1}(\vec y_{d1}), \ldots, \neg n_{d m_d} (\vec y_{d m_d}). \\
		\ldots \\
	q_1(\vec x_1) &\datarule p_{11}(\vec x_{11}), \ldots, p_{1 k_1}(\vec x_{1 k_1}), 
		\neg n_{11}(\vec y_{11}), \ldots, \neg n_{1 m_1} (\vec y_{1 m_1}). \\
	q_0(\vec x_0) &\datarule p_{10}(\vec x_{10}), \ldots, p_{0 k_0}(\vec x_{0 k_0}), 
		\neg n_{01}(\vec y_{01}), \ldots, \neg n_{0 m_0} (\vec y_{0 m_0}). 
\end{align*}

Recall from \cref{def:DatalogND} of $\DatalogND$
that 
($i$) every IDB appears in the head of exactly one rule (no disjunction),
and ($ii$) every IDB can be used maximally once in any body (no re-use of IDBs).
Furthermore ($iii$) we will assume every relation $T, R, S$ appears exactly once in the body and show this leads to a contradiction.

WLOG we only focus on ``canonical'' Datalog programs in which IDBs in the body can only appear in negated atoms. 
This is WLOG, because an IDB that appears in a positive atom (recall it can appear only once in the body) 
can always be replaced with the body of its defining rule and give an equivalent Datalog program with the same number of table occurrences.

Next, define the ``nesting depth'' $\mathit{ND}(q_i)$ of $q_i$ recursively as the number of rules to traverse to reach $q_0$, which has by definition depth 0. In other words, treat each rule as a hyperedge with atoms in head and body as vertices. 
Then $\mathit{ND}(q_i)$ is the length of the path (the number of hyperedges or rules to traverse) to reach $q_0$ starting from $q_i$.

Next, define the ``sign'' of an EDB in $q_i$ as \emph{positive} if is it appears as positive atom in $q_i$ 
and $(\mathit{ND}(q_i) \textrm{ mod } 2) = 0$, 
or as negative atom and $(\mathit{ND}(q_i) \textrm{ mod } 2) = 1$. 
Analogously for \emph{negative}. 

Recall that every relation and every IDB is used only once, and thus appears either as positive or negative. 
It follows from an induction argument on the nesting depth that the query $q_0$ is positive monotone in relation $R$
iff it appears with positive sign in the Datalog program. Analogously for negative monotone.

Our proof now proceeds by simple enumeration over all possible canonical $\DatalogND$ programs 
that use $T$, $R$, $S$ only once, consistent with their defined ``signs'' ($T$ and $R$ positive, and $S$ negative).

Case 1. Consider one rule 
$q_0(x) :- T(x), R(\vec z), \neg S(y)$. 
A single rule cannot express $q'$, thus we need at least 2 rules.

Case 2. The only way to have canonical rules up to nesting depth 2 with $T$ in $q_0$ and $R$ and $S$ appearing with a sign consistent with their expected monotonicity in $q_0$ is
\begin{align*}
	q_2(y) &\datarule R(x,y)\\
	q_1(y) &\datarule S(y), \neg q_2(y). \\
	q_0(x) &\datarule T(x), \neg q_1(y). 
\end{align*}
which is not equal to $q'$.
Thus, we can only have a Datalog rule with nesting depth 1.

Case 3. There is no way to have two canonical rules with nesting depth 1 with $T$ in $q_0$ and $R$ and $S$ appearing with a sign consistent with their expected monotonicity in $q_0$ since one rule needs to contain $S$ as a positive atom, and the other one would have to have $R$ as a negative atom, which is not possible without an additional atom to make this safe.
Thus we can only have two rules, one of which is nesting depth 1.

Case 4. Assume we have two canonical rules, one of nesting level 1. Since $R$ is positive, it can either appear in level 0 as positive or in level 1 negated. $R$ appearing in level 0 leads to a contradiction:
\begin{align*}
	q_1(y) &\datarule S(y). \\
	q_0(x) &\datarule T(x), R(\vec z), \neg q_1(y). 
\end{align*}

Case 5. The last option is of the form:
\begin{align*}
	q_1(x) &\datarule S(y), \neg R(x, y). \\
	q_0(x) &\datarule T(x), \neg q_1(x). 
\end{align*}
For that program to be safe, both $x$ and $y$ in $R$ need to be 
bound to an element in the active domain, 
which can only happen if unary $S$ is accompanied with another relation. Contradiction.
\end{proof}

\subsection{Proof Representation Hierarchy}

\begin{proof}[Proof of \cref{th:representations}]
We prove each of the directions in turn (\cref{Fig_Representation_proof}).
The logical equivalences already follow from the proof of \cref{th:equivalence}
in \cref{appendix:proofoflogicalexpressivness}.
We need to prove that certain directions are guaranteed to be pattern-preserving,
and for the other directions that do not preserve the structure in general, we give minimum counterexamples.

\begin{pfparts}
\item[\namedlabel{th:pattern:ra-datalog}{\underline{(1) $\NDRA \subseteq^\rep \DatalogND$}}:]
This direction follows immediately from the 
\hyperref[th:equivalence:ra-datalog]{proof part (1)} 
of \cref{th:equivalence} 
in \cref{appendix:proofoflogicalexpressivness}
by observing each of the mappings in the 5 cases to be pattern-preserving.

\item[\namedlabel{th:pattern:datalog-ra}{\underline{(2) $\NDRA \not \supseteq^\rep \DatalogND$}}:]
\Cref{lem:RA_separation}
showed that the set difference (or minus $-$) 
from $\NDRA$ cannot isomorphically represent negation from $\DatalogND$
if the complementing set of attributes is non-empty (see \cref{eq:translation_Datalog_RA_problem}).

\item[\namedlabel{th:pattern:datalog-trc}{\underline{(3) $\DatalogND \subseteq^\rep \NDTRC$}}:]
This direction follows immediately from the proof of \cref{th:equivalence} 
by observing the mappings of each Datalog rule to be pattern-preserving.

\item[\namedlabel{th:pattern:trc-datalog}{\underline{(4) $\DatalogND \not \supseteq^\rep \NDTRC$}}:]
\Cref{lem:Datalog_separation}
showed that $\DatalogND$ cannot isomorphically represent 
the $\NDTRC$ query \cref{TRC:division_appendix_simpler}
(which corresponds to the dissociated query \cref{TRC:division1} for relational division from \cref{ex:relationalDivision}).

\item[\namedlabel{th:pattern:trc-sql}{\underline{(5) $\NDTRC \equiv^\rep  \NDSQL$}}:]
This also follows immediately from the 
\hyperref[th:equivalence:trc-sql]{proof part (5)}
in \cref{appendix:proofoflogicalexpressivness}
which 
preserves 1-to-1 correspondences of the mappings
in either direction.

\item[\namedlabel{th:pattern:trc-rd}{\underline{(6) $\NDTRC \equiv^\rep \NRD$}}:]
This follows immediately from the 
step-by-step translations in
\cref{sec:fromTRCtoRD,sec:fromRDtoTRC} 
which keeps a 1-to-1 correspondence between table references
and thus form the proof.
\qedhere
\end{pfparts}
\end{proof}

\section{Extension for 
\texorpdfstring{\cref{SEC:STRUCTUREISOMORPHISM}}
{Section \ref{SEC:STRUCTUREISOMORPHISM}}:
Pattern expressiveness of 
\texorpdfstring{$\NDRA$}{\NDRAPlain} 
with additional operators}
\label{appendix:extensions}

\subsection{
\texorpdfstring{$\NDRA$}{\NDRAPlain}
with antijoins
(\texorpdfstring{$\NDRAA$}{\NDRAA})}
\label{appendix:antijoins}

We have so far focused on the pattern expressiveness of $\NDRA$ using only the basic algebraic operators except for the union.
We will now show that adding the antijoin operator can extend the expressiveness to the same as $\DatalogND$.

Given two relations $R$ and $S$,
and a conjunction $c$ of equality predicates between attributes from $R$ and $S$,
the antijoin $R \antijoin_c S$ (sometimes written as $R \overline{\ltimes}_c S$)
returns all tuples in $R$ 
that do not have any tuple in $S$ that joins with $R$ based on the equality predicates in $c$~\cite{DBLP:books/mg/SKS20}.
For example, $R \antijoin_{R.A=S.B} S$
outputs all tuples in $R$ 
that do not have any tuple in $S$ whose $s.B$ attribute value matches that tuples $R.A$ attribute value. 
It is formally defined as
$R \antijoin_c S = R - \pi_{\mathit{schema}(R)}(R \Join_c S)$.
It is customary to leave away the conditions $c$ if they are equijoins on identically named attributes and thus corresponding to 
a natural join. 
Thus,
$R \antijoin S = R - \pi_{\mathit{schema}(R)}(R \Join S)$.

\begin{figure}[t]
\centering
\begin{subfigure}[b]{0.4\linewidth}
	\centering
	\begin{tikzpicture}[>=stealth, shorten >= 2pt, shorten <= 2pt, line width=0.25mm]
	  \node (RA) {$\NDRAA$};
	  \node[right=3mm of RA] (Datalog) {$\DatalogND$};
    
	  \draw[->] (RA.north) to[bend left] 
	  	node[midway,above] {\hyperref[th:pattern:ra-datalog]{(1)}} 
		(Datalog.north);
	  \draw[->] (Datalog.south) to[bend left] 
		node[midway,below] {\hyperref[th:pattern:datalog-ra]{(2)}} 
		(RA.south);
	\end{tikzpicture}
	\vspace{-2mm}
    \caption{}
\end{subfigure}	
\hspace{5mm}
\begin{subfigure}[b]{0.35\linewidth}
	\centering
	\begin{tikzpicture}[scale=1.0]
	\draw
	  (1.05,1.0) node[anchor=west] {$\DatalogND$}
	  (1.35,0.5) node[anchor=west] {$\NDRAA$}  
	  (0.0,1.15) node[anchor=west] {$\NDTRC$}
	  (0.0,0.75) node[anchor=west] {$\NDSQL$}  
	  (0.0,0.35) node[anchor=west] {$\NRD$};        
	\draw[black,rounded corners=8,thick]
		(0,0) rectangle (2.6, 1.5);
	\draw[black,rounded corners=6,thick]
	    (0.95,0.1) rectangle (2.5,1.4);
	\end{tikzpicture}
    \caption{}
\end{subfigure}	
\caption{\Cref{appendix:antijoins}: 
Directions used in proof for \cref{th:NDRAA} (a) and resulting representation hierarchy (b).}
\label{Fig_NDRAA}
\end{figure}
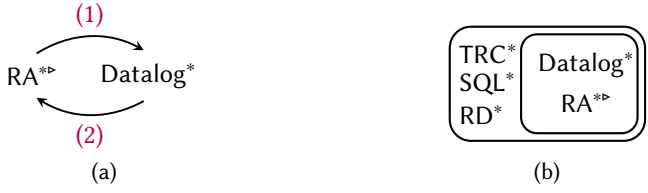

\begin{theorem}[$\NDRAA$]
\label{th:NDRAA}
$\NDRAA$ ($\NDRA$ extended with the antijoin operator $\antijoin$)
and $\DatalogND$
are re\-pre\-sentation-equivalent. 
\end{theorem}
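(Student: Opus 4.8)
The plan is to prove $\NDRAA \equiv^\rep \DatalogND$ by establishing pattern-representation in both directions, mirroring the structure of the proof of \cref{th:representations}. The direction $\NDRAA \subseteq^\rep \DatalogND$ (part (1) in \cref{Fig_NDRAA}) follows almost for free: every basic $\NDRA$ operator already has a pattern-preserving translation into $\DatalogND$ by \hyperref[th:pattern:ra-datalog]{part (1)} of \cref{th:representations}, so I only need to add one case for the antijoin. Using the definition $R \antijoin_c S = R - \pi_{\mathit{schema}(R)}(R \Join_c S)$ one might worry that $R$ appears twice; instead I would translate the antijoin \emph{directly} as a single rule with a negated atom, namely
\begin{align*}
	q(\vec x) \datarule r(\vec x), \neg s(\vec y),
\end{align*}
where $\vec y$ is determined by the equality predicates of $c$. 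Since $c$ consists only of equalities between attributes of $R$ and $S$, every variable of the negated atom is bound by a variable of the positive atom $r(\vec x)$, so the safety condition of $\DatalogND$ is met and the translation uses $R$ and $S$ exactly once each. This makes the antijoin case pattern-preserving.

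The more substantive direction is $\DatalogND \subseteq^\rep \NDRAA$ (part (2)). The obstacle in the original separation \cref{lem:RA_separation} was precisely that the $\NDRA$ minus operator requires equal arities, forcing a padding cross-product (see \cref{eq:translation_Datalog_RA_problem}) and thus an extra table reference whenever the set $\vec z$ of complementing attributes is nonempty. The antijoin removes exactly this obstruction. I would revisit the translation from \hyperref[th:equivalence:datalog-ra]{part (2)} of \cref{th:equivalence}, which handled a general rule
\begin{align*}
	q(\vec x) \datarule p_1(\vec x_1), \ldots, p_k(\vec x_k), \neg n_1(\vec y_1), \ldots, \neg n_m(\vec y_m), c_{\theta},
\end{align*}
and show that the problematic expression \cref{eq:translation_Datalog_RA_problem} can be replaced by a pattern-preserving antijoin chain. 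Concretely, letting $P = P_1 \Join \ldots \Join P_k$ be the join of the positive subgoals, each negated subgoal $\neg n_j$ can be stripped off via one antijoin $P \antijoin_{c_j} N_j$ rather than a minus-with-padding, where $c_j$ encodes the equalities binding $\vec y_j$ into the positive part. Because the safety condition guarantees every variable of each $N_j$ already appears among the $\vec x_i$, these equijoin conditions always exist, so no auxiliary domain table is required and each EDB appears exactly once, matching the signature of the $\DatalogND$ rule.

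The main step to get right is verifying that the antijoin correctly implements the conjunction of several negated subgoals \emph{without reordering interactions}. Applying the antijoins sequentially, $\bigl((P \antijoin_{c_1} N_1) \antijoin_{c_2} N_2\bigr) \cdots \antijoin_{c_m} N_m$, is sound because each antijoin only deletes tuples of the left operand and leaves its schema unchanged, so the conditions $c_2, \ldots, c_m$ remain well-defined on the running result; I would confirm this by an easy induction on $m$. After peeling off all negations, the remaining built-in predicates $c_\theta$ are applied by a selection $\sigma_{c_\theta}$ and the head is formed by a projection $\pi_{\vec A}$, exactly as in \cref{th:equivalence}. Since \cref{def:DatalogND} guarantees each IDB heads exactly one rule and is used at most once per body, the inductive translation over a topological order of the IDBs introduces no table duplication, and the resulting bijection between table references is a pattern-preserving mapping. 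The hardest part is the bookkeeping showing the antijoin chain preserves both logical equivalence \emph{and} the one-to-one correspondence of table references simultaneously; once that is in place, combining both directions yields $\DatalogND \equiv^\rep \NDRAA$, and the separation $\NDTRC \not\subseteq^\rep \NDRAA$ is inherited unchanged from \cref{lem:Datalog_separation}, since that lemma's argument never relied on the absence of an antijoin operator.
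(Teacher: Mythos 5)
Your second direction, $\DatalogND \subseteq^\rep \NDRAA$, is essentially the paper's own proof: the same peeling-off of negated subgoals by a chain of antijoins applied to $P_1 \Join \ldots \Join P_k$, followed by $\sigma_{c_\theta}$ and $\pi_{\vec A}$, with the Datalog safety condition guaranteeing that each antijoin condition $c_j$ exists; your closing remark that the separation of \cref{lem:Datalog_separation} is unaffected also matches the paper.

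The gap is in the first direction, $\NDRAA \subseteq^\rep \DatalogND$. Your single rule $q(\vec x) \datarule r(\vec x), \neg s(\vec y)$ is a legal and correct $\DatalogND$ rule only in the special case where the condition $c$ covers \emph{all} attributes of $S$. In general it need not: the paper defines $R \antijoin_c S = R - \pi_{\mathit{schema}(R)}(R \Join_c S)$ for an arbitrary conjunction of equalities, so $S$ may have attributes untouched by $c$. The negated atom over $s$ must then carry fresh variables in those positions; these variables occur in no positive atom, so the rule violates the safety condition of $\DatalogND$, and even setting safety aside, body-only variables are read existentially, so the rule would assert ``for \emph{some} $\vec z$, $(\vec y, \vec z)$ is not in $S$'' rather than the antijoin's ``for \emph{no} $\vec z$ is $(\vec y, \vec z)$ in $S$.'' The paper's proof (Case 6) fixes exactly this by a two-rule translation: an auxiliary IDB $q'(\vec y) \datarule e_2(\vec y, \vec z)$ that projects $E_2$ onto the joined attributes $\vec y$, followed by $q(\vec x, \vec y) \datarule e_1(\vec x, \vec y), \neg q'(\vec y)$, whose negated atom is now fully bound. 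This remains pattern-preserving because IDBs are excluded from query signatures by design, so each of $E_1, E_2$ still contributes exactly one table reference. With that projection step inserted, your argument goes through and the two directions together give \cref{th:NDRAA}.
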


\begin{proof}[\cref{th:NDRAA}]
We again prove each of the two directions in turn (\cref{Fig_NDRAA})
and build upon the proof of \cref{th:equivalence}
in \cref{appendix:proofoflogicalexpressivness}.

\begin{pfparts}	
\item[\namedlabel{th:pattern:raanti-datalog}{\underline{(1) $\NDRAA \subseteq^\rep \DatalogND$}}:]
Building upon \cref{appendix:proofoflogicalexpressivness},
we need to show that the translation of the antijoin is pattern-preserving.
	
Case 6: $Q = E_1 \antijoin_c E_2$: 
Assume that there are predicates $e_1$ and $e_2$ whose rules define their relations to be the same as the relations for $E_1$ and $E_2$.
We partition the attributes of $E_1$ and $E_2$ into three sets based on the equality conditions $c$:
those that appear in $E_1$ but not in $E_2$ (indexed by $\vec x$),
those that appear in both (indexed by $\vec y$),
and those that appear only in $E_2$ (indexed by $\vec z$).

Then we use two rules:
\begin{align*}
	q'(\vec y) &\datarule e_2(\vec y, \vec z). \\
	q(\vec x, \vec y) &\datarule e_1(\vec x, \vec y),  \neg q'(\vec y).	
\end{align*}
to define a first predicate $q'$ whose relation contains all the attribute values from $E_2$ that could join with $E_1$
and $q$ a second predicate that contains the result of the antijoin.
We can easily see that the two rules preserve the referenced input tables,
and that the safety for these rules is fulfilled as all variables appearing in the negated $q'$ also appear in the positive $e_1$.

\item[\namedlabel{th:pattern:datalog-raanti}{\underline{(2) $\DatalogND \rightarrow \NDRAA$}}:]
Building upon \cref{appendix:proofoflogicalexpressivness},
we show that the translation of 
rules with negated subgoals in the body can be achieved in a pattern-preserving way by using the antijoin.

Concretely, take a general Datalog rule with built-in predicates:
\begin{align*}
	q(\vec x) \datarule p_1(\vec x_1), \ldots, p_k(\vec x_k),  \neg n_1(\vec y_1), \ldots,  \neg n_m(\vec y_m), c_{\theta}.
\end{align*}
From the safety conditions of this rule, we know that
all variables in the built-in predicates $c_{\theta}$ need to appear in positive atoms.
Similarly, all variables in negated atoms also need to appear in positive atoms:
$\bigcup_1^k \vec x_i \supseteq \vec y_i, \forall i \in [m]$. 
Let $P_i$ and $N_i$ be the $\NDRA$ expressions corresponding to $\DatalogND$ predicates $p_i$ and $n_i$,
$\vec A$ represent the set of attributes indexed by $\vec x$,
and $c_i$ represent the conjunction of equality predicates implied by the set of re-used variables $\vec y_i$ for 
each negated predicate $n_i, i \in [m]$.
Then the following expression translates one single rule with built-in predicates 
into a valid $\NDRAA$ expression in a pattern-preserving way:
\begin{align}
	\pi_{\vec A}(\sigma_{\theta} (\ldots(((P_1 \Join \ldots \Join P_k)\antijoin_{c1} N_1) \antijoin_{c2} N_2) \ldots \antijoin_{c_m} N_m))
\end{align}
Since all variables in the built-in predicate $c_{\theta}$ need to appear in $\bigcup_1^k \vec x_i$, 
the selection $\sigma_{\theta}$ can be correctly applied on $Q'$.
It then follows by induction on the order in which the IDB predicates are considered that each has a relation defined by some expression 
in $\NDRA$.
\qedhere
\end{pfparts}
\end{proof}

\begin{example}
Using the antijoin operator, the $\DatalogND$ query 
$Q(x,y)	\datarule R(x,y), \neg S(y)$,
from \cref{ex:intro} 
can be translated in a pattern-preserving way into the $\NDRAA$ expression
$R \antijoin S$
(or 
$R \antijoin_{R.B=S.B} S$
with explicit join conditions)
\end{example}

\begin{example}[Relational division with antijoins]
\label{ex:divisions_with_antijoins}
We can use the antijoin operator to also express relational division
\begin{align*}
\begin{aligned}
	I(x) 	& \datarule R(x,\_), S(y), \neg R(x,y). 		\\
	Q(x) 	& \datarule R(x,\_), \neg I(x). 		
\end{aligned}
\end{align*}	
in $\NDRAA$ as
\begin{align*}
	\pi_A R  \antijoin \pi_A\big( (\pi_A R \times S) \antijoin R \big)
\end{align*}
While standard $\SQL$ does not have a dedicated antijoin operator, 
it can model the operator
via a left join and ``IS NULL'' selection as shown in \cref{fig:SQL_antijoins}
Notice that the syntax is quite different from the $\SQL$ expressions shown elsewhere throughout this paper
(e.g.\ we require subqueries in the FROM clause because we require two ``IS NULL'' selections)
and, for equivalence, still requires that the input tables have no NULL values 
(relations with NULLs are only an intermediate representation to express the antijoin).
However, this syntax is then still pattern-isomorphic with 
relational division expressed above in $\Datalog$
and in \cref{Fig_equivalence1}
in various languages.
This example illustrates the powerful abilities of reasoning in terms of patterns across languages and various syntactic constructs.
\end{example}

\begin{figure}[t]
\centering
\begin{minipage}{0.23\linewidth}
\begin{lstlisting}
FROM R LEFT JOIN 
  (SELECT X.A
  FROM (SELECT R.A, S.B
       FROM R, S) AS X
  LEFT JOIN R
  ON (R.B = X.B
  AND R.A = X.A)
  WHERE R.A IS NULL) AS Y
ON R.A = Y.A
WHERE Y.A IS NULL  
\end{lstlisting}
\vspace{0mm}
\end{minipage}
\vspace{-5mm}
\caption{\Cref{ex:divisions_with_antijoins}: Relational division in $\SQL$ with antijoins (syntactically expressed and LEFT JOINS with ``IS NULL'' selections).}
\label{fig:SQL_antijoins}
\vspace{-1mm}
\end{figure}

\subsection{Adding relational division to 
\texorpdfstring{$\DatalogND$ }{\DatalogND}}
\label{sec:adding_relational_division}

\begin{figure}[t]
\centering
\begin{subfigure}[b]{.3\linewidth}
    \includegraphics[scale=0.4]{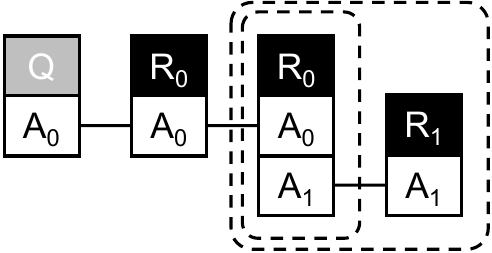}
    \caption{}
    \label{Fig_Separation_division_1}
\end{subfigure}
\hspace{9mm}
\begin{subfigure}[b]{.32\linewidth}
    \includegraphics[scale=0.4]{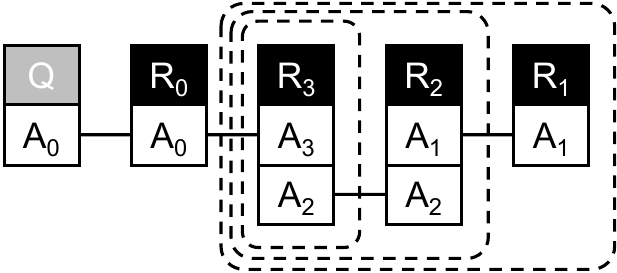}
    \caption{}
    \label{Fig_Separation_division_2}
\end{subfigure}
\caption{
Illustration for \cref{sec:adding_relational_division}.
}
\label{fig:division_does_not_help}
\end{figure}

We saw adding the antijoin operator to $\NDRA$ makes resulting language $\NDRAA$ equally pattern-expressive as $\DatalogND$.
We next show that adding the relational division to $\DatalogND$ 
does not make it as expressive as $\NDTRC$.
We achieve this by giving a $\NDTRC$ query $q_{S3}$ that requires two additional tables in the translation to 
$\DatalogND$ (and thus also to $\NDRA$ since $\DatalogND$ can pattern represent $\NDRA$).

\begin{align}
\begin{aligned}
\{ q_{S3}(A) 
	& \mid \exists r_0 \in R_0 	[q_{S3}.A_0 \equal r_0.A_0 \wedge 	\\
	& \: \neg \exists r_1 \in R_1[								\\
	& \: \neg \exists r_2 \in R_2[r_2.A_1 \equal r_1.A_1 \wedge \\
	& \: \neg \exists r_3 \in R_3[r_3.A_2 \equal r_2.A_2 \wedge r_3.A_3 \equal r_0.A_0])])] \}	
\end{aligned}
\label{TRC:separationquery3}
\end{align}
We assume a schema of only binary relations:
$R_0(A_0, B)$, 
$R_1(A_0, A_1)$, 
$R_2(A_1, A_2)$, 
$R_3(A_2, A_3)$.
\Cref{Fig_Separation_division_2} shows $q_{S3}$ in \diagrams.
Now assume that we add an additional ``operator'' to $\DatalogND$ that captures the semantics of relational division
(shown in \Cref{Fig_Separation_division_1}):
\begin{align}
\begin{aligned}
	I(x) 	& \datarule R_0(x,\_), R_1(\_, y), \neg R_0(x,y). 		\\
	Q_D(x) 	& \datarule R_0(x,\_), \neg I(x). 		
\end{aligned}
\label{datalog}
\end{align}	
We represent this ``operator'' as a function $f_D$ 
that we add to the constructs of the resulting language,
and show that the resulting language has not pattern-isomorphic representation of $q_{S3}$.
\begin{align}
\begin{aligned}
	Q_D(x) 	& \datarule f_D[R_0(x,y), R_1(z,y)].
\end{aligned}
\label{datalog}
\end{align}

First, notice that $f_D$ is monotone positive in $R_0$ and monotone negative in $R_1$.
$q_{S3}$ is monotone positive in $R_0$ and $R_2$ and monotone negative in $R_1$.

Next notice that the output schema of $f_D$ is unary, while the inputs to $f_D$ are binary. 
Thus $f_D$ cannot be composed. This prevents that the resulting query has two occurrences of $f_D$
as otherwise all 4 input tables $R_0, R_1, R_2, R_3$ would be used as inputs to the two occurrences of $f_D$
while we cannot express the query. It follows $f_D$ must occur exactly once (it must occur at least once since 
$q_{S3}$ cannot be expressed without $f_D$).

The proof now succeeds be checking all finitely many ways in which $q_{S3}$ could be expressed 
with one occurrence of $f_D$ while respecting the monotonicity constraints and realizing that none of these queries is equivalent to $q_{S3}$.

\subsection{Conjecture about limits of adding operators}

We leave it open whether $\RA$ can be extended to express the full range of relational patterns available in logical languages such as $\NDTRC$.
We hypothesize that this gap between procedural languages (either via an algebraic language like $\RA$ or declarative languages that are restricted to rule-by-rule evaluation such as $\Datalog$) and logical languages cannot be breached with any natural operator that can be readily implemented in a procedural way.

\section{More examples For \texorpdfstring{\cref{sec:structureisomorphism}}{Section \ref{sec:structureisomorphism}}}

We next illustrate,
with the help of the more complicated example of relational division,
that there is a pattern-preserving mapping from $\RA$ to $\TRC$, but not in the other direction.

\begin{figure*}[t]
\centering
\begin{subfigure}[b]{.19\linewidth}
\begin{lstlisting}
SELECT DISTINCT R.A
FROM R
WHERE not exists
 (SELECT *
 FROM S
 WHERE not exists
  (SELECT *
  FROM R AS R2
  WHERE R2.B = S.B
  AND R2.A = R.A))
\end{lstlisting}
\vspace{-7mm}
    \caption{}
    \label{fig:SQL_equivalence1}
\end{subfigure}
\hspace{6mm}
\begin{subfigure}[b]{.28\linewidth}
    \includegraphics[scale=0.4]{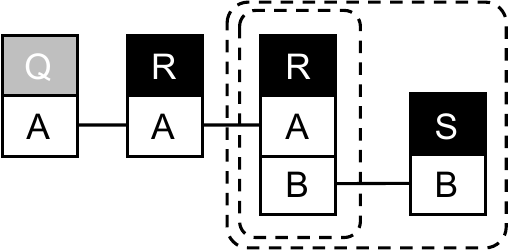}
	\vspace{2mm}
    \caption{}
    \label{Fig_relational_division_1}
\end{subfigure}
\hspace{38mm}
\phantom{x}
\begin{subfigure}[b]{.19\linewidth}
\vspace{1mm}
\begin{lstlisting}
SELECT DISTINCT R.A
FROM R
WHERE not exists
 (SELECT *
 FROM S, R AS R3
 WHERE R3.A = R.A
 AND not exists
  (SELECT *
  FROM R AS R2
  WHERE R2.B = S.B
  AND R2.A = R3.A))
\end{lstlisting}
\vspace{-7mm}
    \caption{}
    \label{fig:SQL_equivalence2}
\end{subfigure}
\hspace{6mm}
\begin{subfigure}[b]{.32\linewidth}
\vspace{1mm}	
    \includegraphics[scale=0.4]{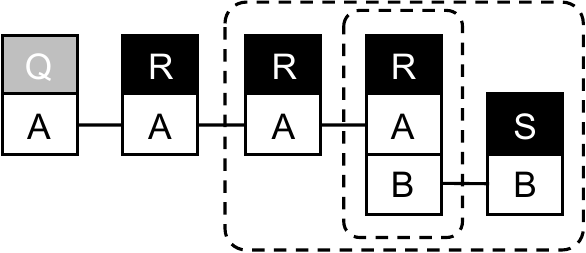}
	\vspace{2mm}
    \caption{}
    \label{Fig_relational_division_2}
\end{subfigure}
\hspace{2mm}
\begin{subfigure}[b]{.23\linewidth}
\vspace{1mm}	
\begin{lstlisting}
SELECT DISTINCT R.A
FROM R
WHERE R.A not in
 (SELECT R3.A
 FROM S, R AS R3
 WHERE (R3.A, S.B) not in
  (SELECT R2.A, R2.B
  FROM R AS R2))
\end{lstlisting}
\vspace{-7mm}
    \caption{}
    \label{fig:SQL_equivalence3}
\end{subfigure}
\hspace{35.5mm}
\vspace{-4mm}
\caption{\Cref{ex:relationalDivision}: Relational division in SQL (a)(c)(e) and as \diagrams\ (b)(d).
All 5 representations are \emph{logically equivalent}, but only the partitions
\{(a), (b)\}
and
\{(c), (d), (e)\} are also
\emph{pattern-isomorphic} (which is what we expect).
We illustrate in \cref{Fig_equivalence} how the tables in this query correspond to the equivalent queries in \diagrams, $\NDSQL$, $\NDRA$, $\DatalogND$, $\NDTRC$.
}
\label{fig:SQL_equivalence}
\end{figure*}

\begin{figure*}[t]
\centering	
	\begin{subfigure}[b]{\linewidth}
        \centering
	    \includegraphics[scale=0.36]{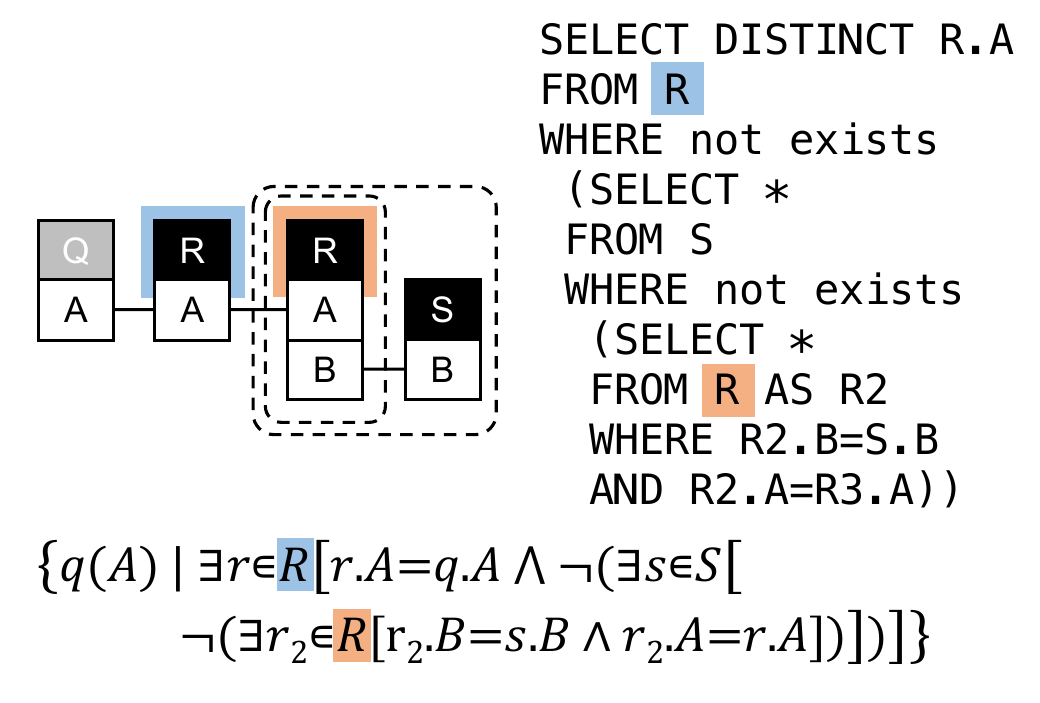}
		\vspace{-1mm}
	    \caption{}
	    \label{Fig_equivalence2}
	\end{subfigure}
    \\
	\begin{subfigure}[b]{\linewidth}
		\vspace{3mm}
        \centering
	    \includegraphics[scale=0.36]{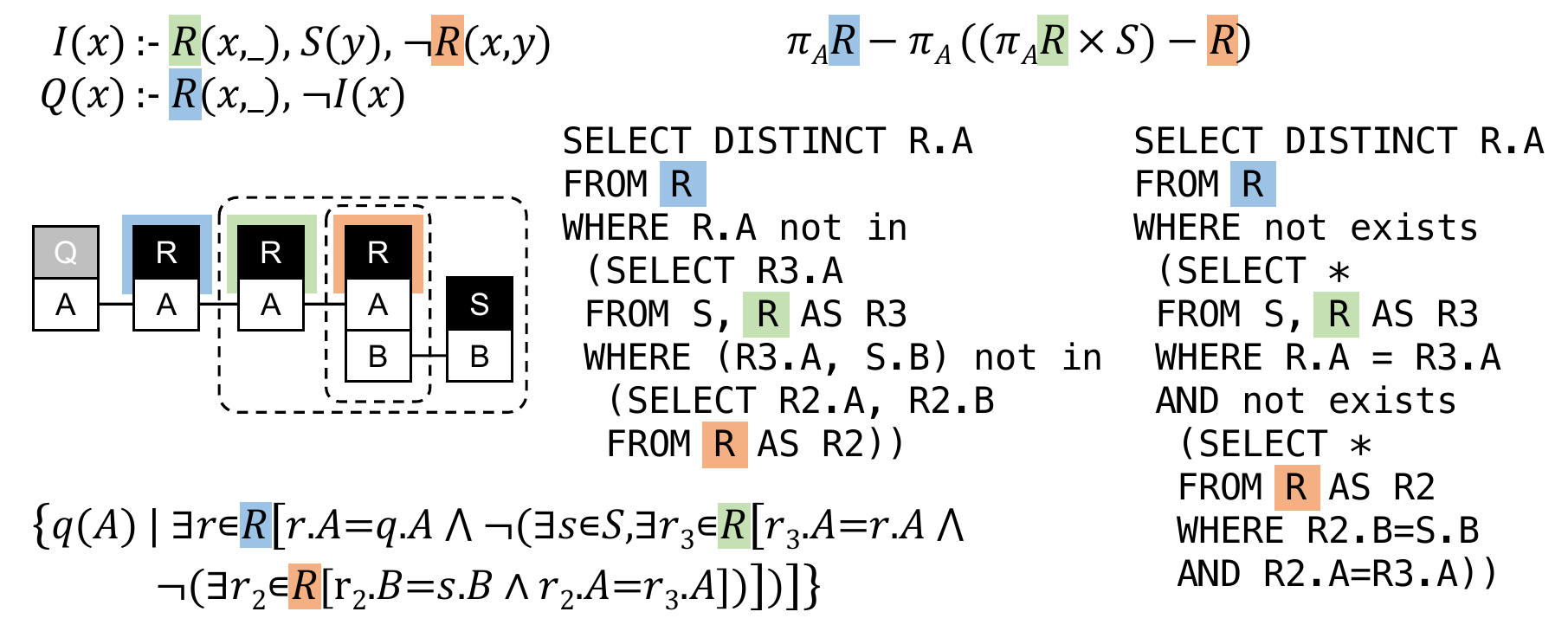}
		\vspace{-1mm}
	    \caption{}
	    \label{Fig_equivalence1}
	\end{subfigure}	
\caption{\Cref{ex:relationalDivision} and \cref{fig:SQL_equivalence} continued:
Two logically-equivalent sets (a) and (b) of relational division in 5 query languages 
(\diagrams, $\NDSQL$, $\NDRA$, $\DatalogND$, $\NDTRC$).
The queries in (a) use 2 occurrences of $R$, whereas the ones in (b) use 3 occurrences of $R$.
We highlight the two or three occurrences of $R$ across the different languages that can be mapped to each other 
according to our pattern isomorphism defined in \cref{def:isomorphism}.
We prove in \cref{appendix:proofs4} that for (a), there is no pattern-isomorphic representation in $\DatalogN$ (thus neither in $\RA$).
}
\label{Fig_equivalence}
\end{figure*}

\begin{example}[$\TRC$ and $\RA$ are not representation-equivalent]\label{ex:relationalDivision}
Assume a schema $R(A,B), S(B)$.
Consider the relational division asking for attribute values from $R.A$ that co-occur in $R$ with all attribute values from $S.B$.
The translation
into $\TRC$  is
\begin{align}
\begin{aligned}
\{ q(A) \mid 
	&\exists r \in R 	[q.A \equal r.A \wedge \neg (\exists s \in S[	\\
	&\neg (\exists r_2 \in R[r_2.B \equal s.B \wedge r_2.A \equal r.A])])] \}	
\end{aligned}
\label{TRC:division1}
\end{align}
The corresponding canonical $\SQL$ statement is shown in \cref{fig:SQL_equivalence1}.
Relational division expressed in primitive $\RA$ 
is
\begin{align}
	\pi_A R - \pi_A\big( (\pi_A R \times S) - R \big)
	\label{RA:division}
\end{align}
The translation into $\DatalogN$ uses two rules:
\begin{align}
\begin{aligned}
	I(x) 	& \datarule R(x,\_), S(y), \neg R(x,y). 		\\
	Q(x) 	& \datarule R(x,\_), \neg I(x). 		
\end{aligned}
\label{datalog}
\end{align}	
The atoms $R(x,\_)$ are needed for the safety condition of $\DatalogN$.
This translation is part of a standard proof for equivalence of expressiveness between $\RA$ and safe $\TRC$ in textbooks such as 
\cite{Ullman1988PrinceplesOfDatabase,DBLP:books/cs/Maier83,DBLP:books/aw/AbiteboulHV95}.

Now notice an arguably important difference between the three expressions:
$\TRC$ \cref{TRC:division1} uses the atom $R$ \emph{two} times, whereas
$\RA$ \cref{RA:division} and $\DatalogN$ \cref{datalog} use $R$ \emph{three} times.
It turns out that there is no way to represent relational division in primitive $\RA$ or $\DatalogN$ with only two occurrences of the $R$ symbol 
(see \cref{th:representations}).

There is, however, an alternative representation in $\TRC$ that preserves the $\RA$ structure with three occurrences of $R$:
\begin{align}
\begin{aligned}	
	\{ q(A) \mid \exists r \in R &[q.A \equal r.A \wedge 
		\neg (\exists s \in S, \exists r_3 \in R [r_3.A = r.A  		\\							
		&\hspace{0mm} \wedge \neg (\exists r_2 \in R[r_2.B \equal s.B \wedge r_2.A \equal r_3.A])])]  \}		\label{TRC:division2}
\end{aligned}
\end{align}
Notice that now there is a natural 1-to-1 correspondence between the atoms in 
$\TRC$ \cref{TRC:division2} and the atoms in $\RA$ \cref{RA:division}.
This correspondence is even more intuitive by mapping the correspondence between two logically-equivalent $\SQL$ statements
(\cref{fig:SQL_equivalence2,fig:SQL_equivalence3}) and $\RA$ \cref{RA:division}:
for example, lines 4--8 in \cref{fig:SQL_equivalence3} translate into 
the $\RA$ fragment ``$\pi_A\big( (\pi_A R \times S) - R \big)$'', 
which corresponds to the IDB predicate $\sql{Temp}(x)$ in $\DatalogN$ \cref{datalog}.

In other words, while all of these 7 queries are \emph{logically equivalent},
they partition into two disjoint sets that are ``\emph{pattern-isomorphic}'':
\begin{align*}
\textrm{Set 1} =
		& \{\RA \textrm{ \cref{RA:division}}, 
			\TRC \textrm{ \cref{TRC:division2}}, 
			\SQL \textrm{ \cref{fig:SQL_equivalence2}}, 
			\SQL \textrm{ \cref{fig:SQL_equivalence3}}, 
			\DatalogN \textrm{ \cref{datalog}} \} \\
\textrm{Set 2} =
		& \{\TRC \textrm{ \cref{TRC:division1}}, 
			\SQL \textrm{ \cref{fig:SQL_equivalence1}} \} 
\end{align*}	
This suggests that $\TRC$ and $\RA$ are not \emph{representation equivalent}.

We next prove the pattern-isomorphism between 
RA query 
\cref{RA:division}
and TRC query
\cref{TRC:division2} with our formalism.
First, write their dissociated queries
$q'_{\textrm{RA}}(R_1, R_2, S, R_3)$
and
$q'_{\textrm{TRC}}(R_1, S, R_2, R_3)$
with
\begin{align*}
	q'_{\textrm{RA}} = 
	\pi_A R_1 - \pi_A\big( (\pi_A R_2 \times S) - R_3 \big)
\end{align*}
\begin{align*}
	\{ q'_{\textrm{TRC}}(A) \mid \exists r \in R_1 &[q.A \equal r.A \wedge 
		\neg (\exists s \in S, \exists r_3 \in R_2 [r_3.A = r.A  		\\							
		&\hspace{0mm} \wedge \neg (\exists r_2 \in R_3[r_2.B \equal s.B \wedge r_2.A \equal r_3.A])])]  \}		
\end{align*}
Second, define the
homomorphism $h(R_1, R_2, S, R_3) = (R_1, S, R_2, R_3)$, which is injective and surjective and thus an isomorphism between the signature of the dissociated queries.
We can now easily verify that the dissociated queries are logically equivalent 
after composition with $h$:
$q'_{\textrm{RA}} \equiv q'_{\textrm{TRC}} \circ h$.
In other words:
\begin{align*}
q'_{\textrm{RA}}(R_1, R_2, S, R_3) \equiv
q'_{\textrm{TRC}}(h(R_1, R_2, S, R_3)) =
q'_{\textrm{TRC}}(R_1, S, R_2, R_3)
\end{align*}

\Cref{Fig_equivalence} illustrates the pattern isomorphism within two sets of queries with the color-highlighted $R$ atoms. 
Notice in 
\cref{Fig_equivalence2} the correspondences between the blue and orange highlighted tables $R$ across the SQL and the TRC statements and the \diagram.
Notice that the constraints between their ``A'' attributes (``R2.A=R.A'') make use of the nesting hierarchy: two levels of the ``not exists'' nesting hierarchy in SQL and two levels in the negation hierarchy in TRC.
Similarly, in \cref{Fig_equivalence1}
the two SQL variants use different syntactic constructs to represent the single negation hierarchy between R3 and R2. 
Then, see how Datalog represents this constraint without referring to the explicit attributes ``A'' but by positional reference and using the repeated variable $x$ together with ``not'' to represent the same logical constraint.
RA represents the same logical constraint by projecting attribute ``A'' from the green instance R on the left side of a cross product, before the set difference with the yellow instance $R$.
Despite the extreme syntactic variants and logical equivalence of all these queries, 
by defining individual pairwise isomorphisms between extensional tables, our
formalisms allow us to partition these queries into two sets within which the queries are pattern-isomorphic. 
\end{example}

\begin{example}[\cref{ex:allsailors} continued]
For completeness, we also translate here the statement (or Boolean query)
\emph{``All sailors reserve a red boat''}
into $\DatalogND$ and $\NDRA$. 
Boolean queries are usually not shown in $\RA$ but are straightforward:
The queries project away all attributes as last operation:
\begin{align*}
	\pi_{\emptyset} \big( \pi_{\mathit{sid}} \mathit{Sailor} 
	- \pi_{\mathit{sid}} (\mathit{Reserves} \Join (\sigma_{color='red'} \mathit{Boat}) ) \big)
\end{align*}
The translation into $\DatalogN$ uses the same pattern:
\begin{align*}
\begin{aligned}
	I(s) 	& \datarule \mathit{Reserves}(s, b, \_), \mathit{Boat}(b, \_, 'red'). 		\\
	Q	 	& \datarule \mathit{Sailor}(s,\_, \_, \_), \neg I(s). 		
\end{aligned}
\end{align*}	
Notice that all 4 languages use the same relational pattern for this query.
\end{example}

\section{More discussions for \texorpdfstring{\cref{sec:structureisomorphism}}{Section \ref{sec:structureisomorphism}}}

Our formalism is similar in spirit to \emph{edge-preserving graph homomorphisms}
that map two nodes in graph $G_1$ linked by an edge to two nodes in graph $G_2$ that are also linked by an edge.
In our pattern-preserving isomorphisms between queries,
the role of nodes is played by the table signatures in the queries
and the queries themselves play the role of the edges.
Notice also the difference in homomorphisms between conjunctive queries for determining query containment~\cite{Chandra:1977:OIC:800105.803397}: 
in that formalism, the role of nodes is played by variables (and constants)
and the relational atoms play the role of edges.
Also, notice from
\cref{ex:querystructureiso1}
that a simpler mapping between the set of tables used (instead of the repeated table references)
in two queries alone would not work.

Our definition---by design---does not distinguish query patterns based on
operator execution orders.
Also by design, our definition does not include any notion of views or intermediate tables. 
This is achieved by excluding Intensional Database Predicates (as in $\Datalog$) from the definition of table references.
We again illustrate the intuition for that design with examples.

\begin{example}[Join orders and views do not affect patterns]
\label{ex:querystructureiso2}
Assume that the edges of a directed graph are stored in a binary relation $E(A,B)$.
Consider a query  
returning nodes that form the starting point of a length-3 directed path.
In unnamed RA where indices replace attribute names \cite{DBLP:books/aw/AbiteboulHV95},
we can write the query in two different ways, one applying projections early
and the other late:
\begin{align*}
	q_1(E) &= \pi_{1}\sigma_{2=3 \wedge 4=5}(\h{E} \times \h{E} \times  \h{E}) \\
	q_2(E) &= \pi_{1}\sigma_{2=3}(\h{E} \times \pi_{1}\sigma_{2=3}(\h{E} \times  \pi_{1}\h{E})) 	
\end{align*}
We can also write these two queries in the more familiar named perspective of RA. 
These queries encode the same algebraic operations but are more verbose, since 
the named perspective of RA requires a rename operator $\rho$ to express the identical sequence of operations:
\begin{align*}
	q_1(E) &= \pi_{E.A}\sigma_{E.B=F.A \wedge F.B=G.A}(\h{E} \times \rho_{E \rightarrow F} 
				\h{E} \times  \rho_{E \rightarrow G} \h{E}) \\
	q_2(E) &= \pi_{E.A}\sigma_{E.B=F.A}(\h{E} \times \pi_{F.A}\sigma_{F.B=G.A}(
				\rho_{E \rightarrow F} \h{E} \times \rho_{\h{E} \rightarrow G} \pi_{A} \h{E})) 	
\end{align*}

All four $\RA$ query expressions use the same relational pattern according to our definition.
We believe it is essential to separate the notion of a relational
query pattern from concerns regarding operator execution order, because
the latter is not meaningful for queries written in
declarative relational languages.
To see that, 
consider 
the logically-equivalent query in $\Datalog$:
\begin{align*}
	Q_3(x) 	&\datarule \h{E}(x,y), \h{E}(y,z), \h{E}(z,w).
\end{align*}
It declaratively specifies what attributes the three tables need to be joined on,
but it does not specify any order or joins nor when projections happen.
Furthermore, notice that RA query $q_1$ does not even specify a concrete join order between the three table references
and the query can be seen as a shorthand for either of two sequences, more precisely written as $(E \times E) \times E$ or $E \times (E \times E)$
depending on the preferred definition of the shorthand.

For a similar reason, temporary tables such as Intensional Database Predicates in $\Datalog$ do not count as table references.
Thus, the following $\Datalog$ query uses
the same logical pattern (find three edges that join and keep the starting node),
even though it defines the intermediate intensional database predicate $I$:
\begin{align*}
	I(y)  	&\datarule \h{E}(y,z), \h{E}(z,w). \\
	Q_4(x) 	&\datarule \h{E}(x,y), I(y).
\end{align*}
\end{example}

Notice that it follows immediately that two pattern-isomorphic queries 
need to have the same number of table references.
We believe that such a pattern-preserving mapping between queries is important if we want to help readers understand the \emph{exact logic}
(the exact \emph{logical pattern}) behind a relational query, irrespective of the language it is written in.
In particular, if we want to help users understand the relational pattern of an existing relational query with diagrams (recall that we focus on set semantics and binary logic),
we must be able to create this 1-to-1 correspondence between the query and its diagrammatic representation.

\section{Drill-down for 
\texorpdfstring{\cref{SEC:STRUCTUREISOMORPHISM}}{Section \ref{SEC:STRUCTUREISOMORPHISM}}:
Limits of Datalog for representing patterns
}

We have shown earlier that $\DatalogND$ cannot represent all Query patterns from $\NDTRC$.
We next use another example to illustrate that
this limit of $\Datalog$ does not only appear with deeply nested queries;
it already appears for simply nested queries with inequality conditions
and is an immediate consequence of $\Datalog$'s safety conditions for built-in predicates.

\begin{figure}[tbp]
\small
\raggedright
\begin{subfigure}[b]{.18\linewidth}
\begin{lstlisting}
SELECT DISTINCT R.A
FROM R 
WHERE exists
 (SELECT *
 FROM S
 WHERE S.B = R.B)
\end{lstlisting}
\vspace{-7mm}
\caption{}
\label{Fig_DatalogLimits_a}
\end{subfigure}
\hspace{7mm}	
\begin{subfigure}[b]{.18\linewidth}
\begin{lstlisting}
SELECT DISTINCT R.A
FROM R 
WHERE not exists
 (SELECT *
 FROM S
 WHERE S.B = R.B)
\end{lstlisting}
\vspace{-7mm}
\caption{}
\label{Fig_DatalogLimits_f}
\end{subfigure}
\hspace{7mm}	
\begin{subfigure}[b]{.18\linewidth}
\begin{lstlisting}
SELECT DISTINCT R.A
FROM R 
WHERE not exists
 (SELECT *
 FROM S
 WHERE S.B < R.B)
\end{lstlisting}
\vspace{-7mm}
\caption{}
\label{Fig_DatalogLimits_k}
\end{subfigure}
\hspace{7mm}
\begin{subfigure}[b]{.18\linewidth}
\begin{lstlisting}
SELECT DISTINCT R.A
FROM R 
WHERE not exists
 (SELECT *
 FROM S, R as R2
 WHERE S.A < R2.A
 AND R2.A = R.A)
\end{lstlisting}
\vspace{-7mm}
\caption{}
\label{Fig_DatalogLimits_p}
\end{subfigure}
\begin{subfigure}[b]{.18\linewidth}		
\begin{align*}
	&	\{ q(A) \!\mid\! \exists r \in R, \exists s \in S [\\[-1mm]
	&	q.A \!=\! r.A \wedge r.B \!=\! s.B]\}	
\end{align*}
\vspace{-4mm}
\caption{}
\label{Fig_DatalogLimits_b}
\end{subfigure}
\hspace{4.3mm}
\begin{subfigure}[b]{.18\linewidth}		
\begin{align*}
	&	\{ q(A) \!\mid\! \exists r \in R[q.A \!=\! r.A  \\[-1mm]
	&	\wedge \neg (\exists s \in S [s.B \!=\! r.B])]\}	
\end{align*}
\vspace{-4mm}
\caption{}
\label{Fig_DatalogLimits_g}
\end{subfigure}
\hspace{1mm}
\begin{subfigure}[b]{.18\linewidth}		
\begin{align*}
	&	\{ q(A) \!\mid\! \exists r \in R[q.A \!=\! r.A  \\[-1mm]
	&	\wedge \neg (\exists s \in S [s.A \!<\! r.A])]\}	
\end{align*}
\vspace{-4mm}
\caption{}
\label{Fig_DatalogLimits_l}
\end{subfigure}
\hspace{0.3mm}
\begin{subfigure}[b]{.18\linewidth}		
\begin{align*}
	& \{ q(A) \!\mid\! \exists r \in R[q.A \!=\! r.A  \\[-1mm]
	& \wedge \neg (\exists s \in S, r_2 \in R [\\[-1mm]
	& s.A \!<\! r_2.A \wedge r_2.A \!=\! r.A])]\}	
\end{align*}
\vspace{-4mm}
\caption{}
\label{Fig_DatalogLimits_q}
\end{subfigure}
\begin{subfigure}[b]{.18\linewidth}
\begin{align*}
	R \Join S
\end{align*}
\vspace{-4mm}
\caption{}
\label{Fig_DatalogLimits_c}
\end{subfigure}
\hspace{7mm}
\begin{subfigure}[b]{.18\linewidth}
\begin{align*}
	R - S
\end{align*}
\vspace{-4mm}
\caption{}
\label{Fig_DatalogLimits_h}
\end{subfigure}
\hspace{7mm}
\begin{subfigure}[b]{.18\linewidth}
\vspace{2mm}
\centering
\h{\Huge\frownie{}}
\vspace{1mm}
\caption{}
\label{Fig_DatalogLimits_m}
\end{subfigure}
\hspace{7mm}
\begin{subfigure}[b]{.18\linewidth}
\begin{align*}
	&R - \big( \pi_{R.A} \sigma_{S.A<R.A} (R \times S) \big)
\end{align*}
\vspace{-4mm}
\caption{}
\label{Fig_DatalogLimits_r}
\end{subfigure}
\begin{subfigure}[b]{.18\linewidth}
\begin{align*}
	Q(x) 	& \datarule R(x), S(x).
\end{align*}
\vspace{-4mm}
\caption{}
\label{Fig_DatalogLimits_d}
\end{subfigure}
\hspace{7mm}
\begin{subfigure}[b]{.18\linewidth}
\begin{align*}
	Q(x) 	& \datarule R(x), \neg S(x).
\end{align*}
\vspace{-4mm}
\caption{}
\label{Fig_DatalogLimits_i}
\end{subfigure}
\hspace{7mm}
\begin{subfigure}[b]{.18\linewidth}
\centering
\h{\Huge\frownie{}}
\vspace{1mm}
\caption{}
\label{Fig_DatalogLimits_n}
\end{subfigure}
\hspace{7mm}
\begin{subfigure}[b]{.18\linewidth}
\begin{align*}
	I(x)	& \datarule R(x),  S(y), x\!>\!y. 		\\[-1mm]
	Q(x) 	& \datarule R(x), \neg I(x).
\end{align*}
\vspace{-4mm}
\caption{}
\label{Fig_DatalogLimits_s}
\end{subfigure}
\begin{subfigure}[b]{.18\linewidth}
    \includegraphics[scale=0.4]{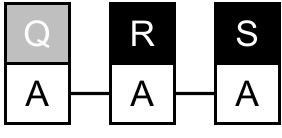}
	\vspace{-1mm}
    \caption{}
    \label{Fig_DatalogLimits_1}
    \label{Fig_DatalogLimits_e}
\end{subfigure}	
\hspace{7mm}
\begin{subfigure}[b]{.18\linewidth}
\includegraphics[scale=0.4]{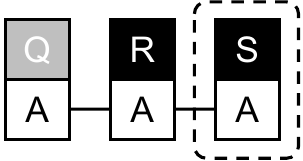}
\vspace{-1mm}
\caption{}
\label{Fig_DatalogLimits_2}
\label{Fig_DatalogLimits_j}
\end{subfigure}
\hspace{7mm}
\begin{subfigure}[b]{.18\linewidth}
\includegraphics[scale=0.4]{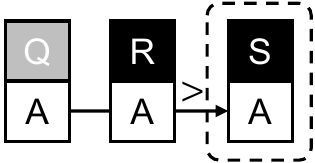}
\vspace{-1mm}
\caption{}
\label{Fig_DatalogLimits_o}
\label{Fig_DatalogLimits_3}
\end{subfigure}
\hspace{7mm}
\begin{subfigure}[b]{.18\linewidth}
\vspace{2mm}	
\includegraphics[scale=0.4]{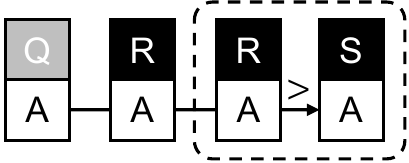}
\vspace{-5mm}
\caption{}
\label{Fig_DatalogLimits_t}
\label{Fig_DatalogLimits_4}
\end{subfigure}	
\caption{\Cref{ex:DatalogLimits}: 
The 5 rows show
$\NDSQL$, $\NDTRC$, $\NDRA$, $\DatalogND$, and \diagrams, respectively.
The first two column show $Q_1$ and $Q_2$, respectively. The last two columns show $Q_3$.
Notice that the $\NDSQL$, $\NDTRC$, and \diagram\
queries from the 3rd column have no pattern-isomorphic query
in $\NDRA$ or $\DatalogND$.
Instead, $\NDRA$ and $\DatalogND$ require an additional cross-join with $R.A$,
which is shown in the 4th column.
}
\label{fig:DatalogLimits}
\end{figure}

\begin{example}[Limits of Datalog]
Consider two unary tables $R(A)$ and $S(A)$ and three questions:
\label{ex:DatalogLimits}
\begin{align*}	
	&Q_1: \textrm{Find values from R that also appear in S.} \\
	&Q_2: \textrm{Find values from R that do not appear in S} \\
	&Q_3: \textrm{Find values from R for which no smaller value appears in S.} 
\end{align*}
These queries are shown in $\NDSQL$, $\NDTRC$, $\NDRA$, $\DatalogND$, and as \diagram\ in \cref{fig:DatalogLimits}, with 
$Q_1$ in the 1st column (\subref{Fig_DatalogLimits_a}, \subref{Fig_DatalogLimits_b}, \subref{Fig_DatalogLimits_c}, \subref{Fig_DatalogLimits_d}, \subref{Fig_DatalogLimits_e}), 
$Q_2$ in the 2nd (\subref{Fig_DatalogLimits_f}, \subref{Fig_DatalogLimits_g}, \subref{Fig_DatalogLimits_h}, \subref{Fig_DatalogLimits_i}, \subref{Fig_DatalogLimits_j}), and
$Q_3$ in the 3rd (\subref{Fig_DatalogLimits_k}, \subref{Fig_DatalogLimits_l}, \subref{Fig_DatalogLimits_m}, \subref{Fig_DatalogLimits_n}, \subref{Fig_DatalogLimits_o}).

Notice in the third column ($Q_3$) that the $\NDSQL$ (\subref{Fig_DatalogLimits_k}), $\NDTRC$ (\subref{Fig_DatalogLimits_l}), and \diagram\ (\subref{Fig_DatalogLimits_o})
queries have no pattern-isomorphic query
in $\NDRA$ (\subref{Fig_DatalogLimits_m}) or $\DatalogND$ (\subref{Fig_DatalogLimits_n}).
The safety condition of $\Datalog$ requires each variable to appear in a non-negated atom. 
This criterion requires a cross-join with the domain of $R.A$ in a separate rule before the negation can be applied on an equality predicate.
For the same reason, $\NDRA$ cannot apply the set difference directly and also requires an additional cross-join with $R.A$.
Even extending the available operators with an antijoin does not change this
since antijoins are only defined for equality conditions~\cite{DBLP:books/mg/SKS20}.

The 4th column (\subref{Fig_DatalogLimits_p}, \subref{Fig_DatalogLimits_q}, \subref{Fig_DatalogLimits_r}, \subref{Fig_DatalogLimits_s}, \subref{Fig_DatalogLimits_t}) shows the resulting resulting $\NDRA$ and $\DatalogND$ 
queries together with their pattern-isomorphic queries in 
 $\NDSQL$, $\NDTRC$, and as \diagrams.
\end{example}

\section{
\texorpdfstring{\cref{SEC:COMPLETENESS}}{Section \ref{SEC:COMPLETENESS}}:
Why disjunctions are harder to represent}
\label{subsec:disjunchard}

\begin{figure}[t]
\begin{subfigure}[b]{1\linewidth}
	\centering
	{\includegraphics[scale=0.25]{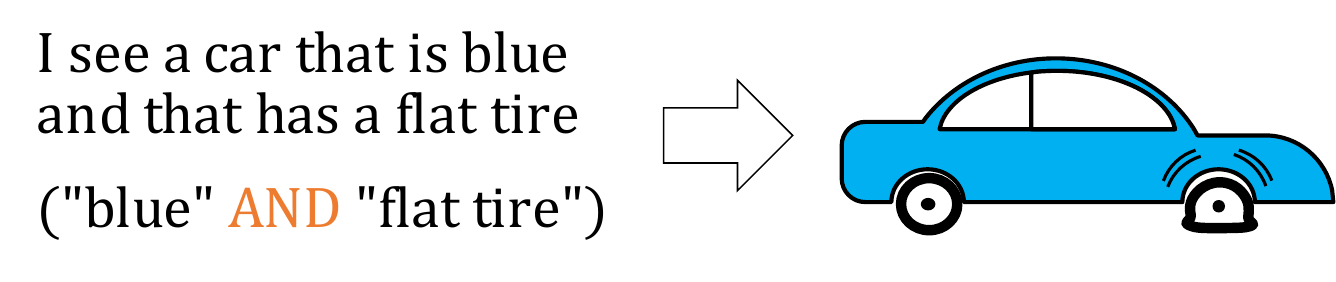}\hspace{22mm}\phantom{x}}
    \caption{}
	\label{Fig_blue_car_a}	
\end{subfigure}
\begin{subfigure}[b]{1\linewidth}
	\centering	
    \includegraphics[scale=0.25]{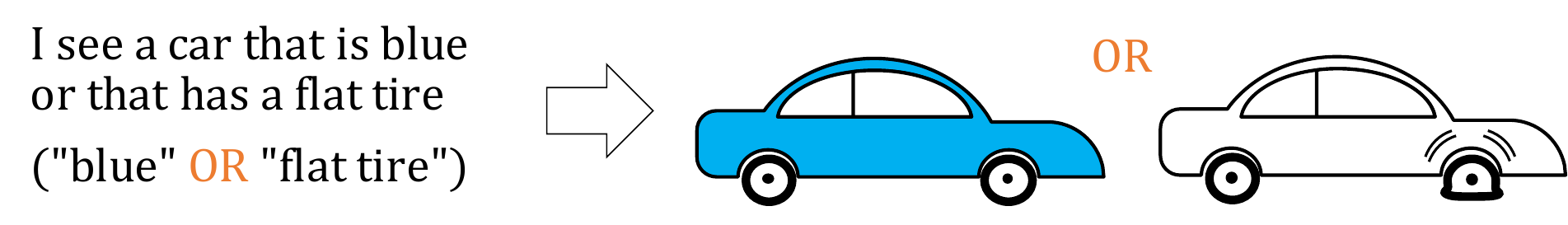}
    \caption{}
	\label{Fig_blue_car_b}	
\end{subfigure}
\caption{%
    A single situation (concrete arrangement of items) can only show conjunctive information~\cite{Shin:2002}, i.e., the car is blue.
    To show that the car is blue OR it has a flat tire, you need two situations and a visual construct to show disjunction (here, just the text `\textcolor{carfigORcolor}{OR}').
}
\label{Fig_blue_car}  
\end{figure}

It is useful to understand why disjunctions are more difficult to represent in a diagrammatic representation.
As a motivating example, assume Alice calls Bob and tells him ``I see a blue car that has a flat tire.''
What is the mental image that Bob has from this information? 
It is a car with two conditions:
it is blue, and it has a flat tire as in \cref{Fig_blue_car_a}.
Next, assume that Alice instead tells Bob
``I see a car that is either blue or that has a flat tire.''
What is the mental image that Bob has from that information? 

There is no single mental image that could capture that situation. 
Bob needs to imagine two \emph{different} images.
If Bob sees one image with two different cars (one blue, the other with a flat tire), 
then he actually sees \emph{two separate cars}.
Bob needs to add some additional visual symbol representing the logic that those are two \emph{different overlapping possible worlds}.
In the words of Shin~\cite{Shin:2002}, ``any situation'' (think of a concrete arrangement of items) ``can only display conjunctive information.''
This diagrammatic representation problem is not as apparent in text:
``\sql{Car.color = `blue' OR Car.tire = `flat'}''.\footnote{To provide some additional intuition,
recall that \emph{conjunctions} of selections can be simply modeled as a concatenation of simple selections,
e.g.\ $\sigma_{C_1 \wedge C_2}(R)$ is the same as $\sigma_{C_1}(\sigma_{C_2}(R))$. Thus conjunctions are an inherently more natural logical connective 
than disjunctions; disjunctions cannot be represented without additional visual symbols.}
Shin goes further and claims that ``Any diagrammatic system that seeks to represent disjunctive information needs to bring in an artificial syntactic device with its own convention'' \cite{shin_1995}.

\section{Proofs For \texorpdfstring{\cref{sec:completeness}}{Section \ref{sec:completeness}}}
\label{app:proofrelationalcompleteness}

\begin{proof}[Proof \cref{th:completeness}]
Given a safe $\TRC$ expression,
we pull any existential quantifier as early as possible: either to be at the start of the query or directly following a negation operator. 

First, consider a nested query with disjunctions in the $\sql{WHERE}$ conditions,
possibly nested with conjunctions.
Rewrite the conditions as DNF, i.e.\ as
\begin{align*}
	&\neg (\exists \vec r \in \vec R[c_1 \vee c_2  \cdots \vee c_k]) \\
\intertext{Here,
$\vec r$ is a set of table variables, $\vec R$ a set of tables, 
and each $c_i$ is a conjunction of guarded predicates.
Next, rewrite it as:}
	&\neg (\exists \vec r_1 \in \vec R[c_1]) \wedge 
		\neg (\exists \vec r_2 \in \vec R[c_2]) \cdots \wedge
		\neg (\exists \vec r_k \in \vec R[c_k])	
\end{align*}
This fragment is in $\NDTRC$ and can be visualized by $\diagramsND$.

Second, for remaining disjunctions in the top query $q_0$, rewrite the query as a union over queries without disjunction:
\begin{align*}
\phantom{=} & \{q(\vec A) \mid 
		\exists \vec r_1 \in \vec R_1[c_1] \vee 
		\exists \vec r_2 \in \vec R_2[c_2] \cdots \vee
		\exists \vec r_k \in \vec R_k[c_k] \}	\\
= &\{ q(\vec A) \mid 
	\exists \vec r \in \vec R[c_1] \}  \cup
  \{ q(\vec A) \mid 
	\exists \vec r \in \vec R[c_1] \} \cdots \cup
  \{ q(\vec A) \mid 
	\exists \vec r \in \vec R[c_k] \}			
\end{align*}
Since each individual query is in $\NDTRC$, 
the whole query can be represented by \diagrams\
by representing each individual $\NDTRC$ in a union cell. 
\end{proof}

\section{More examples for 
\texorpdfstring{\cref{sec:completeness}}{Section \ref{sec:completeness}}}

\begin{figure}[t]
\centering	
\begin{subfigure}[b]{.24\linewidth}		
\begin{lstlisting}
SELECT DISTINCT S.sname
FROM Sailor S, Reserves R
WHERE S.sid = R.sid
AND not 
 (not exists
  (SELECT *
  FROM Boat B
  WHERE color='red'
  AND R.bid = B.bid)
 AND not exists
  (SELECT *
  FROM Boat B
  WHERE color='blue'
  AND R.bid = B.bid)  
\end{lstlisting}
\vspace{-6mm}
\caption{}
\label{fig:Sailor_disjunction_a}
\end{subfigure}	
\hspace{1mm}
\begin{subfigure}[b]{.45\linewidth}
	\centering
    \includegraphics[scale=0.4]{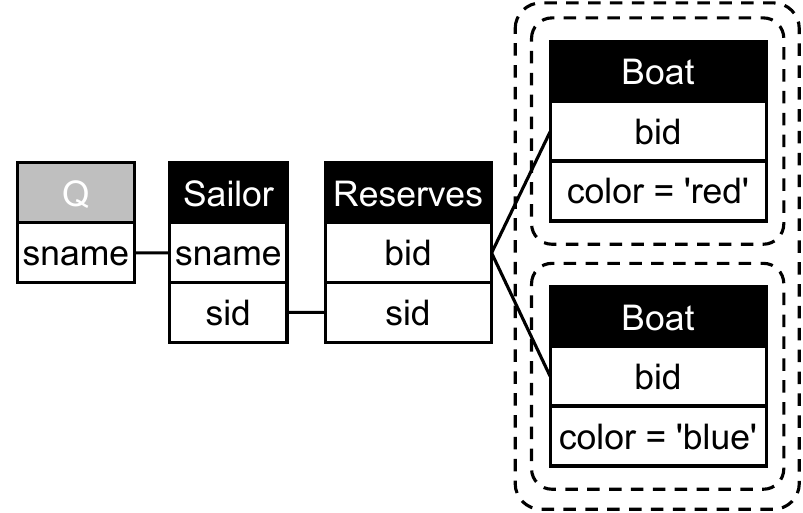}
    \caption{}
    \label{fig:Sailor_disjunction_b}
\end{subfigure}	
\caption{\Cref{ex:red_or_blue}:
The Query 
``Find sailors who reserve a red or blue boat'' 
can be represented with double negation 
in the non-disjunctive fragments of $\NDSQL$ (a) and \diagramNDnomath\ (b).}
\label{fig:Sailor_disjunction}
\end{figure}

\begin{example}[Red or blue]
Consider the following $\NDTRC$ query asking for sailors 
who have reserved a red or a blue boat:\label{ex:red_or_blue}		%
\begin{align*}
	\{ 
	& q(\sql{sname}) \mid \exists s \in \sql{Sailor}, r \in \sql{Reserves} [q.\sql{sname} = s.\sql{sname} \,\wedge  	\notag\\
	& s.\sql{sid} = r.\sql{sid} \wedge \h{\exists b \in \sql{Boat}[}										\\	%
	& \hspace{2mm}	b.\sql{bid} = r.\sql{bid} \wedge 
		(b.\sql{color} = \sql{`red'} \;\h{\vee}\; b2.\sql{color} = \sql{`blue'} ) \h{]} ] \}						\notag
\end{align*}
Using De Morgan's Law $(A \vee B) = \neg(\neg A \wedge \neg B)$ applied to quantifiers, 
we can transform the disjunction into double-negation with conjunction.
This transformation comes at the cost of repeated uses of extensional tables and is thus \emph{not pattern-preserving}:
\begin{align}
	\{ 
	& q(\sql{sname}) \mid \exists s \in \sql{Sailor}, r \in \sql{Reserves} [q.\sql{sname} = s.\sql{sname} \,\wedge  				\notag\\
	& s.\sql{sid} = r.\sql{sid} \wedge \h{\neg \big(}																\label{eq:red_or_blue_NDTRC}\\
	& \hspace{2mm}	\h{\neg(\exists b1 \in \sql{Boat}[}b1.\sql{bid} = r.\sql{bid} \wedge b1.\sql{color} = \sql{`red'} \h{])} \,\h{\wedge} 		\notag\\
	& \hspace{2mm}	\h{\neg(\exists b2 \in \sql{Boat}[}b2.\sql{bid} = r.\sql{bid} \wedge b2.\sql{color} = \sql{`blue'} \h{]) \big)} ] \}	 	\notag
\end{align}
\Cref{fig:Sailor_disjunction_a} shows 
\cref{eq:red_or_blue_NDTRC}
translated into canonical
$\NDSQL$
and 
\cref{fig:Sailor_disjunction_b}
its translation into a \diagram.
Notice how the non-disjunctive fragment repeats the boats table twice.
\end{example}

\begin{figure}[t]
\centering
\begin{subfigure}[b]{0.22\linewidth}		
\begin{lstlisting}
(SELECT DISTINCT R.A
FROM R, S, T
WHERE R.B > 5
AND R.A = S.A)
UNION
(SELECT DISTINCT R.A
FROM R, S, T
WHERE R.B > 5
AND R.A = T.A)
\end{lstlisting}
\vspace{-7mm}
    \caption{}
    \label{fig:disjunctionSQLor2}
\end{subfigure}	
\hspace{8mm}
\begin{subfigure}[b]{0.3\linewidth}
    \includegraphics[scale=0.39]{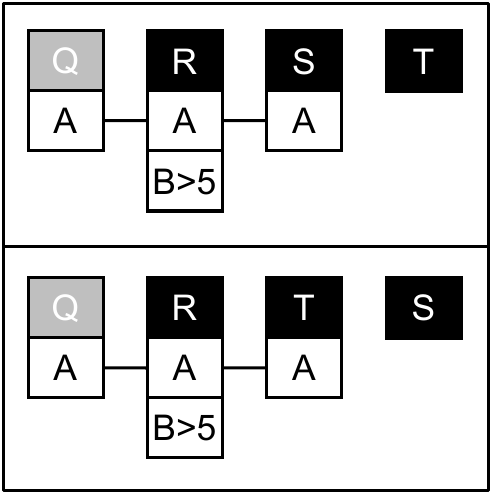}
	\vspace{-1mm}
    \caption{}
    \label{Fig_disjunction_or_QV}
\end{subfigure}	
\hspace{40mm}
\begin{subfigure}[b]{0.22\linewidth}		
\vspace{1mm}
\begin{lstlisting}
SELECT DISTINCT R.A
FROM R
WHERE R.B > 5
AND not (
  not exists
    (SELECT *
    FROM S, T
    WHERE S.A = R.A)
  AND not exists
    (SELECT *
    FROM S, T
    WHERE T.A = R.A))
\end{lstlisting}
\vspace{-7mm}
    \caption{}
    \label{fig:disjunctionSQLor2_pushed}
\end{subfigure}	
\hspace{8mm}
\begin{subfigure}[b]{0.3\linewidth}
    \includegraphics[scale=0.39]{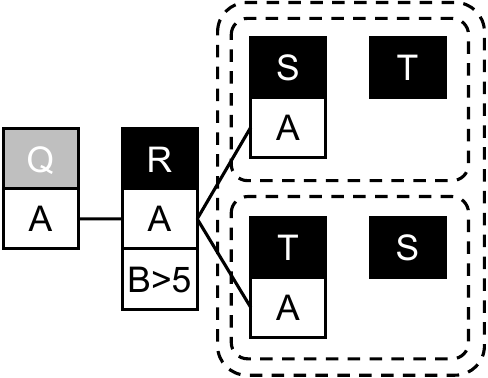}
	\vspace{-1mm}
    \caption{}
    \label{Fig_disjunction_or_QV_pushed}
\end{subfigure}	
\hspace{40mm}
\begin{subfigure}[b]{0.22\linewidth}		
\vspace{1mm}
\begin{lstlisting}
SELECT DISTINCT R.A
FROM R
WHERE R.B > 5
AND not exists 
  (SELECT *
  FROM R R2
  WHERE R2.A = R.A
  AND not exists
    (SELECT *
    FROM S, T
    WHERE S.A = R2.A)
  AND not exists
    (SELECT *
    FROM S, T
    WHERE T.A = R2.A))
\end{lstlisting}
\vspace{-7mm}
    \caption{}
    \label{fig:disjunctionSQLor2_pushed_forRA}
\end{subfigure}	
\hspace{8mm}
\begin{subfigure}[b]{0.3\linewidth}
    \includegraphics[scale=0.39]{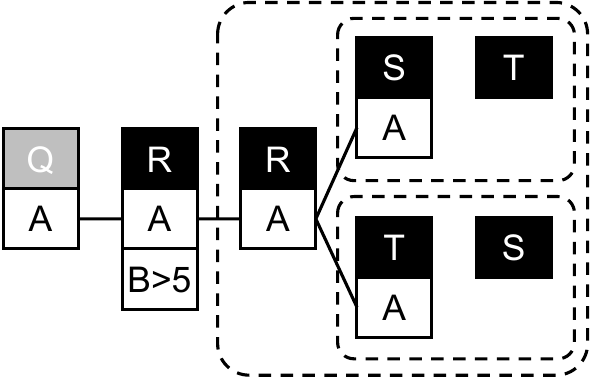}
	\vspace{-1mm}
    \caption{}
    \label{Fig_disjunction_or_QV_pushed_forRA}
\end{subfigure}	
\caption{Illustrations for \Cref{ex:disjunction_threeex} on replacing disjunctions:
(a, b) using union cells to replace disjunctions;
(c, d) using DeMorgan to replace disjunctions by double-negation and conjunction;
(e, f) further changing the table signatures to allow pattern-isomorphic representations
in $\DatalogND$ and $\NDRAA$.}
\label{fig:disjunction_threeex}
\end{figure}

\begin{example}[Union of queries]
\label{ex:disjunction_threeex}
Consider three tables $R(A,B)$, $S(A)$, and $T(A)$ and the running query from 
\cref{sec:nondisjunctivefragment}.
We can replace disjunction by pulling it to the root and replacing the query with a union of non-disjunctive 
$\NDTRC$ queries:
\begin{align*}
\phantom{=} &\h{\{} q(A) \mid
	\exists r \in R, \exists s \in S, \exists t \in T 
	[q.A \equal r.A \wedge r.B > 5 \;\wedge 
	\\ &\hspace{2mm} 	
	(r.A \equal s.A \;\h{\vee}\; 
	r.A \equal t.A)] \h{\}}	\\
= &\h{\{} q(A) \mid 
	\exists r \in R, \exists s \in S, \exists t \in T 
	[q.A \equal r.A \wedge r.B > 5 \wedge r.A \equal s.A] \h{\}}  \;\h{\cup}\; \\
  &\h{\{} q(A) \mid 
	\exists r \in R, \exists s \in S, \exists t \in T 
	[q.A \equal r.A \wedge r.B > 5 \wedge r.A \equal t.A] \h{\}}		
\end{align*}
\cref{fig:disjunctionSQLor2} shows the representation-equivalent $\SQL$ query.
\cref{Fig_disjunction_or_QV} shows this union of two separate $\diagramsND$ queries,
each in a separate union cell, and each with the same set attributes in the output table.
Notice that we cannot leave away the non-connected tables in the individual queries; if any of the tables are empty, 
then the query needs to return an empty result.

We can also rewrite this query directly as a non-disjunctive $\NDTRC$ query by using DeMorgan and repeating table references:
\begin{align*}
\phantom{=} &\{ q(A) \mid
	\exists r \in R, \h{\exists s \in S, \exists t \in T} 
	[ q.A \equal r.A \wedge r.B > 5 \;\wedge 
	\\ &\hspace{2mm} 
	\h{(} r.A \equal s.A \;\h{\vee}\; 
	r.A \equal t.A \h{)} ] \}	\\
= &\{ q(A) \mid
	\exists r \in R 
	[q.A \equal r.A \wedge r.B > 5 \;\wedge 
	\\ &\hspace{2mm} 
	\h{\exists s \in S, \exists t \in T [}r.A \equal s.A \;\h{\vee}\; 
	r.A \equal t.A \h{]} ] \}	\\	
= &\{ q(A) \mid
	\exists r \in R
	[q.A \equal r.A \wedge r.B > 5 \;\wedge 
	\\ &\hspace{2mm} 
	\h{(\exists s \in S, \exists t \in T [}r.A \equal s.A \h{]} \;\h{\vee}\; 
	\h{\exists s \in S, \exists t \in T [}r.A \equal t.A \h{])} ] \}	\\		
= &\{ q(A) \mid
	\exists r \in R
	[q.A \equal r.A \wedge r.B > 5 \;\wedge 
	\\ &\hspace{2mm} 	
	\h{\neg( \neg (\exists s \in S, \exists t \in T [}r.A \equal s.A \h{])} \;\h{\wedge}\; 
	\h{\neg (\exists s \in S, \exists t \in T [}r.A \equal t.A \h{]))} ] \}			
\end{align*}
\Cref{fig:disjunctionSQLor2} shows the pattern-isomorphic $\NDSQL$ query, and
\cref{Fig_disjunction_or_QV} shows the pattern-isomorphic $\diagramsND$ query.

Based on our results, this query cannot be represented 
with a pattern-isomorphic
$\NDRAA$ nor $\DatalogND$ queries.
But we can represent it by increasing the table signature:
following the steps from \hyperref[th:equivalence:trc-datalog]{part (4)}
in the proof of \cref{th:equivalence}:
\begin{align*}
\phantom{=} &\{ q(A) \mid
	\exists r \in R
	[q.A \equal r.A \wedge r.B > 5 \;\wedge 
	\\ &\hspace{2mm} 	
	\neg(
	\h{\exists r_2 \in R[r_2.A \equal r.A \;\wedge}  
	\\ &\hspace{2mm} 	
	\neg (\exists s \in S, \exists t \in T 
		[\h{r_2}.A \equal s.A ]) \wedge 
	\neg (\exists s \in S, \exists t \in T 
		[\h{r_2}.A \equal t.A  ]) \h{]} ) ] \}			
\end{align*}
This query now can be represented in pattern-isomorphic
$\NDRAA$ and $\DatalogND$ queries as follows:
\begin{align*}
\begin{aligned}
	I1(x) 	& \datarule S(x), T(\_). 		\\
	I2(x) 	& \datarule S(\_), T(x). 		\\	
	I3(x) 	& \datarule R(x,\_), \neg I1(x), \neg I2(x). 		\\
	Q(x) 	& \datarule R(x,y), \neg I3(x), y>5. 		
\end{aligned}
\end{align*}	
\begin{align*}
	\pi_A (\sigma_{B>5}R) - 
	((R \antijoin_{R.A=S.A}
	\pi_{S.A} (S \times T))
	\antijoin_{R.A=T.A}
	\pi_{T.A} (S \times T))
\end{align*}
\Cref{fig:disjunctionSQLor2_pushed_forRA,Fig_disjunction_or_QV_pushed_forRA}
show their pattern-isomorphic
representations in 
$\NDSQL$
and
$\diagramsND$.
\end{example}

\section{Textbook analysis (\texorpdfstring{\cref{SEC:TEXTBOOKANALYSIS}}{Section \ref{SEC:TEXTBOOKANALYSIS}})}
\label{appendix:textbook}

We give here more details on the 59 queries analyzed in 
\cref{sec:textbookanalysis}.
These queries and our analysis is available on OSF.\footnote{Queries: \url{https://osf.io/u7c4z/}}

\introparagraph{Approach}
We identified 5 popular database textbooks~
\cite{cowbook:2002, DBLP:books/mg/SKS20, Elmasri:dq, date2004introduction, ConnollyBegg:2015}
that include chapters on relational calculus. 
Textbooks were considered which the authors of this study had previously used examples from in their own database classes. 
A 6th popular textbook (``the complete book'' from Stanford~\cite{Garcia-MolinaUW2009:DBSystems})
is not included since it does not contain a section on relational calculus.

In each of the 5 books, we identified chapters that illustrate relational calculus queries with examples and extracted all example queries from these chapters. 
In case a query is given multiple times in pattern-isomorphic variants 
(e.g.\ first in Tuple Relational Calculus and then again in Domain Relational Calculus), we list the query only once in its earlier representation.
In addition, we also identified chapters on SQL queries and extracted those queries that have a logically-equivalent representation in relational calculus (thus no aggregates, arithmetic attributes, or outer joins). 

In total we identified 59 queries across those 5 textbooks. For each of the 5 languages 
\diagrams, $\RA$, $\Datalog$, 
$\queryviz$~\cite{DanaparamitaG2011:QueryViz}, and 
$\QBE$~\cite{DBLP:journals/ibmsj/Zloof77},
we analyzed which of those queries can be expressed \emph{in a pattern-isomorphic representation}.\footnote{Recall that since all considered query languages (except \queryviz) are relationally complete, all 59 queries have logically-equivalent representations in each  language. The question here is whether the languages have \emph{pattern-isomorphic} representations, not merely logically-equivalent representations.}
For the more interesting queries (e.g.\ those that have pattern-isomorphic representations in \diagrams\ but not in \RA) we included the \diagrams\ representation in the documentation.

\introparagraph{Detailed analysis}
(1) From the ``cow book'' by Ramakrishnan and Gehrke~\cite{cowbook:2002}, 
we extracted 25 queries from chapters 4.3.1 on \TRC, 4.3.2 on \DRC, and 5.2-5.4 on SQL. 
The number of queries that have pattern-isomorphic representations are:
24 for \diagrams,
22 for $\queryviz$ (it has no union operator and can only visualize a strict subset of the non-disjunctive fragment),
19 in \QBE, and
18 in $\RA$ or \Datalog.
The number for $\RA$ increases to 19 if the antijoin is added as an additional primitive relational operator (\cref{appendix:antijoins}).

(2) From the ``sailboat book" by Silberschatz, Korth, and Sudarshan~\cite{DBLP:books/mg/SKS20},
we extracted 8 queries from chapters 27.1 (on \TRC) and 27.2 (on \DRC).
The number of queries that have pattern-isomorphic representations are:
8 for \diagrams, and
7 for \queryviz, \QBE, $\RA$ and \Datalog.

(3) From the ``stone formation book'' by Elmasri and Navathe~\cite{Elmasri:dq}, we extracted
9 unique queries from chapter 8.6 (on TRC) and found only pattern-isomorphic queries in chapter 8.7 (on DRC).
The number of queries that have pattern-isomorphic representations are:
8 for \diagrams, \queryviz, \RA, \QBE, 
and 7 for \Datalog.

(4) From Date's book~\cite{date2004introduction}, we extracted 9 unique queries from chapter 8.3 (on TRC). 
The number of queries that have pattern-isomorphic representations are:
8 for \diagrams\ and \queryviz,
7 for $\RA$ and \QBE, 
and 6 for \Datalog.

(5) From the ``bookshelf book'' by Connolly and Begg~\cite{ConnollyBegg:2015}, we extracted
8 unique queries from section 5.2 (on TRC). 
All queries have pattern-isomorphic representations in all languages.

\section{Controlled User Experiment (\texorpdfstring{\Cref{SEC:USERSTUDY}}{Section~\ref{SEC:USERSTUDY}})}
\label{appendix:controlled-experiment}

This section gives additional details on the user study (\cref{appendix:controlled-experiment:details}) and provides links to 
the supplemental materials including preregistration, tutorial, actual stimuli, stimuli-creation code, collected data, and analysis code and results (\cref{appendix:controlled-experiment:links}).

\subsection{Additional details on experimental design}
\label{appendix:controlled-experiment:details}

\begin{figure}[t]
    \centering	
	\fbox{
    \includegraphics[scale=0.39]{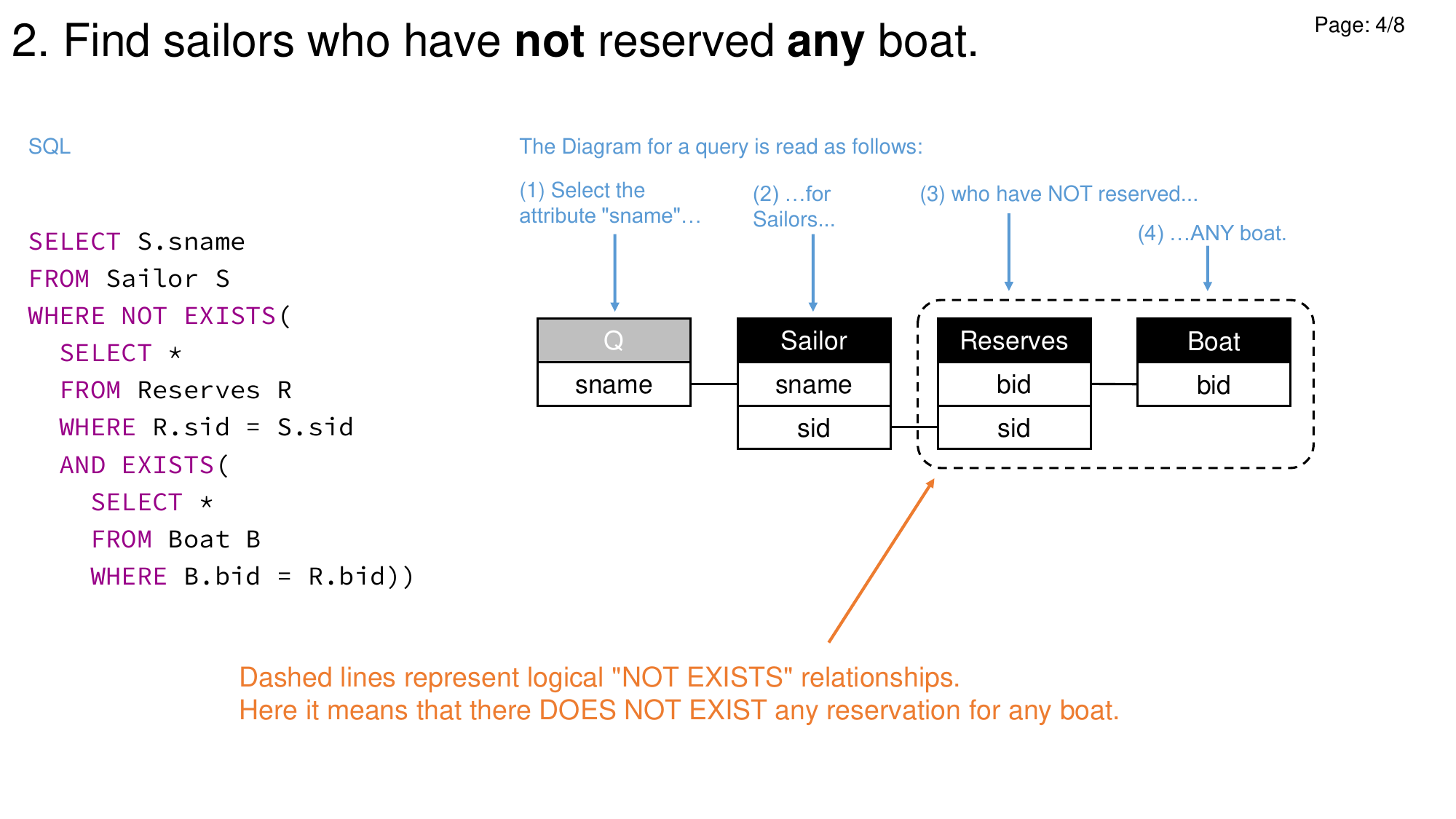}
	}
    \caption{One page of the 8-page tutorial for user study participants.}
    \label{Fig_tutorial_page_4}
\end{figure}

\introparagraph{Procedure}
After a short tutorial (see \cref{Fig_tutorial_page_4}), participants are asked to answer 32 questions, each with a different schema found or adapted from common textbooks.
Each question asks the participant to select which of 4 plain-English patterns is similar to the ones listed in \cref{sec:userstudy} 
correctly matches the shown query (but each question applied to a different schema, such as \emph{Students taking classes}, \emph{Actors playing in movies}, \emph{Suppliers supplying parts}, etc.).
The correct answer is provided after each question so participants \emph{can learn from mistakes}.
We analyze the time a participant spends on the question itself (not learning from the answer) 
as well as whether their chosen answer is correct.

\introparagraph{Randomization and Counterbalancing}
To reduce ordering effects, we start half the participants using SQL (group 1)
and the other using \diagrams\ (group 2), 
after which participants alternate conditions with each question 
(see \cref{SEC:USERSTUDY} and \Cref{Fig_study_design}).

The total number of possible sequences for each condition and for each half
is the multiset permutation $\binom{8}{2, 2, 2, 2} = 2520$.\footnote{\url{https://en.wikipedia.org/wiki/Permutation\#Permutations\_of\_multisets}: Each condition is shown 8 times per half. Each of the 4 patterns is shown 2 times.}
The total number of treatments is thus $2520^4$ (each of 2 conditions and each of 2 halves chosen independently from the $2520$ possible sequences, irrespective of which conditions are used).
We \emph{sample a new treatment (group and sequence) for each participant}.
I.e.\ recording the correct answers to one's treatment, and then sharing that information with another worker will not be very helpful to that worker.
The chance that, among $n$ participants, any two share the same questions (irrespective of condition) is approximately 
$1-e^{\frac{n^2/2}{2520^4}}$, which is around $1/5000$
for $n=50$.\footnote{\url{https://en.wikipedia.org/wiki/Birthday\_problem\#Approximations}}
In our study, none of the 171 workers who viewed the consent form were randomly assigned to the same treatment.

\begin{figure}[t]
    \centering
    \begin{subfigure}[t]{0.65\linewidth}
        \fbox{\includegraphics[width=\textwidth]{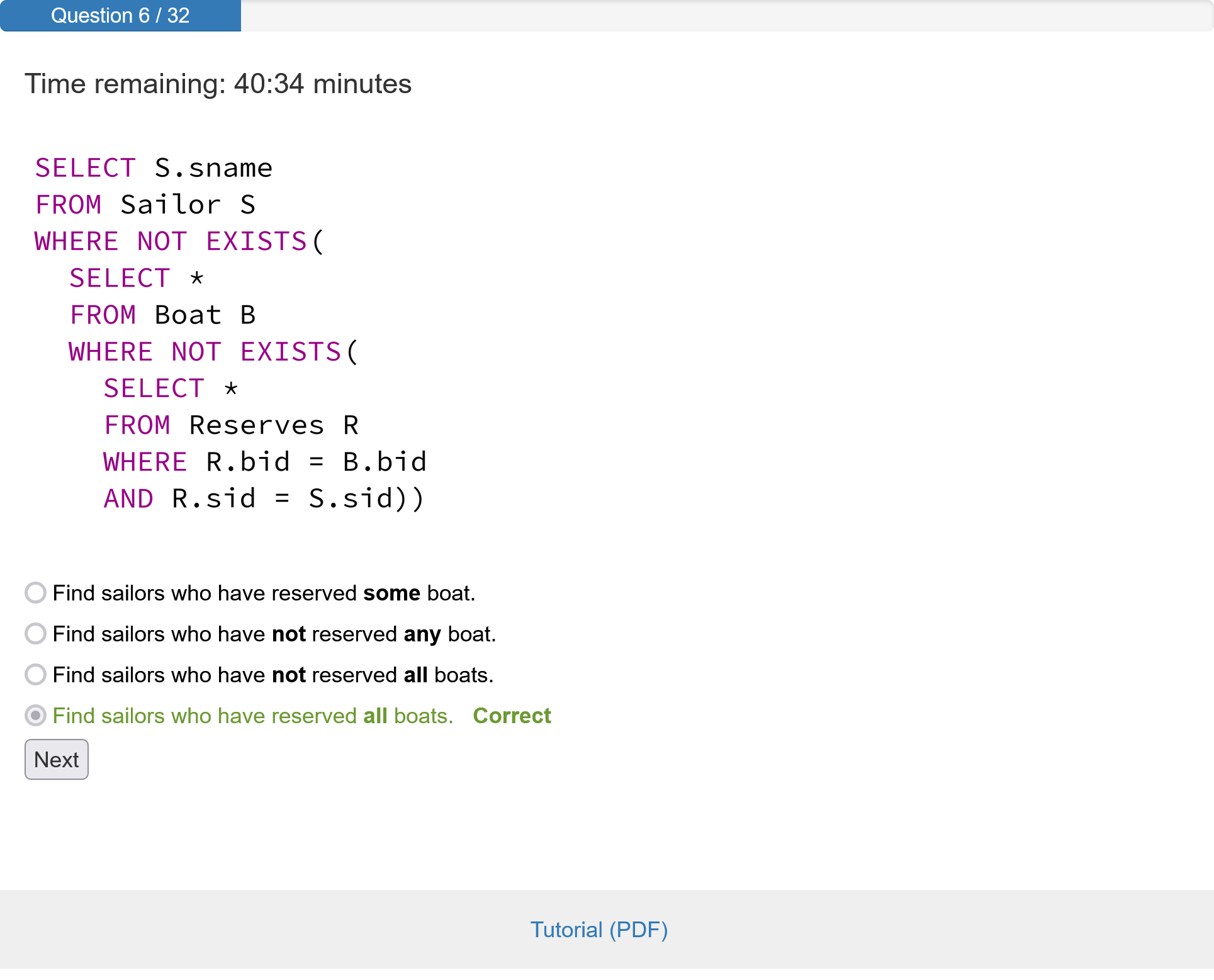}}
        \caption{$\SQL$}
        \label{fig:sailors_p4_sql_ui}
    \end{subfigure}
    \begin{subfigure}[t]{0.65\linewidth}
		\vspace{2mm}
        \centering
        \fbox{\includegraphics[width=\textwidth]{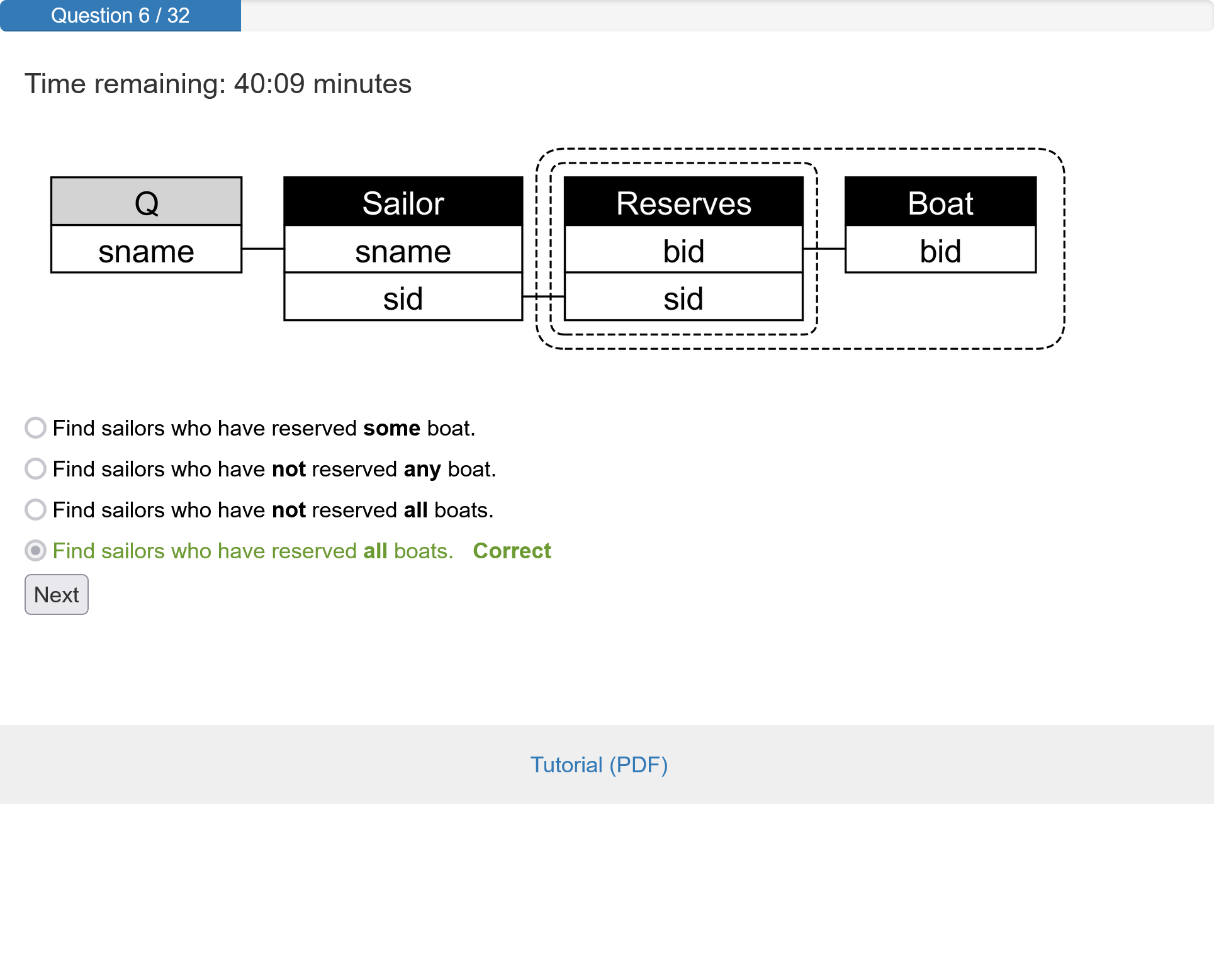}}
        \caption{\diagrams}
        \label{fig:sailors_p4_rd_ui}
    \end{subfigure}
    \caption{The user interface a participant would see for answering questions.
    The shown stimuli are pattern (4) on the sailors-reserve-boats schema for both conditions.
    This particular schema was used for the tutorial, so we excluded it from the 32 questions on the actual study.}
    \label{fig:sailors_p4_sql_and_rd_ui}
\end{figure}

\introparagraph{MTurk inclusion/rejection criteria}
We recruited participants through Amazon Mechanical Turk (AMT), 
limiting participation to adult workers who live in the United States, 
have at least 500 completed tasks approved by requesters,
have  at least 97\% of their completed tasks approved overall,
and who self-determine themselves as experienced SQL users based on the following prompt in the HIT description: ``Workers should be familiar with SQL at the level of an advanced undergraduate database class, in particular with nested SQL queries.''
The HIT is accepted if participants correctly answer at least 16 of the questions and submit within 50 minutes of starting.
We purposefully do not use a qualification exam on SQL as time-consuming qualification exams discourage many workers from participating and are prone to cheating 
(standardized questions and answers tend to be shared on AMT-specific forums).
Instead, we rely on self-determination of SQL experience.
Workers know that failing to submit 16/32 correct questions will get their HIT rejected, meaning they do not get paid, and their overall approval rate goes down---possibly precluding them from doing other HITs.
However, this approach can lead to many workers who skim or skip directions being rejected, lowering the \textit{requester's} HIT approval rate.
The requester approval rate is visible in the MTurk user interface, and workers may be wary of accepting HITs from a requester with a low rate of approving submissions.

\introparagraph{Payment for time and accuracy}
To make sure participants have a stake in the variables we measure (accuracy and speed),
we motivate correct and fast answers by paying participants according to their answering correctly and fast
(largely inspired by our earlier user study on \queryviz~\cite{Leventidis2020QueryVis}).
The base pay for an accepted submission is \$6.00 USD.
(1)~\emph{Accuracy bonus}: For every correctly answered question after the 16$^\textrm{th}$, the participant receives a bonus payment of \$0.20 USD for a maximum pay of \$9.20 USD.
(2)~\emph{Speed bonus}: Based on total test completion time, the participant will receive a percentage bonus on total pay (including the accuracy bonus).
Completion within 11 minutes awards a 5\% bonus for a total maximum pay of \$9.66.
Each minute faster gets the participant an additional 5\% bonus up to 40\% for completing within 4 minutes, with a maximum pay of \$12.88.
These values were determined using pilot data to target workers receiving a U.S.\ average living wage of \$25.02 USD/hr\footnote{\url{https://livingwage.mit.edu/articles/103-new-data-posted-2023-living-wage-calculator}} and average \$8.76 in total for our study.

\introparagraph{Choice of 50 participants}
Our goal was to recruit 50 successful participants.
We made the task (HIT) available in two batches of 60.
171 MTurk workers viewed the consent form, 162 accepted the task and began the study, 146 answered at least one question, 133 completed all questions, and 120 submitted the task.
42 workers accepted the HIT and began the study but returned it without completing it ($162-120$).
Only 58 of the 120 submissions had above the 50\% accuracy threshold.
Among those 58 participants, we used the data of the first 25 who started the task for each condition to get our planned 50 participants, balanced across two groups (\cref{Fig_study_design}).

\introparagraph{Stimuli and user interface}
\Cref{fig:sailors_p4_sql_and_rd_ui} demonstrates the interface a participant would see for answering questions during the study.
(\subref{fig:sailors_p4_sql_ui}) shows an example of a question with pattern (4) and the $\SQL$ condition, (\subref{fig:sailors_p4_rd_ui}) shows the same question posed with $\diagrams$.
Note that we showed participants each schema only once---this figure is simply for illustrative purposes.
A progress bar and countdown timer at the top inform participants of their progress.
Upon selecting an answer and pressing `Next', the chosen and correct answers are highlighted.
A link at the bottom provides quick access to a PDF of the tutorial.

\introparagraph{Median vs.\ mean}
We carefully considered using the mean or the median as summary statistics. 
Each has pros and cons, and decided to use the median for time and the mean for accuracy.
We considered 4 properties:
(1)~The median is \emph{more robust against outliers}. 
A good example of this is how median income is more representative than mean income when the focus is on the experience of individual citizens instead of the overall performance of a country.
The completion times for each question in our study are very imbalanced and have heavier tails than a Gaussian distribution.
(2)~On the other hand, the median \emph{has a slower convergence behavior}.
For example, when sampling from a Gaussian distribution, the 95\% confidence interval (CI) for the median will be larger than the one for the mean.
Thus to get the same statistical power, more data points are needed.
(3)~The mean is more appropriate for summarizing \emph{discrete choices}.
As an extreme example, consider a repeated Boolean event (how often does a coin come up head first).
The median only summarizes which of the two events occurred more times, 
whereas the mean gives a better estimate of the probability
(and even a lossless one together with the number of tosses).
(4)~When summarizing ratios (e.g., the time improvements in our case), the \emph{mean gives a biased estimate}, 
even in the limit of infinitely many data points. 
For example, consider two conditions A and B, and three users, U1, U2, and U3, with recorded times for both conditions being:

\begin{center}
    \begin{tabular}{cccc}
         User & A & B & Ratio B/A\\
         \hline
         U1 & 200 & 100 & 0.5\\
         U2 & 150 & 150 & 1.0\\
         U3 & 100 & 200 & 2.0
    \end{tabular}
\end{center}

\noindent    
Thus one user is faster using B, one user is slower, and one is the same in both conditions.
However, the mean ratio is 1.17, whereas the median ratio is 1.0.

For these reasons, we are using the median for summarizing individual per-participant times
and ratios of time improvement.
However, we use the mean to summarize relative error rates and their differences.

\subsection{Supplemental Material}
\label{appendix:controlled-experiment:links}

All supplemental material required to reproduce our results from the data we collected or replicate our study with additional participants is available on the Open Science Framework (OSF).
Here we provide links to key materials:
\begin{itemize}[leftmargin=*]
    \item The main OSF folder: \url{https://osf.io/q9g6u/}
    \item Study tutorial: \url{https://osf.io/mruzw}
    \item Stimuli-generating code: \url{https://osf.io/kgx4y}
    \item The stimuli: \url{https://osf.io/d5qaj}
    \item Stimuli/schema index CSV: \url{https://osf.io/u8bf9}
    \item Stimuli/schema index JSON: \url{https://osf.io/sn83j}
    \item Server code for hosting the study: \url{https://osf.io/suj4a}.
    \item Collected data: \url{https://osf.io/8vm42}
    \item The final analysis code: \url{https://osf.io/f2xe3}
    \item The analysis can be compared with our planned and preregistered analysis code: \url{https://osf.io/4zpsk/}
\end{itemize}

\subsection{Additional results (preregistered)}
Here we describe an additional result included as part of the preregistration but not listed under its hypotheses.
This result was demoted to the appendix for lack of space.

\resultbox{\begin{resultW}(\textbf{Patterns and Speed}) 
\emph{For all 4 different patterns},
we have strong evidence that
participants were meaningfully faster at identifying each pattern using $\diagrams$ than $\SQL$.
\end{resultW}}

\noindent
We also calculated the median per-participant time for identifying each of the 4 patterns in each of the two conditions.
\Cref{q3_figure} shows the interesting observation that pattern 4 with double negation 
(``Find \textit{sailors} who have \textbf{not} \textit{reserved} \textbf{all} \textit{boats}'')
was not the one that took participants the longest to recognize.
The median improvement of $\diagrams$ over $\SQL$ per pattern and participant is still statistically significant
\emph{for each pattern individually}
(i.e.\ the 95\% CIs were fully below the ratio 1.00).
Please see \cref{tab:q3_table} for details.

\begin{table}
    \centering
    \begin{tabular}{cccc}
 & \textbf{Pattern}&\textbf{ Median time or ratio}& 95\% CI\\
         RD&  P1&  9.25&  [6.67, 10.47]\\
         SQL&  P1&  11.62&  [10.26, 13.35]\\
         ratio RD/SQL&  P1&  .64&  [.49, .78]\\
         RD&  P2&  11.62&  [10.26, 13.35]\\
         SQL&  P2&  13.32&  [11.95, 17.53]\\
         ratio RD/SQL&  P2&  .83&  [.70, .97]\\
         RD&  P3&  10.89&  [9.51, 13.22]\\
         SQL&  P3&  14.13&  [11.19, 21.01]\\
         ratio RD/SQL&  P3&  .66&  [.53, .77]\\
         RD&  P4&  8.14& [7.02, 11.58] \\
         SQL& P4& 11.95&[10.60, 14.92]\\
         ratio RD/SQL& P4& .71&[.60, .86]\\
    \end{tabular}
    \caption{Median times and ratios along with 95\% BCa confidence intervals for \cref{q3_figure}.}
    \label{tab:q3_table}
\end{table}

\begin{figure}[t]
\centering	
\includegraphics[scale=0.39]{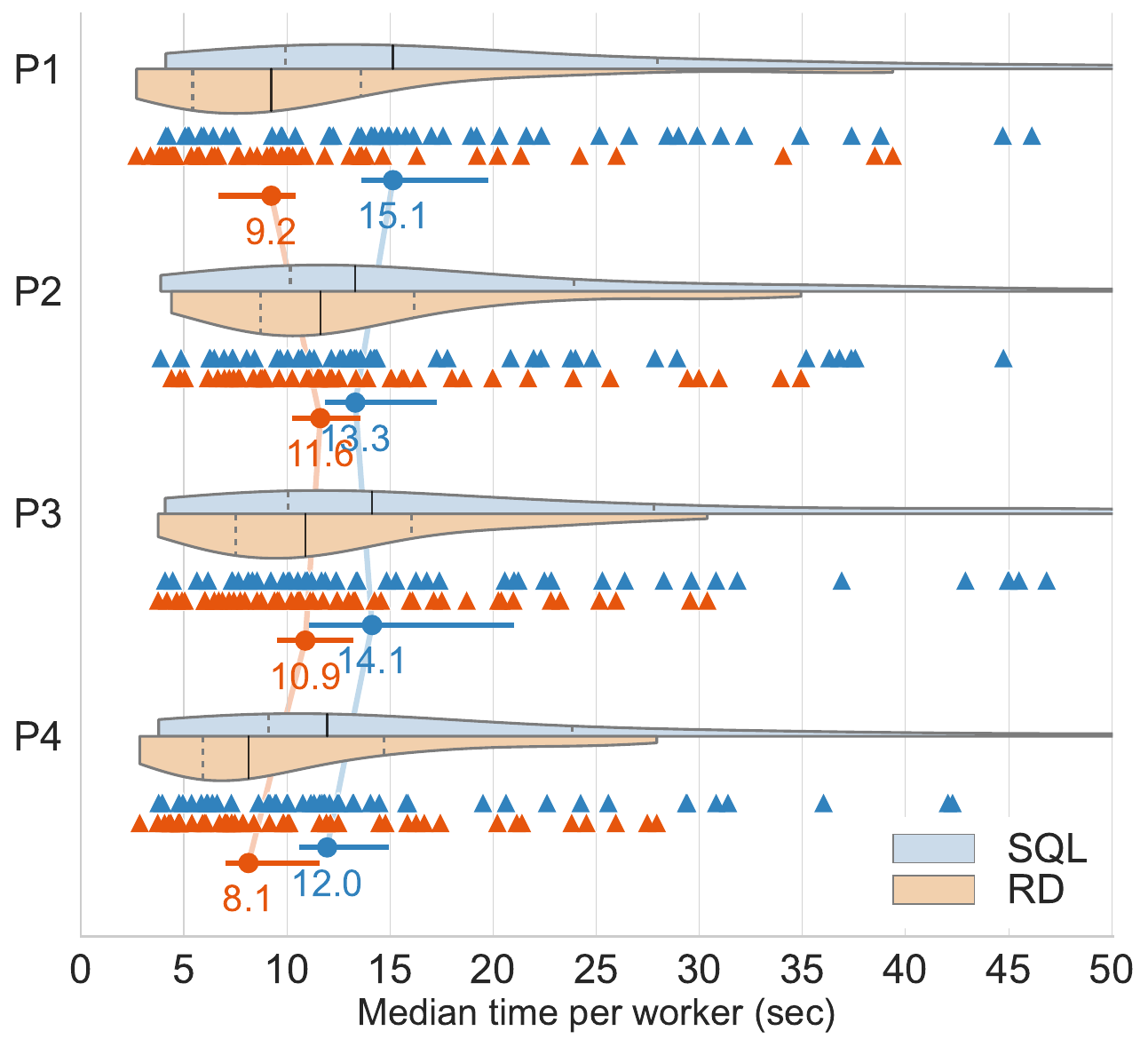}
\caption{Result (4):
    P1--P4 stand for patterns 1--4, respectively.
    The median time per worker for each pattern and condition is shown with bootstrapped 95\% BCa confidence intervals and two views of the data distribution.
	It becomes clear that different patterns take different times to recognize,
	and $\diagrams$ allow participants to do this faster than $\SQL$.
	Although confidence intervals per pattern overlap here, the analysis of the 95\% confidence intervals of the ratios per participant and pattern show statistical significance.
	For a discussion of why these CIs shouldn't be directly compared, see \cite{knol2011MisuseCI,Payton2003OverlappingCIs,Austin2002NoteOverlapCIs}.
    }
\label{q3_figure}
\end{figure}

\subsection{Exploratory analyses (not preregistered)}

This section details an exploratory analysis of a subset of the accepted participants,
which we conducted after collecting the data.
We did not preregister these analyses.
The results should be taken as preliminary findings that need to be backed up by summative studies.

\begin{figure}[t]
    \centering
    \includegraphics[scale=0.39]{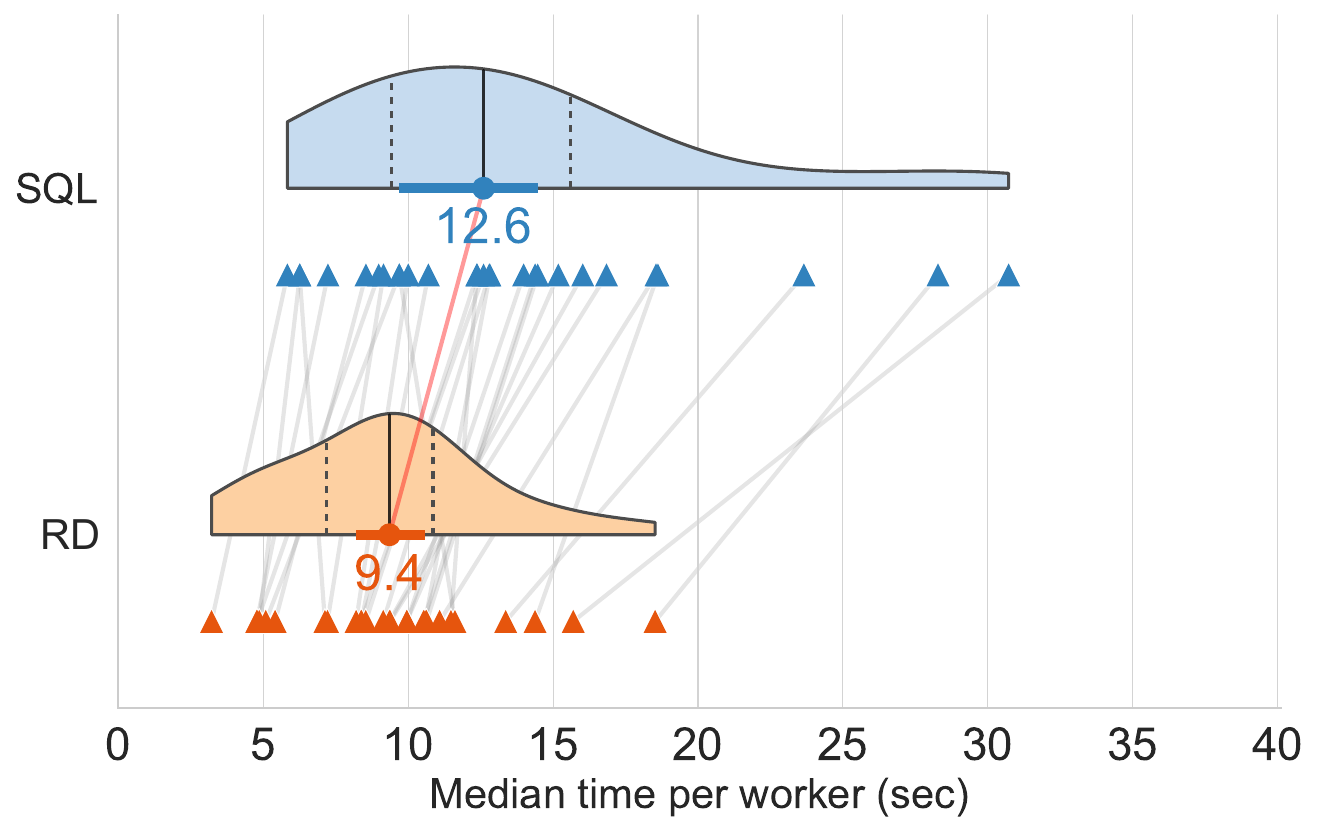}
    \caption{%
	\emph{Exploratory analysis:} Question 1 for $n=27$ with accuracy $>90\%$.
    Compare with
	\hyperref[{q1_figure1_variant1}]{Figure~\ref*{q1_figure1_variant1} (top)}.
    }
    \label{q1_figure1_variant3}
\end{figure}

\begin{figure}[t]
    \centering	
    \includegraphics[scale=0.39]{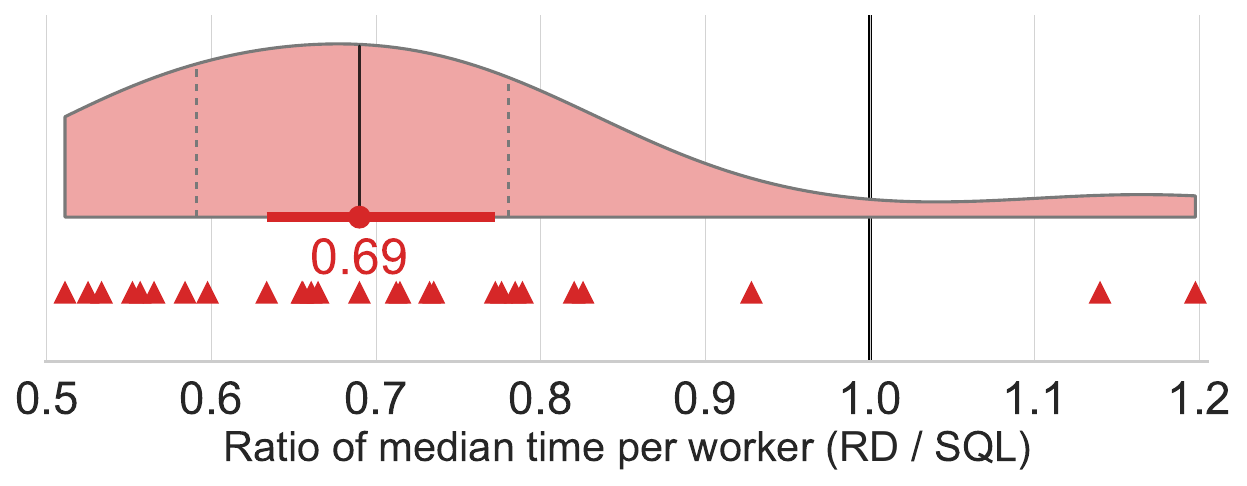}
    \caption{%
	\emph{Exploratory analysis:} Question 1 for $n=27$ with accuracy $>90\%$.
    Compare with 
	\hyperref[{q1_figure1_variant1}]{Figure~\ref*{q1_figure1_variant1} (bottom)}.
	}
    \label{q1_figure2_variant3}
\end{figure}

\begin{figure}[t]
    \centering
    \includegraphics[scale=0.39]{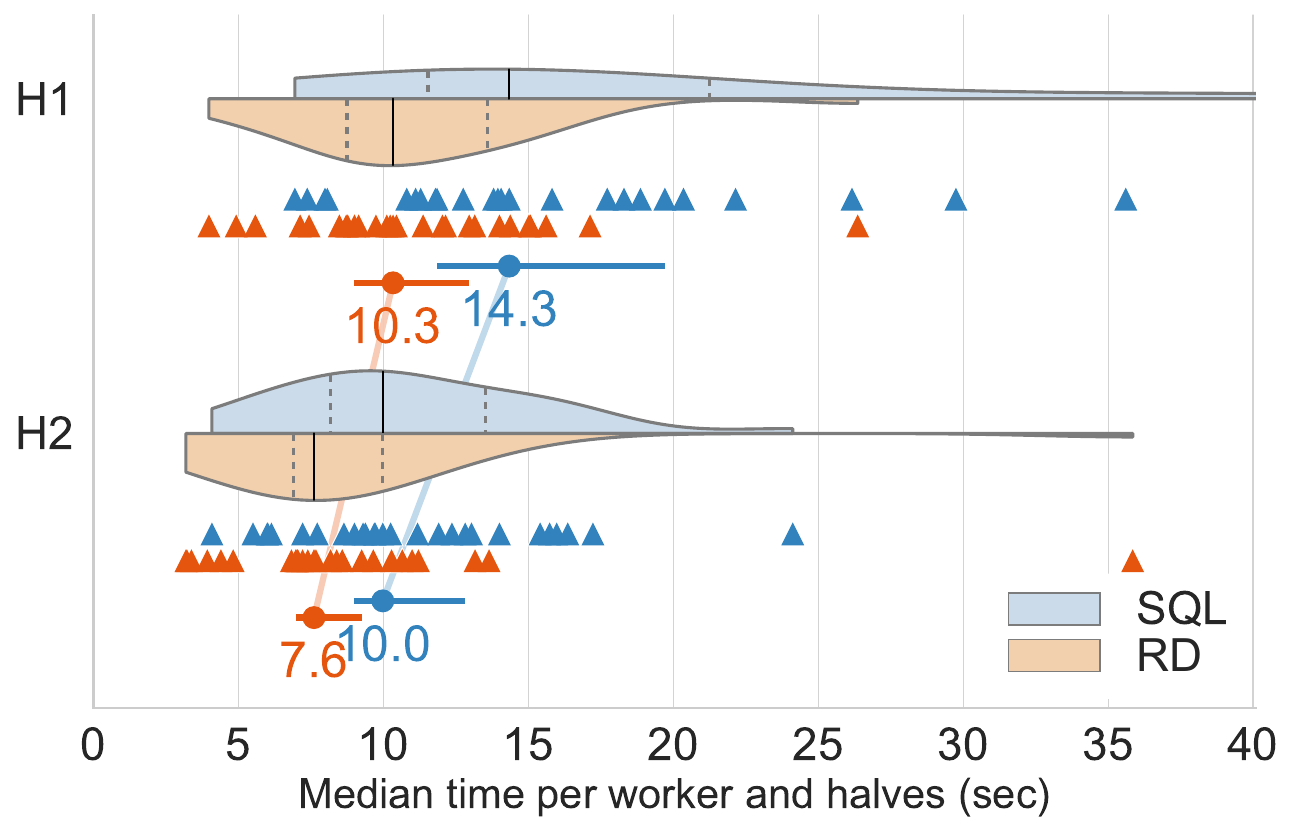}
    \caption{%
	\emph{Exploratory analysis:} Question 2 for $n=27$ with accuracy $>90\%$.
        Compare with \Cref{q2_figure_variant1}.    
	}
    \label{q2_figure_variant3}
\end{figure}

\begin{figure}[t]
    \centering	
    \includegraphics[scale=0.39]{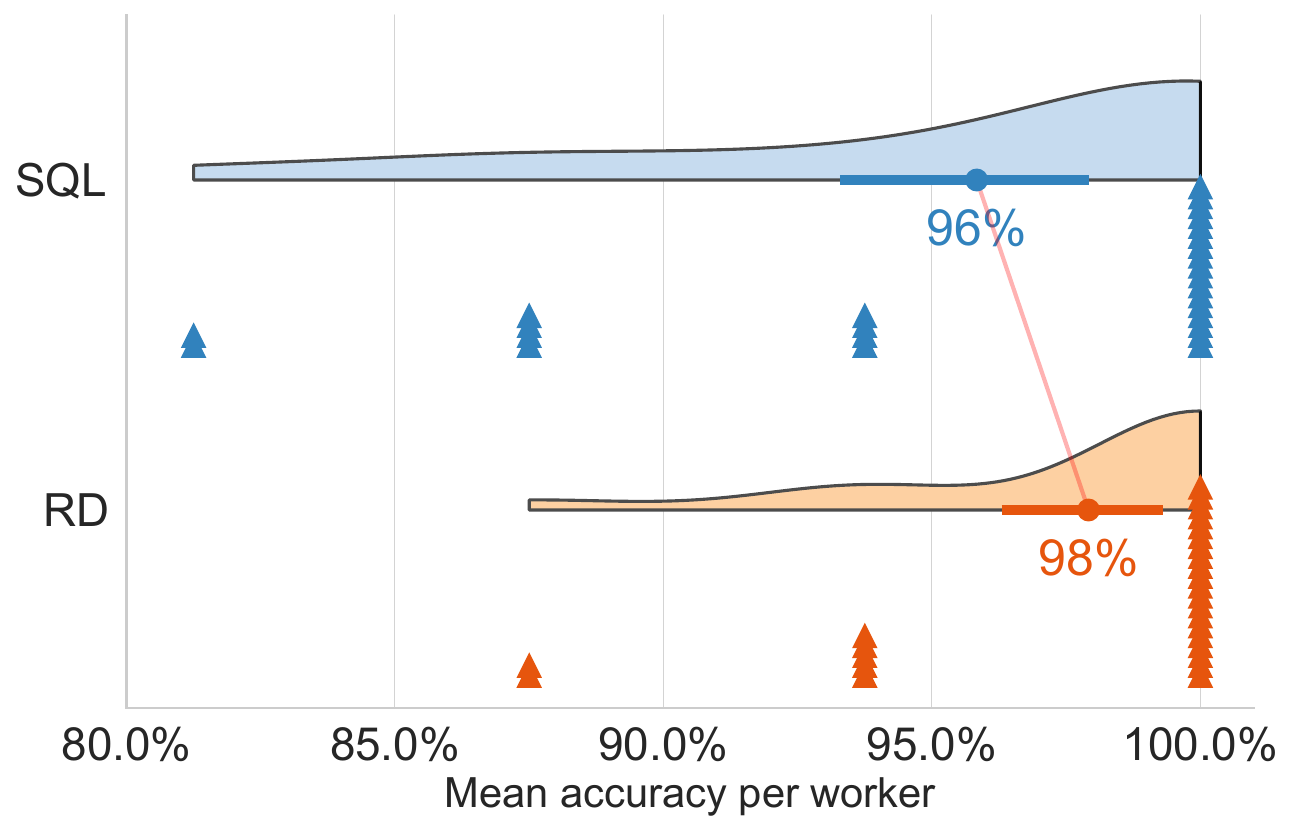}
    \caption{%
    \emph{Exploratory analysis:} Question 3 for $n=27$ with accuracy $>90\%$.
    Compare with 
	\hyperref[{q4_figure_variant1}]{Figure~\ref*{q4_figure_variant1} (top).}
    }
    \label{q4_figure_variant3}
\end{figure}

\begin{figure}[t]
    \centering	
    \includegraphics[scale=0.39]{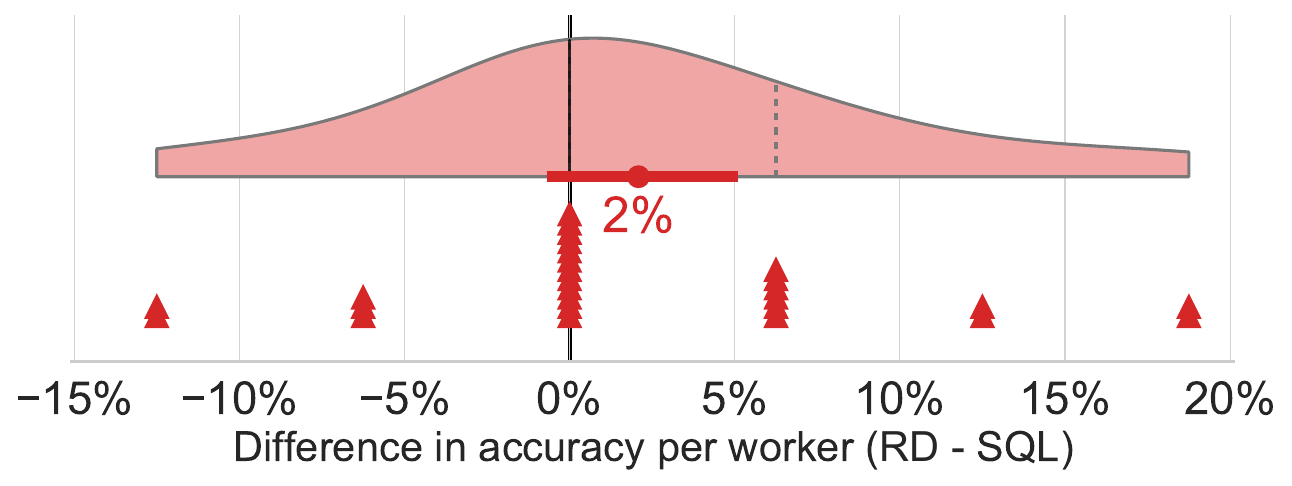}
    \caption{%
    \emph{Exploratory analysis:} Question 3 for $n=27$ with accuracy $>90\%$.
    Compare with 
	\hyperref[{q4_figure_variant1}]{Figure~\ref*{q4_figure_variant1} (bottom)}.
    }
    \label{q4_figure2_variant3}
\end{figure}

\begin{figure}[t]
    \centering	
    \includegraphics[scale=0.39]{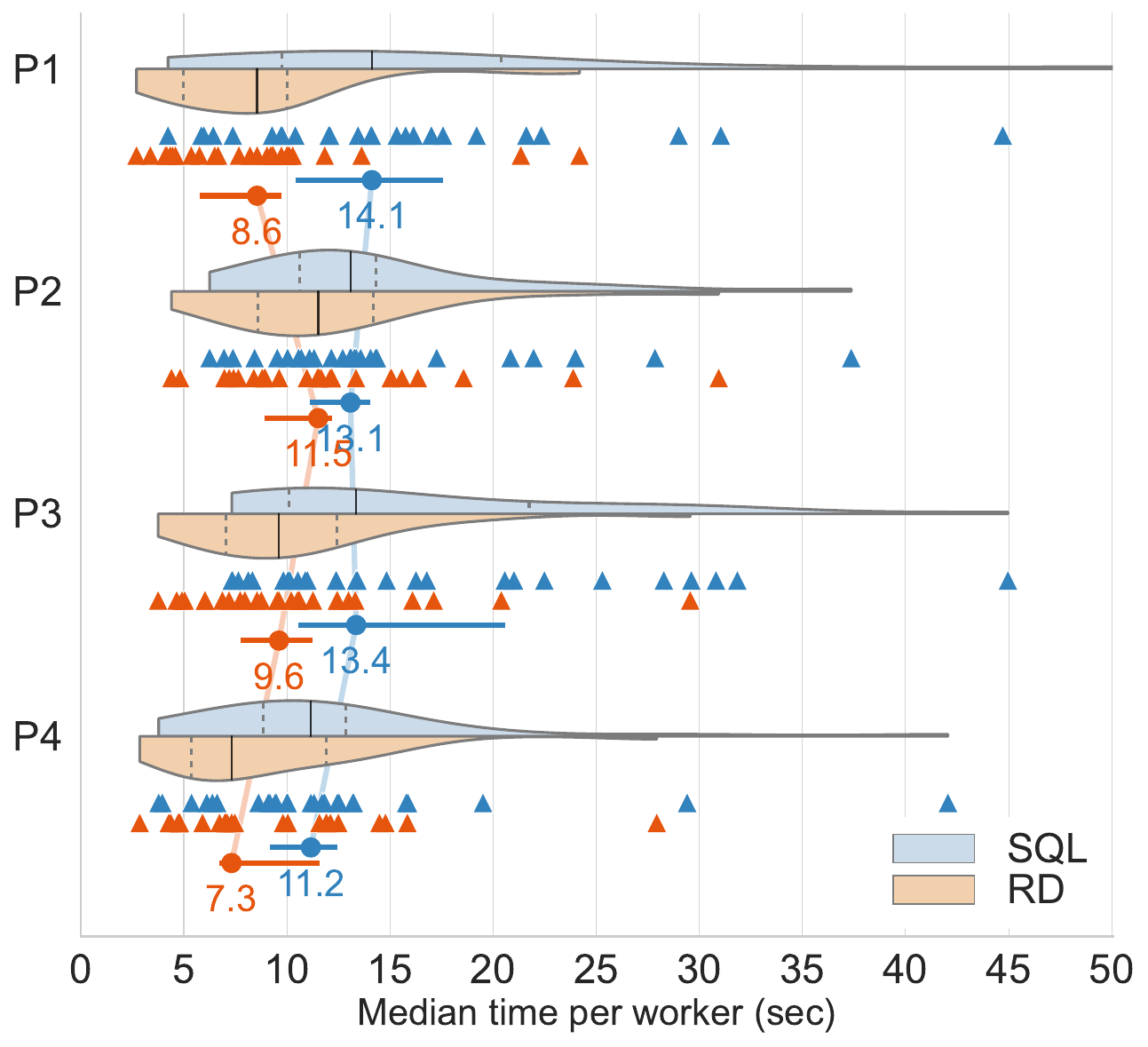}
    \caption{
	Question 4 for $n=27$ with accuracy $>90\%$.
    Compare with \Cref{q3_figure}.
    }
    \label{q3_figure_variant3}
\end{figure}

\introparagraph{(5) Results with stricter accuracy requirement}
Our threshold for acceptance was 50\% accuracy, thus a participant needs to correctly answer at least $k=16$ among the $n=32$ questions. 
\emph{What is the expected fraction of users who finish the HIT by completely randomly choosing from the 4 available choices?}
Modeling the number of successful answers $X$ as a random variable following a binomial distribution $B(n,p)$
with $p=1/4$, 
the probability of getting exactly $k$ right is $\P[X=k] = \binom{n}{k} p^k (1-p)^{n-k}$.
We are interested in 
$\P[X \geq k] =  
1-\P[X \leq k-1]$ 
which is 0.2\%
for $n=32$, $k=16$, $p=1/4$.\footnote{This can be calculated explicitly e.g.\ with 
\textsf{scipy.stats.binom} (\url{https://docs.scipy.org/doc/scipy/reference/generated/scipy.stats.binom.html}).}
Given that 162 MTurk users attempted the HIT, only 133 answered all the questions, of those only 58 got at least half correct, and some successful participants had low accuracy on the 16 questions shown in $\SQL$ (\cref{q4_figure_variant1}),
we asked how the results would change by only considering those among the first 50 users with an accuracy of 90\% (thus maximally 3 incorrect answers).

\Cref{q1_figure1_variant3,q1_figure2_variant3,q2_figure_variant3,q3_figure_variant3}
show our preliminary finding that filtering to users with at least 90\% correct answers has results similar to our planned analysis.
The median time, confidence intervals, and data distribution are surprisingly similar despite the smaller sample size.
We believe this is because the higher accuracy threshold reduces a lot of the randomness 
from users who barely passed (see a few participants with very low accuracy for $\SQL$ in \hyperref[{q4_figure_variant1}]{Figure~\ref*{q4_figure_variant1} (top)}).

Regarding accuracy,  \Cref{q4_figure_variant3,q4_figure2_variant3} show the expected difference caused by the increased threshold: there is now much less evidence for a difference in accuracy between conditions
(which can partially be explained by the fact that most participants in that group actually perfect accuracies).

Notice that filtering the 50 users down to those with an accuracy of 90\% or more led to a slight imbalance 
with 13 users in group 1 and 14 users in group 2.

\subsection{Results from pilot study with PhD students}

Before preregistering our study design, we conducted a pilot study with $n=13$ PhD students studying databases at a U.S.\ University.
This helped us refine our hypotheses; test our website, database, and MTurk implementation code; and test our analysis code.
Pilot testing also helped us to identify any confusing or incorrect stimuli.

\begin{figure*}[t]
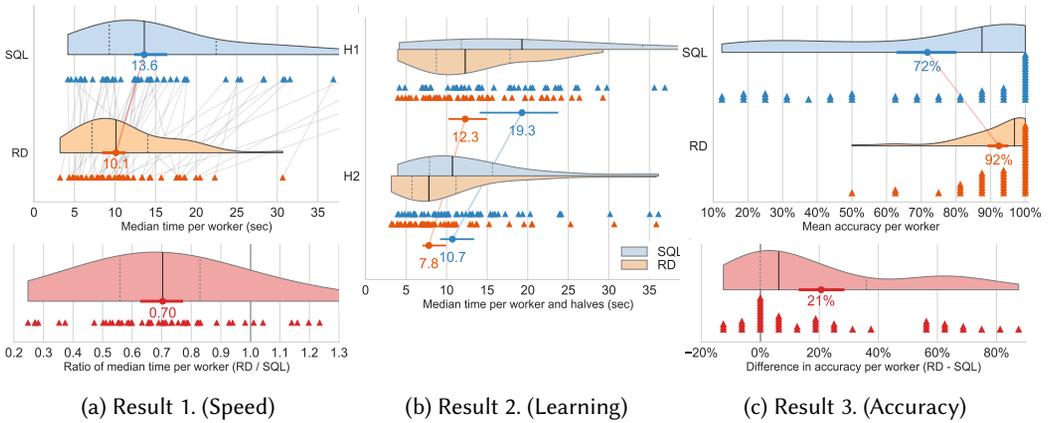

\centering	
\begin{subfigure}[b]{0.34\linewidth}	
    \includegraphics[scale=0.22]{figs/q1_figure1_variant1}
    \includegraphics[scale=0.22]{figs/q1_figure2_variant1}	
    \caption{Result 1. (Speed)}
    \label{q1_figure1_variant1-dupe}
\end{subfigure}
\hspace{-4mm}
\begin{subfigure}[b]{0.34\linewidth}	
    \centering	
    \includegraphics[scale=0.22]{figs/q2_figure_variant1}
	\vspace{4mm}
    \caption{Result 2. (Learning)}
    \label{q2_figure_variant1-dupe}
\end{subfigure}
\hspace{-4mm}
\begin{subfigure}[b]{0.34\linewidth}	
    \centering	
    \includegraphics[scale=0.22]{figs/q4_figure_variant1}
    \includegraphics[scale=0.22]{figs/q4_figure2_variant1}	
    \caption{Result 3. (Accuracy)}
    \label{q4_figure_variant1-dupe}
\end{subfigure}
\caption{%
This figure duplicates \cref{fig:studyResults} to support direct comparison with data from the pilot testing in \cref{fig:studyresults-pilot}.
See the caption of \cref{fig:studyResults} and surrounding text for a description of the encodings used.
}
\label{fig:studyresults-dupe}
\end{figure*}

\begin{figure*}[t]
\centering	
\begin{subfigure}[b]{0.34\linewidth}	
    \includegraphics[scale=0.22]{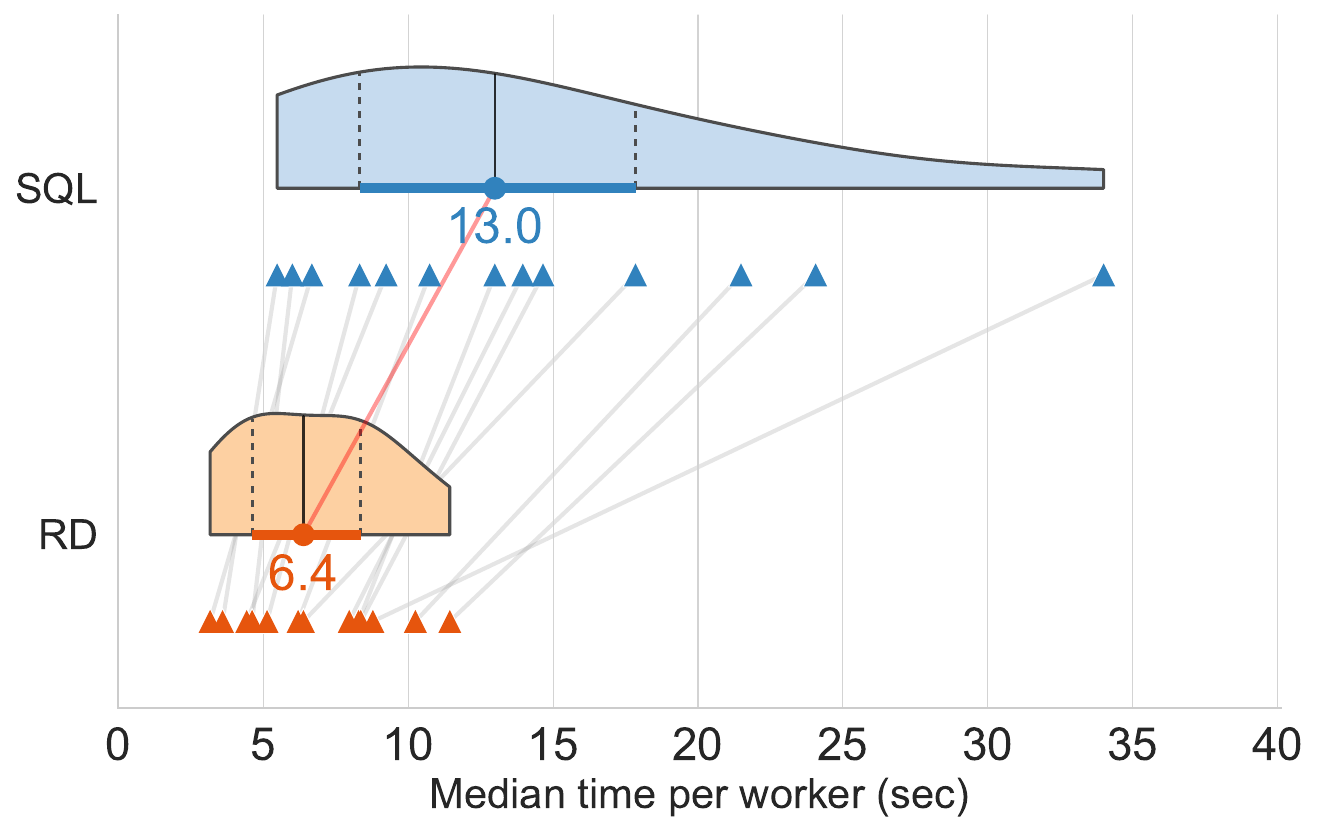}
    \includegraphics[scale=0.22]{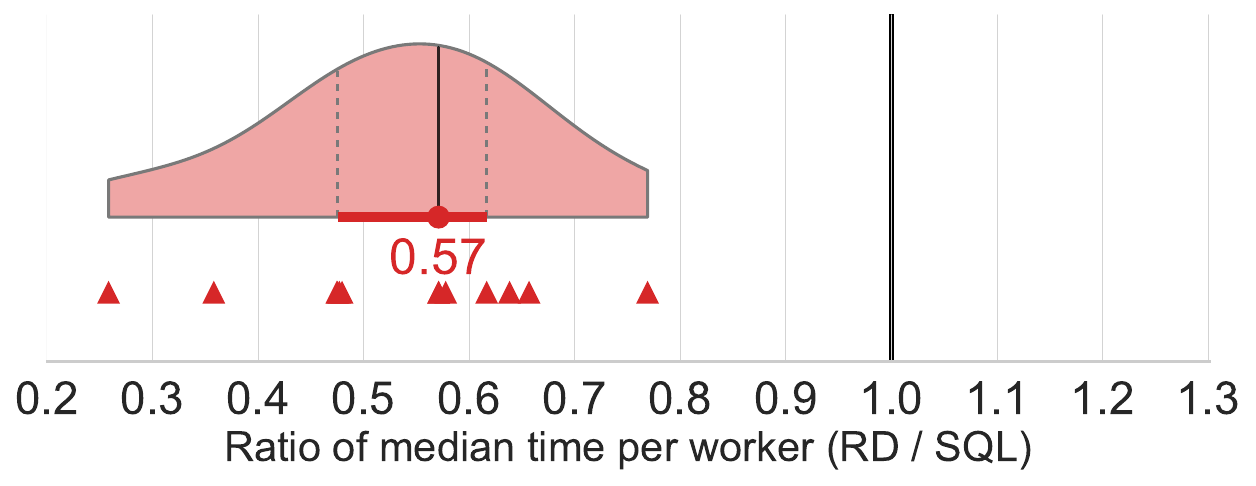}	
    \caption{\textit{Pilot} Result 1. (Speed)}
    \label{q1_figure1_variant1-pilot}
\end{subfigure}
\hspace{-4mm}
\begin{subfigure}[b]{0.34\linewidth}	
    \centering	
    \includegraphics[scale=0.22]{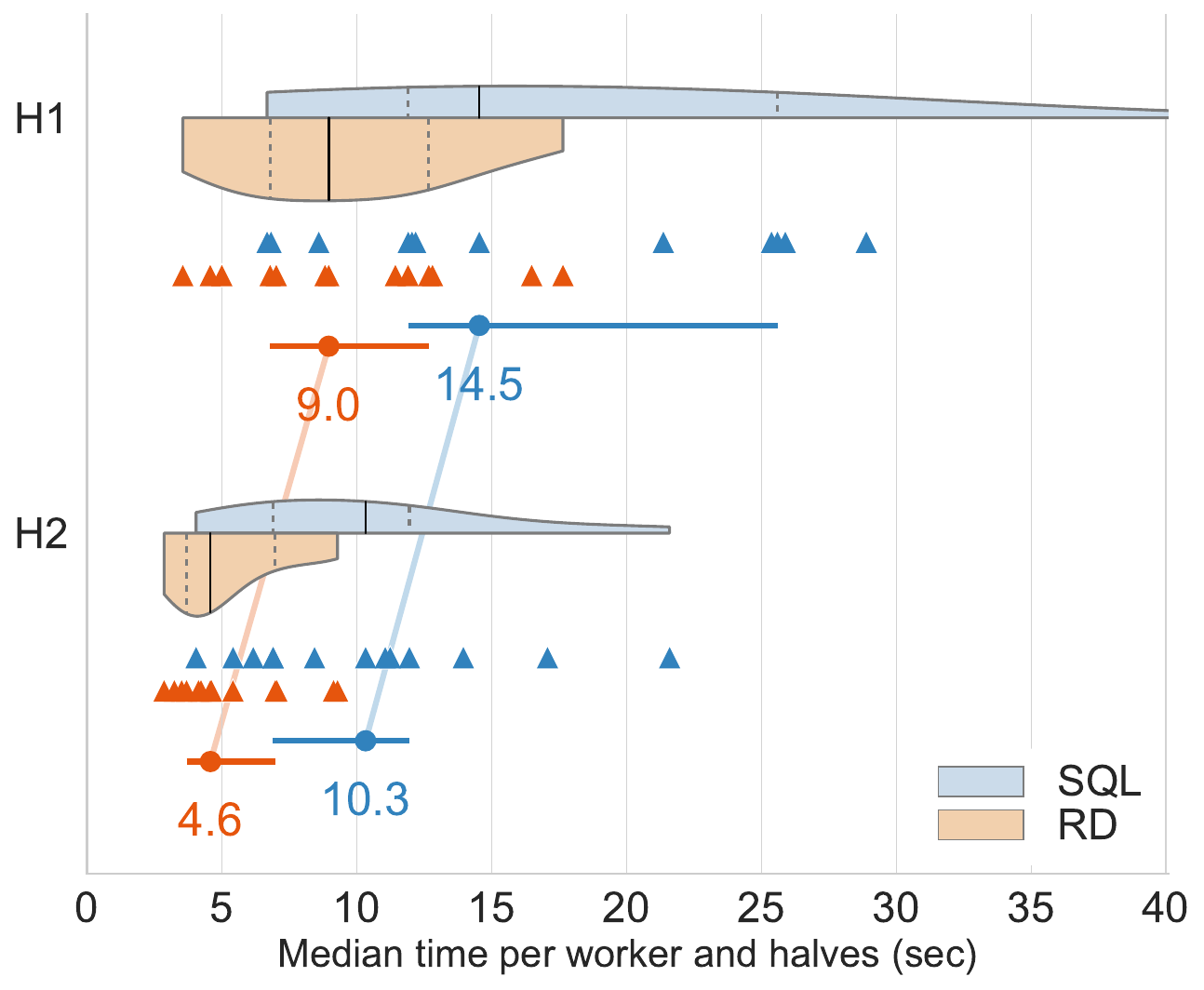}
	\vspace{4mm}
    \caption{\textit{Pilot} Result 2. (Learning)}
    \label{q2_figure_variant1-pilot}
\end{subfigure}
\hspace{-4mm}
\begin{subfigure}[b]{0.34\linewidth}	
    \centering	
    \includegraphics[scale=0.22]{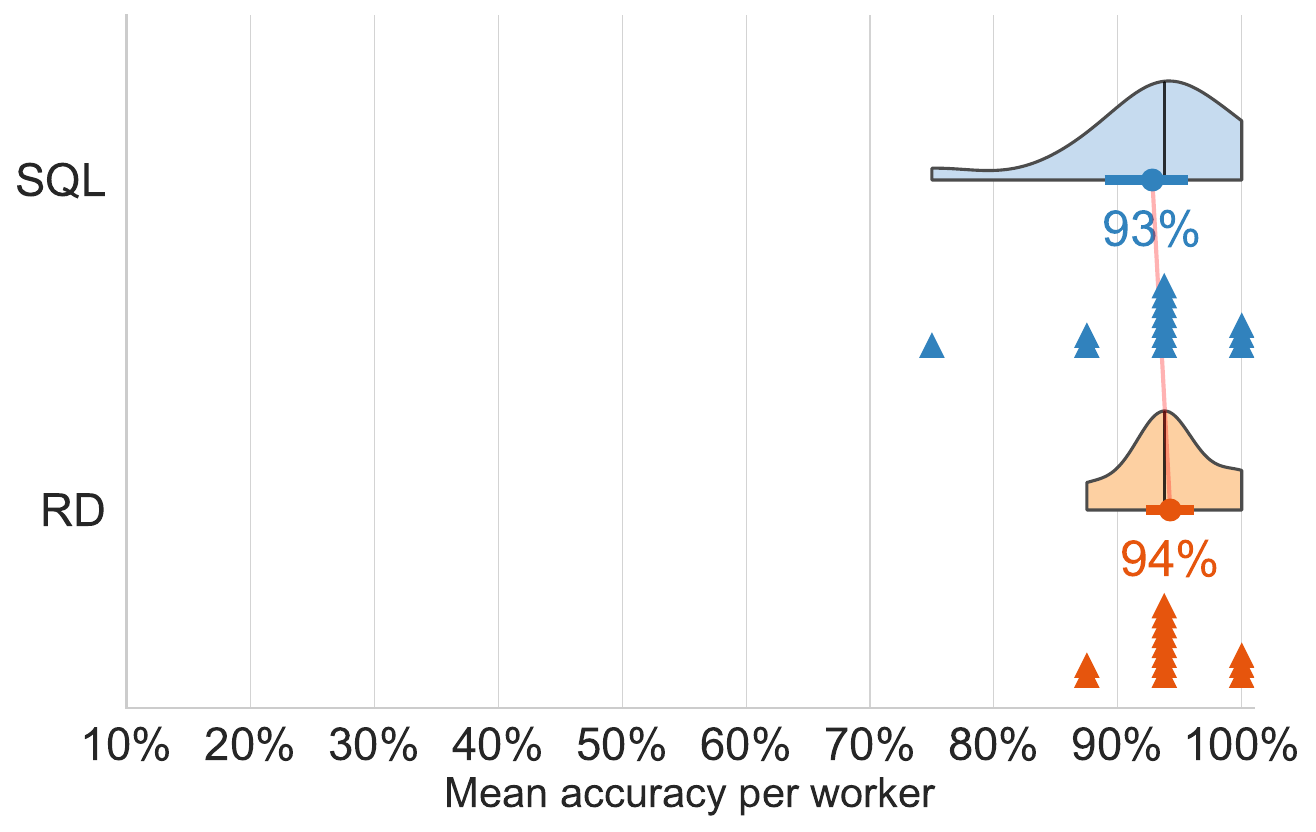}
    \includegraphics[scale=0.22]{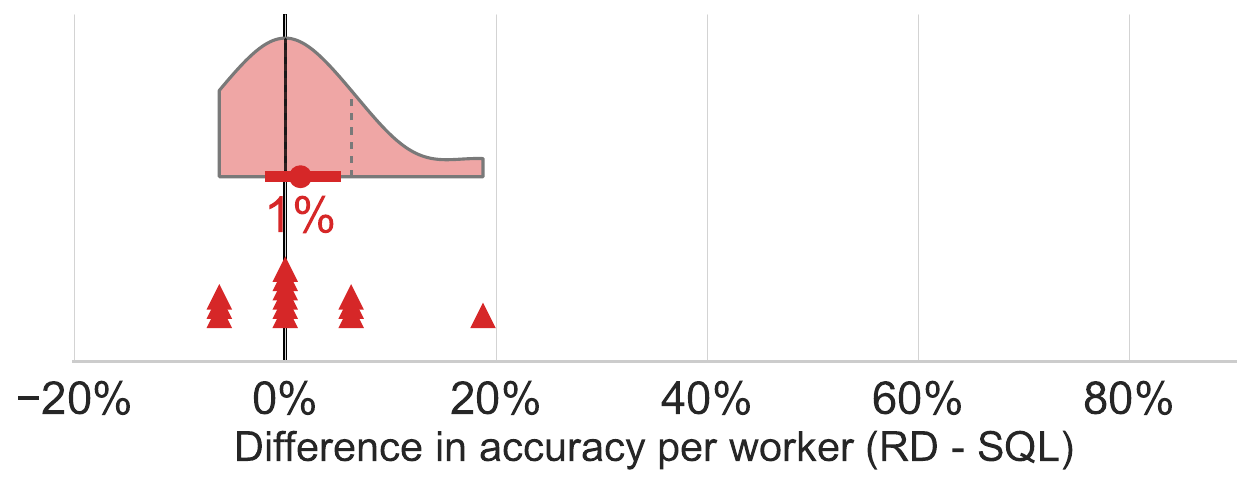}	
    \caption{\textit{Pilot} Result 3. (Accuracy)}
    \label{q4_figure_variant1-pilot}
\end{subfigure}
\caption{%
Results of analyzing data from the $n=13$ \emph{pilot} participants using the approach we preregistered for the full study.
These pilot participants were PhD students studying databases.
Compare with \cref{fig:studyresults-dupe} (\cref{fig:studyResults}) where we present the main study results for the Amazon Mechanical Turk participants.
Results 1 (Speed) and 2 (Learning) were consistent between the PhD students and MTurk workers.
For Result 3 (Accuracy), the PhD students had much higher accuracy using $\SQL$ than the MTurk workers, and their accuracy was comparable when using $\diagrams$.
}
\label{fig:studyresults-pilot}
\end{figure*}

To provide an exploratory comparison of our main results versus using student participants, we analyzed the pilot data using our preregistered analysis code and present the results in \cref{fig:studyresults-pilot}.
Compare these with the main results in \cref{fig:studyResults}, duplicated here in \cref{fig:studyresults-dupe} to support direct comparison.
Note that these results should be considered \textit{exploratory} rather than \textit{summative} as the study design was not preregistered at that point.
Moreover, the results are limited by our using an incorrect counterbalancing approach that we have since rectified.
We do not believe the introduced ordering effects would dramatically affect the results, but we cannot rule out a possible effect.

Our Pilot Results 1 (Speed) and 2 (Learning) were consistent with what we found in the eventual study.
However, Pilot Result 3 (Accuracy) stands in contrast.

\resultbox{\begin{resultE}(\textbf{Speed}) 
We have some exploratory evidence that PhD database students were meaningfully faster at identifying patterns using $\diagrams$ than $\SQL$:
median ratio $\diagrams/\SQL = $ 0.57, 95\% CI [0.48, 0.62].
\end{resultE}}

\noindent
\hyperref[{q1_figure1_variant1}]{Figure~\ref*{q1_figure1_variant1-pilot} (top)}
shows the median per-participant times per condition (and overall median across participants),
and 
\hyperref[{q1_figure1_variant1}]{Fig.~\ref*{q1_figure1_variant1-pilot} (bottom)}
the per-participant ratios between median times.

\resultbox{\begin{resultE}(\textbf{Learning}) 
We have some exploratory evidence that PhD students got meaningfully faster during the study in both conditions.
\end{resultE}}

\noindent
\Cref{q2_figure_variant1-pilot} shows the individual times for H1 ($1^\textrm{st}$ half) and H2 ($2^\textrm{nd}$ half),
together with medians and CIs.
Not shown are the median ratios H1$/$H2 we used for inference, which were
0.54, 95\% CI [0.72, 0.80]
for $\diagrams = $ and
0.59, 95\% CI [0.43, 0.70]
for $\SQL$.

\resultbox{\begin{resultE}(\textbf{Accuracy}) 
We have some exploratory evidence that PhD database students have comparable correctness with \diagrams\ and $\SQL$: 
mean difference in accuracy $\diagrams-\SQL = $ 1\%, 95\% CI [-2\%, 5\%].
\end{resultE}}

\hyperref[{q4_figure_variant1}]{Figure~\ref*{q4_figure_variant1-pilot} (top)}
shows the per-participant accuracy in each condition and the overall mean accuracies.
\hyperref[{q4_figure_variant1}]{Figure~\ref*{q4_figure_variant1-pilot} (bottom)}
shows the per-participant difference in accuracy and overall mean.

This last pilot finding is what led to our preregistered hypothesis that participants make a comparable number of correct responses using $\SQL$ and $\diagrams$.
However, it turns out that MTurk workers were considerably more often correct using $\diagrams$ than $\SQL$.
An additional summative study is necessary to say whether there is actually a difference between the proportion of correct responses provided by MTurk workers (who are presumably less skilled with SQL) and PhD students studying databases (who are presumably more skilled).
It appears that may be true though, in which case a subsequent study could better answer the question: ``Do $\diagrams$ help inexperienced $\SQL$ users identify common relational query patterns more accurately but only have comparable accuracy for experienced users?''
Future studies could also better answer the question: ``How comparable in $\SQL$ proficiency are self-described $\SQL$ experts on MTurk to students who get good grades in university database courses?''

\section{More Related Work for \texorpdfstring{\cref{SEC:RELATEDWORK}}
{Section \ref{SEC:RELATEDWORK}}}
\label{app:extra_related_work}

\subsection{Peirce's existential graphs (\texorpdfstring{\cref{sec:Peirce}}{Section \ref{sec:Peirce}})}
\label{app:PeirceDetails}
\phantom{x}\vspace{-4mm}

We discussed earlier that Lines of Identity (LIs) in beta graphs have multiple meanings (existential quantification and identity), 
and this ``function overload'' can make the graphs ambiguous.
We now discuss this important point in more detail.

\introparagraph{Problems from abusing lines in beta graphs}
While over 100 years old, Peirce's beta system has led 
over the years to multiple misinterpretations and 
an ongoing discussions about how to interpret a valid beta graph correctly.
The literature contains many attempts to provide formal ``interpretations'' and provide consistent readings of these graphs.
How can it be that something supposed to be formal still allows so much ambiguities?
In our opinion, beta graphs have one important design problem leading to those misunderstandings:
it is the \emph{overloading of the meaning of the Lines of Identity (LI)},
and thus one instances of an abuse of lines as symbols.
As mentioned before, LIs are used to denote two different concepts:
($i$) the \emph{existence} of objects (intuitively an existential quantification of a variable in $\DRC$ such as $\exists x$), and
($ii$) the \emph{identity} between objects (intuitively, R.A=S.A in $\TRC$).
This non-separation of concerns
and function overload
leads to unfixable ambiguities.
We illustrate with a few examples.

\begin{example}[A red boat]
\label{ex:aredboat}%
Consider
the sentence ``There is a red boat''
shown in
\cref{Fig_redboat_EG} as a beta graph.
As beta graphs cannot represent constants, the graph requires a special unary predicate ``red boat.''
The LI represents both ``there exists something'' and ``that something is equal to a red boat.''
Thus a line (which arguably suggests two items being connected or joined) is meant as a quantified variable,
and the beta graph can be interpreted in $\DRC$ as:
\begin{align*}
	\exists x[\sql{RedBoat}(x)]
\end{align*}

\Cref{Fig_redboat} shows the same sentence as a \diagram.
Notice that ``there exists something'' is represented by just placing this something (a predicate) on the canvas.
There is no need for an existential line.
Also notice how the modern UML diagram allows predicates (relational atoms) with several attributes, and one of those attributes 
can be set equal to a constant (here \selPredicateBox{\textup{color = `red'}}). 
It can be read rather naturally like a $\TRC$ statement:
\begin{align*}
	\exists b \in \sql{Boat}[\sql{B.color=`red'}]
\end{align*}
\end{example}

\begin{figure}[t]
\centering	
\hspace{10mm}
\begin{subfigure}[b]{.25\linewidth}
	\centering
    \includegraphics[scale=0.5]{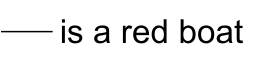}
\caption{Beta graph}
\label{Fig_redboat_EG}
\end{subfigure}	
\hspace{0mm}
\begin{subfigure}[b]{.25\linewidth}
	\centering
    \includegraphics[scale=0.5]{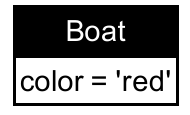}
\caption{\diagram}
\label{Fig_redboat}
\end{subfigure}	
\hspace{10mm}
\caption{\Cref{ex:aredboat} illustrates that lines in beta graphs (called ``Lines of Identities'' or LIs) suffer from ``function overload'': LIs are used as a symbol for both 
equivalence between objects, and
\emph{existential quantification}.
\diagrams\ avoid this function overload by 
using lines only to represent built-in (comparison) predicates,
and assuming proper predicates (relations with attributes) to be existentially quantified.}
\end{figure}

The interpretation of beta graphs where one LI represents one existentially-quantified variable
can at times be intuitive and simply correspond to a modern $\DRC$ interpretation 
(see also \cref{ex:similarities}).
However, such a  simple interpretation is not always possible.

\begin{example}[Exactly one red boat]
	\label{ex:exactlyoneredboat}
Consider the sentence ``There exists exactly one red boat.''
\Cref{Fig_exactly_one_red_boat_EGa,Fig_exactly_one_red_boat_EGb}
show two beta graphs with \emph{different cut nestings} that can both be read as
\begin{align*}
	\exists x[\sql{RedBoat}(x) \wedge \neg(\exists y[\sql{RedBoat}(y) \wedge x \neq y]
\end{align*}
Now a single LI needs to represent \emph{two} existentially-quantified variables,
and two \emph{different nestings} of the cuts can represent the same statement.
Contrast this with \cref{Fig_exactly_one_red_boat}, read in $\TRC$ as:
\begin{align*}
\exists b \in \sql{Boat}
	&[\sql{B.color=`red'} \wedge \neg(\exists b_2 \in \sql{Boat} \\
	&[b_2.\sql{color}=\sql{`red'} \wedge b.\sql{bid} \neq b_2.\sql{bid}])]
\end{align*}
Notice here that the inequality is simply represented by a \emph{label of a join} between two predicates. 
Two tuple variables are represented by two different atoms and the interpretation is unambiguous:
\emph{There exists a boat whose color is red, and there does not exist another boat whose color is red and whose $\sql{bid}$ is different.}
\end{example}

\begin{figure}[t]
\centering
\begin{minipage}{0.45\linewidth}
\begin{subfigure}[b]{\linewidth}
	\centering
    \includegraphics[scale=0.5]{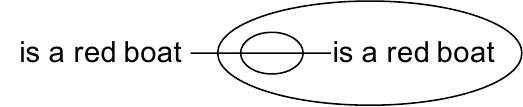}
\caption{}
\label{Fig_exactly_one_red_boat_EGa}
\end{subfigure}
\begin{subfigure}[b]{\linewidth}
	\centering
    \includegraphics[scale=0.5]{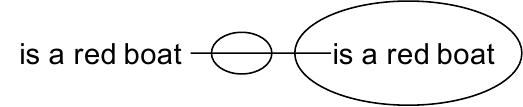}
\caption{}
\label{Fig_exactly_one_red_boat_EGb}
\end{subfigure}
\end{minipage}
\hspace{2mm}
\begin{minipage}{0.38\linewidth}
\vspace{7mm}
\begin{subfigure}[b]{\linewidth}
	\centering
    \includegraphics[scale=0.5]{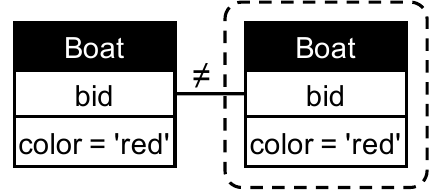}
\caption{}
\label{Fig_exactly_one_red_boat}
\end{subfigure}
\end{minipage}
\caption{\Cref{ex:exactlyoneredboat}: The combination of LI's and nesting symbols (called ``cuts'' in beta graphs) provide ambiguous ways to nest cuts.}
\vspace{-1mm}
\end{figure}

The fact that one LI can branch into multiple endings (also called \emph{ligatures}), may have \emph{loose endings}, and may represent multiple existentially-quantified variables, together with cuts being applied to such LI's can quickly lead to hard-to-interpret diagrams (see e.g., the increasingly-unreadable figures in \cite[pp. 42-49]{Shin:2002}).
This led to several attempts in the literature to provide ``reading algorithms'' of those graphs (e.g., \cite{Zeman:1964,Roberts:1973,Shin:2002}) and rather complicated proofs of the expressiveness of beta graphs~\cite{Zeman:1964}, assuming a correct reading.
For example, the paper by Dau \cite{Dau:2006} points out an error in Shin's reading algorithm \cite{Shin:2002}.
However, Dau's correction to Shin~\cite{Dau:2006} itself also has errors.
For example, the interpretation of the right-most diagram in \cite[Fig 2]{Dau:2006} 
(reproduced as \cref{fig:Fig_Dau}) is wrong and misses one equality.
The given interpretation is
\begin{align*}
&\exists x. \exists y. \exists z [S(x) \wedge P(y) \wedge T(z) \wedge \neg(x = y \wedge y = z)]
\intertext
{
whereas it should be
}
&\exists x. \exists y. \exists z [S(x) \wedge P(y) \wedge T(z) \wedge \neg(x = y \wedge y = z \wedge x = z)]
\end{align*}
This is just an intuitive example how difficult beta graphs are in practice to interpret, even by the experts, 
and even by experts pointing out errors from other experts.

\begin{figure}[t]
\centering
\begin{subfigure}[b]{0.7\linewidth}
	\centering
	\includegraphics[scale=0.35]{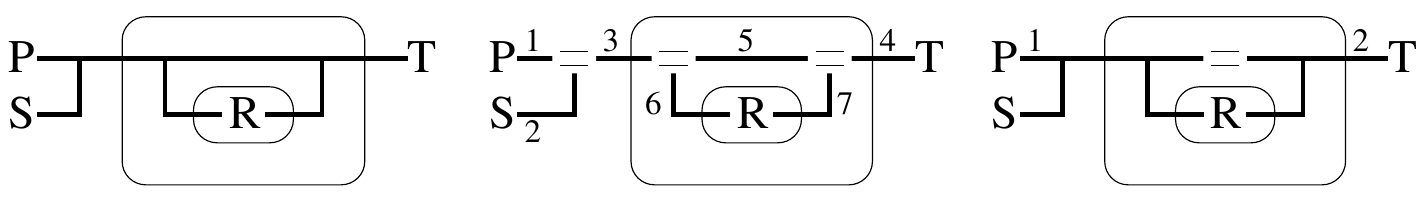}
	\caption{}
\end{subfigure}	
\hspace{5mm}
\begin{subfigure}[b]{0.2\linewidth}
	\centering
	\includegraphics[scale=0.19]{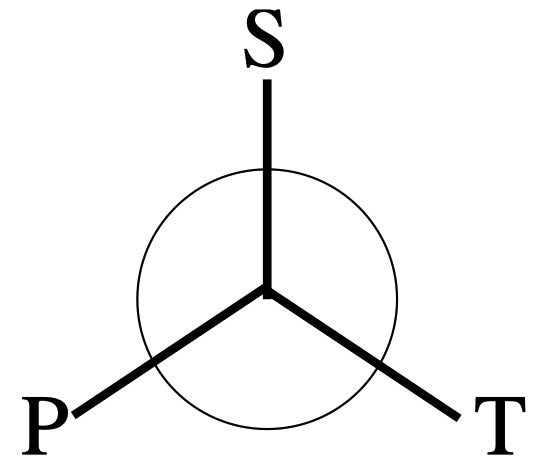}
	\caption{}
	\label{fig:Fig_Dau}%
\end{subfigure}
\vspace{0mm}
\caption{
(a): Figure copied from Dau \cite{Dau:2006} discussing an example beta graph whose interpretation provided by Shin \cite{Shin:2002} is incorrect, 
together with two alternative ways of splitting the LIs in order to interpret the graph correctly.
The details of the arguments are intricate and not important here. 
What matters is that a lot of disagreement exists as to how to interpret
LI's correctly.
\diagrams\ avoid this problem entirely by using lines only as comparison predicates.
(b): Right-most diagram of Figure 2 in Dau \cite{Dau:2006}: a difficult-to-interpret Peirce beta graph.
The reading algorithm presented is wrong and misses one equality.%
}
\end{figure}

\introparagraph{Why \diagrams\ avoid the problem}
\diagrams\ use the line only for connecting two attributes. 
The type of connection is unambiguously represented by a label.
Quantification is represented by predicates themselves.
Thus, on a more philosophical level, 
we think that our visual formalism solves those problems based on a more modern interpretation of first order logic:
$\TRC$ was created by Edgar Codd in the 1960s and 1970s
in order to provide a declarative database-query language for data manipulation in the relational data model~\cite{Codd:1970}.
In contrast, beta graphs were proposed even before first-order logic, which was only clearly articulated some years after Peirce's death in the 1928 first edition of David Hilbert and Wilhelm Ackermann's ``Grundzüge der theoretischen Logik''~\cite{HilbertAckerman:1928}.
Zeman, in his 1964 PhD thesis~\cite{Zeman:1964}, was the first to note that beta graphs are isomorphic to first-order logic with equality. However, the secondary literature, especially Roberts~\cite{Roberts:1973} and Shin~\cite{Shin:2002}, does not agree on just how this is so~\cite{wiki:existentialgraph}. 
We did not start from Peirce's beta graphs and attempt
to fix issues that have been occupying a whole community for years.
Rather, we started from 
the modern UML reading of relational schemas and an understanding of $\TRC$, 
and tried to achieve a minimal visual extension to provide relational completeness and pattern-isomorphism to $\TRC$.
This happens to provide a natural solution for the interpretation problems of beta graphs.
We believe that \diagrams\ provide a clean, unambiguous, and, in hindsight, simple abstraction of query patterns.

\begin{figure}[t]
\centering	
\begin{subfigure}[b]{.4\linewidth}
	\centering
    \includegraphics[scale=0.5]{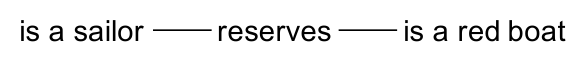}
	\vspace{2mm}
    \caption{}
    \label{Fig_Sailor_some_Sailors_some_red_boat_EG}
\end{subfigure}	
\hspace{3mm}
\begin{subfigure}[b]{.4\linewidth}
	\centering	
    \includegraphics[scale=0.5]{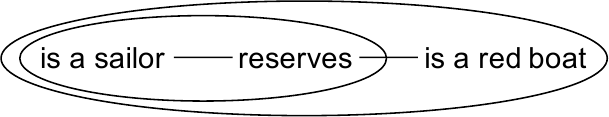}
    \caption{}
    \label{Fig_all_red_boats_some_Sailors_EG}
\end{subfigure}	
\hspace{5mm}
\begin{subfigure}[b]{.4\linewidth}
\vspace{1mm}	
	\centering	
    \includegraphics[scale=0.5]{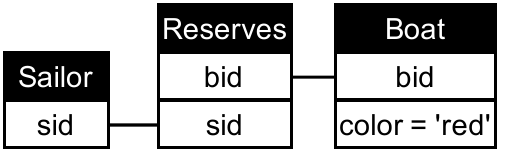}
\caption{}
\label{Fig_Sailor_some_Sailors_some_red_boat}
\end{subfigure}	
\hspace{5mm}
\begin{subfigure}[b]{.4\linewidth}
\vspace{1mm}
	\centering	
    \includegraphics[scale=0.5]{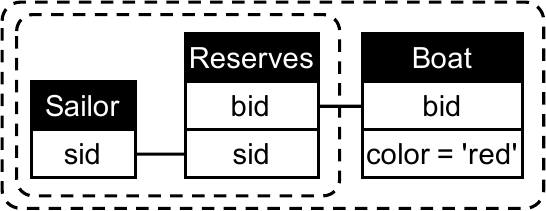}
\vspace{-1mm}
\caption{}
\label{Fig_all_red_boats_some_Sailors}
\end{subfigure}	
\caption{\Cref{ex:similarities}: Diagrams comparing the representations of negation in beta graphs (top) and \diagrams\ (bottom).%
}%
\label{fig:EG_comparison}%
\end{figure}

\begin{example}[Nested negation] 
\label{ex:similarities}%
\Cref{fig:EG_comparison}
shows 2 beta graphs: 
\begin{align*}
&\textrm{\subref{Fig_Sailor_some_Sailors_some_red_boat_EG}:    \emph{There exists a sailor who reserved a red boat.}}			\\
&\textrm{\subref{Fig_all_red_boats_some_Sailors_EG}: 			   \emph{All red boats were reserved by some sailor.}}
\end{align*}
Beta graphs cannot represent constants and thus need to replace a selection of boats that are red with a dedicated new predicate
``is a red boat.''
Their respective translations into $\DRC$ are:
\begin{align*}
	\textrm{\subref{Fig_Sailor_some_Sailors_some_red_boat_EG}: }
	&\exists x, y[\sql{Sailor}(x) \wedge \sql{RedBoat}(y) \wedge \sql{Reserves}(x,y)]	\\
	\textrm{\subref{Fig_all_red_boats_some_Sailors_EG}: }
	&\neg (\exists y[\sql{RedBoat}(y) \wedge \neg(\exists x[\sql{Sailor}(x) \wedge \sql{Reserves}(x,y)])])
\end{align*}
Contrast the beta graphs with their respective \diagrams\
in \cref{fig:EG_comparison} and $\TRC$:
\begin{align}
&\begin{aligned}
\textrm{\subref{Fig_Sailor_some_Sailors_some_red_boat}: }
\exists s \in \sql{Sailor},
	&b \in \sql{Boat}, r \in \sql{Reserves} [r.\sql{bid} = b.\sql{bid} \wedge \\
	&r.\sql{sid} = s.\sql{sid} \wedge b.\sql{color} = \sql{`red'}]
\end{aligned}
\\
&
\textrm{\subref{Fig_all_red_boats_some_Sailors}: }
\neg(\exists b \in \sql{Boat}
	[B.\sql{color}=\sql{`red'} \wedge \neg (\exists s \in \sql{Sailor}, r \in \sql{Reserves}	\notag\\
	&\hspace{24mm}[r.\sql{bid} = b.\sql{bid} \wedge r.\sql{sid} = s.\sql{sid}])])
\end{align}
\end{example}

\begin{figure*}[t]
\centering
\begin{subfigure}[b]{.25\linewidth}
    \includegraphics[scale=0.3]{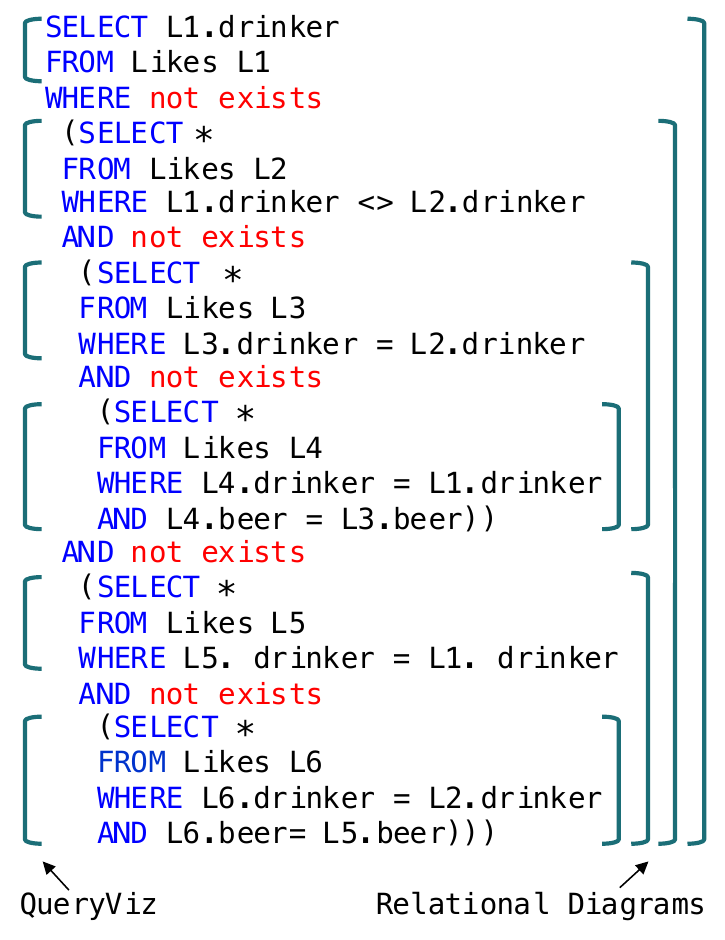}
	\vspace{-5mm}
    \caption{}
    \label{Fig_long_nested_query}
\end{subfigure}	
\hspace{3mm}
\begin{subfigure}[b]{.355\linewidth}
    \includegraphics[scale=0.3]{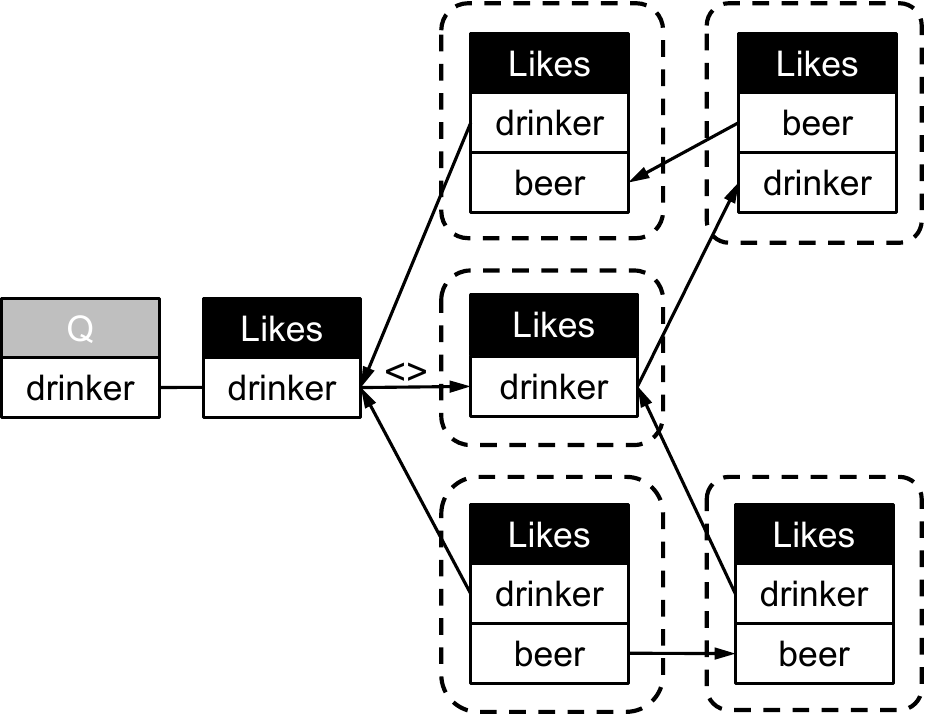}
	\vspace{5mm}
    \caption{}
    \label{Fig_QV_unique_beer_tastes2}
\end{subfigure}	
\begin{subfigure}[b]{.355\linewidth}
    \includegraphics[scale=0.3]{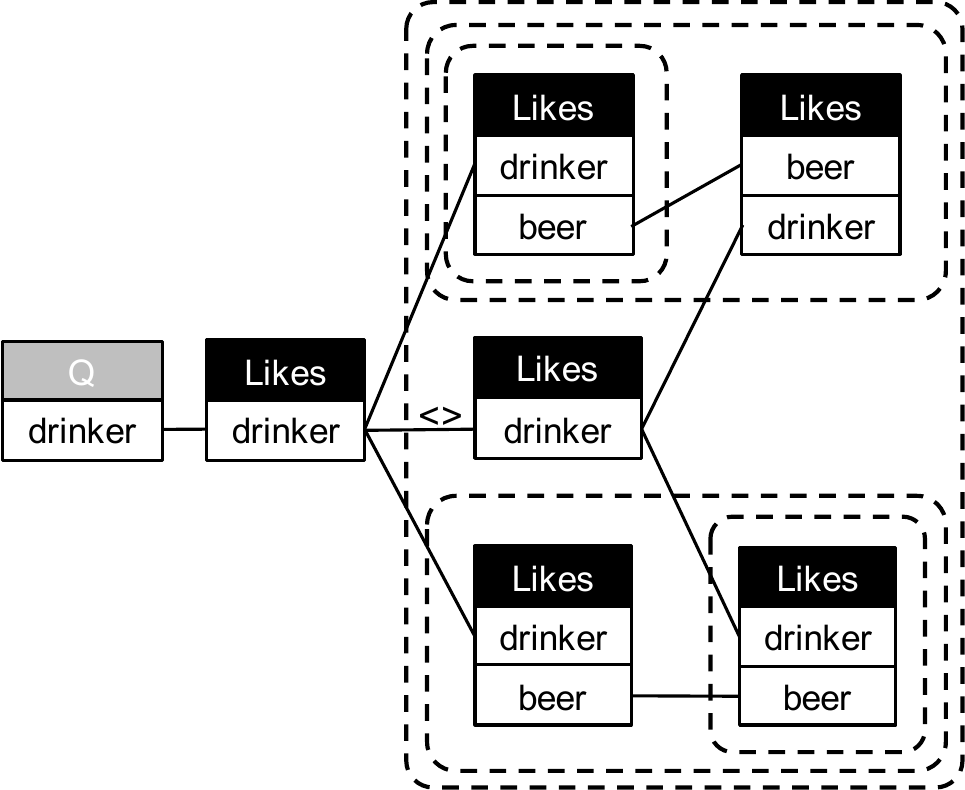}
	\vspace{0mm}	
    \caption{}
    \label{Fig_QV_unique_beer_tastes1}
\end{subfigure}	
\caption{Illustrations for \Cref{ex:uniquebeers}:
(a) Unique-set-query ``\emph{Find drinkers with a unique beer taste}'' used by \cite{Leventidis2020QueryVis}, 
(b) $\queryviz$ diagram with reading order encoded by arrows (redrawn according to \cite{Leventidis2020QueryVis}),
(c) \diagrams\ with a nested scoping and no need for arrows.
}
\label{fig:beerquery}
\end{figure*}

\subsection{\queryviz~(\texorpdfstring{\cref{sec:queryvis1_vs_2}}{Section \ref{sec:queryvis1_vs_2}})}
\label{appendix:queryvis}

We use the ``unique beer taste'' query that was proposed in the $\queryviz$ paper \cite{Leventidis2020QueryVis} to show the difference in design decisions.

\begin{example}[Unique-set-query]
\label{ex:uniquebeers}
Consider the $\SQL$ query from \cref{Fig_long_nested_query}
asking to find ``\emph{drinkers who like a unique set of beers},'' i.e.\ no other drinker
likes the exact same set of beers.
The scoping brackets to the left of the query in  \cref{Fig_long_nested_query} 
show the content of boxes used by $\queryviz$, which include all tables \emph{from each individual query block}.
Without the additional visual symbol of arrows, 
this diagram becomes ambiguous to interpret.
To mitigate this problem,  
the design of \queryviz~\cite{Leventidis2020QueryVis,DanaparamitaG2011:QueryViz}
uses
directed arrows with an implied \emph{reading order}~(\cref{Fig_QV_unique_beer_tastes2}).

The scoping brackets to the right in \cref{Fig_long_nested_query} 
show the nesting of the variables scopes in queries, which are also reflected in
the dashed bounding boxes in \diagrams\ 
(\cref{Fig_QV_unique_beer_tastes1}).
\end{example}

The design decision
in $\queryviz$ \cite{Leventidis2020QueryVis} 
are justified in terms of usability 
(for ``most'' queries the diagrams are not ambiguous and the reduction in nesting simplifies their interpretation), 
yet requires overloading of the meaning of arrows.
Two conceptual problems with these diagrams are:
(1) $\queryviz$ requires each partition of the canvas to contain a relation from the relational schema.
Our earlier examples from \cref{fig:Sailor_disjunction,fig:disjunction_oneex} show examples that can thus not be handled.
(2) $\queryviz$ does not guarantee unambiguous visualizations for nested queries with nesting depth $\geq 4$. 
This was alluded to already in \cite{Leventidis2020QueryVis}, and we next give an example to illustrate:

\begin{example}[Ambiguous $\queryviz$]
	\label{ex:ambiguousQV}
We next give a minimum example for when $\queryviz$ becomes ambiguous.
Consider the two different SQL queries~\cref{Fig_ambiguous_QV_1,Fig_ambiguous_QV_2}. 
Following the algorithm given in \cite{Leventidis2020QueryVis}, 
both lead to the same visual representation \cref{Fig_ambiguous_QV}.
In other words, it is not possible to uniquely interpret the diagram in \cref{Fig_ambiguous_QV}.
\end{example}

\begin{figure}[t]
\centering	
\begin{subfigure}[b]{.21\linewidth}		
\begin{lstlisting}
SELECT DISTINCT *
FROM R
WHERE not exists
 (SELECT *
 FROM S
 WHERE not exists
  (SELECT *
  FROM T
  WHERE T.A = R.A
  AND T.B = S.B
  AND not exists
   (SELECT *
   FROM U
   WHERE U.C = S.C
   AND not exists
    (SELECT *
    FROM V
    WHERE V.D = T.D
    AND V.E = U.E))))
\end{lstlisting}
\vspace{-6mm}
\caption{}
\label{Fig_ambiguous_QV_1}
\end{subfigure}	
\hspace{5mm}
\begin{subfigure}[b]{.21\linewidth}		
\begin{lstlisting}
SELECT DISTINCT *
FROM R
WHERE not exists
 (SELECT *
 FROM V
 WHERE not exists
  (SELECT *
  FROM T
  WHERE T.A = R.A
  AND T.D = V.D
  AND not exists
   (SELECT *
   FROM U
   WHERE U.E = V.E
   AND not exists
    (SELECT *
    FROM S
    WHERE S.B = T.B
    AND S.C = U.C))))
\end{lstlisting}
\vspace{-6mm}
\caption{}
\label{Fig_ambiguous_QV_2}
\end{subfigure}	
\hspace{3mm}
\begin{subfigure}[b]{0.4\linewidth}
    \includegraphics[scale=0.35]{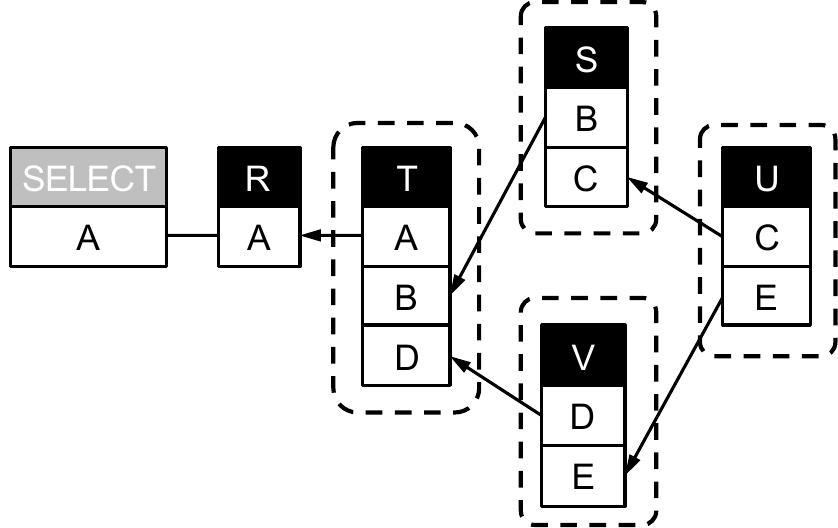}
	\vspace{4mm}
    \caption{}
    \label{Fig_ambiguous_QV}
\end{subfigure}	
\hspace{-1mm}
\caption{
\Cref{ex:ambiguousQV}:
Minimal example showing that $\queryviz$ is not sound for nested queries with 4 levels: 
Two \emph{different queries} (a), (b) that are translated into the same $\queryviz$ diagram (c).}
\label{fig:QV_ambiguousness}
\end{figure}

\introparagraph{Problems from abusing lines in $\queryviz$}
Similar to beta graphs, we think, 
that the design of $\queryviz$
abuses the line symbol by using it for two purposes: 
for ($i$) joining atoms and 
($ii$) for \emph{representing the negation hierarchy}.
In contrast, \diagrams\ use the line only for connecting two attributes
and represent the negation hierarchy explicitly by nesting negation boxes.
\diagrams\ fix the completeness and soundness issues, 
and, in addition, can show logical sentences and queries or sentences lacking tables in one or more of the negation scopes of nested queries.

\subsection{Query-By-Example (QBE)}
\label{sec:QBE}

The development of QBE~\cite{DBLP:journals/ibmsj/Zloof77} was strongly influenced by $\DRC$.
However, QBE can express relational division only by using COUNT
or by 
breaking the query into two logical steps and using a temporary 
relation~\cite[online appendix]{cowbook:2002}.
But in doing so, QBE
uses the query pattern from $\RA$ and $\DatalogN$ of implementing relational division (or universal quantification)
in a dataflow-type, sequential manner, 
requiring two occurrences of the Sailor table.

\begin{figure}[t]
\centering
    \begin{subfigure}[b]{1\linewidth}
		\hspace{30mm}
        \includegraphics[scale=0.35]{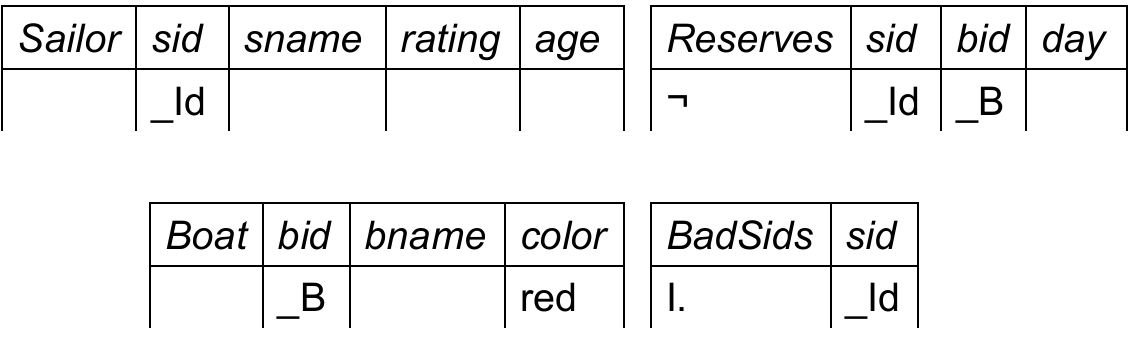}
        \caption{Temporary relation \sql{BadSids(sid)} (``I.'' stands for insert)}
        \label{Fig_QBE_red_boats_a}
    \end{subfigure}	
    \begin{subfigure}[b]{1\linewidth}
		\hspace{30mm}		
        \includegraphics[scale=0.35]{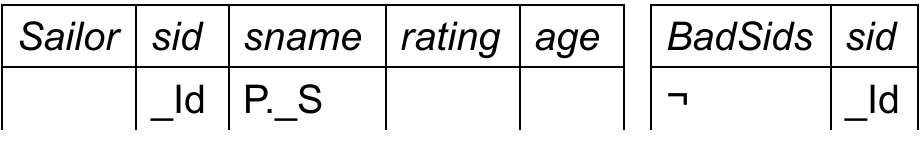}
        \caption{Actual answer \sql{Q(sid)} (``P.'' stands for print)}
        \label{Fig_QBE_red_boats_b}
    \end{subfigure}	
\caption{\Cref{QBE_redboats}: QBE needs to create a temporary relation \sql{BadSids} in order to express relational division.
It does follow the query pattern of relational algebra and not relational calculus.
}
\label{Fig_QBE_red_boats}
\vspace{-3mm}
\end{figure}

\begin{example}[Sailors reserving all red boats in QBE]
\label{QBE_redboats}
	Consider the query 
	``\emph{Find sailors who have reserved all red boats}'' 
	QBE needs to first create an intermediate relation 
	``\sql{BadSids}''
	that stores all Sailors for whom there is a red boat that is not reserved by the sailor
	(\cref{Fig_QBE_red_boats_a}),
	and then finds all the other Sailors
	(\cref{Fig_QBE_red_boats_b}).
	The pattern of this query in QBE thus matches exactly the one of $\DatalogN$ 
	(it requires two occurrences of the relational \sql{Sailor} instead of one as in calculus),
	which is arguably a more dataflow (one relation accessed after the other) than logical query language pattern:
	\begin{align}
	\begin{aligned}
		I_1(x,y)	& \datarule \sql{Reserves}(x,y,\_). 		 \\
		I_2(x)	& \datarule \sql{Sailor}(x,\_,\_,\_), \sql{Boat}(y,\_,\sql{'red'}), \neg I_1(x,y). 		 \\
		Q(y)	& \datarule \sql{Sailor}(x,y,\_,\_), \neg I_2(x). 
	\end{aligned}		
	\label{eq:QBE_equivalent_Datalog}
	\end{align}	
	\end{example}	

More formally, the QBE query from \cref{Fig_QBE_red_boats} is pattern-isomorphic to the $\DatalogN$ query in
\cref{eq:QBE_equivalent_Datalog}.
Furthermore, the following logically-equivalent $\TRC$ query
\cref{q:trc:sailorsallredboats}
has no pattern-isomorphic representation in QBE:
(i.e.\ with one single occurrence of the \sql{Sailor} relation).
\begin{align}
\begin{aligned}
\!\!\!\!\!\!\{q(\sql{sname}) \mid 
	& \exists s \in \sql{Sailor} [q.\sql{sname}=s.\sql{sname} \ \wedge \\
	&\neg (\exists b \in \sql{Boat} [b.\sql{color} = \sql{`red'} \wedge \\
	&\neg (\exists r \in \sql{Reserves} [r.\sql{bid} = b.\sql{bid} \wedge r.\sql{sid} = s.\sql{sid}])]) \}
\end{aligned}
\label{q:trc:sailorsallredboats}
\end{align}

\subsection{DFQL (\texorpdfstring{\cref{sec:DFQL}}{Section \ref{sec:DFQL}})}
\label{appendix:DFQL}

DFQL (Dataflow Query Language) is an example visual representation that is relationally complete \cite{DBLP:journals/iam/ClarkW94,DBLP:journals/vlc/CatarciCLB97,Girsang:DFQL}
by mapping its visual symbols to the operators of relational algebra.
Aside from providing a basic set of operators derived from the requirements for being as expressive as first-order predicate calculus, 
DFQL also provides a diagrammatic representation of grouping operators in both comparison functions and aggregations.
In contrast to several other similarly-motivated visual query languages that represent operators of relational algebra with different visual symbols,
a detailed Master's thesis \cite{Girsang:DFQL} lists the language and its constructs in enough detail to allow us to 
create visualizations of new queries in DFQL.
Following the same procedurality as $\RA$, DFQL expresses the dataflow in a top-down tree-like structure. 
However, since DFQL focuses on the 1-to-1 correspondence to relational algebra, 
it also can not generate a pattern-isomorphic diagram for query
\cref{fig:SQL_equivalence1}
which has no pattern-isomorphic representation in $\RA$. 
See the following example for details:

\begin{example}[Sailors reserving all red boats in DFQL]
\label{DFQL_redboats}
Representing the query ``Find sailors who have reserved all red boats'' 
(recall \cref{ex:sailorsallredboats,Fig_Sailor_Sailors_all_red_boats}) in the formalism of DFQL, 
the entire query can be visualized in one single connected tree-like diagram (unlike QBE which needs to visualize a temporary table to hold the intermediary values).
However, since the language is based on the operators of RA, 
there is no pattern-isomorphic expression of the query (\cref{Fig_Sailor_Sailors_all_red_boats}) in relational algebra.
Instead, the logically-equivalent representation in $\RA$ is as follows:
	\begin{align}
	\begin{aligned}	
		\mathit{Q} =
		\pi_{\sql{sname}} \big(
			&\sql{Sailor}\Join \big(
				\pi_{\sql{sid}} \sql{Sailor} - \pi_{\sql{sid}}\big(
				(\pi_{\sql{sid}} \sql{Sailor} \\
			&\times \pi_{bid}\sigma_{color='red'} \mathit{Boat}) - \pi_{\sql{sid,bid}} \sql{Reserves}\big) \big) \big)
	\end{aligned}
	\end{align}
	The join between \sql{Sailor S} and \sql{Sailor S2} is necessary to project column \sql{sname} from the table.
	This later query can be visualized by DFQL in a pattern-preserving way as \cref{Fig_DFQL_red_boats}. 
	One can easily find a 1-to-1 mapping between DFQL operators and this $\RA$ expression.
	
	Notice that for the same arguments, there is also no pattern-isomorphic expression of the query shown in 
	\cref{Fig_RA_vs_Datalog_g} and DFQL needs two extensional tables for input table $R$ to represent that query.
\end{example}

\begin{figure}[t]
\centering
\hspace*{-10mm}
\includegraphics[scale=0.37]{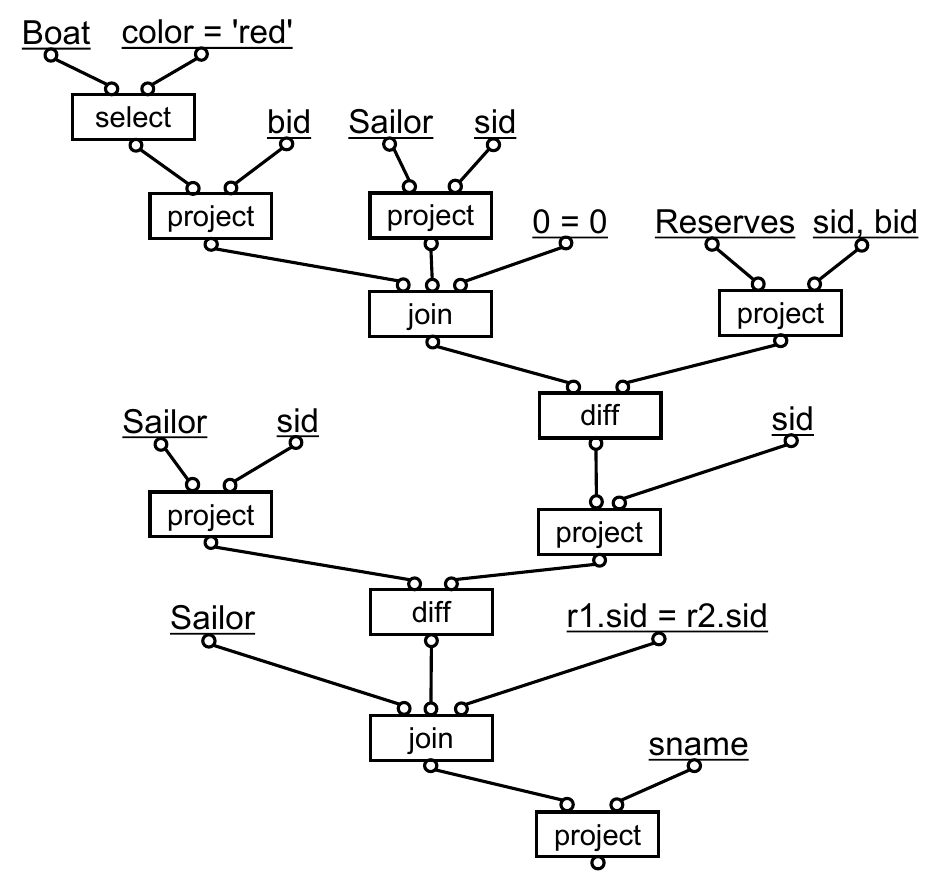}
\caption{\Cref{DFQL_redboats}: DFQL visualization of the query ``Find sailors who have reserved all red boats'' 
that is structurally equivalent to the pattern used by relational algebra. 
The \emph{diff} operator is equivalent to binary \emph{$-$} (minus) in RA and the tautology ``0 = 0'' in \emph{join} operator is required to create a Cartesian Join in DFQL \cite{DBLP:journals/iam/ClarkW94}.
Compare the difficulty in perceiving a logical pattern in this visualization against the one from \diagrams\ in \cref{Fig_Sailor_Sailors_all_red_boats}.
}
\label{Fig_DFQL_red_boats}
\end{figure}

\subsection{Tools for query visualizations}

The four projects that we know of that focus on the problem of visualizing existing relational queries are
QueryVis~\cite{DanaparamitaG2011:QueryViz,Leventidis2020QueryVis,gatterbauer2011databases}
(which we showed is not relationally complete, yet which inspired a lot of our work),
\mbox{GraphSQL}~\cite{DBLP:conf/dexaw/CerulloP07},
Visual SQL~\cite{DBLP:conf/er/JaakkolaT03},
(both of which maintain the 1-to-1 correspondence to SQL,
and syntactic variants of the same query like \cref{fig:SQLvariety} lead to different representations),
and Snowflake join \cite{snowflake} (which is a pure query visualization approach that focuses on join queries with optional grouping, but does not support any nested queries with negation).
Compared to all these visual representations, ours is the only one that is relationally complete and that can preserve and represent all logical patterns in the non-disjunctive fragment of relational query languages.

\subsection{Tools for query plan visualizations}

\begin{figure}[t]
\centering	
\begin{subfigure}[b]{.23\linewidth}		
\begin{lstlisting}
SELECT DISTINCT F.person
FROM Frequents F, 
     Likes L, 
     Serves S
WHERE F.person = L.person
AND F.bar = S.bar
AND L.drink = S.drink
\end{lstlisting}
	\vspace{-3mm}
	\caption{$\SQL$}
	\label{sql:Fig_ExampleExistsCyclic}
\end{subfigure}	
\hspace{8mm}
\begin{subfigure}[b]{.36\linewidth}
    \includegraphics[scale=0.4]{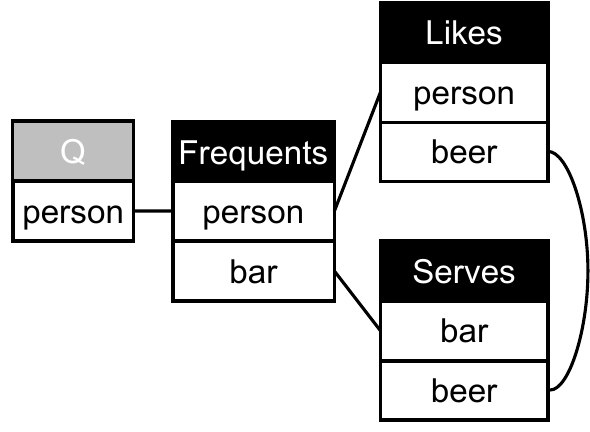}
	\caption{\diagram}
	\label{Fig_ExampleExistsCyclic}
\end{subfigure}	
\begin{subfigure}[b]{.99\linewidth}
\centering
	\vspace{3mm}
	\includegraphics[scale=0.5]{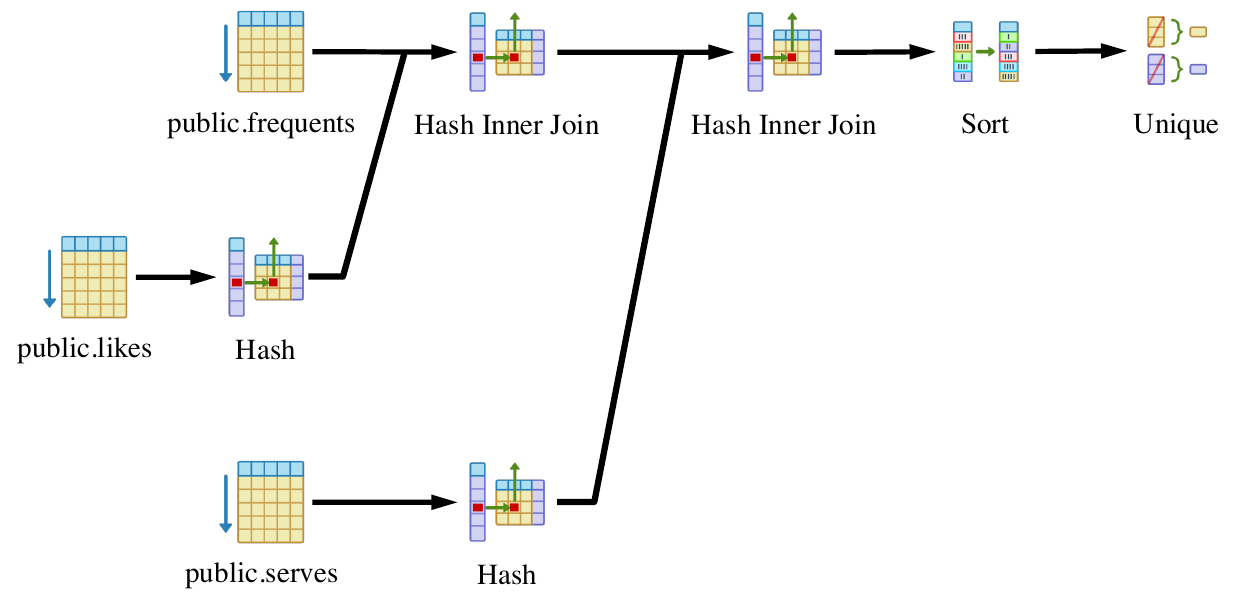}
	\caption{Postgres EXPLAIN}
	\label{Fig_explain_PersonBarDrink_PG}
\end{subfigure}	
\caption{\Cref{ex:cyclicquery}:
Find persons who frequent some bar that serves some drink they like.}
\label{Fig_explain_PersonBarDrink}
\end{figure}

\diagrams\ are a query visualization and thus conceptually different from a \emph{query plan visualization} 
that many $\SQL$ users will be familiar from the EXPLAIN command in PostgreSQL~\cite{postgres}.
We explain with an example used with kind permission from the authors of~\cite{Gatterbauer2022PrinciplesQueryVisualization}.

\begin{example}[Cyclic query]
\label{ex:cyclicquery}
Consider the following cyclic query 
over the beer drinkers database introduced by Ullman~\cite{Ullman1988PrinceplesOfDatabase}:
\begin{align*}
\begin{aligned}
	&\{q.\mathit{person} \mid 
			\exists f \in \mathit{Frequents}, 
			\exists l \in \mathit{Likes},
			\exists s \in \mathit{Serves}	\\
	&	[q.\mathit{person}=f.\mathit{person} \wedge
		f.\mathit{person}=l.\mathit{person} \, \wedge \\
	&	l.\mathit{drink} = s.\mathit{drink,}  \wedge
		s.\mathit{bar} = f.\mathit{bar} ] \}
\end{aligned}
\end{align*}
The query asks for drinkers who frequent bars that serve some beer they like.
\Cref{sql:Fig_ExampleExistsCyclic} shows the query in $\SQL$
and \cref{Fig_ExampleExistsCyclic} as $\diagram$.

\Cref{Fig_explain_PersonBarDrink_PG} shows a query plan chosen by PostgreSQL~\cite{postgres}.
Notice that the produced query plans does not captures the cyclic nature of the join of the query and instead shows the query as a tree.
\end{example}

A \emph{query plan visualization} targets the physical query execution and
represents HOW a query is executed and often helps reason the user about ways to make the query run faster.
In contrast, a query visualization, 
such as $\diagrams$,
attempts to represent WHAT a query does (i.e.\ its intent) and possibly the relational pattern it uses.
Contrast the query visualization in 
\cref{Fig_ExampleExistsCyclic},
which shows the join pattern and that this query is cyclic.

Similarly, query visualizations are also different from \emph{query dashboards},
i.e.\ tools that help monitor key characteristics of queries (such as Vertica analyzer~\cite{DBLP:conf/sigmod/SimitsisWBW14})
or visualize and compare the cost or speed of execution plans (such as Picasso~\cite{DBLP:journals/pvldb/Haritsa10}).

\subsection{Applications for query interpretation}

\emph{Query Interpretation} is the problem of reading and understanding an existing query. 
It is often as hard as {query composition}, i.e., creating a new query~\cite{DBLP:journals/csur/Reisner81}. 
In the past, several projects have focused on building Query Management Systems 
that help users issue queries by leveraging an existing log of queries. Known systems to date include CQMS~\cite{DBLP:conf/cidr/KhoussainovaBGKS09,KhoussainovaKBS:2011}, SQL QuerIE~\cite{DBLP:conf/ssdbm/ChatzopoulouEP09,QueRIERecommendations:2010}, DBease~\cite{LiFWWF2011:DBease}, and SQLShare~\cite{HoweC2010:SQLshare}.
All of those are motivated by making SQL composition easier and thus databases more usable~\cite{DBLP:conf/sigmod/JagadishCEJLNY07}, especially for non-sophisticated database users.
An essential ingredient of such systems is a \emph{query browse} facility, i.e., 
a way
that allows the user to browse and quickly choose between several queries proposed by the system. 
This, in turn, requires a user to \emph{quickly understand} existing queries.

Whereas visual systems for \emph{specifying} queries have been studied
extensively
(a 1997 survey by Catarci et al.~\cite{DBLP:journals/vlc/CatarciCLB97} 
cites over 150 references), 
the explicit reverse problem
of visualizing and thereby helping \emph{interpret a relational query that has already been written}
has not drawn 
much attention, despite 
very early \cite{Reisner1975:HumanFactors,DBLP:journals/csur/Reisner81}
and very recent work
\cite{Leventidis2020QueryVis}
repeatedly showing
that visualizations of relational queries can help users understand them faster than SQL text.

\end{document}